\documentclass{lmcs}
\pdfoutput=1

\usepackage{xparse}

\usepackage{amsmath}
\usepackage{mathtools}
\usepackage[hypertexnames=false]{hyperref}
\usepackage{todonotes}
\NewDocumentCommand{\ntodo}{O{}mm}{\todo[#1]{#2: #3}}
\NewDocumentCommand{\lionel} {O{}m}{\ntodo[color=violet!50!white,#1]{Li}{#2}}
\NewDocumentCommand{\lorenzo}{O{}m}{\ntodo[color=green,#1]{Lo}{#2}}
\NewDocumentCommand{\olivier}{O{}m}{\ntodo[color=cyan,#1]{O}{#2}}
\NewDocumentCommand{\remi}   {O{}m}{\ntodo[color=red,#1]{R}{#2}}
\usepackage{enumitem} 
\usepackage{verbatim}

\usepackage[utf8]{inputenc}
\usepackage[T1]{fontenc}
\usepackage{amsfonts}
\usepackage{amssymb}
\usepackage{cmll}
\usepackage{stmaryrd}
\usepackage{alphabeta}
\usepackage{cancel}

\usepackage{amsthm}
\usepackage{thmtools}
\usepackage[capitalise]{cleveref} 

\crefname{thm}{Theorem}{Theorems}
\crefname{thmC}{Theorem}{Theorems}
\crefname{lem}{Lemma}{Lemmas}
\crefname{lemC}{Lemma}{Lemmas}
\crefname{cor}{Corollary}{Corollaries}
\crefname{corC}{Corollary}{Corollaries}
\crefname{prop}{Proposition}{Propositions}
\crefname{propC}{Proposition}{Propositions}
\crefname{defi}{Definition}{Definitions}
\crefname{fact}{Fact}{Facts}
\crefname{exa}{Example}{Examples}
\crefname{rem}{Remark}{Remarks}
\crefname{figure}{Figure}{Figures}

\usepackage{ebproof}
\ebproofset{
  right label template={$\scriptstyle(\inserttext)$},
  center=false,
}
\newcommand{\subproof}[2]{\hypo{#1}\infer[no rule]1{#2}}

\usepackage{adjustbox}
\adjustboxset{max width = \linewidth, center = \linewidth}

\newcommand{\eqdef}{\coloneqq}

\newcommand{\definitive}[1]{\textnormal{\textbf{#1}}}
\newcommand{\suchthat}{\;|\;}
\newcommand*{\ie}{\emph{i.e.}}
\newcommand*{\eg}{\emph{e.g.}}
\newcommand*{\resp}{resp.}
\newcommand*{\aka}{\emph{a.k.a.}}

\newcommand*{\wolog}{\emph{w.l.o.g.}}

\NewDocumentCommand{\pars}{ O{} m }{#1(#2#1)}
\NewDocumentCommand{\vpars}{ m m m }
	{\IfBooleanTF{#1}{#2(#3#2)}{#3}}

\NewDocumentCommand{\braces}{ O{} m }{#1\{#2#1\}}

\NewDocumentCommand{\brackets}{ O{} m }{#1[#2#1]}

\NewDocumentCommand{\angles}{ O{} m }{#1\langle#2#1\rangle}

\NewDocumentCommand{\bars}{ O{} m }{#1|#2#1|}

\NewDocumentCommand{\setOf}{O{}m}{\braces[#1]{#2}}
\NewDocumentCommand{\card}{sO{}m}{\#\vpars{#1}{#2}{#3}}

\newcommand{\Mf}{\mathfrak{M}_f} 
\NewDocumentCommand{\MfOf}{O{}m}{\Mf\pars[#1]{#2}}

\NewDocumentCommand{\length}{O{}m}{\bars[#1]{#2}}

\NewDocumentCommand{\tupleOf}{O{}m}{\pars[#1]{#2}}

\NewDocumentCommand{\msetOf}{O{}m}{\brackets[#1]{#2}}

\newcommand*{\logsys}[1]{\textnormal{\textsf{#1}}}
\newcommand*{\hyp}{\textnormal{\textit{hyp}}}
\newcommand*{\mix}{\textnormal{\textit{mix}}}
\newcommand*{\MLL}{\ensuremath{\logsys{MLL}}}
\newcommand*{\MLLhmix}{\ensuremath{\MLL^{0,2}_{\hyp}}}

\newcommand*{\MALLhmix}{\ensuremath{\logsys{MALL}^{0,2}_{\hyp}}}

\newcommand*{\tensor}{\otimes}
\newcommand*{\cut}{\textnormal{\textit{cut}}}
\newcommand*{\ax}{\textnormal{\textit{ax}}}
\newcommand*{\orth}{^\perp}
\newcommand*{\pw}{\parr\backslash\with} 

\newcommand*{\someproof}{\pi}
\newcommand*{\ixretore}{\text{ix-R\'etor\'e}}
\newcommand*{\mixretore}{\text{m}\ixretore}
\newcommand*{\Mixretore}{\text{M}\ixretore}

\newcommand*{\somepn}{\theta} 
\newcommand*{\G}{\mathcal{G}} 
\newcommand*{\somelinking}{\lambda} 
\newcommand*{\otherlinking}{\mu}
\newcommand*{\otherotherlinking}{\nu}
\newcommand*{\somesetlinking}{\Lambda} 
\newcommand*{\somelink}{a} 

\newcommand*{\someres}{R} 
\newcommand*{\otherres}{S}
\newcommand*{\otherotherres}{T}

\newlist{enumeratecref}{enumerate}{1}
\crefname{enumeratecrefi}{}{}

\newcommand*{\vertexset}{\mathcal{V}}

\newcommand*{\edgeset}{\mathcal{E}}
\newcommand*{\source}{\textnormal{\textsf{s}}}
\newcommand*{\target}{\textnormal{\textsf{t}}}
\newcommand*{\incidence}{\psi}
\newcommand*{\somegraph}{G}
\newcommand*{\colorgraph}{H}
\newcommand*{\someindepset}{S}
\newcommand*{\somevertex}{v}
\newcommand*{\othervertex}{u}
\newcommand*{\otherothervertex}{x}
\newcommand*{\otherotherothervertex}{y}
\newcommand*{\otherotherotherothervertex}{z}
\newcommand*{\somewith}{w}

\newcommand*{\someleaf}{l}
\newcommand*{\otherleaf}{r}

\newcommand*{\someedge}{e}
\newcommand*{\otheredge}{f}
\newcommand*{\otherotheredge}{g}
\newcommand*{\otherotherotheredge}{h}
\newcommand*{\somesetedge}{P}
\newcommand*{\someline}{e}
\newcommand*{\otherline}{f}
\newcommand*{\somepath}{p}
\newcommand*{\otherpath}{q}
\newcommand*{\otherotherpath}{\rho}
\newcommand*{\otherotherotherpath}{\chi}

\newcommand*{\somecycle}{\omega}
\newcommand*{\othercycle}{\sigma}
\newcommand*{\somesetcycle}{\Omega}

\newcommand*{\someclosedyeo}{\somecycle'}
\newcommand*{\otherclosed}{d}

\newcommand*{\somepier}{\kappa}
\newcommand*{\somesubgraph}{S}
\newcommand*{\othersubgraph}{R}
\newcommand*{\giso}{\simeq}
\newcommand*{\isbridge}[2]{b^{#1}_{#2}}

\newcommand*{\concatpath}[2]{\ensuremath{#1\cdot #2}}
\newcommand*{\reversedpath}[1]{\ensuremath{\overline{#1}}}
\newcommand*{\subpath}[3]{\ensuremath{{#1}_{(#2, #3)}}}
\newcommand*{\mincycles}[1]{\mathcal{M}_{#1}}

\newcommand*{\necubfcycles}{\mathfrak{O}}

\NewDocumentCommand{\orderparr}{o}{\IfNoValueTF{#1}{\prec}{\overset{#1}{\prec}}}
\NewDocumentCommand{\ssbfpath}{o}{\IfNoValueTF{#1}{\curvearrowright}{\overset{#1}{\curvearrowright}}}
\NewDocumentCommand{\orderseq}{o}{\IfNoValueTF{#1}{\prec}{\overset{#1}{\prec}}}

\newcommand{\sobfnbpathsymbol}{\Rsh}
\NewDocumentCommand{\sobfnbpath}{o}{\IfNoValueTF{#1}{\sobfnbpathsymbol}{\overset{#1}{\sobfnbpathsymbol}}}
\NewDocumentCommand{\orderyeo}{o}{\IfNoValueTF{#1}{\lhd}{\overset{#1}{\lhd}}}
\newcommand*{\king}[1]{\textit{k}(#1)}

\newcommand*{\colors}{\textnormal{\textsf{C}}}
\newcommand*{\coloring}{\textnormal{\textsf{c}}}

\newcommand*{\somecolor}{\alpha}
\newcommand*{\othercolor}{\beta}
\newcommand*{\otherothercolor}{\tau}
\newcommand*{\otherotherothercolor}{\epsilon}
\newcommand*{\colenc}[1]{\overline{#1}}

\newcommand*{\someps}{\ensuremath{\rho}}
\newcommand*{\deseq}{\mathcal{D}}
\newcommand*{\cdeg}{d}

\usetikzlibrary{fit, patterns, shapes.geometric}
\NewDocumentEnvironment{genscope}{O{black}O{solid,->}O{0.8pt}}{
\begin{scope}[
every node/.style={circle,minimum size=8mm,inner sep=0,draw=#1,line width=#3},
every edge/.style={draw=#1,#2,line width=#3},
every path/.style={thick,draw=#1,#2,-,line width=#3}]
\color{#1}
}{
\end{scope}
}
\newenvironment{myscopehigh}[1]{
\begin{genscope}[#1][dashed,-][2.5pt]
}{
\end{genscope}
}
\tikzset{
    arete nommee/.style={draw=none,inner sep=1pt,outer sep=2pt,rectangle,minimum size=0pt,auto},
    chemin nomme/.style={arete nommee,sloped},
    subps/.style={draw,red,dashed,shape=ellipse,minimum height=1cm},
    colored ps/.style={
        every node/.style = { circle, thick, draw, inner sep=1pt, minimum size=6mm },
        every edge/.style = { thick, draw, - },
    }
}

\newcommand*{\alternating}{$\sobfnbpathsymbol$}
\newcommand*{\edgeofsetcycles}{\mathfrak{\someedge}}
\newcommand*{\sourceofsetcycles}{\mathfrak{\someleaf}}
\newcommand*{\targetofsetcycles}{\mathfrak{\somewith}}

\makeatletter
\newcommand{\setword}[2]{\phantomsection#1\def\@currentlabel{\unexpanded{#1}}\label{#2}}
\makeatother

\keywords{Linear Logic, Proof Net, Sequentialization, Graph Theory, Yeo’s Theorem}

\begin{document}

\title[Yeo's Theorem for Locally Colored Graphs: the Path to Sequentialization]{Yeo's Theorem for Locally Colored Graphs:\texorpdfstring{\\}{ }the Path to Sequentialization in Linear Logic}

\titlecomment{{\lsuper*}
The present paper is a revised and extended version of~\cite{DiGuardiaLaurentdeFalcoVauxAuclair25}, expanding the result to the additive proof nets and with a different organization.}

\author[R.~{Di~Guardia}]{R\'emi {Di~Guardia}\lmcsorcid{0009-0004-8632-108X}}[a]
\author[O.~Laurent]{Olivier Laurent\lmcsorcid{0009-0007-1306-8994}}[b]
\author[L.~{Tortora~de~Falco}]{Lorenzo {Tortora de Falco}\lmcsorcid{0000-0002-3987-1095}}[c]
\author[L.~{Vaux~Auclair}]{Lionel {Vaux Auclair}\lmcsorcid{0000-0001-9466-418X}}[d]

\address{Université Paris Cité, Inria, CNRS, IRIF, F-75013, Paris, France}
\email{remi.di.guardia@ens-lyon.org}

\address{CNRS, ENS de Lyon, Université Claude Bernard Lyon~1, LIP, UMR~5668, Lyon, France}
\email{olivier.laurent@ens-lyon.fr}

\address{Università Roma Tre, Dipartimento di Matematica e Fisica, Rome, Italy \& GNSAGA, Istituto Nazionale di Alta Matematica, Rome, Italy}
\email{tortora@uniroma3.it}

\address{Aix Marseille Univ, CNRS, I2M, Marseille, France}
\email{lionel.vaux@univ-amu.fr}

\begin{abstract}
\noindent
We revisit sequentialization proofs associated with the Danos-Regnier correctness criterion in the theory of proof nets of linear logic.
Our approach relies on a generalization of Yeo's theorem for graphs, based on colorings of half-edges.
This happens to be the appropriate level of abstraction to extract sequentiality information from a proof net
without modifying its graph structure.
We thus obtain different ways of recovering a sequent calculus derivation from a proof net inductively,
by relying on a \emph{splitting} vertex,
which we can impose to be a $\parr$-vertex, or a terminal vertex, or a non-axiom vertex, etc.,
in a modular way.
This approach applies in presence of the $\mix$-rules as well as for proof nets of unit-free multiplicative-additive linear logic
(through an appropriate further generalization of Yeo's theorem).

The proof of our Yeo-style theorem relies on a key lemma that we call \emph{cusp minimization}.
Given a coloring of half-edges, a cusp in a path is a vertex whose adjacent half-edges in the path have the same color.
And, given a cycle with at least one cusp and subject to suitable hypotheses,
cusp minimization constructs a cycle with strictly less cusps.
In the absence of cusp-free cycles, cusp minimization is then enough to ensure the existence of a splitting vertex, \ie\ a vertex that is a cusp of any cycle it belongs to.
Our theorem subsumes several graph-theoretical results, including some known to be equivalent to Yeo's theorem.
The novelty is that they can be derived in a straightforward way, just by defining a dedicated coloring,
again without any modification of the underlying graph structure (vertices and edges)
-- similar results from the literature required more involved encodings.
\end{abstract}

\maketitle

\section{Introduction}

Proof nets are a major contribution from linear logic~\cite{ll}.
Contrary to the usual representation of proofs as derivation trees in sequent calculus, proof nets represent proofs as general graphs respecting some \emph{correctness criterion}~\cite{structmult},
which imposes the absence of a particular kind of cycle.
Proof nets identify the derivations of sequent calculus up to rule permutations~\cite{mallpncom} and,
as a consequence of this canonicity, results like cut elimination become
easier to prove in this formalism.
A key theorem in this approach is the fact that each proof net is indeed the
graph representation of a derivation in the sequent calculus:
the process of recovering such a derivation tree is called \emph{sequentialization}.
Many proofs of this result can be found in the literature~\cite[etc.]{ll,structmult,quantif2,pn,curienludics2}, but proving sequentialization is still considered as not easy.

Not only many proofs but more generally many equivalent correctness criteria have been introduced in the last 40 years,
based on the existence or absence of particular paths in an associated graph
(long trips, switching cycles, alternating-elementary-cycles)~\cite{ll,structmult,rbpn},
on the success of a rewriting procedure (contractibility, parsing)~\cite{phddanos,pnin,cclin,NauroisMogbil11},
on homological~\cite{homolpn} or topological~\cite{Mellies2004} properties, etc.
They all describe the same set of valid graphs (those which are the image of a sequent calculus derivation) but through very different statements of properties characterizing the appropriate structure.
The diversity of these approaches reflects both the central nature of the concept of proof net in linear logic,
and the variety of motivations in the design of correctness criteria:
some ensure tight complexity bounds (especially those based on contractibility),
some weave connexions with other fields (\eg, topology or graph theory),
some are more naturally generalized to other logical systems,
etc.

On the other hand, when it comes to the study of the theory of proof nets
(confluence, normalization, reduction strategies, etc.) most of those
approaches are hardly usable in practice.
This gives the Danos-Regnier criterion~\cite{structmult} a special status:
the absence of switching cycles is of direct use for proving results about proof nets.
For instance, it forbids the occurrence of axiom-cut cycles along cut elimination~\cite{synsem};
it ensures the confluence of reduction in multiplicative-exponential linear logic~\cite{snll};
it provides the existence of so-called closed cuts~\cite{Laurent20},
which play a crucial rôle in geometry of interaction~\cite{goi1};
it allows for the definition of a parallel procedure of cut elimination
for multiplicative \cite{DBLP:journals/lmcs/ChouquetA21}
or even multiplicative-exponential linear logic~\cite{DBLP:conf/lpar/GuerrieriMFA24};
etc.
This means in particular that, based on this criterion, it becomes possible to develop the theory of proof nets without referring to the sequent calculus anymore.
For this reason, we are interested in a better understanding of this precise criterion and its links with the sequential structure of tree derivations, \emph{via} sequentialization.
Following previous lines of work on relating graph theory and proof net theory~\cite{rbpn,pnehrhard,pngraph}, we looked for a direct link between graph properties and the sequential structure of proof nets: \emph{splitting vertices}.
Indeed, the key step for extracting a sequent calculus derivation from a proof net is to inductively decompose it into sub-graphs themselves satisfying the correctness condition.

In graph theory, it is common to have several (equivalent) characterizations for a same class of graphs, and an inductive characterization may allow for simpler proofs -- see \eg, cographs~\cite{CORNEIL1981163}, $k$-trees~\cite{Beineke_Pippert_1971} or graphs with a unique perfect matching~\cite{Kotzig1959}.
Such an inductive characterization may be deduced from the existence of a vertex or of an edge separating the graph in a ``nice'' manner (\eg\ a bridge~\cite{bangdigraphs}).
Five theorems yielding such a vertex or edge have been shown equivalent by Szeider~\cite{Szeider04c}, meaning they can be deduced from each other using an encoding of the graph under consideration.
Among those five are Yeo's theorem on colored graphs~\cite{yeotheorem},
Kotzig's theorem on unique perfect matchings~\cite{Kotzig1959},
but also Shoesmith and Smiley's theorem on turning vertices~\cite{thseqsemicycle}
-- interestingly the approach of the latter bears striking resemblance with our own work,
that we discuss more in detail in the paper (see \cref{sec:graph_comparison:variants}).

On the proof net side,
Rétoré remarked that perfect matchings provide an alternative presentation of proof nets~\cite{rbpn}:
in this context, he recovered sequentialization proofs based on different notions of splitting vertex, in the spirit of Kotzig's theorem~\cite{Kotzig1959}.
Remarkably, Nguy\~{\^e}n later established that Kotzig's theorem is in fact equivalent to the sequentialization theorem of unit-free multiplicative proof nets with $\mix$~\cite{pngraph}, again through graph encodings.

In the present paper, we focus on Yeo's theorem~\cite{yeotheorem} instead, which is about \emph{edge-colored} undirected graphs.
Our goal is to obtain the existence of splitting vertices in proof nets by a direct application of a Yeo-style statement to an edge-coloring of the proof net (with no modification of the graph structure at all, \ie\ keeping the same vertices and edges).
In an edge-colored graph,
a cycle is \emph{alternating} when all its consecutive edges have different colors.
Yeo's theorem states that an edge-colored graph $\somegraph$ with no alternating cycle has a \emph{splitting} vertex $\somevertex$,
\ie\ such that no connected component of $\somegraph - \somevertex$ (the removal of $\somevertex$) is joined to $\somevertex$ with edges of more than one color -- see \cref{fig:ex_Yeo} (a splitting vertex is also called in the literature a cut-color vertex, or a cut vertex separating colors).
This decomposition can be carried on, so as to give an inductive representation of graphs with no alternating cycle.
This important structural result on edge-colored graphs
has been used extensively in the literature (see \eg\ the book~\cite{bangdigraphs} or papers such as~\cite{pathsandtrails,Fujita2018}).

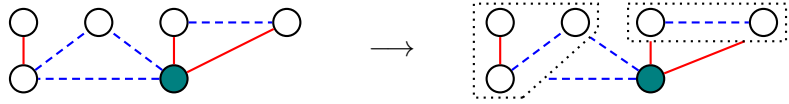
\begin{figure}
\centering
\begin{tikzpicture}[baseline=(base)]
	\coordinate (base) at (0,.3);
\begin{scope}[every node/.style={circle,thick,draw}]
	\node[fill=teal] (0) at (0,0) {};
	\node (1) at (-2,.75) {};
	\node (2) at (0,.75) {};
	\node (3) at (1.5,.75) {};
	\node (4) at (-2,0) {};
	\node (5) at (-1,.75) {};
\end{scope}
\begin{genscope}[red][-]
	\path (0) edge (2);
	\path (0) edge (3);
	\path (1) edge (4);
\end{genscope}
\begin{genscope}[blue][-,densely dashed]
	\path (0) edge (4);
	\path (0) edge (5);
	\path (2) edge (3);
	\path (4) edge (5);
\end{genscope}
\end{tikzpicture}
\qquad$\longrightarrow$\qquad
\begin{tikzpicture}[baseline=(base)]
	\coordinate (base) at (0,.3);
\begin{scope}[every node/.style={circle,thick,draw}]
	\node[fill=teal] (0) at (0,0) {};
	\node (1) at (-2,.75) {};
	\node (2) at (0,.75) {};
	\node (3) at (1.5,.75) {};
	\node (4) at (-2,0) {};
	\node (5) at (-1,.75) {};
	\coordinate (02) at (0,.5);
	\coordinate (03) at (1.25,.5);
	\coordinate (04) at (-1.4,0);
	\coordinate (05) at (-.8,.55);
\end{scope}
\begin{genscope}[red][-]
	\path (0) edge (02);
	\path (0) edge (03);
	\path (1) edge (4);
\end{genscope}
\begin{genscope}[blue][-,densely dashed]
	\path (0) edge (04);
	\path (0) edge (05);
	\path (2) edge (3);
	\path (4) edge (5);
\end{genscope}
\begin{genscope}[black][-,dotted,thin]
	\draw (-1.7,-.25) -- (-2.35,-.25) -- (-2.35,1) -- (-.7,1) -- (-.7,.65) -- cycle;
	\draw (-.3,1) -- (1.8,1) -- (1.8,.5) -- (-.3,.5) -- cycle;
\end{genscope}
\end{tikzpicture}
\caption{Example of Yeo's theorem with a \textcolor{teal}{filled} splitting vertex and dotted connected components}
\label{fig:ex_Yeo}
\end{figure}

To allow for a direct application to proof nets, we generalize Yeo's theorem in two directions.
First, we consider a more general notion of edge-coloring, that we call \emph{local coloring}:
it associates a color with each endpoint of each edge
(this is equivalent to coloring half-edges, but we avoid to introduce half-edges formally, just to stick to more basic graph-theoretic notions).
Second, we introduce a parameter (a set of \emph{vertex-color pairs}, \ie\ a set of vertices labeled with colors) which gives us finer control over the obtained splitting vertex.

Our proof of this new result is elementary and based on a key lemma we call \emph{cusp minimization}, as well as on the definition of an ordering on vertex-color pairs induced by local coloring.
Formally, a cusp in a path of a locally colored graph is a pair of two successive edges,
such that the color associated with the middle vertex is the same for both edges.
The ordering on vertex-color pairs is induced by particular cusp-free paths.
Moreover, given a cycle \(\somecycle\) containing a cusp,
and a non-cusp vertex \(\somevertex\) of \(\somecycle\),
satisfying some additional technical conditions,
our cusp minimization result (\cref{lem:bjumping_Yeo_base}) yields either a cusp-free cycle,
or another cycle with strictly less cusps than \(\somecycle\), but also having \(\somevertex\) as a non-cusp vertex.
In a locally colored graph without cusp-free cycle,
our generalization of Yeo's theorem then follows easily
by considering a maximal vertex-color pair among those in the parameter.

Cusp minimization also provides a proof of the original version of Yeo's theorem,
as simple as known short proofs from the literature~\cite{revisityeo,shortkotzig}.
While the generalization to local colorings gives a statement that we prove equivalent to Yeo's theorem, it seems difficult to reduce the parametrized version to the non-parametrized one. 
We moreover show how the local and parametrized generalization of Yeo's theorem allows to deduce each of the statements considered in~\cite{Szeider04c} (as well as~\cite[Theorem~2]{hcycles}),
simply by choosing appropriate colorings,
without modifying the sets of vertices and edges of the graph under consideration.
Cusp minimization could thus be easily transposed as a proof of any of these results,
just by modifying the definition of a \emph{cusp} -- where is our sole use of the coloring.

Back to linear logic and the theory of proof nets, it is possible to derive
the existence of a splitting vertex (in the sense of sequentialization) from the generalization of Yeo's theorem,
and we are even able to modularly focus on a particular kind of splitting vertex:
an arbitrary splitting vertex,
a splitting multiplicative vertex (\(\parr\) or \(\tensor\)),
a splitting $\parr$ (\aka\ section~\cite{phddanos}),
a terminal splitting multiplicative vertex, etc.
From any of these choices, a sequentialization procedure is easy to deduce.
Notably, this proof of the sequentialization theorem applies directly in the presence of the $\mix$ rules, and the $\mix$-free case can be easily deduced.
This means our variation of Yeo's theorem is a well-stated formulation for the five equivalent theorems from~\cite{Szeider04c}, as well as for~\cite[Theorem~2]{hcycles} and for the sequentialization theorem.

Our new graph-theoretical result can even be further generalized to accommodate the presence of some alternating cycles, with a more technical proof that still reposes on the cusp minimization lemma.
While the conditions on allowed cycles are not standard from a graph-theoretical point of view, this generalization gives a splitting vertex for the unit-free multiplicative-additive proof nets from Hughes and van Glabbeek~\cite{mallpnlong}, and thus a proof of sequentialization in this much more involved context.
Our approach is robust enough to also enable sequentialization through terminal vertices, as opposed to what is done in~\cite{mallpnlong}.
Remarkably, the connection between graph theory and proof net theory, previously restricted to multiplicative proof nets, is thus extended to the additive connectives.

Putting everything together, we get a direct and simple proof of sequentialization for the Danos-Regnier criterion,
assuming no prerequisite in graph theory.
The path to sequentialization in linear logic that we propose starts from cusp minimization then goes to the generalization of Yeo's theorem and concludes with the extraction of an inductive decomposition of proof nets.

\paragraph*{Outline.}
This paper is organized into three main parts.
\begin{enumerate}
\item
First, a purely graph-theoretical part about our generalization of Yeo's theorem.
We start by recalling usual notions -- graphs, paths, etc. -- and with our definition of local coloring (\cref{sec:graph_cusp}).
Then, we state and prove our generalization of Yeo's theorem, \cref{th:ParamLocalYeo}, through the cusp minimization lemma (\cref{sec:genyeo}).
We moreover show that the parameter-free version of our Yeo-style result is equivalent to the original one,
and we expose how to also derive from it the four other equivalent theorems from~\cite{Szeider04c},
as well as a generalization of Yeo's theorem to $\colorgraph$-colored graphs~\cite[Theorem~2]{hcycles} (\cref{sec:graph_comparison})
-- each one is obtained in a straightforward way, just by defining an appropriate coloring.
\item
Next comes a part about logic, with a definition of unit-free multiplicative linear logic with the $\mix$ rules and the associated notion of proof net (\cref{sec:mllpn}).
We then give various proofs of the sequentialization theorem for these proof nets, leveraging our generalization of Yeo's theorem (\cref{sec:seq}).
In the passing, we consider more closely the notion of connectedness in proof nets: first, we introduce the \emph{almost connected} proof nets -- these are the proof nets characterized by the existence of a switching path between the premises of each $\parr$-vertex -- and show that this condition is equivalent to being a disjoint union of connected proof nets -- equivalently, a proof net is almost connected if and only if it is the desequentialization of a proof whose $\mix$ rules are all at the root of the tree --; second, we compare our ordering with the kingdom ordering~\cite{kingemp}, the standard ordering in the literature of proof nets without the $mix$ rules (\cref{sec:more_connectedness}).
\item
Afterwards, we extend our technique in the presence of the additive connectives, mimicking the previous two parts.
We have, again, a purely graph-theoretical part with a further generalization of Yeo's theorem in the presence of some alternating cycles (\cref{sec:yeo_mall}).
It is followed by another logical part, composed first of a definition of unit-free multiplicative-additive linear logic with the $\mix$ rules and an associated notion of proof net (\cref{section:additives}), and then of various proofs of the sequentialization theorem for these proof nets, that are obtained by leveraging our last generalization of Yeo's theorem (\cref{sec:seqmall}).
\end{enumerate}


\section{Graphs and Cusps}
\label{sec:graph_cusp}

\subsection{Partial Undirected Graphs and Paths}
\label{sec:graph_def}

As we take interest in proof nets and Yeo's theorem in this paper, we study undirected paths in finite undirected partial multigraphs.
We recall here quickly some basic notions from graph theory, for more details we refer the reader to~\cite{bondymurty2}.

A (finite undirected multi) \definitive{partial graph} (without loop) is a triple $(\vertexset, \edgeset, \incidence)$ where $\vertexset$ (\definitive{vertices}) and $\edgeset$ (\definitive{edges}) are disjoint finite sets and $\incidence$ (\definitive{the incidence function}) associates to each edge a set of at most two vertices.
An edge $\someedge$ is \definitive{total} when $\incidence(\someedge)$ is of cardinal two, and a \definitive{total graph} (or simply a \definitive{graph}) is one whose edges are total.
Many notions lift immediately from total graphs to partial graphs, \eg\ isomorphisms that we denote by $\giso$.
An edge $\someedge$ is \definitive{incident} to a vertex $\somevertex$ if $\somevertex \in \incidence(\someedge)$, in which case $\somevertex$ is an \definitive{endpoint} of $\someedge$.

A \definitive{path} $\somepath$ is a non-empty finite alternating sequence of vertices and edges of the shape $(\somevertex_0, \someedge_1, \somevertex_1, \someedge_2, \somevertex_2, \dots, \someedge_n, \somevertex_n)$ such that for all $i \in \{1, \dots, n\}$, the endpoints of $\someedge_i$ are exactly $\somevertex_{i-1}$ and $\somevertex_i$ (which are distinct).
A path always has at least one vertex, but it can have no edge and be reduced to a single vertex $(\somevertex_0)$, in which case it is called an \definitive{empty} path.
With the notation above, $\somevertex_0$ is the \definitive{source} of $\somepath$, $\somevertex_n$ is its \definitive{target} and both make the \definitive{endpoints} of $\somepath$.
By the \definitive{vertices of} $\somepath$, as well as the \definitive{edges of} $\somepath$, we mean those it contains.
Since a given vertex may occur more than once in a path, we may have to talk about \definitive{occurrences of vertices in a path} to distinguish these equal values.
We use the following notations:
\begin{itemize}
\item
the \definitive{concatenation} of two paths $\somepath_1 = (\somevertex_0, \someedge_1, \dots, \someedge_k, \somevertex_k)$ and $\somepath_2 = (\somevertex_k, \someedge_{k+1}, \dots, \someedge_n, \somevertex_n)$ is the path $\concatpath{\somepath_1}{\somepath_2} = (\somevertex_0, \someedge_1, \dots, \someedge_k, \somevertex_k, \someedge_{k+1}, \dots, \someedge_n, \somevertex_n)$;
\item
the \definitive{reverse} of a path $\somepath = (\somevertex_0, \someedge_1, \somevertex_1, \dots, \someedge_k, \somevertex_k)$ is the path $\reversedpath{\somepath} = (\somevertex_k, \someedge_k, \somevertex_{k-1}, \dots, \someedge_1, \somevertex_0)$;
\item
if $\somevertex$ and $\othervertex$ are two (occurrences of) vertices of a path $\somepath$, with $\somevertex$ occurring before $\othervertex$, $\subpath{\somepath}{\somevertex}{\othervertex}$ is the unique \definitive{sub-path} (\ie\ sub-sequence that is a path) of $\somepath$ with source $\somevertex$ and target $\othervertex$.
\end{itemize}

A path is \definitive{simple} if its edges are pairwise distinct and its vertices are pairwise distinct except possibly its endpoints which may be equal.
A path is \definitive{closed} if it has equal endpoints, otherwise it is \definitive{open}.
A \definitive{cycle} is a non-empty simple closed path.
A graph with no cycle is called \definitive{acyclic}.

%
\begin{lem}[Concatenation of Simple Paths]\label{lem:concatsimplepaths}
If $\somepath_1$ and $\somepath_2$ are two simple open paths and their unique common vertices are the target of $\somepath_1$ and the source of $\somepath_2$, and possibly the target of $\somepath_2$ and the source of $\somepath_1$, and if the last edge of $\somepath_1$ is different from the first edge of $\somepath_2$, then $\concatpath{\somepath_1}{\somepath_2}$ is simple.
\end{lem}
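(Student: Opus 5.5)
We check the two conditions in the definition of simplicity directly. Fix notation: $\somepath_1 = (\somevertex_0, \someedge_1, \dots, \someedge_k, \somevertex_k)$ and $\somepath_2 = (\somevertex_k, \someedge_{k+1}, \dots, \someedge_n, \somevertex_n)$. Since both paths are open, $k \geq 1$ and $n-k \geq 1$. By simplicity of each path, the vertices $\somevertex_0, \dots, \somevertex_k$ are pairwise distinct, and so are $\somevertex_k, \dots, \somevertex_n$. Being open rules out the exception for equal endpoints. The concatenation is a well-defined path because the target of $\somepath_1$ is the source of $\somepath_2$.

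\emph{Edges.} The edges within $\somepath_1$ are pairwise distinct, and likewise within $\somepath_2$. Now suppose an edge $\someedge_i$ of $\somepath_1$ equals an edge $\someedge_j$ of $\somepath_2$. Then both endpoints of this edge are common vertices of the two paths. By hypothesis, the only common vertices are $\somevertex_k$ and, possibly, $\somevertex_0 = \somevertex_n$. So the edge has endpoints $\somevertex_k$ and $\somevertex_0 = \somevertex_n$ (they are distinct, since the graph has no loops).
\begin{itemize}
\item In $\somepath_1$, $\somevertex_0$ and $\somevertex_k$ occur only at positions $0$ and $k$. So $\someedge_i$ joins consecutive positions, which forces $k = 1$ and $i = 1$.
\item Symmetrically, in $\somepath_2$ this forces $n = k+1$ and $j = k+1$.
\end{itemize}
Hence $\someedge_i = \someedge_k$ is the last edge of $\somepath_1$ and $\someedge_j = \someedge_{k+1}$ is the first edge of $\somepath_2$. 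They are equal, contradicting the hypothesis.

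\emph{Vertices.} Consider two occurrences $\somevertex_a = \somevertex_b$ with $a < b$ in the concatenation. If both indices lie in $\{0,\dots,k\}$ or both in $\{k,\dots,n\}$, simplicity of $\somepath_1$ or $\somepath_2$ gives $a = b$, a contradiction. Otherwise $a < k < b$, so the vertex is common to both paths and differs from the position $k$ occurrence. By hypothesis it must then be $\somevertex_0 = \somevertex_n$, so $a = 0$ and $b = n$. These are the endpoints of the concatenation, which is exactly the allowed exception.

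The only delicate step is the edge case, where a shared edge between $\somevertex_k$ and $\somevertex_0 = \somevertex_n$ could create a closed path of length two. The hypothesis on distinct last and first edges is precisely what rules this out. Everything else is index bookkeeping.
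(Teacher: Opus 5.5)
Your proof is correct and complete. The paper states this lemma without proof, as an elementary fact, so your direct index check of edge- and vertex-distinctness is exactly the routine verification it leaves implicit. That check includes the key observation that a shared edge could only occur when both paths have length one and join $\somevertex_k$ to $\somevertex_0=\somevertex_n$, which the hypothesis on the last and first edges excludes.
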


\begin{lem}[Concatenation of Disjoint Simple Paths]\label{lem:concatsimplepathsprefix}
If $\somepath_1$ and $\somepath_2$ are two simple open or empty paths such that the target of $\somepath_1$ is the source of $\somepath_2$ and this is their unique common vertex, then $\concatpath{\somepath_1}{\somepath_2}$ is simple and open or empty.
\end{lem}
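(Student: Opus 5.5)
We argue directly from the definition of a simple path, checking the two clauses (pairwise distinct edges, and pairwise distinct vertices except possibly equal endpoints) together with the open-or-empty condition. Write $\somepath_1 = (\somevertex_0, \someedge_1, \dots, \someedge_k, \somevertex_k)$ and $\somepath_2 = (\somevertex_k, \someedge_{k+1}, \dots, \someedge_n, \somevertex_n)$. Since each $\somepath_i$ is open or empty and simple, its vertices are pairwise distinct: if open, its endpoints differ, so no exception applies; if empty, it has a single vertex. So within $\somepath_1$ the vertices $\somevertex_0,\dots,\somevertex_k$ are pairwise distinct, and likewise $\somevertex_k,\dots,\somevertex_n$ within $\somepath_2$.

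For the vertices of $\concatpath{\somepath_1}{\somepath_2}$, take two occurrences $\somevertex_i$ and $\somevertex_j$ with $i<j$. If both lie in the same $\somepath_\ell$, they are distinct by the previous paragraph. Otherwise $i<k<j$, so $\somevertex_i$ is a vertex of $\somepath_1$ different from $\somevertex_k$ and $\somevertex_j$ is a vertex of $\somepath_2$ different from $\somevertex_k$. By hypothesis the only common vertex is $\somevertex_k$, so $\somevertex_i \neq \somevertex_j$. Hence all vertex occurrences are pairwise distinct. In particular $\somevertex_0 \neq \somevertex_n$ unless $n=0$, so the concatenation is open or empty.

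For the edges, two edges in the same $\somepath_\ell$ are distinct by simplicity. Now take $\someedge_i$ in $\somepath_1$ and $\someedge_j$ in $\somepath_2$. If $\someedge_i = \someedge_j$, their endpoint sets coincide: $\{\somevertex_{i-1},\somevertex_i\} = \{\somevertex_{j-1},\somevertex_j\}$. Both endpoints would then be common to $\somepath_1$ and $\somepath_2$, giving two distinct common vertices, which contradicts uniqueness. Hence the edges are pairwise distinct and the concatenation is simple.

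There is no real obstacle here. The only point needing care is the observation that an open or empty simple path has pairwise distinct vertices. This is what lets us dispense with the hypothesis of distinct last and first edges used in \cref{lem:concatsimplepaths}, since the edge case is handled by the common-vertex argument. Alternatively, one could invoke \cref{lem:concatsimplepaths} for the case where both paths are open, and treat the empty cases trivially.
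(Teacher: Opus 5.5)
Your argument is correct and complete. The paper states this lemma without proof, as a routine consequence of the definitions. Your verification is precisely the check it leaves implicit: open-or-empty simple paths have pairwise distinct vertex occurrences, so occurrences from the two paths can only coincide at the shared vertex, and a shared edge would force two distinct common vertices.
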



Given a partial graph $\somegraph = (\vertexset, \edgeset, \incidence)$, a \definitive{sub-graph} of $\somegraph$ is a partial graph $\somegraph' = (\vertexset', \edgeset', \incidence')$ such that $\vertexset' \subseteq \vertexset$, $\edgeset' \subseteq \edgeset$ and $\incidence'$ is the restriction of $\incidence$ to $\edgeset'$ in its domain and sets of $\vertexset'$ in its codomain.
In other words, for $\someedge\in\edgeset'$ we have $\incidence'(\someedge)=\incidence(\someedge)\cap\vertexset'$.
Remark that a sub-graph $\somegraph'$ is uniquely defined by the data of $\vertexset'$ and $\edgeset'$.
For $\somegraph_{1}$ and $\somegraph_{2}$ sub-graphs of the partial graph $\somegraph$, the partial sub-graph $\somegraph_{1}\cup\somegraph_{2}$ (\resp\ $\somegraph_{1}\cap\somegraph_{2}$) has for vertices and edges the union (\resp\ intersection) of those of $\somegraph_{1}$ and $\somegraph_{2}$.
One can define a partial order $\subseteq$ on sub-graphs as follows: given $\somegraph_{1} = (\vertexset_1, \edgeset_1, \incidence_1)$ and $\somegraph_{2} = (\vertexset_2, \edgeset_2, \incidence_2)$ sub-graphs of a same partial graph, $\somegraph_{1} \subseteq \somegraph_{2}$ if and only if $\vertexset_1 \subseteq \vertexset_2$ and $\edgeset_1 \subseteq \edgeset_2$ (these last two being inclusions of sets).
In other words, $\somegraph_{1} \subseteq \somegraph_{2}$ if $\somegraph_1$ is a sub-graph of $\somegraph_2$.

Connectedness is not immediate to define in partial graphs because paths go from vertices to vertices.
Two vertices $\somevertex$ and $\othervertex$ are \definitive{connected} when there exists a path with endpoints $\somevertex$ and $\othervertex$.
Two edges are \definitive{connected} if they are equal (this is necessary for edges with no endpoint) or if they are incident to two connected vertices.
An edge $\someedge$ and a vertex $\somevertex$ are \definitive{connected} if $\someedge$ is incident to a vertex connected to $\somevertex$.
A partial graph $\somegraph = (\vertexset, \edgeset, \incidence)$
is \definitive{connected} when it is non-empty,
and for any pair \((x,y)\in(\vertexset\cup\edgeset)^2\), \(x\) and \(y\) are connected.
A \definitive{connected component} is a maximal connected sub-graph.

\subsection{Local Coloring and Cusps}\label{sec:localcolor}

Let $\somegraph$ be a partial graph.
A \definitive{local coloring} of $\somegraph$ is given
by a finite set \(\colors\) (the set of \definitive{colors})
and a function $\coloring$ mapping each pair of an edge $\someedge$ and one of its endpoints $\somevertex$,
to a color $\coloring(\someedge, \somevertex)$.
The intuition is that given an edge $\someedge$ and one of its endpoints $\somevertex$, $\coloring(\someedge, \somevertex)$ is the color of $\someedge$ according to $\somevertex$.
A local coloring can also be seen as a coloring of half-edges, \ie\ $\coloring(\someedge, \somevertex)$ is the color of the half of $\someedge$ near $\somevertex$.
When drawing a partial graph, we therefore represent $\coloring(\someedge, \somevertex)$ by coloring the part of $\someedge$ touching $\somevertex$, with colors also given by the shape of the edges (\textcolor{red}{solid}, \textcolor{blue}{dashed}, etc.).
We recover the standard notion of \definitive{edge-coloring}, which maps edges to colors, when for every edge $\someedge$, $\coloring(\someedge, \_)$ has the same value for all endpoints of $\someedge$.
An example of locally colored partial graph is given on \cref{fig:ex_loc_col},
where
$\coloring(\someedge, \somevertex) = \text{\textcolor{red}{solid}}$,
$\coloring(\someedge, \othervertex) = \text{\textcolor{red}{solid}}$,
$\coloring(\otheredge, \othervertex) = \text{\textcolor{red}{solid}}$,
$\coloring(\otheredge, \otherothervertex) = \text{\textcolor{blue}{dashed}}$,
$\coloring(\otherotheredge, \somevertex) = \text{\textcolor{blue}{dashed}}$,
$\coloring(\otherotheredge, \otherothervertex) = \text{\textcolor{red}{solid}}$ and
$\coloring(\otherotherotheredge, \somevertex) = \text{\textcolor{violet}{dotted}}$.
We generally keep the set \(\colors\) implicit and only define the coloring function \(\coloring\),
but the reader should be aware that this function needs not be surjective.

A \definitive{cusp} at $\somevertex$ of color $\somecolor$ is a triple $(\someedge, \somevertex, \otheredge)$ where $\someedge$ and $\otheredge$ are \emph{distinct} edges such that $\somevertex$ is an endpoint of both of these edges
and $\coloring(\someedge, \somevertex) = \somecolor = \coloring(\otheredge, \somevertex)$
(as a consequence $(\otheredge, \somevertex, \someedge)$ is also a cusp).
In this case, $\somevertex$ is called the \definitive{vertex of the cusp}, $\somecolor$ the \definitive{color of the cusp} and $(\somevertex, \somecolor)$ is called a \definitive{cusp-point}.
The locally colored partial graph in \cref{fig:ex_loc_col} has two cusps,
$(\someedge, \othervertex, \otheredge)$ and $(\otheredge, \othervertex, \someedge)$, both of vertex $\othervertex$ and color \textcolor{red}{solid},
so that $(\othervertex, \text{\textcolor{red}{solid}})$ is the only cusp-point of this partial graph.

More generally,
we will consider \definitive{vertex-color pairs} which are arbitrary pairs made of a vertex and a color.
A cusp-point is a particular instance of a vertex-color pair.
Note that, in general, having fixed a locally colored partial graph, 
we might well consider vertex-color pairs \((\somevertex,\somecolor)\in\vertexset\times\colors\)
that are not realized in the graph
-- \ie, such that there is no edge \(\someedge\) adjacent to \(\somevertex\)
and such that \(\somecolor=\coloring(\someedge,\somevertex)\).

A \definitive{cusp of a path} $\somepath$ is either a cusp made by a sub-sequence $(\someedge, \somevertex, \otheredge)$ of this path -- named an \definitive{internal cusp} of $\somepath$ -- or, in case $\somepath$ is closed, a cusp $(\someedge_n, \somevertex_0, \someedge_1)$ made by its last edge $\someedge_n$, its source (and target) $\somevertex_0$ and its first edge $\someedge_1$.
Remark that the reverse of a path contains the same number of cusps as this path.
A \definitive{cusp-free path}, also called an \definitive{alternating path}, is one without cusp.
Given a \emph{non-empty} path $\somepath$, whose source is $\somevertex_0$ and first edge is $\someedge_1$, its \definitive{starting color} is $\coloring(\someedge_1, \somevertex_0)$.
Similarly, if its target is $\somevertex_n$ and its last edge is $\someedge_n$, then the \definitive{ending color} of $\somepath$ is $\coloring(\someedge_n, \somevertex_n)$.
Remark the starting (\resp\ ending) color of $\reversedpath{\somepath}$ is the ending (\resp\ starting) color of $\somepath$.
For instance, in the partial graph depicted on \cref{fig:ex_loc_col} the path $(\somevertex, \someedge, \othervertex, \otheredge, \otherothervertex)$ has one cusp at $u$ of color \textcolor{red}{solid}, its starting color is \textcolor{red}{solid} and its ending color is \textcolor{blue}{dashed}.

\begin{fact}
\label{lem:reversing_trick}
Let $\somecycle$ be a cycle with no cusp at its source, and $\somecolor$ a color.
Then $\somecolor$ is not the starting color of $\somecycle$ or $\somecolor$ is not the starting color of $\reversedpath{\somecycle}$.
\end{fact}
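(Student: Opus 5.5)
We argue by contraposition. Write the cycle as $\somecycle = (\somevertex_0, \someedge_1, \somevertex_1, \dots, \someedge_n, \somevertex_n)$ with $\somevertex_n = \somevertex_0$. Since $\somecycle$ is a cycle, it is non-empty, so $n \geq 1$ and both $\somecycle$ and $\reversedpath{\somecycle}$ have a starting color. Suppose that $\somecolor$ is the starting color of both $\somecycle$ and $\reversedpath{\somecycle}$; we derive a cusp of $\somecycle$ at its source, contradicting the hypothesis.

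First, compute the two starting colors. By definition, the starting color of $\somecycle$ is $\coloring(\someedge_1, \somevertex_0)$. The reverse path is $\reversedpath{\somecycle} = (\somevertex_n, \someedge_n, \dots, \someedge_1, \somevertex_0)$, so its source is $\somevertex_n = \somevertex_0$ and its first edge is $\someedge_n$. Hence its starting color is $\coloring(\someedge_n, \somevertex_0)$. Equivalently, this is the ending color of $\somecycle$, as remarked after the definition. The assumption therefore gives
\[
\coloring(\someedge_n, \somevertex_0) = \somecolor = \coloring(\someedge_1, \somevertex_0).
\]

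Second, check that $(\someedge_n, \somevertex_0, \someedge_1)$ is a genuine cusp. This needs $\someedge_n \neq \someedge_1$. A cycle is simple, so its edges are pairwise distinct, and this settles every case with $n \geq 2$.

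The only point needing care is the degenerate case $n = 1$, where $\someedge_1 = \someedge_n$. I expect to rule it out directly: $n = 1$ would force $\somevertex_0 = \somevertex_1$, but the definition of a path requires the two endpoints of each edge $\someedge_i$ to be distinct. So every cycle has $n \geq 2$.

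With $\someedge_n \neq \someedge_1$ and both edges incident to $\somevertex_0$ with the same color $\somecolor$ there, the triple $(\someedge_n, \somevertex_0, \someedge_1)$ is a cusp. It is formed by the last edge, the source and the first edge of the closed path $\somecycle$, so it is a cusp of $\somecycle$ at its source. This contradicts the hypothesis, and hence $\somecolor$ cannot be the starting color of both $\somecycle$ and $\reversedpath{\somecycle}$.
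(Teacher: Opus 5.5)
Your proof is correct. The paper states this as a Fact without proof, and your argument is exactly the definitional unfolding it leaves implicit: the starting color of $\overline{\omega}$ is the ending color of $\omega$, so equality of both with $\alpha$ makes $(e_n, v_0, e_1)$ a cusp at the source. Your extra check that $e_1 \neq e_n$ is precisely what makes that triple a genuine cusp; it uses that each edge of a path has distinct endpoints, so $n \geq 2$, and that cycles are simple.
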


We call \definitive{splitting} a vertex $\somevertex$ such that any cycle containing it has a cusp at $\somevertex$.
We will show in \cref{sec:stdyeo} that this fits the notion at play in the conclusion of Yeo's theorem~\cite{yeotheorem}.

\begin{rem}
\label{rem:local_coloring_on_vertex}
We invented this ``local coloring'', which is not standard in the literature, and the name ``cusp''.
When used only through the notions of cusps and splitting vertices, that a same color is used on different vertices has no impact.
Hence, we could use different sets of colors depending on each vertex, or not use more colors than the maximal degree of the graph.
Equivalently, a local coloring is an equivalence relation on the edges incident to $\somevertex$, for each vertex $\somevertex$.
We keep the idea of local coloring as it is a direct generalization of edge-coloring.
\end{rem}

\begin{figure}
\begin{minipage}[b]{.40\textwidth}
\centering
\begin{tikzpicture}
\begin{genscope}
	\node (u) at (0,2) {$\othervertex$};
	\node (v) at (-1,0) {$\somevertex$};
	\node (w) at (1,0) {$\otherothervertex$};
	\coordinate (uv) at (-.5,1);
	\coordinate (uw) at (.5,1);
	\coordinate (vw) at (0,0);
	\coordinate (vt) at (-1.8,0.8);
\end{genscope}
	\node at (-.7,1.2) {$\someedge$};
	\node at (.7,1.2) {$\otheredge$};
	\node at (0,-.2) {$\otherotheredge$};
	\node at (-1.4,0.75) {$\otherotherotheredge$};
\begin{genscope}[red][-]
	\path (u) edge (uv);
	\path (u) edge (uw);
	\path (v) edge (uv);
	\path (w) edge (vw);
\end{genscope}
\begin{genscope}[blue][-, dashed]
	\path (v) edge (vw);
	\path (w) edge (uw);
\end{genscope}
\begin{genscope}[violet][-,dotted]
	\path (v) edge (vt);
\end{genscope}
\end{tikzpicture}
\caption{Example of locally colored partial graph}
\label{fig:ex_loc_col}
\end{minipage}\hfill
\begin{minipage}[b]{.60\textwidth-1em}
\centering
\begin{tikzpicture}
\begin{genscope}
	\node (v) at (0,0) {$\somevertex$};
	\node (pier) at (3,-1) {$\othervertex$};
	\node (other) at (6,-1) {$\otherothervertex$};
\end{genscope}
\begin{myscopehigh}{cyan}
 	\path[out=-90,in=-90] (pier.-50) edge (other.-120);
	\path[out=180,in=0] (other.200) edge node[chemin nomme,inner sep=3pt,below]{$\otherclosed$} (pier.-20);
\end{myscopehigh}
\begin{genscope}
	\path[out=-90,in=180] (v) edge[-] (pier);
 	\path[out=0,in=180] (pier) edge[-] (other);
 	\path[out=45,in=45] (other)  edge[-] node[chemin nomme,midway,below]{$\somecycle$} (v);
 	\path[out=-90,in=-90] (pier) edge[-,red] node[chemin nomme,fill=white]{$\otherpath$} (other);
\end{genscope}
\begin{myscopehigh}{blue}
	\path[out=-90,in=180] (v.250) edge[looseness=1.05] node[chemin nomme,inner sep=3pt,below]{$\someclosedyeo$} (pier.200);
	\path[out=-90,in=-90] (pier.250) edge[looseness=1.1] (other.290);
 	\path[out=45,in=45] (other.25) edge[looseness=1.07] (v.65);
\end{myscopehigh}
\end{tikzpicture}
\caption{Illustration of \cref{lem:bjumping_Yeo_base}}
\label{fig:bjumping_Yeo_base}
\end{minipage}
\end{figure}


\subsection{Cusp Minimization}

The key ingredient for proving our Yeo-style theorem is showing that for any pair $(\somevertex, \somecolor)$ maximal for the strict partial order $\orderyeo$ (\cref{def:orderyeo}), $\somevertex$ is splitting.
It is a consequence of the following:

\begin{lem}[Cusp Minimization]\label{lem:bjumping_Yeo_base}
Fix a partial graph $\somegraph$ with a local coloring.
Assume $\somecycle$ is a cycle starting from a vertex $\somevertex$, with no cusp at $\somevertex$ but containing a cusp of vertex $\othervertex$ and color $\somecolor$.
Suppose there exists a simple open cusp-free path $\otherpath$ starting from $\othervertex$ \emph{not} with color $\somecolor$, and ending on a vertex $\otherothervertex$ of $\somecycle$ (and with no other vertex in common with $\somecycle$ than $\othervertex$ and $\otherothervertex$).
Then either there exists a cusp-free cycle having $\otherpath$ as a sub-path
or there exists a cycle with source $\somevertex$, with no cusp at $\somevertex$ and with strictly less cusps than $\somecycle$.
\end{lem}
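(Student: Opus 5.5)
We construct two candidate cycles from $\somecycle$ and $\otherpath$ and compare cusp counts. Since $\somecycle$ has no cusp at $\somevertex$ but has a cusp at $\othervertex$, we have $\othervertex \neq \somevertex$. Also $\otherothervertex \neq \othervertex$ since $\otherpath$ is open. Reversing $\somecycle$ preserves its source, the absence of a cusp at $\somevertex$, and its number of cusps, so we may choose the orientation freely. The vertex $\otherothervertex$ is either $\somevertex$ itself or another vertex of $\somecycle$.

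\emph{Case $\otherothervertex \neq \somevertex$.} Orient $\somecycle$ so that $\othervertex$ occurs before $\otherothervertex$, and decompose
\[
\somecycle = \concatpath{\concatpath{a}{b}}{c},
\]
where $a$ goes from $\somevertex$ to $\othervertex$, $b$ from $\othervertex$ to $\otherothervertex$, and $c$ from $\otherothervertex$ back to $\somevertex$. All three are non-empty.
\begin{itemize}
\item The cusp at $\othervertex$ is $(\someedge,\othervertex,\otheredge)$, with $\someedge$ the last edge of $a$ and $\otheredge$ the first edge of $b$. Both have color $\somecolor$ at $\othervertex$.
\item The first candidate is $\somecycle_1 = \concatpath{\concatpath{a}{\otherpath}}{c}$. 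It is a cycle by \cref{lem:concatsimplepaths}, because $\otherpath$ meets $\somecycle$ only at $\othervertex$ and $\otherothervertex$. It has source $\somevertex$, and the edges at $\somevertex$ are unchanged, so it has no cusp at $\somevertex$.
\item There is no cusp at $\othervertex$ in $\somecycle_1$: $\coloring(\someedge,\othervertex)=\somecolor$, while the starting color of $\otherpath$ differs from $\somecolor$. The path $\otherpath$ has no internal cusp.
\end{itemize}
Writing $i(\cdot)$ for the number of internal cusps, we get
\[
|\somecycle| = i(a)+i(b)+i(c)+1+[\text{cusp at }\otherothervertex\text{ in }\somecycle],
\qquad
|\somecycle_1| = i(a)+i(c)+[\text{cusp at }\otherothervertex\text{ between }\otherpath\text{ and }c].
\]
If $|\somecycle_1|<|\somecycle|$, we are done. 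Otherwise the comparison forces three facts:
\begin{itemize}
\item $i(b)=0$;
\item $\somecycle$ has no cusp at $\otherothervertex$;
\item the last edge of $\otherpath$ and the first edge of $c$ form a cusp at $\otherothervertex$.
\end{itemize}
Hence the ending color of $\otherpath$ equals the color at $\otherothervertex$ of the first edge of $c$. That color differs from the ending color of $b$, since $\somecycle$ has no cusp at $\otherothervertex$.

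Now take the second candidate, the closed path $\concatpath{b}{\reversedpath{\otherpath}}$ based at $\othervertex$. It is a cycle by \cref{lem:concatsimplepaths}, and we check it has no cusp anywhere:
\begin{itemize}
\item at $\otherothervertex$, the ending color of $b$ differs from the ending color of $\otherpath$, as just shown;
\item at $\othervertex$, the edge $\otheredge$ has color $\somecolor$ while the first edge of $\otherpath$ does not;
\item $b$ and $\otherpath$ have no internal cusps.
\end{itemize}
So this cycle is cusp-free. Its reverse, rotated to start at $\othervertex$, has $\otherpath$ as a sub-path.

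\emph{Case $\otherothervertex = \somevertex$.} Here the junction of $\otherpath$ with $\somecycle$ at $\somevertex$ could create a cusp at $\somevertex$, so the orientation must be chosen with care. Let $\othercolor$ be the ending color of $\otherpath$. By \cref{lem:reversing_trick}, choose the orientation of $\somecycle$ whose starting color is not $\othercolor$. Write $\somecycle = \concatpath{a}{b}$ with $a$ from $\somevertex$ to $\othervertex$. Then $\concatpath{a}{\otherpath}$ is a cycle with source $\somevertex$, and it has no cusp at $\somevertex$ by the choice of orientation. It also has no cusp at $\othervertex$, by the color hypothesis on $\otherpath$. Its number of cusps is $i(a) < i(a)+i(b)+1 \le |\somecycle|$, as required.

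The main obstacle is this bookkeeping: comparing cusp counts precisely at the junctions $\othervertex$ and $\otherothervertex$, and noticing that the failure case of the first candidate is exactly what makes the second candidate cusp-free. The degenerate case $\otherothervertex=\somevertex$ needs separate handling through \cref{lem:reversing_trick}.
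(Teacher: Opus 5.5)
Your proof is correct and follows the paper's argument. Your candidates $a\cdot q\cdot c$ and $q\cdot\overline{b}$ are exactly the paper's cycles $\omega'$ and $d$, and you use the same cusp count: in the failure case, $b$ is cusp-free, $\omega$ has no cusp at $x$, and $\omega'$ has one. The only difference is presentational: the paper absorbs the case $x=v$ into the same count by fixing the orientation with \cref{lem:reversing_trick}, so that the third segment is empty, whereas you treat it as a separate case.
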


\begin{proof}
Use \cref{fig:bjumping_Yeo_base} as a reference for notations.
We use the notation $\somevertex_1$ for the occurrence of $\somevertex$ at the source of $\somecycle$, and $\somevertex_2$ for its occurrence at the target of $\somecycle$.
Call $\othercolor$ the ending color of $\otherpath$.

By symmetry (considering the reverse of $\somecycle$ if necessary), we can assume that $\otherothervertex$ is in $\subpath{\somecycle}{\othervertex}{\somevertex_2}$ and if $\otherothervertex=\somevertex_2$ then $\othercolor$ is not the starting color of $\somecycle$.
Indeed, if $\otherothervertex\notin\subpath{\somecycle}{\othervertex}{\somevertex_2}$, we reverse $\somecycle$.
Otherwise and if $\otherothervertex=\somevertex_2$, we apply \cref{lem:reversing_trick} to $\somecycle$ and $\othercolor$ to get that $\somecycle$ or $\reversedpath{\somecycle}$ respects our assumption.

Consider the cycles $\someclosedyeo \eqdef \concatpath{\subpath{\somecycle}{\somevertex_1}{\othervertex}}{\concatpath{\otherpath}{\subpath{\somecycle}{\otherothervertex}{\somevertex_2}}}$ and $\otherclosed \eqdef \concatpath{\otherpath}{\subpath{\reversedpath{\somecycle}}{\otherothervertex}{\othervertex}}$ (see \cref{fig:bjumping_Yeo_base}).
Both of these paths are indeed non-empty (because $\otherpath$ is non-empty) and simple (using \cref{lem:concatsimplepaths}).

Let us count the number of cusps in $\somecycle$, $\someclosedyeo$ and $\otherclosed$.
Recall that $\othervertex$ is a cusp of $\somecycle$ of color $\somecolor$, $\otherpath$ is cusp-free and its starting color is not $\somecolor$, and that $\someclosedyeo$ has no cusp at $\somevertex$ (by our symmetry argument above).
Thus, there are $n_1+1+n_2+\isbridge{\somecycle}{\otherothervertex}+n_3$ cusps in $\somecycle$, $n_1+\isbridge{\someclosedyeo}{\otherothervertex}+n_3$ cusps in $\someclosedyeo$, and $\isbridge{\otherclosed}{\otherothervertex}+n_2$ cusps in $\otherclosed$, where:
\begin{itemize}
\item $n_1$ (\resp\ $n_2$, $n_3$) is the number of cusps of $\subpath{\somecycle}{\somevertex_1}{\othervertex}$ (\resp\ $\subpath{\somecycle}{\othervertex}{\otherothervertex}$, $\subpath{\somecycle}{\otherothervertex}{\somevertex_2}$);
\item $\isbridge{\somecycle}{\otherothervertex}$ (\resp\ $\isbridge{\someclosedyeo}{\otherothervertex}$, $\isbridge{\otherclosed}{\otherothervertex}$) is $1$ if $\somecycle$ (\resp\ $\someclosedyeo$, $\otherclosed$) has a cusp at $\otherothervertex$ and $0$ otherwise.
\end{itemize}
If $\someclosedyeo$ has strictly less cusps than $\somecycle$ we are done, otherwise $\isbridge{\someclosedyeo}{\otherothervertex}\geq 1+n_2+\isbridge{\somecycle}{\otherothervertex}$.
Hence, $n_2=0$, $\isbridge{\somecycle}{\otherothervertex}=0$ and $\isbridge{\someclosedyeo}{\otherothervertex}=1$.
But the last two imply $\isbridge{\otherclosed}{\otherothervertex} = 0$, so that $\otherclosed$ is a cusp-free cycle containing $\otherpath$ as a sub-path.
\end{proof}

\begin{rem}
The proof of \cref{lem:bjumping_Yeo_base} is obviously constructive: changing if necessary the orientation of $\somecycle$, the cycles we are looking for are $\concatpath{\otherpath}{\subpath{\reversedpath{\somecycle}}{\otherothervertex}{\othervertex}}$ and $\concatpath{\subpath{\somecycle}{\somevertex_1}{\othervertex}}{\concatpath{\otherpath}{\subpath{\somecycle}{\otherothervertex}{\somevertex_2}}}$.
\end{rem}

For a vertex $\somevertex$, we denote by $\mincycles{\somevertex}$ the set of cycles:
\begin{itemize}
\item with source (and target) $\somevertex$;
\item whose last and first edges do not make a cusp at $\somevertex$;
\item with a minimal number of cusps among all cycles respecting the previous two conditions.
\end{itemize}
Observe that the set $\mincycles{\somevertex}$ is empty if and only if $\somevertex$ is splitting.

\begin{cor}[Cusp Cycling]\label{lem:bjumping_Yeo}
Fix a partial graph $\somegraph$ with a local coloring,
$\somevertex$ a vertex of $\somegraph$, and $\somecycle\in\mincycles{\somevertex}$.
Assume $\somecycle$ contains a cusp $(\otheredge_1,\somepier,\otheredge_2)$ of color $\somecolor$,
and that there is a simple open cusp-free path $\otherpath$ starting from $\somepier$
\emph{not} with color $\somecolor$, and
ending on a vertex $\othervertex$ of $\somecycle$.
Then there exists a cusp-free cycle containing $\somepier$ in $\somegraph$. 
\end{cor}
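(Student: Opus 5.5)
The plan is to reduce directly to \cref{lem:bjumping_Yeo_base}, applied to $\somecycle$ with $\othervertex \eqdef \somepier$. Minimality of $\somecycle$ in $\mincycles{\somevertex}$ will rule out the second alternative of that lemma.

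The only hypothesis of \cref{lem:bjumping_Yeo_base} not directly available is that $\otherpath$ meets $\somecycle$ only at its endpoints. I would first truncate $\otherpath$. Since $\otherpath$ is open, it is non-empty, and its target $\othervertex$ lies on $\somecycle$. Let $\otherothervertex$ be the first vertex of $\otherpath$ after its source $\somepier$ that belongs to $\somecycle$; such a vertex exists because the target qualifies. Set $\otherpath' \eqdef \subpath{\otherpath}{\somepier}{\otherothervertex}$. Then:
\begin{itemize}
\item $\otherpath'$ is non-empty.
\item $\otherpath'$ is simple and open, since it is a sub-path of the simple open path $\otherpath$ and $\otherothervertex \neq \somepier$.
\item $\otherpath'$ is cusp-free, since its internal cusps would be cusps of $\otherpath$.
\item $\otherpath'$ has the same first edge as $\otherpath$, hence the same starting color, which is not $\somecolor$.
\item By the choice of $\otherothervertex$, its only vertices in common with $\somecycle$ are $\somepier$ and $\otherothervertex$.
\end{itemize}

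Next I check the remaining hypotheses on $\somecycle$. It is a cycle with source $\somevertex$, and being in $\mincycles{\somevertex}$ means it has no cusp at $\somevertex$. It contains the cusp $(\otheredge_1,\somepier,\otheredge_2)$ of color $\somecolor$. If $\somepier$ were $\somevertex$, this cusp could only be the one at the source, which is excluded. Hence this is a cusp of vertex $\somepier$ in the sense required.

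Applying \cref{lem:bjumping_Yeo_base} to $\somecycle$, $\somepier$, $\somecolor$ and $\otherpath'$ gives one of two outcomes. The first is a cycle with source $\somevertex$, no cusp at $\somevertex$, and strictly fewer cusps than $\somecycle$; this contradicts the minimality in the definition of $\mincycles{\somevertex}$. So the second outcome holds: a cusp-free cycle having $\otherpath'$ as a sub-path. Since $\somepier$ is the source of $\otherpath'$, this cycle contains $\somepier$, as required.

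The only delicate step is the truncation of $\otherpath$, together with checking that truncation preserves the starting color and cusp-freeness. Both are immediate because truncation keeps the first edge and only removes edges.
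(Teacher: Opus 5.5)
Your proposal is correct and follows the paper's own proof: truncate $\otherpath$ to its shortest prefix that reaches $\somecycle$, apply \cref{lem:bjumping_Yeo_base}, and use the minimality of $\somecycle\in\mincycles{\somevertex}$ to rule out the fewer-cusps alternative. Your additional checks make explicit what the paper leaves implicit: truncation preserves the starting color and cusp-freeness, and $\somepier\neq\somevertex$ because a cycle has no internal occurrence of its source.
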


\begin{proof}
By taking a prefix of $\otherpath$ if necessary, we can assume that $\otherpath$ does not share any vertex with $\somecycle$ other than its (distinct) endpoints $\somepier$ and $\othervertex$.
We apply \cref{lem:bjumping_Yeo_base}.
Since $\somecycle\in\mincycles{\somevertex}$, we cannot find a cycle with source $\somevertex$, no cusp at $\somevertex$ and strictly less cusps than $\somecycle$.
We thus have a cusp-free cycle containing $\somepier$.
\end{proof}

The above corollary is sufficient for our proof of Yeo's theorem (in \cref{sec:genyeo}).
The next result will be useful only for a further generalization of Yeo's theorem, that allows some cusp-free cycles (in \cref{sec:yeo_mall}).

\begin{figure}
\centering
\begin{tikzpicture}
\begin{genscope}
	\node (v) at (0,0) {$\somevertex$};
	\node (x) at (2.5,-1) {$\otherothervertex$};
	\node (y) at (5.5,-1) {$\otherotherothervertex$};
	\node[magenta] (pier0) at (4,-3) {$\somepier_0$};
	\node[magenta] (pier) at (4.5,-4.5) {$\somepier$};
	\node (other) at (8,-1) {$\othervertex$};
	\node (te) at (4.5,-6) {$\someleaf$};
	\coordinate (etb) at (4.65,-6.75);
	\coordinate (pr') at (6.1,-5.75);
	\coordinate (pr) at (7.25,-3.75);
	\coordinate (rho1up) at (5,-1.5);
	\coordinate (rho2up) at (4.5,-3.75);

	\path[out=-90,in=180] (v) edge[-] (x);
	\path (x) edge[-] (y);
	\path[out=0,in=180] (y) edge[-] (other);
	\path[out=45,in=45] (other)
          edge[-]
          node[chemin nomme,midway,below]{$\somecycle$}
          (v);
	\path[out=-90,in=180] (x)
          edge[-,magenta]
          node[chemin nomme]{$\otherotherpath_0$}
          (pier0);
	\path[out=-125,in=180] (pier0) edge[-,orange] (rho2up);
	\path[out=0,in=180] (rho2up) edge[-,orange] (rho1up);
	\path[out=0,in=45] (rho1up)
          edge[-,orange]
          node[chemin nomme,below]{$\otherotherpath_1$}
          (pier);
	\path[out=-90,in=0] (y)
          edge[-,magenta]
          node[chemin nomme,above]{$\otherotherotherpath_0$}
          (pier0);
	\path[out=-90,in=135] (pier0)
          edge[-,magenta]
          node[chemin nomme,below]{$\otherotherotherpath_1$}
          (pier);
	\path[out=-90,in=90] (pier)
          edge[-,olive]
          node[arete nommee,left,inner sep=5pt]{$\someedge$}
          (te);
	\path[out=-90,in=-90] (te)
          edge[-,red]
          node[chemin nomme,below,inner sep=7pt]{$\somepath$}
          (other);
\end{genscope}
\begin{myscopehigh}{blue}
	\path[out=-90,in=180] (v.250) edge node[chemin nomme,inner sep=3pt,below]{$\somecycle'$} (x.200);
	\path[out=-90,in=180] (x.-110) edge (pier0.200);
	\path[out=-90,in=0] (y.-70) edge (pier0.-20);
	\path[out=180,in=0] (other.200) edge (y.-20);
	\path[out=45,in=45] (other.25) edge[looseness=1.05] (v.65);
\end{myscopehigh}
\begin{myscopehigh}{cyan}
	\path[out=-90,in=135] (pier0.-70) edge (pier.115);
	\path[out=-90,in=90] (pier.-70) edge (te.70);
	\path[out=-90,in=110] (te.-70) edge (etb);
	\path[out=-70,in=-130] (etb) edge (pr');
	\path[out=50,in=-120] (pr') edge (pr);
	\path[out=70,in=-90] (pr) edge node[chemin nomme,above]{$\otherpath$} (other.250);
\end{myscopehigh}
\end{tikzpicture}
\caption{Illustration of \cref{lem:bjumping_gen}}
\label{fig:bjumping_gen}
\end{figure}

\begin{cor}[Cusp Minimization 2]\label{lem:bjumping_gen}
Fix a partial graph $\somegraph$ with a local coloring.
Assume $\somecycle$ is a cycle starting from a vertex $\somevertex$, with no cusp at $\somevertex$, and $\otherothervertex$ and $\otherotherothervertex$ are two vertices of $\somecycle$, both different from $\somevertex$, with $\otherothervertex$ occurring before (or equal to) $\otherotherothervertex$, and with at least one cusp between $\otherothervertex$ and $\otherotherothervertex$ (possibly at $\otherothervertex$ or $\otherotherothervertex$).
Suppose $\somepier$ is a vertex such that we have a path $\otherotherpath$ from $\otherothervertex$ to $\somepier$, a path $\otherotherotherpath$ from $\otherotherothervertex$ to $\somepier$ and an edge $\someedge$ with endpoints $\somepier$ and $\someleaf$ such that:
\begin{itemize}
\item $\concatpath{\otherotherpath}{(\somepier,\someedge,\someleaf)}$ is a simple cusp-free path whose starting color is not the ending color of $\subpath{\somecycle}{\somevertex}{\otherothervertex}$;
\item $\concatpath{\otherotherotherpath}{(\somepier,\someedge,\someleaf)}$ is a simple cusp-free path whose starting color is not the starting color of $\subpath{\somecycle}{\otherotherothervertex}{\somevertex}$;
\item the only vertices of $\otherotherpath$ or $\otherotherotherpath$ which may belong to $\concatpath{\subpath{\somecycle}{\otherotherothervertex}{\somevertex}}{\subpath{\somecycle}{\somevertex}{\otherothervertex}}$ are $\otherothervertex$ and $\otherotherothervertex$ (which might be equal to $\somepier$ when one of these paths is empty).
\end{itemize}
If there is a simple open cusp-free path $\concatpath{(\somepier,\someedge,\someleaf)}{\somepath}$ such that:
\begin{itemize}
\item its target $\othervertex$ belongs to $\concatpath{\subpath{\somecycle}{\otherotherothervertex}{\somevertex}}{\subpath{\somecycle}{\somevertex}{\otherothervertex}}$
\item both $\otherotherpath$ and $\otherotherotherpath$ have no vertex in common with $\somepath$
\end{itemize}
then either there exists a cusp-free cycle containing $\someedge$
or there exists a cycle with source $\somevertex$, with no cusp at $\somevertex$ and with strictly less cusps than $\somecycle$.
\end{cor}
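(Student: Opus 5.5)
The plan mirrors the proof of \cref{lem:bjumping_Yeo_base}. The path $\concatpath{\otherotherpath}{(\somepier,\someedge,\someleaf)}$, or symmetrically $\concatpath{\otherotherotherpath}{(\somepier,\someedge,\someleaf)}$, continued by $\somepath$, plays the rôle of the path $\otherpath$ there. It is used to ``short-cut'' $\somecycle$ so as to bypass the segment $\subpath{\somecycle}{\otherothervertex}{\otherotherothervertex}$, which contains a cusp. The path $\otherotherpath$ is never combined with $\otherotherotherpath$; only one of the two is used, depending on where $\othervertex$ lies. This avoids having to control how $\otherotherpath$ and $\otherotherotherpath$ intersect each other.

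\emph{Case split on $\othervertex$.} The target $\othervertex$ lies on $\subpath{\somecycle}{\otherotherothervertex}{\somevertex_2}$ or on $\subpath{\somecycle}{\somevertex_1}{\otherothervertex}$; the two cases are exchanged by reversing $\somecycle$ and swapping $(\otherothervertex,\otherotherpath)$ with $(\otherotherothervertex,\otherotherotherpath)$. The hypotheses on starting and ending colors are designed to be invariant under this swap. If $\othervertex=\somevertex$, I would choose the orientation using \cref{lem:reversing_trick} applied to the ending color of $\somepath$, exactly as in \cref{lem:bjumping_Yeo_base}.

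\emph{First case, $\othervertex$ on $\subpath{\somecycle}{\otherotherothervertex}{\somevertex_2}$.} Define
\[\somecycle' \eqdef \concatpath{\subpath{\somecycle}{\somevertex_1}{\otherothervertex}}{\concatpath{\otherotherpath}{\concatpath{(\somepier,\someedge,\someleaf)}{\concatpath{\somepath}{\subpath{\somecycle}{\othervertex}{\somevertex_2}}}}}\]
and
\[\otherclosed \eqdef \concatpath{(\somepier,\someedge,\someleaf)}{\concatpath{\somepath}{\concatpath{\subpath{\reversedpath{\somecycle}}{\othervertex}{\otherotherothervertex}}{\otherotherotherpath}}}.\]
Simplicity of both follows from \cref{lem:concatsimplepaths} and \cref{lem:concatsimplepathsprefix}, using the disjointness hypotheses: $\somepath$ avoids $\otherotherpath$ and $\otherotherotherpath$, and these two meet the outer part of $\somecycle$ only at $\otherothervertex$ and $\otherotherothervertex$. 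Some attention is needed in degenerate situations: an empty $\otherotherpath$ or $\otherotherotherpath$, $\otherothervertex=\otherotherothervertex$, or $\othervertex=\otherotherothervertex$.

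\emph{Counting cusps.} In $\somecycle'$ there is no cusp at $\otherothervertex$, by the first color hypothesis. There is none at $\somepier$, because $\concatpath{\otherotherpath}{(\somepier,\someedge,\someleaf)}$ is cusp-free, and none along $\somepath$. There is none at $\somevertex$, by the choice of orientation. So its cusps are the $n_1$ cusps of $\subpath{\somecycle}{\somevertex_1}{\otherothervertex}$ strictly before $\otherothervertex$, the $n_3$ cusps strictly inside $\subpath{\somecycle}{\othervertex}{\somevertex_2}$, and possibly one at $\othervertex$, counted by $\isbridge{\somecycle'}{\othervertex}$. On the other side, $\somecycle$ has at least $n_1+1+n_2+\isbridge{\somecycle}{\othervertex}+n_3$ cusps. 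Here the $1$ is the cusp guaranteed between $\otherothervertex$ and $\otherotherothervertex$, and $n_2$ counts the cusps strictly between $\otherotherothervertex$ and $\othervertex$. If $\somecycle'$ does not have strictly fewer cusps, then $n_2=0$, $\isbridge{\somecycle}{\othervertex}=0$ and $\isbridge{\somecycle'}{\othervertex}=1$. The last two facts force $\otherclosed$ to have no cusp at $\othervertex$, by the same two-color argument as in the lemma. Moreover, $\otherclosed$ has no cusp at $\otherotherothervertex$, by the second color hypothesis; none at $\somepier$, since $\concatpath{\otherotherotherpath}{(\somepier,\someedge,\someleaf)}$ is cusp-free; and none elsewhere, since $n_2=0$. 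Hence $\otherclosed$ is a cusp-free cycle containing $\someedge$.

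\emph{Main obstacle.} The delicate part is the bookkeeping at the boundary vertices. One must make sure that a cusp located exactly at $\otherotherothervertex$, or at $\othervertex$ when $\othervertex=\otherotherothervertex$, is counted once and correctly on each side. One must also check that the two color hypotheses give exactly the absence of cusps at $\otherothervertex$ in $\somecycle'$ and at $\otherotherothervertex$ in $\otherclosed$, with the correct orientations. I would handle this by treating $\othervertex=\otherotherothervertex$ and $\othervertex=\somevertex$ as separate subcases before running the general count.
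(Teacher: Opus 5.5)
Your argument is correct, but it takes a different route from the paper's.

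\textbf{The paper's route.} The paper does not redo the cusp count of \cref{lem:bjumping_Yeo_base}. It lets $\somepier_0$ be the first vertex of $\otherotherpath$ lying on $\otherotherotherpath$. It then replaces the segment of $\somecycle$ between $\otherothervertex$ and $\otherotherothervertex$ by $\concatpath{\otherotherpath_0}{\reversedpath{\otherotherotherpath_0}}$. This yields a cycle $\somecycle'$ with no cusp at $\somevertex$ and no more cusps than $\somecycle$. If $\somecycle'$ has no cusp at $\somepier_0$, it already has strictly fewer cusps and the proof ends. Otherwise the paper applies \cref{lem:bjumping_Yeo_base} as a black box to $\somecycle'$ and the path $\concatpath{\otherotherotherpath_1}{\concatpath{(\somepier,\someedge,\someleaf)}{\somepath}}$. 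The two color hypotheses ensure this path does not start with the color of the cusp at $\somepier_0$.

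\textbf{Your route.} You generalize the proof of \cref{lem:bjumping_Yeo_base} itself. You build the two candidate cycles directly, each going through only one of $\otherotherpath$ and $\otherotherotherpath$. This buys two things: you never need to know how $\otherotherpath$ and $\otherotherotherpath$ intersect, and both resulting cycles are explicit. The price is that you must redo the orientation bookkeeping, which the paper inherits from the lemma: the case split on where $\othervertex$ lies, and \cref{lem:reversing_trick} when $\othervertex=\somevertex$. I checked your count and the two-color argument at $\othervertex$, and they go through.

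\textbf{Details to tighten.}
\begin{itemize}
\item You must first truncate $\somepath$ at its first vertex on $\concatpath{\subpath{\somecycle}{\otherotherothervertex}{\somevertex}}{\subpath{\somecycle}{\somevertex}{\otherothervertex}}$, as the paper does. Otherwise your simplicity claims for both cycles can fail.
\item The degenerate subcase $\othervertex=\otherotherothervertex$ that you plan to treat separately cannot occur, and neither can $\othervertex=\otherothervertex$. Indeed $\othervertex$ lies on $\somepath$, while $\otherothervertex$ and $\otherotherothervertex$ lie on $\otherotherpath$ and $\otherotherotherpath$, which are disjoint from $\somepath$. So in your first case $\othervertex$ comes strictly after $\otherotherothervertex$, and the cusp count splits cleanly.
\item $\otherotherpath$, $\otherotherotherpath$ and $\somepath$ may pass through the inner segment of $\somecycle$ between $\otherothervertex$ and $\otherotherothervertex$. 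This is harmless, since neither of your cycles uses that segment, but it is worth saying explicitly.
\end{itemize}
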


\begin{proof}
Use \cref{fig:bjumping_gen} as a reference for notations.
By taking a prefix of $\somepath$ if necessary, we can assume that $\somepath$ does not share any vertex with $\concatpath{\subpath{\somecycle}{\otherotherothervertex}{\somevertex}}{\subpath{\somecycle}{\somevertex}{\otherothervertex}}$ other than its target $\othervertex$ (in case $\someleaf$ is in $\concatpath{\subpath{\somecycle}{\otherotherothervertex}{\somevertex}}{\subpath{\somecycle}{\somevertex}{\otherothervertex}}$, $\somepath$ is the empty path).
Consider $\somepier_0$ the first vertex of $\otherotherpath$ which belongs to $\otherotherotherpath$.
We have $\otherotherpath=\concatpath{\otherotherpath_0}{\otherotherpath_1}$
and $\otherotherotherpath=\concatpath{\otherotherotherpath_0}{\otherotherotherpath_1}$ with $\somepier_0$ as target of $\otherotherpath_0$ and $\otherotherotherpath_0$.
We consider the closed path $\somecycle' \eqdef \concatpath{\subpath{\somecycle}{\somevertex}{\otherothervertex}}{\concatpath{\concatpath{\otherotherpath_0}{\reversedpath{\otherotherotherpath_0}}}{\subpath{\somecycle}{\otherotherothervertex}{\somevertex}}}$ which starts with $\somevertex$ and has no cusp at $\somevertex$.
The closed path $\somecycle'$ is simple since $\concatpath{(\concatpath{\subpath{\somecycle}{\otherotherothervertex}{\somevertex}}{\subpath{\somecycle}{\somevertex}{\otherothervertex}})}{(\concatpath{\otherotherpath_0}{\reversedpath{\otherotherotherpath_0}})}$ is, by \cref{lem:concatsimplepaths,lem:concatsimplepathsprefix}, thus it is a cycle.
If $\somecycle'$ has no cusp at $\somepier_0$ then $\somecycle'$ has strictly less cusps than $\somecycle$ (which contains a cusp between $\otherothervertex$ and $\otherotherothervertex$).
Otherwise $\somecycle'$ contains a cusp at $\somepier_0$ (and at most as many cusps as $\somecycle$).
No vertex of $\otherotherotherpath_1$, except its source, belongs to $\somecycle'$, nor does any vertex of $\somepath$, except its target.
We apply \cref{lem:bjumping_Yeo_base} to $\somecycle'$ and $\otherpath \eqdef \concatpath{\otherotherotherpath_1}{\concatpath{(\somepier,\someedge,\someleaf)}{\somepath}}$.
We have already seen that $\somecycle'$ satisfies the hypotheses of \cref{lem:bjumping_Yeo_base}.
Now concerning $\concatpath{\otherotherotherpath_1}{\concatpath{(\somepier,\someedge,\someleaf)}{\somepath}}$:
it is a simple path by \cref{lem:concatsimplepathsprefix} since $\concatpath{(\somepier,\someedge,\someleaf)}{\somepath}$ is simple and $\otherotherotherpath$ has no vertex in common with $\somepath$, and it is cusp-free since $\concatpath{(\somepier,\someedge,\someleaf)}{\somepath}$ and $\concatpath{\otherotherotherpath}{(\somepier,\someedge,\someleaf)}$ are cusp-free.
Finally, since $\somecycle'$ has a cusp at $\somepier_0$, $\concatpath{\otherotherotherpath}{(\somepier,\someedge,\someleaf)}$ is cusp-free and the starting color of $\concatpath{\otherotherotherpath}{(\somepier,\someedge,\someleaf)}$ is not the starting color of $\subpath{\somecycle}{\otherotherothervertex}{\somevertex}$, the starting color of $\concatpath{\otherotherotherpath_1}{\concatpath{(\somepier,\someedge,\someleaf)}{\somepath}}$ cannot be the color of the cusp at $\somepier_0$ in $\somecycle'$.
\end{proof}

\section{A Generalization of Yeo's Theorem}\label{sec:genyeo}

\subsection{Parametrized Local Yeo}\label{sec:paramyeo}

We prove a version of Yeo's theorem~\cite{yeotheorem} (see \cref{th:yeo} for the original statement by Yeo) for locally colored partial graphs, which is moreover parametrized by the choice of a set of vertex-color pairs (subject to a technical condition): \cref{th:ParamLocalYeo}.
This result allows us to find splitting vertices in a locally colored graph with no cusp-free cycle.
We first fix a partial graph $\somegraph$ with a local coloring $\coloring$.

The main idea is to follow a path that is an evidence of progression, \ie\ a strict partial order: a vertex is smaller than another when there is a(n appropriate) path from the first one to the second, and we will prove that a maximal vertex is splitting.
As the hypothesis of the theorem is about cusp-free cycles, it makes sense to consider cusp-free paths in this ordering $\sobfnbpath$.
However, two issues prevent $\sobfnbpath$ from being an order.
First, the concatenation of two cusp-free paths may not be cusp-free.
To have $\sobfnbpath$ transitive, we impose a condition on the starting and ending colors of the cusp-free path -- which is why we consider vertex-color pairs and not simply vertices.
Second, there is no reason for this relation $\sobfnbpath$ of ``being linked by a cusp-free path'' to not loop.
Hence, we add a condition on the path that there is no way to go back on it, yielding from $\sobfnbpath$ a relation $\orderyeo$ which will be our strict partial order -- see \cref{fig:ex_orderyeo} for an illustration.
This entails the following:

\begin{defi}\label{def:orderyeo}
Let $\somevertex$ and $\othervertex$ be vertices, and $\somecolor$ and $\othercolor$ be colors.
\begin{itemize}
\item
We write $(\somevertex, \somecolor) \sobfnbpath[\somepath] (\othervertex, \othercolor)$ if $\somepath$ is a simple open cusp-free path from $\somevertex$ to $\othervertex$ with starting color \emph{not} $\somecolor$ and with ending color $\othercolor$.
We simply write $(\somevertex, \somecolor) \sobfnbpath (\othervertex, \othercolor)$ whenever such a path exists.
\item
We note $(\somevertex, \somecolor) \orderyeo[\somepath] (\othervertex, \othercolor)$ when $(\somevertex, \somecolor) \sobfnbpath[\somepath] (\othervertex, \othercolor)$ and for all vertex $\otherothervertex$, color $\otherothercolor$ and path $\otherpath$ such that $(\othervertex, \othercolor) \sobfnbpath[\otherpath] (\otherothervertex, \otherothercolor)$,
$\otherothervertex$ is not in $\somepath$.
We simply write $(\somevertex, \somecolor) \orderyeo (\othervertex, \othercolor)$ when there is some $\somepath$ such that $(\somevertex, \somecolor) \orderyeo[\somepath] (\othervertex, \othercolor)$.
\end{itemize}
\end{defi}

\begin{figure}
\centering
\begin{tikzpicture}
	\node at (2,-1.5) {$(\somevertex, \somecolor) \sobfnbpath[\somepath] (\othervertex, \othercolor)$ and $(\somevertex, \somecolor) \orderyeo[\somepath] (\othervertex, \othercolor)$};
\begin{genscope}
	\node (v) at (0,0) {$\somevertex$};
	\node (x) at (2,0) {$\otherothervertex$};
	\node (u) at (4,0) {$\othervertex$};
\end{genscope}
\begin{genscope}[blue][solid,-]
	\path (v) edge node[chemin nomme,pos=0.8]{$\somepath$} (x);
	\path (x) edge (u);
\end{genscope}
\begin{genscope}[red][solid,-]
	\path[out=-90,in=-90] (u) edge node[chemin nomme,midway]{$\otherpath$} (x);
\end{genscope}
\begin{genscope}[green][solid,-]
	\path (v) edge node[arete nommee]{$\somecolor$} ++(-.75,0);
\end{genscope}
\begin{genscope}[orange][solid,-]
	\path (u) edge node[arete nommee]{$\othercolor$} ++(-.75,0);
	\path (u) edge node[arete nommee]{$\othercolor$} ++(0,-.75);
\end{genscope}
\end{tikzpicture}
\quad\vrule\quad
\begin{tikzpicture}
	\node at (2,-1.5) {$(\somevertex, \somecolor) \sobfnbpath[\somepath] (\othervertex, \othercolor)$ but $(\somevertex, \somecolor) \cancel{\orderyeo[\somepath]} (\othervertex, \othercolor)$};
\begin{genscope}
	\node (v) at (0,0) {$\somevertex$};
	\node (x) at (2,0) {$\otherothervertex$};
	\node (u) at (4,0) {$\othervertex$};
\end{genscope}
\begin{genscope}[blue][solid,-]
	\path (v) edge node[chemin nomme,pos=0.8]{$\somepath$} (x);
	\path (x) edge (u);
\end{genscope}
\begin{genscope}[red][solid,-]
	\path[out=-90,in=-90] (u) edge node[chemin nomme,midway]{$\otherpath$} (x);
\end{genscope}
\begin{genscope}[green][solid,-]
	\path (v) edge node[arete nommee]{$\somecolor$} ++(-.75,0);
\end{genscope}
\begin{genscope}[orange][solid,-]
	\path (u) edge node[arete nommee]{$\othercolor$} ++(-.75,0);
\end{genscope}
\begin{genscope}[brown][solid,-]
	\path (u) edge node[arete nommee]{$\othercolor'\neq\othercolor$} ++(0,-.75);
\end{genscope}
\end{tikzpicture}
\caption{Illustration of the relation $\orderyeo$ (\cref{def:orderyeo})}
\label{fig:ex_orderyeo}
\end{figure}

\begin{rem}\label{rem:sobfnbpath_prefix}
If $(\somevertex, \somecolor) \sobfnbpath[\somepath] (\othervertex, \othercolor)$ and $\somepath'$ is a \emph{non-empty} prefix of $\somepath$, then $(\somevertex, \somecolor) \sobfnbpath[\somepath'] (\otherothervertex, \otherothercolor)$ where $\otherothervertex$ is the target of $\somepath'$ and $\otherothercolor$ its ending color.
\end{rem}

\begin{lem}\label{lem:orderyeo_sobfnbpath}
Let $\somevertex$, $\othervertex$ and $\otherothervertex$ be vertices, $\somecolor$, $\othercolor$ and $\otherothercolor$ be colors, and $\somepath$ and $\otherpath$ be paths.
If $(\somevertex, \somecolor)\orderyeo[\somepath](\othervertex, \othercolor)$
and $(\othervertex, \othercolor)\sobfnbpath[\otherpath](\otherothervertex, \otherothercolor)$
then $(\somevertex, \somecolor)\sobfnbpath[\concatpath{\somepath}{\otherpath}](\otherothervertex, \otherothercolor)$.
\end{lem}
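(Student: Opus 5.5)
We check directly that $\concatpath{\somepath}{\otherpath}$ satisfies the four requirements of $(\somevertex, \somecolor)\sobfnbpath(\otherothervertex, \otherothercolor)$: it is simple, open, cusp-free, and has the right starting and ending colors. Both $\somepath$ and $\otherpath$ are non-empty, since they are open. So the starting color of $\concatpath{\somepath}{\otherpath}$ is that of $\somepath$, which is not $\somecolor$. Likewise its ending color is that of $\otherpath$, namely $\otherothercolor$.

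For simplicity, the key observation comes from the definition of $\orderyeo[\somepath]$. For every non-empty prefix $\otherpath'$ of $\otherpath$, with target $\otherotherothervertex$ and ending color $\otherotherothercolor$, \cref{rem:sobfnbpath_prefix} gives $(\othervertex,\othercolor)\sobfnbpath[\otherpath'](\otherotherothervertex,\otherotherothercolor)$. The definition of $\orderyeo$ then says that $\otherotherothervertex$ is not in $\somepath$. Hence every vertex of $\otherpath$ other than its source $\othervertex$ avoids $\somepath$. So the target of $\somepath$, which is the source of $\otherpath$, is their unique common vertex. Since $\somepath$ and $\otherpath$ are simple and open, \cref{lem:concatsimplepathsprefix} shows that $\concatpath{\somepath}{\otherpath}$ is simple and open or empty. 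It is non-empty, hence open. Openness can also be seen directly: $\otherothervertex\neq\somevertex$, because $\otherothervertex\notin\somepath$.

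For cusp-freeness, the path is open, so only internal cusps matter. Those lying inside $\somepath$ or inside $\otherpath$ are absent by assumption. The only new candidate is at the junction $\othervertex$, made of the last edge $\someedge$ of $\somepath$ and the first edge $\otheredge$ of $\otherpath$. Here $\coloring(\someedge,\othervertex)=\othercolor$, since this is the ending color of $\somepath$. On the other hand $\coloring(\otheredge,\othervertex)\neq\othercolor$, because the starting color of $\otherpath$ is not $\othercolor$. So no cusp occurs at the junction.

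No step is a real obstacle. The only point needing care is to apply the condition in $\orderyeo$ to all prefixes of $\otherpath$, and not just to $\otherpath$ itself, in order to get disjointness. This is exactly what \cref{rem:sobfnbpath_prefix} provides.
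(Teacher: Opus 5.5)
Your proof is correct and follows essentially the same route as the paper. You get that $\otherpath$ meets $\somepath$ only at $\othervertex$ from \cref{rem:sobfnbpath_prefix} together with the defining condition of $\orderyeo$, then apply \cref{lem:concatsimplepathsprefix} for simplicity and openness, and use the color mismatch at $\othervertex$ for cusp-freeness. The only cosmetic difference is that the paper argues by contradiction from the first vertex of $\otherpath$ lying on $\somepath$, whereas you apply the condition to every non-empty prefix directly.
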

\begin{proof}
  Assume $(\somevertex, \somecolor)\orderyeo[\somepath](\othervertex, \othercolor)\sobfnbpath[\otherpath](\otherothervertex, \otherothercolor)$,
  and consider the path $\concatpath{\somepath}{\otherpath}$,
  with source $\somevertex$ and target $\otherothervertex$.
  Its starting color is not $\somecolor$ and its ending color is $\otherothercolor$.

  If $\otherpath$ contains a vertex of $\somepath$ which is not $\othervertex$, let $\otherpath'$ be the prefix of $\otherpath$ ending on the first such occurrence $\otherothervertex'$ with ending color $\otherothercolor'$.
  We get $(\othervertex, \othercolor)\sobfnbpath[\otherpath'](\otherothervertex',\otherothercolor')$ (\cref{rem:sobfnbpath_prefix}) with $\otherothervertex'\in\somepath$, contradicting $(\somevertex, \somecolor)\orderyeo[\somepath](\othervertex, \othercolor)$.
  We can thus use \cref{lem:concatsimplepathsprefix} to deduce that $\concatpath{\somepath}{\otherpath}$ is simple and open.
  Finally $\concatpath{\somepath}{\otherpath}$ is cusp-free for it has no cusp at $\othervertex$ since the ending color of $\somepath$ is $\othercolor$ which is not the starting color of $\otherpath$.
\end{proof}

\begin{lem}\label{lem:orderyeo_is_order}
  The relation $\orderyeo$ is a strict partial order on vertex-color pairs.
\end{lem}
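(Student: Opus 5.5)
Two things must be checked: irreflexivity and transitivity. Asymmetry then follows, since $x \orderyeo y \orderyeo x$ would give $x \orderyeo x$ by transitivity.

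For \emph{irreflexivity}, suppose $(\somevertex,\somecolor)\orderyeo[\somepath](\somevertex,\somecolor)$. Then $\somepath$ is a simple \emph{open} path from $\somevertex$ to $\somevertex$. An open path has distinct endpoints, so this is impossible.

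For \emph{transitivity}, assume
\[
(\somevertex,\somecolor)\orderyeo[\somepath](\othervertex,\othercolor)
\qquad\text{and}\qquad
(\othervertex,\othercolor)\orderyeo[\otherpath](\otherothervertex,\otherothercolor).
\]
The natural witness is $\concatpath{\somepath}{\otherpath}$. Since $\orderyeo[\otherpath]$ implies $\sobfnbpath[\otherpath]$, \cref{lem:orderyeo_sobfnbpath} gives
\[
(\somevertex,\somecolor)\sobfnbpath[\concatpath{\somepath}{\otherpath}](\otherothervertex,\otherothercolor).
\]
This is the only place where simplicity, openness and cusp-freeness are needed, and the lemma handles all three.

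It remains to check the "no way back" condition. Let $(\otherothervertex,\otherothercolor)\sobfnbpath[r](y,\epsilon)$ for some path $r$, vertex $y$ and color $\epsilon$. We must show $y\notin\concatpath{\somepath}{\otherpath}$.
\begin{itemize}
\item $y\notin\otherpath$ holds directly because $(\othervertex,\othercolor)\orderyeo[\otherpath](\otherothervertex,\otherothercolor)$.
\item Suppose $y\in\somepath$. Apply \cref{lem:orderyeo_sobfnbpath} to $\orderyeo[\otherpath]$ and $\sobfnbpath[r]$. This gives $(\othervertex,\othercolor)\sobfnbpath[\concatpath{\otherpath}{r}](y,\epsilon)$ with $y\in\somepath$, which contradicts $(\somevertex,\somecolor)\orderyeo[\somepath](\othervertex,\othercolor)$.
\end{itemize}
Hence $(\somevertex,\somecolor)\orderyeo[\concatpath{\somepath}{\otherpath}](\otherothervertex,\otherothercolor)$.

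The main subtlety is this second case. One must notice that the "no return" condition for $\somepath$ quantifies over \emph{all} cusp-free paths leaving $(\othervertex,\othercolor)$, so the composite $\concatpath{\otherpath}{r}$ qualifies. Getting that composite to be a legitimate $\sobfnbpath$-path is exactly what \cref{lem:orderyeo_sobfnbpath} provides, using the hypothesis $\orderyeo[\otherpath]$ rather than merely $\sobfnbpath[\otherpath]$.
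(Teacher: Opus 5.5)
Your proof is correct and matches the paper's argument step for step. Irreflexivity comes from the openness of the witnessing path. For transitivity, the witness is $\concatpath{\somepath}{\otherpath}$, and \cref{lem:orderyeo_sobfnbpath} is used twice: once to obtain the $\sobfnbpath$ relation for the composite, and once to show that a return path $\concatpath{\otherpath}{r}$ from $(\othervertex,\othercolor)$ would contradict $(\somevertex,\somecolor)\orderyeo[\somepath](\othervertex,\othercolor)$.
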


\begin{proof}
  The relation $\orderyeo$ is irreflexive:
  we cannot have $(\somevertex, \somecolor)\sobfnbpath(\somevertex, \othercolor)$ by definition.

  Now assume $(\somevertex, \somecolor)\orderyeo[\somepath](\othervertex, \othercolor)\orderyeo[\otherpath](\otherothervertex, \otherothercolor)$.
  We obtain $(\somevertex, \somecolor)\sobfnbpath[\concatpath{\somepath}{\otherpath}](\otherothervertex, \otherothercolor)$ by \cref{lem:orderyeo_sobfnbpath}.
  To get $(\somevertex, \somecolor)\orderyeo[\concatpath{\somepath}{\otherpath}](\otherothervertex, \otherothercolor)$,
  it remains only to show that if $(\otherothervertex, \otherothercolor)\sobfnbpath[\otherotherpath](\otherotherothervertex, \otherotherothercolor)$
  then $\otherotherothervertex$ does not occur in $\concatpath{\somepath}{\otherpath}$.
  First observe that $\otherotherothervertex$ cannot occur in $\otherpath$ as $(\othervertex, \othercolor)\orderyeo[\otherpath](\otherothervertex, \otherothercolor)$.
  And, by \cref{lem:orderyeo_sobfnbpath}, $(\othervertex, \othercolor)\sobfnbpath[\concatpath{\otherpath}{\otherotherpath}](\otherotherothervertex, \otherotherothercolor)$,
  so $(\somevertex, \somecolor)\orderyeo[\somepath](\othervertex, \othercolor)$ implies that $\otherotherothervertex$ does not occur in $\somepath$ either.
\end{proof}

\begin{prop}\label{prop:order_max_Yeo}
Let $\somevertex$ be a non-splitting vertex of a locally colored partial graph with no cusp-free cycle.
For any color $\somecolor$
there exists a cusp-point $(\othervertex, \othercolor)$ such that $(\somevertex, \somecolor)\orderyeo(\othervertex, \othercolor)$.
\end{prop}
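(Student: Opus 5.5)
The plan is to take a cusp-minimal cycle through $\somevertex$, walk along it from $\somevertex$ to its first cusp, and use that prefix as the path witnessing $\orderyeo$. The only non-trivial point, the ``no way back'' condition, will come from Cusp Cycling (\cref{lem:bjumping_Yeo}).

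First, since $\somevertex$ is not splitting, $\mincycles{\somevertex}$ is non-empty, so we fix some $\somecycle\in\mincycles{\somevertex}$. The reverse of a cycle in $\mincycles{\somevertex}$ is again in $\mincycles{\somevertex}$, since it has the same cusps and still none at $\somevertex$. So by \cref{lem:reversing_trick}, replacing $\somecycle$ by $\reversedpath{\somecycle}$ if necessary, we may assume the starting color of $\somecycle$ is not $\somecolor$. As the graph has no cusp-free cycle, $\somecycle$ has at least one cusp, and none of its cusps is at $\somevertex$. Let $\somepier$ be the first vertex along $\somecycle$ (after the source $\somevertex_1$) at which $\somecycle$ has a cusp, say $(\otheredge_1,\somepier,\otheredge_2)$ of color $\othercolor$.

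Set $\somepath \eqdef \subpath{\somecycle}{\somevertex_1}{\somepier}$. It is non-empty, and it is simple and open because $\somecycle$ is a cycle and $\somepier\neq\somevertex$. It is cusp-free by the choice of $\somepier$ as the first cusp. Its starting color is that of $\somecycle$, hence not $\somecolor$. Its ending color is $\coloring(\otheredge_1,\somepier)=\othercolor$. Therefore $(\somevertex,\somecolor)\sobfnbpath[\somepath](\somepier,\othercolor)$, and $(\somepier,\othercolor)$ is a cusp-point.

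It remains to show the extra condition of $\orderyeo[\somepath]$; this is the step needing care. Suppose $(\somepier,\othercolor)\sobfnbpath[\otherpath](\otherothervertex,\otherothercolor)$ with $\otherothervertex$ occurring in $\somepath$, hence in $\somecycle$. Then $\otherpath$ is a simple open cusp-free path that starts from $\somepier$ with a color different from $\othercolor$ and ends on a vertex of $\somecycle$. Since $\somecycle\in\mincycles{\somevertex}$ has the cusp $(\otheredge_1,\somepier,\otheredge_2)$ of color $\othercolor$, \cref{lem:bjumping_Yeo} yields a cusp-free cycle containing $\somepier$. This contradicts the hypothesis, so no such $\otherothervertex$ lies in $\somepath$, and $(\somevertex,\somecolor)\orderyeo[\somepath](\somepier,\othercolor)$.

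The main obstacle is just ensuring Cusp Cycling applies as stated. That corollary already takes care of truncating $\otherpath$ at its first re-entry into $\somecycle$, and $\otherothervertex\neq\somepier$ because $\otherpath$ is open. So the remaining bookkeeping is the reversal step above, which is needed to control the starting color relative to $\somecolor$.
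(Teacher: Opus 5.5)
Your proof is correct and follows the paper's argument. You choose $\somecycle\in\mincycles{\somevertex}$, oriented via \cref{lem:reversing_trick} so that its starting color is not $\somecolor$, take the prefix up to the first cusp as the witness of $\sobfnbpath$, and obtain the no-return condition of $\orderyeo$ from Cusp Cycling (\cref{lem:bjumping_Yeo}) together with the absence of cusp-free cycles; you also spell out the routine checks on the prefix (simple, open, cusp-free, colors) that the paper leaves implicit.
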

\begin{proof}
Since $\somevertex$ is not splitting, we have $\mincycles{\somevertex}\neq\emptyset$.
Take some $\somecycle\in\mincycles{\somevertex}$, considered as starting by $\somevertex$ with starting color not $\somecolor$, thanks to \cref{lem:reversing_trick}
(this is possible for there is no cusp at $\somevertex$ in elements of $\mincycles{\somevertex}$ and $\mincycles{\somevertex}$ is closed under reversing).
For $\somecycle$ cannot be cusp-free, it contains at least one cusp: denote by $\othervertex$ the vertex of the first cusp of $\somecycle$, and by $\othercolor$ its color.
We have $(\somevertex, \somecolor)\sobfnbpath[\subpath{\somecycle}{\somevertex}{\othervertex}](\othervertex, \othercolor)$,
and conclude $(\somevertex, \somecolor)\orderyeo[\subpath{\somecycle}{\somevertex}{\othervertex}](\othervertex, \othercolor)$
by \cref{lem:bjumping_Yeo}.
\end{proof}

We now state and prove our generalization of Yeo's Theorem, simply by applying \cref{prop:order_max_Yeo}.
A set $\somesetedge$ of vertex-color pairs \definitive{dominates cusp-points} if for any cusp-point $(\somevertex, \somecolor)$, either $(\somevertex, \somecolor) \in \somesetedge$ or there is $(\othervertex, \othercolor) \in \somesetedge$ with $(\somevertex, \somecolor)\orderyeo(\othervertex, \othercolor)$.

\begin{thm}[Parametrized Local Yeo]\label{th:ParamLocalYeo}
Consider $\somegraph$ a partial graph with a local coloring
and pose $\somesetedge$ a set of vertex-color pairs which dominates cusp-points.
If $\somegraph$ has no cusp-free cycle,
the vertex of any $\orderyeo$-maximal element of $\somesetedge$ (\ie\ for $\orderyeo$ restricted to $\somesetedge$) is splitting.
\end{thm}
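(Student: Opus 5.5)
We argue by contradiction, relying on \cref{prop:order_max_Yeo} together with the fact that $\orderyeo$ is a strict partial order (\cref{lem:orderyeo_is_order}). Let $(\somevertex, \somecolor)$ be an $\orderyeo$-maximal element of $\somesetedge$ and suppose its vertex $\somevertex$ is not splitting. Since $\somegraph$ has no cusp-free cycle, \cref{prop:order_max_Yeo} applies to $\somevertex$ with this very color $\somecolor$, and yields a cusp-point $(\othervertex, \othercolor)$ with $(\somevertex, \somecolor)\orderyeo(\othervertex, \othercolor)$.

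Next we use that $\somesetedge$ dominates cusp-points. Either $(\othervertex, \othercolor)\in\somesetedge$, or there is some $(\otherothervertex, \otherothercolor)\in\somesetedge$ with $(\othervertex, \othercolor)\orderyeo(\otherothervertex, \otherothercolor)$. In the first case we set $(\otherothervertex, \otherothercolor) \eqdef (\othervertex, \othercolor)$. In both cases we obtain an element $(\otherothervertex, \otherothercolor)\in\somesetedge$ with $(\somevertex, \somecolor)\orderyeo(\otherothervertex, \otherothercolor)$, where the second case uses transitivity of $\orderyeo$ from \cref{lem:orderyeo_is_order}.

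This contradicts the $\orderyeo$-maximality of $(\somevertex, \somecolor)$ in $\somesetedge$. Hence $\somevertex$ is splitting.

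There is no real obstacle here: the substantive work is already contained in \cref{prop:order_max_Yeo}, which relies on cusp cycling (\cref{lem:bjumping_Yeo}). The only points to check are these. First, \cref{prop:order_max_Yeo} holds for \emph{any} color, in particular the color $\somecolor$ attached to the maximal pair. Second, transitivity is needed in the domination step. Note also that irreflexivity is implicit: maximality only forbids a strictly greater element of $\somesetedge$, and $(\otherothervertex, \otherothercolor)$ is strictly greater than $(\somevertex, \somecolor)$.
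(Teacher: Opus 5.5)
Your proof is correct and is essentially the paper's argument. You apply \cref{prop:order_max_Yeo} to the maximal pair, then use domination of cusp-points to exhibit a strictly greater element of $\somesetedge$. The paper leaves implicit the appeal to transitivity of $\orderyeo$ (\cref{lem:orderyeo_is_order}) in the second case, and you make it explicit.
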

\begin{proof}
Let $(\somevertex, \somecolor)$ be an element of $\somesetedge$.
If $\somevertex$ is not splitting then, by \cref{prop:order_max_Yeo}, we have a cusp-point $(\othervertex, \othercolor)$ such that $(\somevertex, \somecolor)\orderyeo(\othervertex, \othercolor)$.
By hypothesis on $\somesetedge$, either $(\othervertex, \othercolor)\in\somesetedge$ or we can find $(\otherothervertex, \otherothercolor)\in\somesetedge$ with $(\othervertex, \othercolor)\orderyeo(\otherothervertex, \otherothercolor)$.
This means $(\somevertex, \somecolor)$ is not maximal for $\orderyeo$ in $\somesetedge$.
\end{proof}

Note that the converse of \cref{th:ParamLocalYeo} is false: see \cref{fig:splitting_terminal_no_max} for a graph with a splitting vertex which is part of no maximal vertex-color pair.

\subsection{Terminality}

We prove here a result that will be of use for proof nets of both multiplicative and multiplicative-additive linear logic so as to affirm a maximal vertex for the ordering $\orderyeo$ can only be terminal.
In order to not prove it twice, we state here a generalization on locally colored partial graphs.

\begin{lem}\label{lem:terminal_not_maximal_order}
Let $\somegraph$ be a partial graph with a local coloring $\coloring$.
Consider an edge $\someedge$ of endpoints $\somevertex$ and $\othervertex$.
Assume that $(\somevertex, \coloring(\someedge, \somevertex))$ is not a cusp-point (\ie\ all edges $\otheredge$ with endpoint $\somevertex$ respect that $(\otheredge, \somevertex, \someedge)$ is not a cusp).
Then, either $\somevertex$ belongs to a cusp-free cycle, or for all colors $\somecolor \neq \coloring(\someedge, \somevertex)$, $(\somevertex, \somecolor) \orderyeo[(\somevertex, \someedge, \othervertex)](\othervertex, \coloring(\someedge,\othervertex))$.
\end{lem}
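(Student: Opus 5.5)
Write $\somecolor_0 \eqdef \coloring(\someedge,\somevertex)$ and $\othercolor \eqdef \coloring(\someedge,\othervertex)$, and let $\somepath \eqdef (\somevertex,\someedge,\othervertex)$. The plan is to assume that some $\somecolor\neq\somecolor_0$ fails the conclusion, and from this to build a cusp-free cycle through $\somevertex$.

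The first step is to check that $(\somevertex,\somecolor)\sobfnbpath[\somepath](\othervertex,\othercolor)$ always holds. The path $\somepath$ is simple and open, since $\somevertex\neq\othervertex$. It is cusp-free because it has a single edge. Its starting color is $\somecolor_0\neq\somecolor$ and its ending color is $\othercolor$.

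So if $\orderyeo[\somepath]$ fails, the definition gives a path $\otherpath$ with $(\othervertex,\othercolor)\sobfnbpath[\otherpath](\otherothervertex,\otherothercolor)$ whose target $\otherothervertex$ lies in $\somepath$. The target cannot be $\othervertex$, because $\otherpath$ is open. Hence $\otherothervertex=\somevertex$.

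Next, I would shorten $\otherpath$ if needed. If $\somevertex$ occurs in $\otherpath$ before its target, I cut $\otherpath$ at the first occurrence of $\somevertex$. By \cref{rem:sobfnbpath_prefix} the prefix is still a simple open cusp-free path from $\othervertex$ to $\somevertex$, and its starting color is still not $\othercolor$. Since $\otherpath$ is simple, in fact $\somevertex$ occurs only at its end anyway. Call this path $\otherpath$ again.

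Now I consider the closed path $\somecycle \eqdef \concatpath{\somepath}{\otherpath}$, which starts and ends at $\somevertex$. I need three facts about it.

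First, $\somecycle$ is a cycle. The paths $\somepath$ and $\otherpath$ share only the endpoints $\somevertex$ and $\othervertex$. If $\otherpath$ had first edge $\someedge$, it would be the one-edge path $(\othervertex,\someedge,\somevertex)$, whose starting color is $\othercolor$; this is excluded. So the last edge of $\somepath$ differs from the first edge of $\otherpath$, and \cref{lem:concatsimplepaths} makes $\somecycle$ simple.

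Second, $\somecycle$ has no internal cusp. There is none inside $\otherpath$. At $\othervertex$ the two half-edge colors are $\othercolor$ and the starting color of $\otherpath$, which is not $\othercolor$.

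Third, there is no cusp at the closing vertex $\somevertex$. The last edge $\otheredge$ of $\otherpath$ differs from $\someedge$, for the same reason as above: otherwise $\otherpath$ would be $(\othervertex,\someedge,\somevertex)$. Since $(\somevertex,\somecolor_0)$ is not a cusp-point, $(\otheredge,\somevertex,\someedge)$ is not a cusp.

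Hence $\somecycle$ is a cusp-free cycle containing $\somevertex$, which gives the first alternative of the statement. The only delicate point in this argument is excluding the degenerate case $\otherpath=(\othervertex,\someedge,\somevertex)$, which is needed for both simplicity and the cusp check at $\somevertex$; the starting-color condition on $\otherpath$ handles it.
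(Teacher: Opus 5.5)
Your proof is correct and follows the paper's argument. In both, $(\somevertex,\somecolor)\sobfnbpath[(\somevertex,\someedge,\othervertex)](\othervertex,\coloring(\someedge,\othervertex))$ holds outright; if $\orderyeo$ fails, the returning cusp-free path must end at $\somevertex$, and concatenating it with $(\somevertex,\someedge,\othervertex)$ yields a cusp-free cycle through $\somevertex$. You are more careful than the paper in explicitly excluding the degenerate return path $(\othervertex,\someedge,\somevertex)$ by its starting color, a check the paper leaves implicit when invoking \cref{lem:concatsimplepaths}; your shortening step, however, is vacuous, since the return path is already simple and open.
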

\begin{proof}
One has $(\somevertex, \somecolor) \sobfnbpath[(\somevertex, \someedge, \othervertex)](\othervertex, \coloring(\someedge,\othervertex))$ since $\somecolor \neq \coloring(\someedge, \somevertex)$.
The only vertices of the path $(\somevertex, \someedge, \othervertex)$ are $\somevertex$ and $\othervertex$.
If there exists $\somepath$ and $\othercolor$ such that $(\othervertex, \coloring(\someedge,\othervertex))\sobfnbpath[\somepath](\somevertex, \othercolor)$, then $\concatpath{(\somevertex, \someedge, \othervertex)}{\somepath}$ is a cycle containing $\somevertex$ (\cref{lem:concatsimplepaths}); furthermore, it is cusp-free as $(\somevertex, \coloring(\someedge, \somevertex))$ is not a cusp-point.
Otherwise, $(\somevertex, \somecolor) \orderyeo[(\somevertex, \someedge, \othervertex)](\othervertex, \coloring(\someedge,\othervertex))$ holds.
\end{proof}

\section{Comparison of our Generalized Yeo's Theorem with the Literature}\label{sec:graph_comparison}

\subsection{Local and Global Colorings}\label{sec:stdyeo}

First, remark our parametrized version implies a simpler one, closer to Yeo's theorem.

\begin{thm}[Local Yeo]\label{th:LocalYeo}
Consider $\somegraph$ a locally colored partial graph with at least one vertex.
If $\somegraph$ has no cusp-free cycle, then there exists a splitting vertex in $\somegraph$.
\end{thm}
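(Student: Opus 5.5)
The plan is to derive \cref{th:LocalYeo} from \cref{th:ParamLocalYeo} by choosing a suitable parameter $\somesetedge$, non-empty and dominating cusp-points. The difficulty is that the set of cusp-points itself may be empty (for instance when the graph has no cusp at all), in which case any maximal element must come from elsewhere. So I would take
\[
\somesetedge \eqdef \setOf{(\somevertex,\somecolor) \suchthat \somevertex\in\vertexset,\ \somecolor\in\colors}\cup\setOf{(\somevertex_0,\somecolor_0)},
\]
that is, all vertex-color pairs of the graph. If $\colors$ is empty, I would add a fresh color, or fall back on a direct argument (see below). This set trivially dominates cusp-points, since every cusp-point is itself a vertex-color pair and hence belongs to $\somesetedge$.

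Next I would check that $\somesetedge$ has a $\orderyeo$-maximal element. By \cref{lem:orderyeo_is_order}, $\orderyeo$ is a strict partial order. Since the graph is finite, $\vertexset\times\colors$ is finite, and it is non-empty because $\somegraph$ has at least one vertex and we may assume $\colors\neq\emptyset$. A finite non-empty strictly partially ordered set has a maximal element $(\somevertex,\somecolor)$, and \cref{th:ParamLocalYeo} then says that $\somevertex$ is splitting.

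The only delicate point is the degenerate case where $\colors$ is empty or no pair is available. In that case the coloring function has empty domain, so $\somegraph$ has no edge with an endpoint, hence no cycle at all. Any vertex is then vacuously splitting, since no cycle contains it. I expect this bookkeeping about empty color sets to be the only obstacle; alternatively, since the paper notes that $\colors$ need not be exactly the image of $\coloring$, I can simply allow arbitrary colors in vertex-color pairs, which avoids the issue.
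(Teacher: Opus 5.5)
Your proposal is correct and follows the paper's proof. You take $\somesetedge$ to be all vertex-color pairs, which trivially dominates cusp-points, obtain a $\orderyeo$-maximal element from \cref{lem:orderyeo_is_order} and finiteness, conclude with \cref{th:ParamLocalYeo}, and treat the degenerate case separately (the paper notes that a vertex with no incident edge is trivially splitting; you note that $\colors=\emptyset$ rules out all cycles, or add a fresh color); the extra pair $(\somevertex_0,\somecolor_0)$ in your definition of $\somesetedge$ is undefined and unnecessary, so just drop it.
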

\begin{proof}
If there is a vertex which is not the endpoint of any edge, it is splitting.
Otherwise, the set $\somesetedge$ of all vertex-color pairs of $\somegraph$
is finite and non-empty, and thus contains a maximal element $(\somevertex, \somecolor)$ with respect to $\orderyeo$ (\cref{lem:orderyeo_is_order}).
The vertex $\somevertex$ is splitting (\cref{th:ParamLocalYeo}).
\end{proof}

As an example, the partial graph depicted on \cref{fig:ex_loc_col} has no cusp-free cycle, and $\othervertex$ is its only splitting vertex.

We now bridge the gap with the terminology from Yeo's theorem~\cite{yeotheorem} and prove it is a direct consequence of our local version.
For $\somegraph$ a partial graph and $\somevertex$ one of its vertices, the partial graph $\somegraph\setminus\somevertex$ is the sub-graph obtained by removing $\somevertex$ from the vertices of $\somegraph$ (same edges with possibly less endpoints).
This gives an alternative characterization of splitting vertices in locally colored partial graphs:
a vertex $\somevertex$ is splitting if and only if any two edges with endpoint $\somevertex$ and connected in $\somegraph\setminus\somevertex$ have the same color on $\somevertex$.

Let us move to the terminology for total graphs:
\begin{itemize}
\item
As $\somegraph\setminus\somevertex$ leads in general to a partial graph, it has to be replaced with the operation $\somegraph-\somevertex$ on total graphs, which removes not only $\somevertex$ but also all its incident edges.
Connectedness on partial graphs gives the standard notion when restricted to total graphs, and a (non-empty) total graph is connected if all its vertices are.
\item
The standard notion of coloring is an \definitive{edge-coloring}, that maps edges to colors.
An \definitive{alternating cycle} for an edge-coloring is the restriction of the same notion for a local coloring: a cycle whose consecutive edges are of different colors, including its last and first edges.
\end{itemize}

\begin{thm}[Yeo's Theorem]\label{th:yeo}
If $\somegraph$ is a non-empty edge-colored graph with no alternating cycle,
then there exists a vertex $\somevertex$ of $\somegraph$ such that no connected component of $\somegraph-\somevertex$ is joined to $\somevertex$ with edges of more than one color.
\end{thm}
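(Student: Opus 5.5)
We will derive Yeo's theorem directly from \cref{th:LocalYeo}, viewing the edge-coloring as a particular local coloring. Given an edge-colored non-empty total graph $\somegraph$, set $\coloring(\someedge,\somevertex)$ to be the color of $\someedge$, for every edge $\someedge$ and each of its endpoints $\somevertex$. The first step is to check that, under this coloring, a cycle is cusp-free exactly when it is alternating in Yeo's sense. A cusp in a cycle is a pair of consecutive distinct edges (including the pair formed by the last and first edges) having the same color at their common vertex. Because the coloring does not depend on the endpoint, this is the same as two consecutive edges of the same color. So the hypothesis ``no alternating cycle'' is precisely ``no cusp-free cycle'', and \cref{th:LocalYeo} (which applies since $\somegraph$ is non-empty) gives a splitting vertex $\somevertex$.

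The second step translates ``splitting'' into the conclusion of \cref{th:yeo}. We use the alternative characterization stated just before the theorem: $\somevertex$ is splitting if and only if any two edges with endpoint $\somevertex$ that are connected in $\somegraph\setminus\somevertex$ have the same color at $\somevertex$. To keep the argument self-contained, we recall why this holds. Take two distinct edges $\someedge,\otheredge$ at $\somevertex$, with other endpoints $\othervertex,\othervertex'$. If there is a path from $\othervertex$ to $\othervertex'$ avoiding $\somevertex$, take a simple one; together with $\someedge$ and $\otheredge$ it forms a cycle through $\somevertex$. (When $\othervertex=\othervertex'$ this is a cycle of length two, which is allowed for multigraphs.) That cycle must have a cusp at $\somevertex$, so $\someedge$ and $\otheredge$ have the same color there.

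It remains to compare components of $\somegraph-\somevertex$ with connectedness in $\somegraph\setminus\somevertex$. Two edges $\someedge,\otheredge$ at $\somevertex$ join $\somevertex$ to the same component of $\somegraph-\somevertex$ exactly when their other endpoints $\othervertex,\othervertex'$ are connected in $\somegraph-\somevertex$. This holds if and only if they are connected by a path avoiding $\somevertex$, which is the same as saying that $\someedge$ and $\otheredge$ are connected in $\somegraph\setminus\somevertex$. Here we use that $\somegraph$ has no loops, so $\othervertex\neq\somevertex$ and $\othervertex'\neq\somevertex$, and that paths in $\somegraph\setminus\somevertex$ are exactly paths of $\somegraph$ avoiding $\somevertex$. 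Hence all edges joining $\somevertex$ to a given component have the same color, which is the conclusion of \cref{th:yeo}.

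The main obstacle is this last bookkeeping step: matching partial-graph connectedness in $\somegraph\setminus\somevertex$ with the total-graph operation $\somegraph-\somevertex$. In particular, we must handle the degenerate case of parallel edges to the same neighbour, where the cycle through $\somevertex$ has length two. Everything else is immediate from \cref{th:LocalYeo}.
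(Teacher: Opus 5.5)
Your proposal is correct and follows the paper's proof exactly: define the local coloring $\coloring'(\someedge,\somevertex)\eqdef\coloring(\someedge)$, observe that cusp-free and alternating cycles coincide, apply \cref{th:LocalYeo}, and translate ``splitting'' into the conclusion via connectedness in $\somegraph\setminus\somevertex$. The only difference is that you also prove the direction of that splitting characterization you need, including the length-two cycle formed by parallel edges, whereas the paper states the characterization without proof.
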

\begin{proof}
Call $\coloring$ the edge-coloring of $\somegraph$, we set a local coloring $\coloring'$ by $\coloring'(\someedge,\somevertex) \eqdef \coloring(\someedge)$ for any edge $\someedge$ and any endpoint $\somevertex$ of $\someedge$.
Cycles of $\somegraph$ that are alternating (or cusp-free) with respect to $\coloring'$ are exactly those that are alternating with respect to $\coloring$.
\Cref{th:LocalYeo} yields a splitting vertex $\somevertex$ for $\coloring'$: any two edges with endpoint $\somevertex$ and connected in $\somegraph\setminus\somevertex$ have the same color given by $\coloring'(\_, \somevertex)$.
That is, no connected component of $\somegraph-\somevertex$ is joined to $\somevertex$ with edges of more than one color.
\end{proof}

While at first glance \cref{th:LocalYeo} seems more general than \cref{th:yeo}, we deduce the first from the second by a graph encoding.
Partial edges play no role in \cref{th:LocalYeo}, so we consider only total graphs.
Consider $\somegraph$ a graph with local coloring $\coloring$, we associate with it a graph $\colenc{\somegraph}$ with an edge-coloring $\colenc{\coloring}$:
\begin{itemize}
\item all vertices of $\somegraph$ are considered as vertices of $\colenc{\somegraph}$ (and some are going to be added);
\item
\begin{minipage}[t]{0.57\linewidth}
with each edge $\someedge$ of $\somegraph$ of endpoints $\somevertex$ and $\othervertex$ such that $\coloring(\someedge,\somevertex) = \coloring(\someedge,\othervertex)$, we associate one edge $\otheredge$ in $\colenc{\somegraph}$ with the same endpoints as $\someedge$ and $\colenc{\coloring}(\otheredge) = \coloring(\someedge,\somevertex) = \coloring(\someedge,\othervertex)$;
\end{minipage}
\hfill
\begin{minipage}[t]{0.42\linewidth}
\begin{tikzpicture}[baseline=(base)]
\begin{scope}[every node/.style={circle,minimum size=3.5mm,inner sep=0,draw=black,thick}]
	\node (1) at (0.5,0) {$\somevertex$};
	\coordinate (1a) at (0.5,-0.48);
	\coordinate (1b) at (0.5,-0.52);
	\node (2) at (0.5,-1) {$\othervertex$};
	\node (1') at (3.5,0) {$\somevertex$};
	\node (2') at (3.5,-1) {$\othervertex$};
\end{scope}
	\node at (0.75,-0.5) {$\someedge$};
	\node at (1.75,-0.5) {$\mapsto$};
	\node at (3.75,-0.5) {$\otheredge$};
\begin{genscope}[red]
	\path (1) edge[-] (1a);
	\path (1b) edge[-] (2);
	\path (1') edge[-] (2');
\end{genscope}
\node (base) at (-1,-0.1) {\vphantom{x}};
\end{tikzpicture}
\end{minipage}
\item
\begin{minipage}[t]{0.57\linewidth}
with each edge $\someedge$ of $\somegraph$ of endpoints $\somevertex$ and $\othervertex$ such that $\coloring(\someedge,\somevertex) \neq \coloring(\someedge,\othervertex)$ and $\someedge$ belongs to a cycle, we associate two edges $\otheredge_1$ and $\otheredge_2$ and a new vertex $\otherothervertex$, the endpoints of $\otheredge_1$ being $\somevertex$ and $\otherothervertex$, and the endpoints of $\otheredge_2$ being $\otherothervertex$ and $\othervertex$, with $\colenc{\coloring}(\otheredge_1) = \coloring(\someedge,\somevertex)$ and $\colenc{\coloring}(\otheredge_2) = \coloring(\someedge,\othervertex)$;
\end{minipage}
\hfill
\begin{minipage}[t]{0.42\linewidth}
\begin{tikzpicture}[baseline=(3)]
\begin{scope}[every node/.style={circle,minimum size=3.5mm,inner sep=0,draw=black,thick}]
	\node (3) at (0.5,-4) {$\somevertex$};
	\coordinate (4) at (0.5,-4.8);
	\node (5) at (0.5,-5.6) {$\othervertex$};
	\node (3') at (3.5,-4) {$\somevertex$};
	\node (5') at (3.5,-5.6) {$\othervertex$};
\end{scope}
\begin{scope}[every node/.style={rectangle,thick,draw,inner sep=0,minimum size=10}]
	\node (4') at (3.5,-4.8) {$\otherothervertex$};
\end{scope}
	\node at (0.75,-4.8) {$\someedge$};
	\node at (1.75,-4.8) {$\mapsto$};
	\node at (3.75,-4.4) {$\otheredge_1$};
	\node at (3.75,-5.2) {$\otheredge_2$};
\begin{genscope}[red]
	\path (3) edge[-] (4);
	\path (3') edge[-] (4');
\end{genscope}
\begin{genscope}[blue]
	\path (4) edge[-,densely dashed] (5);
	\path (4') edge[-,densely dashed] (5');
\end{genscope}
	\path[bend right=100] (3) edge (5);
	\path[bend right=100] (3') edge (5');
\node at (-1,-5) {\vphantom{x}};
\end{tikzpicture}
\end{minipage}
\item
\begin{minipage}[t]{0.57\linewidth}
with each edge $\someedge$ of $\somegraph$ of endpoints $\somevertex$ and $\othervertex$ such that $\coloring(\someedge,\somevertex) \neq \coloring(\someedge,\othervertex)$ and $\someedge$ is not in a cycle (\ie\ $\someedge$ is a bridge), we associate one edge $\otheredge$ in $\colenc{\somegraph}$ with the same endpoints as $\someedge$ and an arbitrary color $\colenc{\coloring}(\otheredge)$.
\end{minipage}
\hfill
\begin{minipage}[t]{0.42\linewidth}
\begin{tikzpicture}[baseline=(6)]
\begin{scope}[every node/.style={circle,minimum size=3.5mm,inner sep=0,draw=black,thick}]
	\node (6) at (0.5,.4) {$\somevertex$};
	\coordinate (7) at (0.5,-.1);
	\node (8) at (0.5,-.7) {$\othervertex$};
	\node (6') at (3.5,.4) {$\somevertex$};
	\node (8') at (3.5,-.7) {$\othervertex$};
\end{scope}
	\node at (0.75,-.15) {$\someedge$};
	\node at (1.75,-.15) {$\mapsto$};
	\node at (3.75,-.15) {$\otheredge$};
\begin{genscope}[red][-]
	\path (6) edge (7);
\end{genscope}
\begin{genscope}[blue][-,densely dashed]
	\path (7) edge (8);
\end{genscope}
\begin{genscope}[violet][-,densely dotted]
	\path (6') edge (8');
\end{genscope}
	\draw (6) ellipse (9mm and 3mm);
	\draw (0,-.7) ellipse (10mm and 3.5mm);
	\draw (6') ellipse (9mm and 3mm);
	\draw (3,-.7) ellipse (10mm and 3.5mm);
\node at (-1,0) {\vphantom{x}};
\end{tikzpicture}
\end{minipage}
\end{itemize}
The number of vertices (\resp\ edges) of $\colenc{\somegraph}$ is then the number of vertices (\resp\ edges) of $\somegraph$ plus the number of edges $\someedge$ of $\somegraph$ contained in at least one cycle and such that $\coloring(\someedge,\_)$ has not the same value for both endpoints of $\someedge$.


\begin{figure}
\centering
\begin{tikzpicture}[baseline=(3),rotate=90]
\begin{scope}[every node/.style={circle,thick,draw,inner sep=0,minimum size=10}]
	\node[fill=teal] (1) at (2,0) {};
	\coordinate (2) at (2.25,0.5);
	\node (3) at (2.5,1) {};
	\node[fill=teal] (4) at (3.5,1) {};
	\node (5) at (2.5,-1) {};
	\coordinate (6) at (3.5,0);
	\node (7) at (3.5,-1) {};
	\coordinate (8) at (2,1);
	\node (9) at (2,2) {};
	\node (10) at (2,3) {};
	\node (11) at (3.5,2) {};
	\coordinate (12) at (2.75,3);
	\node (13) at (3.5,3) {};
	\coordinate (14) at (2,4);
	\node (15) at (2,5) {};
	\coordinate (16) at (3.5,4);
	\node[fill=teal] (17) at (3.5,5) {};
\end{scope}
\begin{genscope}[red][-]
	\path (1) edge (2);
	\path (3) edge (5);
	\path (6) edge (7);
	\path (9) edge (11);
	\path (12) edge (13);
	\path (14) edge (15);
\end{genscope}
\begin{genscope}[blue][-,densely dashed]
	\path (5) edge (7);
	\path (1) edge (8);
	\path (10) edge (14);
	\path (11) edge (13);
	\path (15) edge (17);
	\path (16) edge (17);
\end{genscope}
\begin{genscope}[olive][-,densely dash dot]
	\path (2) edge (3);
	\path (3) edge (4);
	\path (4) edge (6);
	\path (8) edge (9);
	\path (9) edge (10);
	\path (10) edge (12);
	\path (13) edge (16);
\end{genscope}
\end{tikzpicture}
\hfill$\mapsto$\hfill
\begin{tikzpicture}[baseline=(3),rotate=90]
\begin{scope}[every node/.style={circle,thick,draw,inner sep=0,minimum size=10}]
	\node[fill=teal] (1) at (2,0) {};
	\node (3) at (2.5,1) {};
	\node[fill=teal] (4) at (3.5,1) {};
	\node (5) at (2.5,-1) {};
	\node (7) at (3.5,-1) {};
	\node (9) at (2,2) {};
	\node (10) at (2,3) {};
	\node (11) at (3.5,2) {};
	\node (13) at (3.5,3) {};
	\node (15) at (2,5) {};
	\node[fill=teal] (17) at (3.5,5) {};
\end{scope}
\begin{scope}[every node/.style={rectangle,thick,draw,inner sep=0,minimum size=10}]
	\node (6) at (3.5,0) {};
	\node (12) at (2.75,3) {};
	\node (14) at (2,4) {};
	\node (16) at (3.5,4) {};
\end{scope}
\begin{genscope}[red][-]
	\path (3) edge (5);
	\path (6) edge (7);
	\path (9) edge (11);
	\path (12) edge (13);
	\path (14) edge (15);
\end{genscope}
\begin{genscope}[blue][-,densely dashed]
	\path (5) edge (7);
	\path (10) edge (14);
	\path (11) edge (13);
	\path (15) edge (17);
	\path (16) edge (17);
\end{genscope}
\begin{genscope}[olive][-,densely dash dot]
	\path (3) edge (4);
	\path (4) edge (6);
	\path (9) edge (10);
	\path (10) edge (12);
	\path (13) edge (16);
\end{genscope}
\begin{genscope}[violet][-,densely dotted]
	\path (1) edge (3);
	\path (1) edge (9);
\end{genscope}
\end{tikzpicture}
\vspace*{1em}\hrule\vspace*{1em}
\begin{tikzpicture}[baseline=(3),rotate=90]
\begin{scope}[every node/.style={circle,thick,draw,inner sep=0,minimum size=10}]
	\node[fill=teal] (1) at (2,0) {};
	\coordinate (2) at (2.25,0.5);
	\node (3) at (2.5,1) {};
	\node[fill=teal] (4) at (3.5,1) {};
	\node (5) at (2.5,-1) {};
	\coordinate (6) at (3.5,0);
	\node (7) at (3.5,-1) {};
	\coordinate (8) at (2,1);
	\node (9) at (2,2) {};
	\node[fill=teal] (10) at (2,3) {};
	\node (11) at (3.5,2) {};
	\coordinate (12) at (2.75,3);
	\node (13) at (3.5,3) {};
	\coordinate (14) at (2,4);
	\node[fill=teal] (15) at (2,5) {};
\end{scope}
\begin{genscope}[red][-]
	\path (1) edge (2);
	\path (3) edge (5);
	\path (6) edge (7);
	\path (9) edge (11);
	\path (12) edge (13);
	\path (14) edge (15);
\end{genscope}
\begin{genscope}[blue][-,densely dashed]
	\path (5) edge (7);
	\path (1) edge (8);
	\path (10) edge (14);
	\path (11) edge (13);
\end{genscope}
\begin{genscope}[olive][-,densely dash dot]
	\path (2) edge (3);
	\path (3) edge (4);
	\path (4) edge (6);
	\path (8) edge (9);
	\path (9) edge (10);
	\path (10) edge (12);
\end{genscope}
\end{tikzpicture}
\hfill$\mapsto$\hfill
\begin{tikzpicture}[baseline=(3),rotate=90]
\begin{scope}[every node/.style={circle,thick,draw,inner sep=0,minimum size=10}]
	\node[fill=teal] (1) at (2,0) {};
	\node (3) at (2.5,1) {};
	\node[fill=teal] (4) at (3.5,1) {};
	\node (5) at (2.5,-1) {};
	\node (7) at (3.5,-1) {};
	\node (9) at (2,2) {};
	\node[fill=teal] (10) at (2,3) {};
	\node (11) at (3.5,2) {};
	\node (13) at (3.5,3) {};
	\node[fill=teal] (15) at (2,5) {};
\end{scope}
\begin{scope}[every node/.style={rectangle,thick,draw,inner sep=0,minimum size=10}]
	\node (6) at (3.5,0) {};
	\node (12) at (2.75,3) {};
\end{scope}
\begin{genscope}[red][-]
	\path (3) edge (5);
	\path (6) edge (7);
	\path (9) edge (11);
	\path (12) edge (13);
\end{genscope}
\begin{genscope}[blue][-,densely dashed]
	\path (5) edge (7);
	\path (11) edge (13);
\end{genscope}
\begin{genscope}[olive][-,densely dash dot]
	\path (3) edge (4);
	\path (4) edge (6);
	\path (9) edge (10);
	\path (10) edge (12);
\end{genscope}
\begin{genscope}[violet][-,densely dotted]
	\path (1) edge (3);
	\path (1) edge (9);
	\path (10) edge (15);
\end{genscope}
\end{tikzpicture}
\caption{Two examples of encoding of local coloring as edge-coloring, where \textcolor{teal}{filled} vertices are splitting ones and square vertices represent added ones}
\label{fig:example_encoding_yeo}
\end{figure}
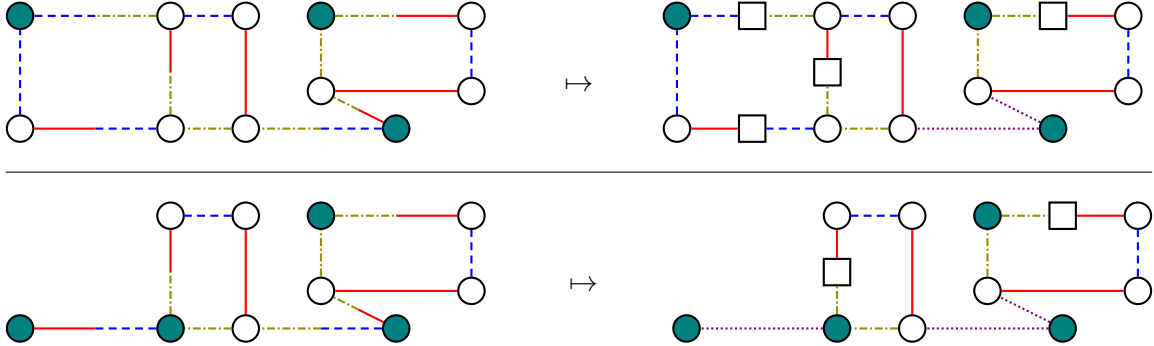

Two examples of this encoding are given on \cref{fig:example_encoding_yeo}.
The key properties of this encoding are that:
\begin{itemize}
\item
alternating/cusp-free cycles in the obtained graph $\colenc{\somegraph}$ correspond to cusp-free cycles of $\somegraph$, and in particular $\colenc{\somegraph}$ has no cusp-free cycle if and only if $\somegraph$ has none;
\item
a vertex of $\somegraph$ is splitting in $\somegraph$ if and only if the corresponding one is splitting in $\colenc{\somegraph}$;
\item
no added vertex is splitting in $\colenc{\somegraph}$.
\end{itemize}
Using these properties, one easily deduces \cref{th:LocalYeo} for a graph $\somegraph$ with a local coloring $\coloring$ from \cref{th:yeo} applied to $\colenc{\somegraph}$ and $\colenc{\coloring}$.

\begin{rem}
The encoding $\colenc{(\_)}$ is not local, meaning it is not stable by sub-graph.
This is because $\colenc{(\_)}$ considers the presence (or absence) of cycles, which is a global criterion depending on more than the neighborhoods of the endpoints of the considered edge.
For instance, consider the locally colored graphs $\somegraph$ and $\somegraph'$ respectively on the upper and lower parts of \cref{fig:example_encoding_yeo}, with their respective encodings $\colenc{\somegraph}$ and $\colenc{\somegraph'}$.
Observe that, while $\somegraph'$ is a sub-graph of $\somegraph$, $\colenc{\somegraph'}$ is not a sub-graph of $\colenc{\somegraph}$: the unique \textcolor{red}{solid}-\textcolor{blue}{dashed} edge of $\somegraph'$ is not in any cycle, while the corresponding edge in $\somegraph$ is.

A local encoding, stable by sub-graph, seems hard to come by.
In particular, an idea that cannot work is adding a same ``gadget'' graph in the middle of each edge (or of each ``bicolored'' edge) so as to duplicate each edge and to color them correspondingly -- whether this gadget is simply a single vertex or a more complex graph.
Indeed, the gadget to add must not have any cusp-free cycle so as to be able to apply \cref{th:yeo}, nor should it have any splitting vertex as one wants to find a splitting vertex in the original graph.
Such a graph cannot exist by \cref{th:yeo} itself!
\end{rem}

\subsection{Variants of Yeo's Theorem}
\label{sec:graph_comparison:variants}


It is known that Yeo's theorem is equivalent to various other graph-theoretical results (\ie\ they can be proved from one another).
In particular, Szeider~\cite{Szeider04c} exhibited four such alternative statements.
One of them is Kotzig's theorem, proved equivalent to the sequentialization of unit-free multiplicative proof nets with $\mix$~\cite{pngraph}.
We will also consider the generalization of Yeo's theorem to $\colorgraph$-coloring from~\cite{hcycles}.

In~\cite{Szeider04c} are given non-trivial encodings of graphs into graphs such that applying one theorem on an encoding allows to prove another theorem on the initial graph.
We show here that \cref{th:ParamLocalYeo} provides a natural unifying principle subsuming all these results (Theorems~\ref{th:yeo},~\ref{th:kotzig},~\ref{th:seymourandgiles},~\ref{th:grossmanandhaggkvist},~\ref{th:shoesmithandsmiley} and~\ref{th:HYeo}).
Indeed, we prove each of these results by applying \cref{th:ParamLocalYeo} to a well-chosen local coloring of the graph with no modification of its structure (vertices and edges), giving somehow ``encoding-less'' proofs.
Besides, this implies that our proof of \cref{th:ParamLocalYeo} \emph{via} cusp minimization is also a proof of each of these results, just by adapting the definition of a cusp.

A \definitive{perfect matching}, or $1$-factor, of a graph $\somegraph$ is a set of edges $F$ such that every vertex has a unique edge in $F$ incident to it.
It is well known that a perfect matching $F$ in a graph $\somegraph$ is unique if and only if $\somegraph$ contains no \definitive{$F$-alternating cycle}, which is a cycle whose edges are alternatively in and out of $F$, including the last and first ones (it is \eg\ a simple variant of~\cite[Theorem~1]{berge57} which considers $F$-alternating open paths).
A \definitive{bridge} is an edge whose removal increases the number of connected components of the graph; equivalently, it is an edge not in a cycle.
\begin{thm}[Kotzig~\cite{Kotzig1959}]\label{th:kotzig}
If a (non-empty) graph $\somegraph$ has a unique perfect matching $F$, then $\somegraph$ has a bridge which belongs to $F$.
\end{thm}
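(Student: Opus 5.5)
The plan is to apply \cref{th:LocalYeo} to a well-chosen local coloring of $\somegraph$ itself, without modifying its vertices or edges. The coloring uses two colors, $\mathsf{in}$ and $\mathsf{out}$. For every edge $\someedge$ and every endpoint $\somevertex$ of $\someedge$, set $\coloring(\someedge,\somevertex) \eqdef \mathsf{in}$ if $\someedge \in F$ and $\coloring(\someedge,\somevertex) \eqdef \mathsf{out}$ otherwise. This is really an edge-coloring, but we only use it through cusps.

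\emph{Step 1: cusp-free cycles are exactly $F$-alternating cycles.} Since $F$ is a perfect matching, each vertex is incident to exactly one edge of $F$. So two distinct edges at a vertex can never both have color $\mathsf{in}$. A cusp at $\somevertex$ is therefore exactly a pair of consecutive edges at $\somevertex$ that are both outside $F$. Hence a cycle is cusp-free if and only if no two consecutive edges (including the last and the first) are both outside $F$. We also know that two consecutive edges cannot both lie in $F$. So a cusp-free cycle is one whose edges alternate between $F$ and its complement, \ie\ an $F$-alternating cycle. Uniqueness of $F$ is equivalent to the absence of $F$-alternating cycles (the variant of Berge's theorem recalled above), so $\somegraph$ has no cusp-free cycle.

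\emph{Step 2: extract a splitting vertex and a bridge.} The graph $\somegraph$ is non-empty, so \cref{th:LocalYeo} gives a splitting vertex $\somevertex$. Let $f \in F$ be the unique matching edge incident to $\somevertex$; it exists because $F$ is perfect. Suppose $f$ lies on some cycle. Rotating that cycle, we may take it to start and end at $\somevertex$, with $f$ as its first edge. Its last edge $g$ is another edge at $\somevertex$, distinct from $f$ because the cycle is simple and contains at least two edges, as there are no loops. Then $g \notin F$, so $\coloring(g,\somevertex)=\mathsf{out}\neq\mathsf{in}=\coloring(f,\somevertex)$. The cycle thus has no cusp at $\somevertex$, contradicting that $\somevertex$ is splitting. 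Hence $f$ lies on no cycle, \ie\ $f$ is a bridge belonging to $F$.

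The only delicate point is Step 1: checking that the cusp condition under this coloring captures exactly $F$-alternation, including the wrap-around at the endpoint of a closed path. This relies on the observation that two $F$-edges can never meet at a vertex. Everything else is an immediate application of \cref{th:LocalYeo}; the parametrized version \cref{th:ParamLocalYeo} is not even needed here.
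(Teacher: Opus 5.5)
Your proposal is correct and is essentially the paper's proof: the paper uses the same two-color edge-coloring by membership in $F$, identifies cusp-free cycles with $F$-alternating cycles, and applies \cref{th:LocalYeo} to get a splitting vertex whose matching edge is a bridge. The only difference is that you derive the bridge property directly from the cycle-based definition of a splitting vertex. The paper instead invokes the characterization that the matching edge is the only edge at $\somevertex$ with its color, which amounts to the same step.
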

\begin{proof}
It suffices to define an edge-coloring $\coloring$ of $\somegraph$ into $\{0, 1\}$ by $\coloring(\someedge) = 1$ iff $\someedge \in F$.
Then $F$-alternating cycles are exactly cusp-free cycles, so by \cref{th:LocalYeo} (here even \cref{th:yeo} would suffice) there is a splitting vertex $\somevertex$.
The unique edge of $F$ incident to $\somevertex$ is a bridge as it is the only edge of endpoint $\somevertex$ with color $1$.
\end{proof}

\begin{thmC}[\cite{Seymour1978}]\label{th:seymourandgiles}
Consider a (non-empty) graph $\somegraph$ and a function $\phi$ from its vertices to its edges such that, for all vertex $\somevertex$, $\phi(\somevertex)$ is incident to $\somevertex$.
If $\somegraph$ has no cycle $\somecycle$ satisfying $\phi(\somevertex)\in\somecycle$ for every $\somevertex\in\somecycle$ -- such a cycle is called \definitive{$\phi$-conformal} -- then there exists a vertex $\othervertex$ such that $\phi(\othervertex)$ is a bridge.
\end{thmC}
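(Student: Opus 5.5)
Following the pattern of the proof of \cref{th:kotzig}, we define a local coloring of $\somegraph$ with no change to its vertices or edges. For each vertex $\somevertex$ and each edge $\someedge$ incident to $\somevertex$, we set $\coloring(\someedge,\somevertex) \eqdef 1$ if $\someedge = \phi(\somevertex)$, and $\coloring(\someedge,\somevertex) \eqdef 0$ otherwise. This is a genuinely local coloring: the same edge may be colored $1$ at one endpoint and $0$ at the other. With this choice, a cusp at $\somevertex$ is exactly a pair of distinct edges incident to $\somevertex$ that both differ from $\phi(\somevertex)$. Indeed, only one edge at $\somevertex$ has color $1$, so two distinct edges with equal colors must both have color $0$.

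The first step is to check that the cusp-free cycles are exactly the $\phi$-conformal cycles. A cycle passes through each of its vertices $\somevertex$ via two distinct edges; for the closing vertex these are the last and first edges. Having no cusp at $\somevertex$ means that one of these two edges is $\phi(\somevertex)$. Hence a cycle is cusp-free if and only if $\phi(\somevertex)\in\somecycle$ for all $\somevertex\in\somecycle$. So the hypothesis says that $\somegraph$ has no cusp-free cycle, and \cref{th:LocalYeo} gives a splitting vertex $\othervertex$, since $\somegraph$ is non-empty.

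The second step is to show that $\phi(\othervertex)$ is a bridge. Suppose instead that $\phi(\othervertex)$ lies on some cycle $\somecycle$. Then $\othervertex\in\somecycle$, and $\othervertex$ is met in $\somecycle$ by $\phi(\othervertex)$ together with one other edge. These two edges have colors $1$ and $0$ at $\othervertex$, so there is no cusp at $\othervertex$, contradicting the fact that $\othervertex$ is splitting. Therefore $\phi(\othervertex)$ belongs to no cycle, which is exactly the definition of a bridge.

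I expect no serious obstacle. The only delicate point is the equivalence in the first step at the source/target of the cycle, which is handled by the paper's convention that a closed path also counts the cusp formed by its last and first edges. I would also remark that $\phi(\somevertex)$ exists for every vertex, which requires every vertex to have an incident edge. This is implicit in the statement, so there is no isolated-vertex case to treat separately.
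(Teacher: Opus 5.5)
Your proof is correct and follows the paper's argument exactly. It uses the same local coloring, $\coloring(\someedge,\somevertex)=1$ iff $\someedge=\phi(\somevertex)$, identifies cusp-free cycles with $\phi$-conformal cycles, and invokes \cref{th:LocalYeo} to get a splitting vertex $\othervertex$. It then concludes that $\phi(\othervertex)$, the only edge colored $1$ at $\othervertex$, lies on no cycle and is therefore a bridge. You additionally spell out two details the paper leaves implicit: the cusp formed at the closing vertex of a cycle, and the fact that the existence of $\phi$ forces every vertex to have an incident edge.
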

\begin{proof}
Set a local coloring into $\{0, 1\}$ by $\coloring(\someedge,\somevertex) = 1$ iff $\someedge = \phi(\somevertex)$.
With this local coloring, $\phi$-conformal cycles of $\somegraph$ are exactly cusp-free cycles, so \cref{th:LocalYeo} gives a splitting vertex $\somevertex$: $\phi(\somevertex)$ is a bridge as it is the only edge of endpoint $\somevertex$ with color $1$.
\end{proof}

\begin{rem}
Observe that an edge-coloring $\coloring$ cannot prove \cref{th:seymourandgiles} without changing the structure of the graph: consider the graph drawn on \cref{fig:coloring_seymour}.
To have the equivalence between $\phi$-conformal cycles and cusp-free cycles, \eg\ considering the cycle $\somecycle \eqdef (\othervertex,\otherline,\otherothervertex,\someline,\somevertex,\phi(\othervertex),\othervertex)$, one would need
$\coloring(\otheredge) = \coloring(\someedge)$ (looking at $\otherothervertex$), $\coloring(\someedge) = \coloring(\phi(\othervertex))$ (looking at $\somevertex$) and $\coloring(\phi(\othervertex)) \neq \coloring(\otheredge)$ (looking at $\othervertex$), thence $\coloring(\otheredge) = \coloring(\someedge) =  \coloring(\phi(\othervertex)) \neq \coloring(\otheredge)$, absurd.
\end{rem}

\begin{figure}
\begin{minipage}[b]{.5\textwidth}
\centering
\begin{tikzpicture}
\begin{scope}[every node/.style={circle,minimum size=6mm,inner sep=0,draw=black,line width=0.8pt}, every edge/.style={draw=black,-,line width=0.8pt}]
	\node (v) at (0,0) {$\somevertex$};
	\node (u) at (-1,.8) {$\othervertex$};
	\node (x) at (1,.8) {$\otherothervertex$};
	\node (y) at (4,.8) {$\otherotherothervertex$};
	\node (w) at (3,0) {$y$};

	\path (v) edge[-] node[arete nommee]{$\phi(\othervertex)$} (u);
	\path (v) edge[-] node[arete nommee,swap]{$\someedge$} (x);
	\path (u) edge[-] node[arete nommee]{$\otheredge$} (x);
	\path (x) edge[-] node[arete nommee]{$\phi(\otherothervertex) = \phi(\otherotherothervertex)$} (y);
	\path (w) edge[-] node[arete nommee]{$\phi(\somevertex) = \phi(y)$} (v);
\end{scope}
\end{tikzpicture}
\caption{No edge-coloring for \cref{th:seymourandgiles}}
\label{fig:coloring_seymour}
\end{minipage}%
\begin{minipage}[b]{.5\textwidth}
\centering
\begin{tikzpicture}
\begin{scope}[every node/.style={circle,minimum size=6mm,inner sep=0,draw=black,line width=0.8pt}, every edge/.style={draw=black,-,line width=0.8pt}]
	\node (v) at (0,0) {$\somevertex$};
	\node (u) at (1,.8) {$\othervertex$};
	\node (x) at (2,0) {$\otherothervertex$};

	\path (v) edge[->] node[arete nommee]{$\someedge$} (u);
	\path (x) edge[<-] node[arete nommee]{$\otheredge$} (v);
	\path (u) edge[->] node[arete nommee]{$\otherotheredge$} (x);
\end{scope}
\end{tikzpicture}
\caption{No edge-coloring for \cref{th:shoesmithandsmiley}}
\label{fig:coloring_shoesmith}
\end{minipage}
\end{figure}

\begin{thmC}[\cite{Grossman1983}]\label{th:grossmanandhaggkvist}
Any (non-empty) 2-edge-colored graph has a splitting vertex or an alternating cycle.
\end{thmC}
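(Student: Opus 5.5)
This is a direct application of \cref{th:LocalYeo} with the given coloring taken as the local coloring; no modification of the graph is needed. Let $\somegraph$ be a non-empty graph with a $2$-edge-coloring $\coloring$ into $\{0,1\}$. We define the local coloring $\coloring'(\someedge,\somevertex) \eqdef \coloring(\someedge)$ for every edge $\someedge$ and each endpoint $\somevertex$ of $\someedge$, exactly as in the proof of \cref{th:yeo}.

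With this choice, a cycle is cusp-free for $\coloring'$ if and only if consecutive edges, including the last and first ones, have distinct colors. That is precisely the definition of an alternating cycle for $\coloring$. So we argue by cases.
\begin{itemize}
\item If $\somegraph$ has an alternating cycle, the statement holds trivially.
\item Otherwise $\somegraph$ has no cusp-free cycle for $\coloring'$. Since $\somegraph$ is non-empty, \cref{th:LocalYeo} yields a vertex $\somevertex$ that is splitting for $\coloring'$.
\end{itemize}

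It remains to check that this notion of splitting agrees with the one intended in the statement of \cref{th:grossmanandhaggkvist}, namely the conclusion of \cref{th:yeo}: no connected component of $\somegraph-\somevertex$ is joined to $\somevertex$ by edges of both colors. This equivalence was already established in \cref{sec:stdyeo}. A vertex is splitting if and only if any two edges incident to it and connected in $\somegraph\setminus\somevertex$ have the same color at $\somevertex$. For total graphs, this is the same as the component formulation.

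The only point requiring care is this terminological identification. Grossman and Häggkvist's notion of a splitting vertex (a vertex separating colors) must be read as the same notion as in Yeo's theorem. Once that is accepted, the proof is a two-line corollary. One could equally invoke \cref{th:yeo} directly, since the coloring is a genuine edge-coloring.
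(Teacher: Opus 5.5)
Your proof is correct and is essentially the paper's: the paper simply notes that the statement is the two-color special case of \cref{th:yeo}, and your argument inlines the paper's own proof of \cref{th:yeo}, namely \cref{th:LocalYeo} applied to $\coloring'(\someedge,\somevertex)\eqdef\coloring(\someedge)$. Your terminological check is harmless but optional, since \cref{th:LocalYeo} already produces a vertex that is splitting in the sense of \cref{sec:localcolor}, which is the notion the paper uses here.
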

\begin{proof}
This is just the particular case of \cref{th:yeo} restricted to two colors.
\end{proof}

The next theorem considers \emph{undirected} paths in \emph{directed} graphs.
A \definitive{directed graph} is the same as a (total) graph defined in \cref{sec:graph_def},
except that, instead of a single incidence function,
we have two functions giving the
\definitive{source} \(\source(\someedge)\) and
\definitive{target} \(\target(\someedge)\)
of each edge $\someedge$, requiring \(\source(\someedge)\not=\target(\someedge)\).
The underlying graph is obtained by defining the incidence function \(\incidence\),
setting \(\incidence(\someedge)=\{\source(\someedge),\target(\someedge)\}\)
(which is always a pair):
the sets of vertices and edges remain the same, only the notion of incidence is relaxed.
We do not consider directed paths here (we will do so, briefly, in \cref{sec:mllpn:ps}):
a path in a directed graph is just a path in the underlying graph.
The only role of directedness lies in the definition of turning vertices:
a vertex $\somevertex$ of a cycle $\somecycle$ is a \definitive{turning vertex} of $\somecycle$
if the edges incident to $\somevertex$ in $\somecycle$ are either both of source $\somevertex$
or both of target $\somevertex$.
\begin{thm}[Shoesmith and Smiley~\cite{thseqsemicycle}]\label{th:shoesmithandsmiley}
If a non-empty set $S$ of vertices of a directed graph $\somegraph$ contains a turning vertex of each cycle of $\somegraph$,
then $S$ contains a vertex which is a turning vertex of every cycle it belongs to.
\end{thm}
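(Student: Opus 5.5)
The plan is to apply \cref{th:ParamLocalYeo} to the underlying graph of $\somegraph$ with a local coloring that encodes direction. For every edge $\someedge$ and endpoint $\somevertex$ of $\someedge$, set $\coloring(\someedge,\somevertex) \eqdef 0$ if $\somevertex = \source(\someedge)$ and $\coloring(\someedge,\somevertex) \eqdef 1$ if $\somevertex = \target(\someedge)$. With this coloring, a cusp at $\somevertex$ in a cycle is exactly a pair of consecutive edges of the cycle both leaving $\somevertex$ or both entering it. So cusps of a cycle at $\somevertex$ coincide with $\somevertex$ being a turning vertex of that cycle. A splitting vertex is then precisely a vertex that is a turning vertex of every cycle it belongs to.

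We cannot just use \cref{th:LocalYeo}, because the hypothesis does not exclude cusp-free cycles outright and the splitting vertex must lie in $S$. The idea is to shrink the graph. Consider the partial graph $\somegraph'$ obtained from the underlying graph by deleting, for each vertex $\somevertex\notin S$, the vertex itself while keeping all edges, so that edges lose endpoints. Paths of $\somegraph'$ only use vertices of $S$, so this looks too drastic. I will therefore prefer a different coloring on the full graph: keep the direction coloring at vertices of $S$, and at every vertex $\somevertex\notin S$ give all incident edges pairwise distinct fresh colors, for instance $\coloring(\someedge,\somevertex) \eqdef \someedge$. Vertices outside $S$ then never carry a cusp. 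A cycle is cusp-free exactly when it has no turning vertex in $S$, so by hypothesis there is no cusp-free cycle. All cusp-points $(\somevertex,\somecolor)$ have $\somevertex\in S$.

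Next, choose $\somesetedge \eqdef S\times\{0,1\}$. This set contains every cusp-point, so it dominates cusp-points. It is non-empty because $S$ is, and it is finite. By \cref{lem:orderyeo_is_order} it has a $\orderyeo$-maximal element $(\somevertex,\somecolor)$. \Cref{th:ParamLocalYeo} then says $\somevertex\in S$ is splitting, so every cycle through $\somevertex$ has a cusp at $\somevertex$. Since $\somevertex\in S$ carries the direction coloring, this means $\somevertex$ is a turning vertex of each such cycle. If $\somevertex$ lies on no cycle, the conclusion holds vacuously.

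The main point to verify carefully is the translation between cusps and turning vertices, including the closing cusp at the source of a closed path. I also need to check that the set $\colors$ is finite: it is $\{0,1\}$ together with the edges, which is fine. No real obstacle is expected, since the parametrization does all the work of forcing the splitting vertex into $S$.
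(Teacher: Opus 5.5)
Your final argument is correct and is essentially the paper's proof: it uses the same coloring ($0/1$ according to direction at vertices of $S$, the edge itself as its own color elsewhere) and the same parameter $\somesetedge = S\times\{0,1\}$ fed into \cref{th:ParamLocalYeo}. One small correction to your motivating remark: with the pure direction coloring the hypothesis already rules out cusp-free cycles, and the real obstruction is that cusp-points at vertices outside $S$ would not be dominated by $S\times\{0,1\}$, which is exactly what your fresh colors fix.
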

\begin{proof}
Forgetting the orientation of the graph, we want a local coloring whose cusp-points are exactly the turning vertices in $S$.
Such a coloring can be obtained through setting $\coloring(\someedge,\somevertex) = 0$ if $\somevertex \in S$ is the source of $\someedge$, $\coloring(\someedge,\somevertex) = 1$ if $\somevertex \in S$ is the target of $\someedge$ and $\coloring(\someedge,\somevertex) = \someedge$ otherwise.
Cycles with no turning vertex in $S$ are exactly cusp-free cycles, so \cref{th:ParamLocalYeo} with $\somesetedge \eqdef \{(\somevertex, \somecolor) \suchthat \somevertex \in S, \somecolor \in \{0,1\}\}$ yields a splitting vertex $\somevertex \in S$.
By definition of $\coloring$, $\somevertex$ is a turning vertex of every cycle it belongs to.
\end{proof}

We need the parametrized version of our result to deal in a simple way with the parameter $S$.
Here again, an edge-coloring $\coloring$ is not enough for proving \cref{th:shoesmithandsmiley} without changing the structure of the graph: look at \cref{fig:coloring_shoesmith} with all vertices in $S$.
To have the equivalence between cycles without turning vertex and cusp-free cycles, one needs $\coloring(\otheredge) = \coloring(\otherotheredge) \neq  \coloring(\someedge) = \coloring(\otheredge)$.

\begin{rem}
Shoesmith and Smiley's stated and proved \cref{th:shoesmithandsmiley} to handle a particular kind of proofs represented as graphs~\cite{multipleconclusionlogic}, sharing striking similarities with proof nets of multiplicative linear logic (notably, forbidding some classes of cycles).\footnote{
We were not aware of this work during the research leading to the present paper: it only came to our attention \emph{via} Szeider's equivalence results~\cite{Szeider04c}.
As far as we know, 48 years after the publication of~\cite{multipleconclusionlogic} and 39 years after the publication of~\cite{ll}, the first line of work has been ignored by the linear logic community: it would certainly be of interest to investigate further connexions with proof nets.
}
Moreover, \cref{th:shoesmithandsmiley} can be used directly to obtain a splitting $\parr$ in a proof net by instantiating $S$ as the set of all $\parr$-vertices.
Furthermore, Shoesmith and Smiley's proof of this theorem is quite similar to our proof by cusp minimization:
the key idea of both proofs is to look at cycles with a minimal number of cusps (or turning vertices).
Still, there are important differences:
we construct an explicit order relation on vertex-color pairs, while their proof builds an infinite path to reach a contradiction;
besides, the association of colors with vertices in our parameter makes our result more modular.
This is particularly relevant for proof nets:
\cref{th:shoesmithandsmiley} seems limited to giving a splitting $\parr$, without the
unifying character of \cref{th:ParamLocalYeo} seen in \cref{sec:paramyeo}.
\end{rem}


\Cref{th:LocalYeo} implies another generalization of Yeo's theorem to $\colorgraph$-colored graphs~\cite{hcycles}.
Given a graph $\colorgraph$, an \definitive{$\colorgraph$-coloring} of a graph $\somegraph$ is an edge-coloring of $\somegraph$ with as colors the vertices of $\colorgraph$.
An \definitive{$\colorgraph$-cycle} is a cycle in $\somegraph$ where the colors of consecutive edges (including the last and first ones) are linked by an edge in $\colorgraph$.
When $\colorgraph$ is a complete graph, we recover the standard edge-coloring and $\colorgraph$-cycles correspond to alternating cycles.
A \definitive{complete multipartite} graph $\othersubgraph$ has vertices $\someindepset_1\uplus\dotsc\uplus\someindepset_k$ (disjoint union) where each $\someindepset_i$ is an independent set of vertices (no edge in $\othersubgraph$ between vertices of $\someindepset_i$) and if $\somevertex\in\someindepset_i$ and $\othervertex\in\someindepset_j$ (with $i\neq j$) then there is exactly one edge between them in $\othersubgraph$.

\begin{defi}
Given a graph $\somegraph$ with an $\colorgraph$-coloring $\coloring$, and $\somevertex$ a vertex of $\somegraph$, \definitive{$\somegraph_\somevertex$} is the graph with vertices the edges of $\somegraph$ incident to $\somevertex$, and one edge between $\someedge$ and $\otheredge$ if and only if their colors $\coloring(\someedge)$ and $\coloring(\otheredge)$ are linked by an edge in $\colorgraph$.
\end{defi}

Note that $\somegraph_\somevertex$ only depends on the neighbourhood of $\somevertex$ (the edges incident to $\somevertex$) and on the sub-graph of $\colorgraph$ induced by the colors of these edges.

\begin{thmC}[{\cite[Theorem~2]{hcycles}}]\label{th:HYeo}
Consider $\colorgraph$ a graph and $\somegraph$ a non-empty $\colorgraph$-colored graph.
Assume $\somegraph$ has no $\colorgraph$-cycle
and that, for every vertex $\somevertex$ of $\somegraph$, $\somegraph_\somevertex$ is a complete multipartite graph.
Then there exists a vertex $\somevertex$ of $\somegraph$ such that every connected component $D$ of $\somegraph-\somevertex$ satisfies that the set of edges of $\somegraph$ between $\somevertex$ and vertices of $D$ is an independent set in $\somegraph_\somevertex$.
\end{thmC}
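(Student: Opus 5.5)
We define a local coloring directly from the multipartite structure and apply \cref{th:LocalYeo}. For each vertex $\somevertex$ of $\somegraph$, $\somegraph_\somevertex$ is complete multipartite, so its vertex set (the edges incident to $\somevertex$) splits into independent parts $\someindepset^\somevertex_1\uplus\dotsc\uplus\someindepset^\somevertex_{k_\somevertex}$. We set $\coloring'(\someedge,\somevertex)\eqdef i$ when $\someedge\in\someindepset^\somevertex_i$. This uses the freedom noted in \cref{rem:local_coloring_on_vertex}: colors at different vertices are unrelated. Each color class at $\somevertex$ is exactly one part, which is well defined because the parts of a complete multipartite graph are its maximal independent sets, i.e.\ the classes of non-adjacency.

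The key step is to show that cusps are exactly the non-$\colorgraph$-transitions. For distinct edges $\someedge,\otheredge$ incident to $\somevertex$, $(\someedge,\somevertex,\otheredge)$ is a cusp iff $\someedge$ and $\otheredge$ lie in the same part, iff they are non-adjacent in $\somegraph_\somevertex$ (by complete multipartiteness), iff $\coloring(\someedge)$ and $\coloring(\otheredge)$ are not linked in $\colorgraph$. One subtlety: if $\coloring(\someedge)=\coloring(\otheredge)$, then the two edges are adjacent in $\somegraph_\somevertex$ exactly when $\colorgraph$ has a loop at that color. Since graphs here have no loops, equal colors are non-adjacent, so the equivalence still holds. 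Consequently a cycle of $\somegraph$ is cusp-free for $\coloring'$ iff it is an $\colorgraph$-cycle. The hypothesis then says $\somegraph$ has no cusp-free cycle, and \cref{th:LocalYeo} yields a splitting vertex $\somevertex$.

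It remains to translate splitting into the stated conclusion. We use the characterization given after \cref{th:LocalYeo}: $\somevertex$ is splitting iff any two edges at $\somevertex$ connected in $\somegraph\setminus\somevertex$ have the same color at $\somevertex$. Let $D$ be a connected component of $\somegraph-\somevertex$. Two edges between $\somevertex$ and $D$ have their other endpoints in $D$, so they are connected in $\somegraph\setminus\somevertex$ and therefore have the same $\coloring'$-color at $\somevertex$. Hence all such edges lie in a single part, which is an independent set of $\somegraph_\somevertex$.

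The main obstacle is the well-definedness and exactness of the coloring in the second paragraph, in particular the case of equal colors and the fact that non-adjacency in $\somegraph_\somevertex$ is an equivalence relation; the latter is precisely what complete multipartiteness provides. Everything else is a direct application of \cref{th:LocalYeo}.
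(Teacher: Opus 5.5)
Your proposal is correct and is the paper's argument: color $(\someedge,\somevertex)$ by the independent part of $\somegraph_\somevertex$ containing $\someedge$, observe that cusp-free cycles are then exactly the $\colorgraph$-cycles, and apply \cref{th:LocalYeo}. Your explicit translation of the splitting property into the independent-set conclusion only spells out what the paper leaves implicit; the loop remark is harmless but unnecessary, since adjacency in $\somegraph_\somevertex$ is defined directly from adjacency in $\colorgraph$.
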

\begin{proof}
Define a local coloring $\coloring$ by $\coloring(\someedge,\somevertex)$ is the independent set in $\somegraph_{\somevertex}$ to which $\someedge$ belongs.
For $\somegraph$ has no $\colorgraph$-cycle, it has no cusp-free cycle for $\coloring$, and the result follows by \cref{th:LocalYeo}.
\end{proof}

As $\somegraph_\somevertex$ is a complete multipartite graph, one can consider its independent sets of vertices as corresponding to a given color, thus defining a local coloring.
Local colorings seem more natural than the complete multipartite structure of some induced sub-graphs of $\colorgraph$.
Note that Theorems~\ref{th:LocalYeo},~\ref{th:yeo} and~\ref{th:HYeo} are all equivalent.
As far as we know, this theorem was not known to be equivalent to Yeo's theorem before the conference version of the present paper.

\cref{th:HYeo} is actually a slight reformulation of~\cite[Theorem~2]{hcycles} as its authors require $\colorgraph$ to have at most one edge between two of its vertices, and $\colorgraph$ and $\somegraph$ to have no isolated vertices.
These modifications clearly have no impact on the theorem.


\section{Multiplicative Proof Nets}
\label{sec:mllpn}

\subsection{Unit-Free Multiplicative Linear Logic with Mix}

We focus on unit-free multiplicative linear logic whose formulas are given by:
\begin{equation*}
  A ::= X \mid X\orth \mid A\tensor A \mid A\parr A
\end{equation*}
The \definitive{dual} operator $(\_)\orth$ is extended to an involution on all formulas by De Morgan duality:
$(X\orth)\orth=X$, $(A\tensor B)\orth=A\orth\parr B\orth$ and $(A\parr B)\orth=A\orth\tensor B\orth$.

We will in fact consider \definitive{localized formulas},
which are obtained by labeling each syntactic construct
with a unique tag (its \definitive{location})
from some fixed infinite denumerable set.
Formally:
\[
  A ::= X_e \mid X_e\orth \mid A\tensor_e A \mid A\parr_e A
\]
where the tags \(e\) are locations, and we require that no location is used twice in a formula:
each localized formula is called an \definitive{instance} of the underlying (untagged) formula.
We then define a \definitive{sequent} as a set of localized formulas, again without repeated tag.
This provides a clear notion of \definitive{occurrence} of a formula \(A\)
(an instance of \(A\), which occurs as a sub-formula of a given (localized) formula
-- the latter possibly being an element of a sequent)
while avoiding the need for an explicit exchange rule.
Moreover it will make the correspondence with proof nets more direct:
in this we follow previous approaches, \eg,~\cite{mllpnpspace}.
Keeping in line with the more traditional presentation,
we will most often denote a sequent \(\Gamma\) as an enumeration of its elements,
moreover keeping locations implicit:
\eg, we may write \(A\orth,A\) for a set of two localized formulas
(with disjoint sets of tags)
whose underlying formulas are dual to each other;
and we may write \(\Gamma,\Delta\) for the union of two sequents
(again implicitly requiring that their sets of locations are disjoint).

We consider the deduction system \MLLhmix\ of \emph{open} derivations in multiplicative linear logic with $\mix$ rules.
This consists of the usual set of rules for classical multiplicative linear logic:
\[
    \begin{prooftree}
      \infer0[\ax]{\vdash A\orth, A}
    \end{prooftree}
    \quad
    \begin{prooftree}
      \hypo{\vdash A, \Gamma}
      \hypo{\vdash A\orth, \Delta}
      \infer2[\cut]{\vdash \Gamma, \Delta}
    \end{prooftree}
    \quad
    \begin{prooftree}
      \hypo{\vdash A, \Gamma}
      \hypo{\vdash B, \Delta}
      \infer2[\tensor]{\vdash A\tensor B, \Gamma, \Delta}
    \end{prooftree}
    \quad
    \begin{prooftree}
      \hypo{\vdash A, B, \Gamma}
      \infer1[\parr]{\vdash A\parr B, \Gamma}
    \end{prooftree}
\]
together with the two \(\mix\) rules,
and an additional rule introducing any single-formula sequent:
\[
    \begin{prooftree}
      \hypo{\vdash \Gamma}
      \hypo{\vdash \Delta}
      \infer2[\mix_2]{\vdash \Gamma, \Delta}
    \end{prooftree}
    \qquad
    \begin{prooftree}
      \infer0[\mix_0]{\vdash}
    \end{prooftree}
    \qquad
    \begin{prooftree}
      \infer0[\hyp]{\vdash A}
    \end{prooftree}\,.
\]

The rôle of the \((\hyp)\) rule is to allow for open derivations,
in which some formulas are left unproved.
If $\someproof$ is a derivation with hypotheses $\vdash A_1, \dotsc, \vdash A_n$ and conclusion $\vdash B_1,\dotsc,B_k$, we call $\someproof$ \definitive{a derivation of} $A_1,\dotsc,A_n \vdash B_1,\dotsc,B_k$.
Note that we restrict ourselves to hypotheses consisting of single formulas rather than arbitrary sequents:
this allows us to define a notion of substitution of a proof for an hypothesis whose counterpart in proof structures always preserves correctness
(see \cref{rem:ps} and \cref{lem:deseqsub} below).
If $\someproof_1$ is a derivation of $\Sigma\vdash\Gamma,A$ and $\someproof_2$ is a derivation of $A, \Theta\vdash\Delta$
(moreover assuming that there is no repeated location in \(\Sigma,\Gamma,\Theta,\Delta\)),
the \definitive{substitution} of $\someproof_1$ in $\someproof_2$ is a derivation of $\Sigma,\Theta\vdash\Gamma,\Delta$:
it is obtained from $\someproof_2$ by replacing the ($\hyp$) rule on $\vdash A$ with $\someproof_1$
(this adds $\Gamma$ to all sequents of $\someproof_2$ below $\vdash A$).

We also consider the following rewriting of derivations which we call \definitive{\mixretore\ reduction} (due to its similarity to Rétoré's reduction on the exponential connective $\wn$~\cite[page~77]{phddanos}, with contraction and weakening forming a monoid):
\[
  \begin{prooftree}[center=true]
    \hypo{\vdash \Gamma}
    \infer0[\mix_0]{\vdash}
    \infer2[\mix_2]{\vdash \Gamma}
  \end{prooftree}
  \quad\rightsquigarrow\quad
  \begin{prooftree}
    \hypo{\vdash \Gamma}
  \end{prooftree}
  \qquad\qquad
  \begin{prooftree}[center=true]
    \infer0[\mix_0]{\vdash}
    \hypo{\vdash \Gamma}
    \infer2[\mix_2]{\vdash \Gamma}
  \end{prooftree}
  \quad\rightsquigarrow\quad
  \begin{prooftree}
    \hypo{\vdash \Gamma}
  \end{prooftree}
\]
It defines a confluent and strongly normalizing rewriting system on derivations.

\begin{lem}[\Mixretore\ Normal Forms]\label{lem:mixretorenf}
  If $\someproof$ is a derivation from \MLLhmix\ in \mixretore\ normal form, either it is
  \begin{prooftree}
    \infer0[\mix_0]{\vdash}
  \end{prooftree},
or it does not contain the $(\mix_0)$ rule.
\end{lem}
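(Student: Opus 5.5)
The plan is a structural induction on $\someproof$, proving the following statement: if $\someproof$ is in \mixretore\ normal form and contains an occurrence of $(\mix_0)$, then $\someproof$ is exactly the one-rule derivation $(\mix_0)$. The lemma follows directly from this. The key observation is that a derivation using $(\mix_0)$ somewhere must, going down from that occurrence, pass through rules whose conclusions stay empty until a rule that either cannot apply to an empty sequent or is a $(\mix_2)$ that would form a \mixretore\ redex.

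The induction looks at the last rule $r$ of $\someproof$. If $r$ is $(\ax)$ or $(\hyp)$, there is no $(\mix_0)$, so the claim holds vacuously. If $r$ is $(\mix_0)$, then $\someproof$ is the one-rule derivation, as claimed. Otherwise $r$ has immediate sub-derivations. Each of these is in \mixretore\ normal form, since the reduction is performed on sub-derivations, so a redex in a sub-derivation is a redex in $\someproof$. Since $\someproof$ contains $(\mix_0)$ and $r$ is not $(\mix_0)$, some immediate sub-derivation $\someproof'$ contains $(\mix_0)$. By the induction hypothesis, $\someproof'$ is the one-rule derivation $(\mix_0)$, whose conclusion is the empty sequent $\vdash$.

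It remains to see which rules $r$ can have $\vdash$ as a premise. The rules $(\tensor)$, $(\parr)$ and $(\cut)$ all require a premise containing a formula ($A,\Gamma$ or $A,B,\Gamma$, etc.), so the empty sequent cannot be one of their premises. The only remaining case is $r=(\mix_2)$ with $\someproof'$ as its left or right premise. In that case $\someproof$ is itself one of the two \mixretore\ redexes, contradicting normality. Hence this case is impossible, which closes the induction.

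The only point needing care is the claim that the sub-derivations of a normal form are normal, which is immediate since rewriting applies in any context. Beyond that I expect no real obstacle, and the case analysis on premise shapes is routine.
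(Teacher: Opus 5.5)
Your proof is correct and follows the paper's argument. The paper's one-line proof is your key observation, that $(\mix_2)$ is the only rule accepting an empty sequent as a premise. Your structural induction just makes explicit why this forces any $(\mix_0)$ other than the whole derivation to sit directly above a $(\mix_2)$, i.e.\ inside a mix-R\'etor\'e redex.
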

\begin{proof}
Observe that $(\mix_2)$ is the only rule accepting an empty sequent as a premise.
\end{proof}

\subsection{Proof Structures}
\label{sec:mllpn:ps}

A \definitive{proof structure} is a directed acyclic partial graph,
together with particular labelings of vertices and of edges,
subject to extra conditions that we detail below.

A \definitive{directed partial graph} is the same as a partial graph as defined
\cref{sec:graph_def}, except that, instead of a single incidence function,
we have two \emph{partial} functions $\source$ and $\target$ giving the \definitive{source}
\(\source(\someedge)\) and \definitive{target} \(\target(\someedge)\) of each
edge $\someedge$.
The \definitive{underlying partial graph}
is obtained by defining the incidence function \(\incidence\) as follows:
\(\incidence(\someedge)\) is the set of (at most \(2\)) vertices 
containing \(\source(\someedge)\), if it is defined,
and \(\target(\someedge)\), if it is defined.
This definition generalizes the one we used in \cref{sec:graph_comparison:variants}
to the setting of partial graphs:
again, the sets of vertices and edges are unchanged,
only the notion of incidence is relaxed.
A \definitive{path} in a directed partial graph is a path 
in the underlying partial graph.
Such a path 
$(\somevertex_0, \someedge_1, \somevertex_1, \someedge_2, \somevertex_2, \dots, \someedge_n, \somevertex_n)$
is said to be \definitive{directed} when,
for all $i \in \{1, \dots, n\}$,
\(\source(\someedge_i)=\somevertex_{i-1}\) and \(\target(\someedge_i)=\somevertex_i\).
A \definitive{directed acyclic partial graph} is a
directed partial graph without any directed cycle.

In a proof structure, vertices are labeled with names of rules,
$\ax$, $\cut$, $\tensor$ or $\parr$,
and then named according to their label: \definitive{$\ax$-vertices},
\definitive{$\cut$-vertices},
\definitive{$\tensor$-vertices} and \definitive{$\parr$-vertices}.
A \definitive{premise} (resp.\ a \definitive{conclusion})
\definitive{of a proof structure}
is any edge without source (resp.\ without target);
and a \definitive{premise} (resp.\ a \definitive{conclusion}) 
\definitive{of a vertex} is any edge with target (resp.\ source) this vertex.
Incidences are moreover subject to constraints:
\begin{itemize}
\item each $\ax$-vertex has two conclusions (and no premise);
\item each $\cut$-vertex has two premises (and no conclusion);
\item each $\tensor$-vertex and each \(\parr\)-vertex has two premises
  and one conclusion.
\end{itemize}
As a direct consequence of the definition,
for each pair \((\someedge,\otheredge)\) of edges in a proof structure,
there is at most one directed path from \(\target(\someedge)\)
to \(\source(\otheredge)\).
An \definitive{initial edge} is a premise of the proof structure,
or a conclusion of an \(\ax\)-vertex
(\emph{i.e.}\ an edge whose source, if any, is not the target of another edge).
A \definitive{terminal vertex} is one whose conclusions are also conclusions of
the proof structure (\emph{i.e.}\ not premises of other vertices).
A $\cut$-vertex is always terminal.

Additionally, edges are labeled with localized formulas:
the label of an edge is called its \definitive{type}.
Typing is subject to the following local constraints
(depicted in \cref{fig:mllpn:ps:typing}):
\begin{itemize}
  \item the conclusions of an \(\ax\)-vertex
    (resp.\ the premises of a \(\cut\)-vertex)
    must have dual types;
  \item the conclusion of a \(\tensor\)-vertex (resp.\ \(\parr\)-vertex)
    must have type \(A\tensor B\) (resp.\ \(A\parr B\)),
    where \(A\) and \(B\) are the types of its premises;
\end{itemize}
as well as to the following global constraint on tags:
\begin{itemize}
  \item the sets of locations of initial edges are pairwise disjoint.
\end{itemize}
The \definitive{conclusion sequent} (resp.\ \definitive{premise sequent})
of a proof structure is then the set of types of its conclusions (resp.\ premises)
-- where, indeed, no location is repeated, thanks to typing constraints
and our previous observation on directed paths.
A proof structure is said to be \definitive{closed} if it has no premise.
A \definitive{connected component} of a proof structure is the proof structure
induced by a connected component of the underlying partial graph.

\begin{figure}
\[
\begin{tikzpicture}[baseline=(baseline)]
\begin{scope}[every node/.style={circle,minimum size=6mm,inner sep=0,draw=black,line width=0.8pt}, every edge/.style={draw=black,-,line width=0.8pt}]
	\node (v) at (0,0) {$\ax$};
	\coordinate (1) at (-.75,-.6) {};
	\coordinate (2) at (.75,-.6) {};
        \path (v) edge[->] node[arete nommee,above,pos=.7] {$A\orth$} (1);
	\path (v) edge[->] node[arete nommee,above,pos=.7] {$A$} (2);
\end{scope}
\node (baseline) at (0,0) {\vphantom{x}};
\end{tikzpicture}
\qquad
\begin{tikzpicture}[baseline=(baseline)]
\begin{scope}[every node/.style={circle,minimum size=6mm,inner sep=0,draw=black,line width=0.8pt}, every edge/.style={draw=black,-,line width=0.8pt}]
	\coordinate (a) at (-.75,0.5) {};
	\coordinate (b) at (.75,0.5) {};
	\node (v) at (0,0) {$\cut$};
        \path (a) edge[->] node[arete nommee,above,pos=.6] {$A$} (v);
	\path (b) edge[->] node[arete nommee,above,pos=.6] {$A\orth$} (v);
\end{scope}
\node (baseline) at (0,0.3) {\vphantom{x}};
\end{tikzpicture}
\qquad
\begin{tikzpicture}[baseline=(baseline)]
\begin{scope}[every node/.style={circle,minimum size=6mm,inner sep=0,draw=black,line width=0.8pt}, every edge/.style={draw=black,-,line width=0.8pt}]
	\coordinate (a) at (-.75,0.5) {};
	\coordinate (b) at (.75,0.5) {};
	\node (v) at (0,0) {$\tensor$};
	\coordinate (ab) at (0,-.8) {};
        \path (a) edge[->] node[arete nommee,above,pos=.6] {$A$} (v);
	\path (b) edge[->] node[arete nommee,above,pos=.6] {$B$} (v);
	\path (v) edge[->] node[arete nommee] {$A\tensor B$} (ab);
\end{scope}
\node (baseline) at (0,0.3) {\vphantom{x}};
\end{tikzpicture}
\qquad
\begin{tikzpicture}[baseline=(baseline)]
\begin{scope}[every node/.style={circle,minimum size=6mm,inner sep=0,draw=black,line width=0.8pt}, every edge/.style={draw=black,-,line width=0.8pt}]
	\coordinate (a) at (-.75,0.5) {};
	\coordinate (b) at (.75,0.5) {};
	\node (v) at (0,0) {$\parr$};
	\coordinate (ab) at (0,-.8) {};
	\path (a) edge[->] node[arete nommee,above,pos=.6] {$A$} (v);
	\path (b) edge[->] node[arete nommee,above,pos=.6] {$B$} (v);
	\path (v) edge[->] node[arete nommee] {$A\parr B$} (ab);
\end{scope}
\node (baseline) at (0,0.3) {\vphantom{x}};
\end{tikzpicture}
\]
\caption{Typing constraints in proof structures}
\label{fig:mllpn:ps:typing}
\end{figure}
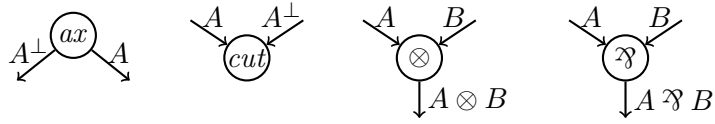

Throughout our discussion of proof structures,
directedness plays no other rôle than enabling the previous definitions:
from now on, whenever we mention a path in a proof structure,
this is to be taken in the underlying partial graph.
Moreover, when depicting proof structures,
we generally leave the orientation of edges implicit,
by drawing them from top to bottom.
An example of closed proof structure, with conclusion sequent
$(X\orth\parr X)\tensor Y, Y\orth$,
is given on \cref{fig:ps_ex}.
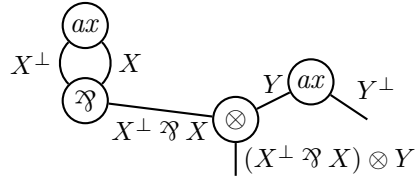
\begin{figure}
\centering
\begin{tikzpicture}
\begin{scope}[every node/.style={circle,minimum size=6mm,inner sep=0,draw=black,line width=0.8pt}, every edge/.style={draw=black,-,line width=0.8pt}]
	\node (ax1) at (-1,2) {$\ax$};
	\node (p) at (-1,1) {$\parr$};
	\node (t) at (1,0.75) {$\tensor$};
	\node (ax2) at (2,1.25) {$\ax$};
	\coordinate (c1) at (1,0);
	\coordinate (c2) at (2.75,.75);
\end{scope}
\begin{genscope}
	\path[out=-45,in=45] (ax1) edge[-] node[arete nommee,right] {\small$X$} (p);
	\path[out=-135,in=135] (ax1) edge[-] node[arete nommee,left] {\small$X\orth$} (p);
	\path (p) edge[-] node[arete nommee,below] {\small$X\orth\parr X$} (t);
	\path (ax2) edge[-] node[arete nommee,above] {\small$Y$} (t);
	\path (t) edge[-] node[arete nommee] {\small$(X\orth\parr X)\tensor Y$} (c1);
	\path (ax2) edge[-] node [arete nommee]{\small$Y\orth$} (c2);
\end{genscope}
\end{tikzpicture}
\caption{Example of proof structure}
\label{fig:ps_ex}
\end{figure}

\begin{rem}\label{rem:ps}
There are many ways to define proof structures.
In the literature, one generally considers closed proof structures only:
in the typed multiplicative case considered here, it is easy to check 
that the notion of (closed) proof structure is essentially equivalent
to others in the literature (\eg~\cite{ll}),
or deviate in ways that are not relevant for sequentialization
-- for instance, some definitions (\eg~\cite{pngraph})
impose no typing of edges \emph{a priori}.
Open proof structures are sometimes called \emph{modules}~\cite{structmult}.
Here we avoid this name because modules are generally 
intended to be assembled via some form of vertical
composition, where the conclusions of a module are identified
(one-to-one) with the premises of another:
in general this might introduce cycles, whereas we will only consider the much
simpler operation mimicking the substitution of a proof for an hypothesis,
which amounts to gluing two proof structures along a single edge
(see \cref{lem:deseqsub}).
\end{rem}

To identify proof structures corresponding to proofs, and create a distinction between $\tensor$- and $\parr$-vertices, it is usual to ask for a proof structure to respect a \emph{correctness criterion}.
As explained in the introduction, we use one due to Danos and Regnier~\cite{structmult}.
A path in a proof structure is called \definitive{switching} when it does not contain the two premises of any $\parr$-vertex.
A proof structure is \definitive{DR-correct}, and is called a \definitive{proof net}, if it has no switching cycle.

\begin{rem}\label{rem:correcgraph}
The original definition of the acyclicity condition in the Danos-Regnier correctness criterion~\cite{structmult} (extended to ($\mix_2$) in~\cite{mixpn}) is in fact slightly different.
They consider \definitive{correctness graphs}:
partial graphs obtained by disconnecting one of the two premises of each $\parr$-vertex
(changing its target to be undefined).
A proof structure is correct when all its $2^{n}$ correctness graphs -- where $n$ is the number of $\parr$-vertices of $\someps$ -- are acyclic (and connected in the original work without the $\mix$ rules).
This condition is equivalent to the fact that any cycle in the proof structure must contain the two premises of some $\parr$-vertex
(\ie\ no cycle is \emph{feasible} in the sense of~\cite{mixpn}).
This is also equivalent to the apparently weaker condition that any cycle in the proof structure must go through the two premises of some $\parr$-vertex \emph{consecutively}:
\end{rem}

\begin{lem}[Local-Global Principle]\label{lem:localglobal}
A simple path that never goes through the two premises of a $\parr$-vertex consecutively (including as last and first edges for a cycle) is a switching path.
\end{lem}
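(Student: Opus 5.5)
We argue by contraposition on the structure of a simple path that passes through both premises of some $\parr$-vertex, but not consecutively. Fix a simple path $\somepath$ (possibly a cycle) that never goes through the two premises of a $\parr$-vertex consecutively. We must show that for every $\parr$-vertex $\somevertex$, $\somepath$ does not contain both premises $\someedge_1,\someedge_2$ of $\somevertex$.

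Suppose it contains both. The key observation is that the target of each premise $\someedge_i$ is $\somevertex$, so $\somevertex$ is an endpoint of both edges; since both edges occur in $\somepath$, the vertex $\somevertex$ occurs in $\somepath$ adjacent to $\someedge_1$ and adjacent to $\someedge_2$.

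Now we use simplicity. In a simple path, each vertex occurs at most once, except that the source and target may coincide for a cycle. If $\somevertex$ occurs once as an internal vertex, its two adjacent edges in $\somepath$ are exactly the two edges of $\somepath$ incident to it. Since $\someedge_1$ and $\someedge_2$ are distinct edges of $\somepath$ incident to $\somevertex$, these two adjacent edges must be $\someedge_1$ and $\someedge_2$, and the path goes through them consecutively, which is a contradiction.

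If instead $\somevertex$ is an endpoint of an open path, it has only one adjacent edge in $\somepath$. Each occurrence of an edge in $\somepath$ has both of its endpoints as neighbouring vertex occurrences, so an edge incident to $\somevertex$ that occurs in $\somepath$ must sit next to an occurrence of $\somevertex$. Hence at most one edge incident to $\somevertex$ appears in $\somepath$, again a contradiction.

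If $\somevertex$ is the common source and target of a cycle, its adjacent edges are the first and last ones, so $\someedge_1,\someedge_2$ are the last and first edges. This case is excluded by the hypothesis "including as last and first edges for a cycle".

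The only subtle point is precisely this: an edge of a path is always flanked in the sequence by both of its endpoints. Each premise of $\somevertex$ therefore forces an occurrence of $\somevertex$ right next to it, and simplicity leaves at most one occurrence of $\somevertex$ (or two glued ones for a cycle) to accommodate both premises.
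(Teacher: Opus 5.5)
Your proof is correct and follows the paper's own argument: both premises are incident to the \(\parr\)-vertex, so by simplicity either that vertex occurs exactly once and the two premises are forced to be the edges immediately before and after it, or the path is a cycle based at it and the premises are its first and last edges. Compared with the paper, you only spell out the open-endpoint case and the fact that every edge of a path is flanked by both of its endpoints, which the paper leaves implicit.
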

\begin{proof}
If the two premises $\someedge_1$ and $\someedge_2$ of a $\parr$-vertex $\somevertex$ occur in a simple path $\somepath$, then:
either $\somevertex$ occurs exactly once in $\somepath$, and $\someedge_1$ and $\someedge_2$ must appear one right before $\somevertex$ and the other right after $\somevertex$ in $\somepath$;
or $\somepath$ is a cycle, and $\someedge_1$ and $\someedge_2$ are its first and last edges.
\end{proof}

Given some DR-correct proof structure $\someps$, its \definitive{DR-connectedness degree} $\cdeg(\someps)$ is the number of connected components of any of its correctness graphs, as defined in \cref{rem:correcgraph}.
Note that, thanks to acyclicity, $\cdeg(\someps)$ does not depend on the choice of the correctness graph.
We say $\someps$ is \definitive{connected} if $\cdeg(\someps)=1$
(in particular it is not empty):
note that this implies that \(\someps\) consists of a single connected component,
but the converse implication does not hold (for instance,
the DR-connectedness degree of a proof net consisting of a single \(\parr\)-vertex,
together with its premises and conclusion, is 2).

\subsection{Desequentialization}\label{sec:desequentialization}

We define, by induction on a derivation $\someproof$ of $A_1,\dotsc,A_n\vdash B_1,\dotsc,B_k$, its \definitive{desequentialization} $\deseq(\someproof)$ which is a DR-correct proof structure with hypotheses labeled $A_1, \dotsc A_n$ and conclusions labeled $B_1, \dotsc, B_k$ (we also say that ``$\deseq(\someproof)$ is a DR-correct proof structure on the sequent with hypotheses $A_1,\dotsc,A_n\vdash B_1,\dotsc,B_k$'').
\begin{itemize}
\item
\begin{minipage}[t]{0.5\linewidth}
If $\someproof$ is reduced to an ($\ax$) rule with conclusion $\vdash A\orth, A$, then $\deseq(\someproof)$ is the proof structure with one $\ax$-vertex $\somevertex$ and two conclusions labeled $A\orth$ and $A$, both with source $\somevertex$.
\end{minipage}
\hfill
\begin{minipage}[t]{0.49\linewidth}
\begin{tikzpicture}[baseline=(baseline)]
	\node at (.2,0.5) {$
		\begin{prooftree}
		\infer0[\ax]{\vdash A\orth, A}
		\end{prooftree}
		$};
	\node at (1.9,0.5) {$\mapsto$};
\begin{scope}[every node/.style={circle,minimum size=6mm,inner sep=0,draw=black,line width=0.8pt}, every edge/.style={draw=black,-,line width=0.8pt}]
	\node (v) at (3.5,.6) {$\ax$};
	\coordinate (1) at (2.75,0) {};
	\coordinate (2) at (4.25,0) {};
	\path (v) edge node[arete nommee,above] {$A\orth$} (1);
	\path (v) edge node[arete nommee,above] {$A$} (2);
\end{scope}
\node (baseline) at (-1.37,1.4) {\vphantom{x}};
\end{tikzpicture}
\end{minipage}
\item
\begin{minipage}[t]{0.5\linewidth}
If the last rule of $\someproof$ is a ($\cut$) rule applied to two derivations $\someproof_1$ and $\someproof_2$ then $\deseq(\someproof)$ is obtained from the disjoint union of $\deseq(\someproof_1)$ and $\deseq(\someproof_2)$ by adding a new $\cut$-vertex $\somevertex$.
The conclusions of $\deseq(\someproof_1)$ and $\deseq(\someproof_2)$ labeled by the principal formulas $A$ and $A\orth$ of the ($\cut$) rule now have $\somevertex$ as target.
\end{minipage}
\hfill
\begin{minipage}[t]{0.49\linewidth}
\begin{tikzpicture}[baseline=(baseline)]
	\node at (.2,0) {$
    		\begin{prooftree}
      	\subproof{\pi_1}{\vdash A, \Gamma}
      	\subproof{\pi_2}{\vdash A\orth, \Delta}
      	\infer2[\cut]{\vdash \Gamma, \Delta}
    		\end{prooftree}
		$};
	\node at (1.9,0.25) {$\mapsto$};
	\node (pi1) at (2.75,.75) {\color{red}$\deseq(\someproof_1)$};
	\draw[red, dashed] (2.75,.6) ellipse (5.2mm and 8mm);
	\node (pi2) at (4.05,.75) {\color{red}$\deseq(\someproof_2)$};
	\draw[red, dashed] (4.05,.6) ellipse (5.2mm and 8mm);
\begin{scope}[every node/.style={circle,minimum size=6mm,inner sep=0,draw=black,line width=0.8pt}, every edge/.style={draw=black,-,line width=0.8pt}]
	\coordinate (a) at (2.65,0.1) {};
	\coordinate (b) at (4.15,0.1) {};
	\node (v) at (3.4,-.5) {$\cut$};
	\path (a) edge node[arete nommee,above] {$A$} (v);
	\path (b) edge node[arete nommee,above] {$A\orth$} (v);
\end{scope}
\node (baseline) at (-1.37,1.5) {\vphantom{x}};
\end{tikzpicture}
\end{minipage}
\item
\begin{minipage}[t]{0.5\linewidth}
If the last rule of $\someproof$ is a ($\tensor$) rule applied to two derivations $\someproof_1$ and $\someproof_2$ then $\deseq(\someproof)$ is obtained from the disjoint union of $\deseq(\someproof_1)$ and $\deseq(\someproof_2)$ by adding a new $\tensor$-vertex $\somevertex$.
The conclusions of $\deseq(\someproof_1)$ and $\deseq(\someproof_2)$ labeled by the principal formulas $A$ and $B$ of the ($\tensor$) rule now have $\somevertex$ as target, and we add a new conclusion edge, labeled $A\tensor B$, with source $\somevertex$.
\end{minipage}
\hfill
\begin{minipage}[t]{0.49\linewidth}
\begin{tikzpicture}[baseline=(baseline)]
	\node at (.2,0) {$
    		\begin{prooftree}
      	\subproof{\pi_1}{\vdash A, \Gamma}
      	\subproof{\pi_2}{\vdash B, \Delta}
      	\infer2[\tensor]{\vdash A\tensor B, \Gamma, \Delta}
    		\end{prooftree}
		$};
	\node at (1.9,0.25) {$\mapsto$};
	\node (pi1) at (2.75,.75) {\color{red}$\deseq(\someproof_1)$};
	\draw[red, dashed] (2.75,.6) ellipse (5.2mm and 8mm);
	\node (pi2) at (4.05,.75) {\color{red}$\deseq(\someproof_2)$};
	\draw[red, dashed] (4.05,.6) ellipse (5.2mm and 8mm);
\begin{scope}[every node/.style={circle,minimum size=6mm,inner sep=0,draw=black,line width=0.8pt}, every edge/.style={draw=black,-,line width=0.8pt}]
	\coordinate (a) at (2.65,0.1) {};
	\coordinate (b) at (4.15,0.1) {};
	\node (v) at (3.4,-.5) {$\tensor$};
	\coordinate (ab) at (3.4,-1.3) {};
	\path (a) edge node[arete nommee,above] {$A$} (v);
	\path (b) edge node[arete nommee,above] {$B$} (v);
	\path (v) edge node[arete nommee] {$A\tensor B$} (ab);
\end{scope}
\node (baseline) at (-1.37,1.5) {\vphantom{x}};
\end{tikzpicture}
\end{minipage}
\item
\begin{minipage}[t]{0.5\linewidth}
If the last rule of $\someproof$ is a ($\parr$) rule applied to a derivation $\someproof_1$ then $\deseq(\someproof)$ is obtained from $\deseq(\someproof_1)$ by adding a new $\parr$-vertex $\somevertex$.
The conclusions of $\deseq(\someproof_1)$ labeled by the principal formulas $A$ and $B$ of the ($\parr$) rule now have $\somevertex$ as an additional endpoint, and we add a new edge, labeled $A\parr B$, with source $\somevertex$.
\end{minipage}
\hfill
\begin{minipage}[t]{0.49\linewidth}
\begin{tikzpicture}[baseline=(baseline)]
	\node at (.2,0) {$
    		\begin{prooftree}
      	\subproof{\pi_1}{\vdash A, B, \Gamma}
      	\infer1[\parr]{\vdash A\parr B, \Gamma}
    		\end{prooftree}
		$};
	\node at (1.9,0) {$\mapsto$};
	\node (pi1) at (3.4,.7) {\color{red}$\deseq(\someproof_1)$};
	\draw[red, dashed] (3.4,.55) ellipse (11mm and 7mm);
\begin{scope}[every node/.style={circle,minimum size=6mm,inner sep=0,draw=black,line width=0.8pt}, every edge/.style={draw=black,-,line width=0.8pt}]
	\coordinate (a) at (2.65,0.1) {};
	\coordinate (b) at (4.15,0.1) {};
	\node (v) at (3.4,-.5) {$\parr$};
	\coordinate (ab) at (3.4,-1.3) {};
	\path (a) edge node[arete nommee,above] {$A$} (v);
	\path (b) edge node[arete nommee,above] {$B$} (v);
	\path (v) edge node[arete nommee] {$A\parr B$} (ab);
\end{scope}
\node (baseline) at (-1.37,1.2) {\vphantom{x}};
\end{tikzpicture}
\end{minipage}
\item
\begin{minipage}[t]{0.5\linewidth}
If the last rule of $\someproof$ is a ($\mix_2$) rule applied to two derivations $\someproof_1$ and $\someproof_2$ then $\deseq(\someproof)$ is the disjoint union of $\deseq(\someproof_1)$ and $\deseq(\someproof_2)$.
\end{minipage}
\hfill
\begin{minipage}[t]{0.49\linewidth}
\begin{tikzpicture}[baseline=(baseline)]
	\node at (.2,0) {$
    		\begin{prooftree}
      	\subproof{\pi_1}{\vdash \Gamma}
      	\subproof{\pi_2}{\vdash \Delta}
      	\infer2[\mix_2]{\vdash \Gamma, \Delta}
    		\end{prooftree}
		$};
	\node at (1.9,0) {$\mapsto$};
	\node (pi1) at (2.8,0) {\color{red}$\deseq(\someproof_1)$};
	\draw[red, dashed] (2.8,0) ellipse (5.5mm and 7mm);
	\node (pi2) at (4.1,0) {\color{red}$\deseq(\someproof_2)$};
	\draw[red, dashed] (4.1,0) ellipse (5.5mm and 7mm);
\node (baseline) at (-1.37,.4) {\vphantom{x}};
\end{tikzpicture}
\end{minipage}
\item
\begin{minipage}[t]{0.5\linewidth}
If $\someproof$ is reduced to a ($\mix_0$) rule, $\deseq(\someproof)$ is the empty graph (no vertex, no edge).
\end{minipage}
\hfill
\begin{minipage}[t]{0.49\linewidth}
\begin{tikzpicture}[baseline=(baseline)]
	\node at (.2,0) {$
    		\begin{prooftree}
      	\infer0[\mix_0]{\vdash}
    		\end{prooftree}
		$};
	\node at (1.9,0) {$\mapsto$};
\node (baseline) at (-1.37,.1) {\vphantom{x}};
\end{tikzpicture}
\end{minipage}
\item
\begin{minipage}[t]{0.5\linewidth}
If $\someproof$ is reduced to a ($\hyp$) rule on $\vdash A$, then $\deseq(\someproof)$ is the proof structure with no vertex and a single edge with no endpoint, labeled $A$.
\end{minipage}
\hfill
\begin{minipage}[t]{0.49\linewidth}
\begin{tikzpicture}[baseline=(baseline)]
	\node at (.2,0.5) {$
		\begin{prooftree}
		\infer0[\hyp]{\vdash A}
		\end{prooftree}
		$};
	\node at (1.9,0.5) {$\mapsto$};
\begin{scope}[every node/.style={circle,minimum size=6mm,inner sep=0,draw=black,line width=0.8pt}, every edge/.style={draw=black,-,line width=0.8pt}]
	\coordinate (1) at (3.5,.8) {};
	\coordinate (2) at (3.5,0) {};
	\path (1) edge node[arete nommee] {$A$} (2);
\end{scope}
\node (baseline) at (-1.37,.8) {\vphantom{x}};
\end{tikzpicture}
\end{minipage}
\end{itemize}
There is a bijection between the ($\ax$), ($\cut$), ($\tensor$) and ($\parr$) rules of $\someproof$ and the vertices of $\deseq(\someproof)$;
and there is a bijection between the formulas of \(\someproof\) (more precisely, the union of all sequents in \(\someproof\))
and the edges of \(\deseq(\someproof)\).

It should be clear from the definitions that the desequentialization of a proof is indeed a proof net.
The following two results also follow straightforwardly.

\noindent
\begin{minipage}[t]{0.6\linewidth}
\begin{lem}[Desequentialization of a substitution]\label{lem:deseqsub}
If $\someproof$ is the substitution of a derivation $\someproof_1$ for a hypothesis $A$ in a derivation $\someproof_2$, then $\deseq(\someproof)$ is obtained from the disjoint union of $\deseq(\someproof_1)$ and $\deseq(\someproof_2)$ by identifying the conclusion $\someedge$ of $\deseq(\someproof_1)$ labeled $A$ with the hypothesis $\someedge'$ of $\deseq(\someproof_2)$ labeled $A$.
The obtained edge has label $A$, source \(\source(\someedge)\), and target \(\target(\someedge')\).
\end{lem}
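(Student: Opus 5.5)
We argue by induction on the derivation $\someproof_2$, following the clauses of the definition of $\deseq$. The substitution acts only on the branch of $\someproof_2$ leading to the ($\hyp$) rule on $\vdash A$. Along that branch it adds the context $\Gamma$ of $\someproof_1$ to every sequent, and it replaces the leaf by $\someproof_1$. All other subderivations are unchanged, and so are their desequentializations.

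The base case is $\someproof_2$ reduced to the ($\hyp$) rule on $\vdash A$. Then $\someproof = \someproof_1$, and $\deseq(\someproof_2)$ is a single edge $\someedge'$ labeled $A$ with no endpoint. Identifying $\someedge'$ with the conclusion $\someedge$ of $\deseq(\someproof_1)$ gives back $\deseq(\someproof_1)$. The resulting edge has source $\source(\someedge)$ and target $\target(\someedge')$, which is undefined, so it coincides with $\someedge$. The other leaves ($\ax$, $\mix_0$, or a ($\hyp$) on another formula) cannot contain the hypothesis $\vdash A$, so they give no case.

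For the inductive step, let $r$ be the last rule of $\someproof_2$, which is one of ($\cut$), ($\tensor$), ($\parr$) or ($\mix_2$). The hypothesis $\vdash A$ lies in exactly one premise subderivation, say $\someproof_2'$; the tags are globally distinct, so this premise is determined. Then $\someproof$ is the rule $r$ applied to the substitution of $\someproof_1$ into $\someproof_2'$, together with the untouched other premise if there is one. The principal formulas of $r$ are still present, since $\Gamma$ is only added as extra context. By the induction hypothesis, $\deseq$ of the substituted premise is $\deseq(\someproof_1)$ glued to $\deseq(\someproof_2')$ along $\someedge = \someedge'$. Applying the clause of $\deseq$ for $r$ adds the same new vertex and edges, or only a disjoint union for ($\mix_2$), and attaches them to the same conclusion edges of $\deseq(\someproof_2')$. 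Adding this vertex and gluing along $\someedge$ therefore commute.

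There is one point to check. If the principal formula of $r$ is the very edge being glued, which happens exactly when $\someproof_2'$ is just the hypothesis $\vdash A$, then $r$ sets the target of $\someedge'$ to the new vertex. The identification keeps $\target(\someedge')$ and $\source(\someedge)$, so the two constructions still agree; this is why the statement specifies the source and target of the glued edge. The main obstacle is this bookkeeping of endpoints and identities of edges, which we handle directly from the definitions. No graph-theoretic argument is needed.
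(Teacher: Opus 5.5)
Your induction on $\someproof_2$ is correct and is exactly the routine unfolding of the definitions that the paper leaves implicit; the paper gives no proof and only says the lemma follows ``straightforwardly''. One slip: the principal formula of $r$ can be the glued occurrence of $A$ even when $\someproof_2'$ is \emph{not} reduced to the hypothesis, for instance a $(\parr)$ whose premise is a $(\mix_2)$ of the $(\hyp)$ on $\vdash A$ with another derivation, but your endpoint argument (the glued edge receives $\target(\someedge')$, which is the new vertex) covers that case verbatim, so you only need to drop the claim that it happens ``exactly when'' $\someproof_2'$ is the bare hypothesis.
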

\end{minipage}
\hfill
\begin{minipage}[t]{0.38\linewidth}
\begin{tikzpicture}[baseline=(baseline)]
	\node (pi1) at (0,.75) {\color{red}$\deseq(\someproof_1)$};
	\draw[red, dashed] (0,.6) ellipse (5.2mm and 8mm);
	\node (pi2) at (1.5,-.75) {\color{red}$\deseq(\someproof_2)$};
	\draw[red, dashed] (1.5,-.6) ellipse (5.2mm and 8mm);
\begin{scope}[every node/.style={circle,minimum size=6mm,inner sep=0,draw=black,line width=0.8pt}, every edge/.style={draw=black,-,line width=0.8pt}]
	\coordinate (a) at (.3,0.25) {};
	\coordinate (b) at (1.2,-0.25) {};
	\path (a) edge node[arete nommee,above] {$A$} (b);
\end{scope}
\node (baseline) at (-1,1.5) {\vphantom{x}};
\end{tikzpicture}
\end{minipage}

\begin{lem}[Desequentialization and \(\mix\)-rules]
  \label{lem:deseq:mix}
  If $\someproof_2$ is obtained from $\someproof_1$ by a \mixretore\ reduction
  then $\deseq(\someproof_1)\giso\deseq(\someproof_2)$.
  Moreover, for any proof $\someproof$, $\cdeg(\deseq(\someproof))=1 + \#\mix_2 - \#\mix_0$
  (where $\#\mix_i$ is the number of ($\mix_i$) rules in $\someproof$).
  In particular, proofs without $\mix$ have a connected desequentialization.
\end{lem}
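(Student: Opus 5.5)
The lemma has two parts, and I would prove the first one first. Both \mixretore\ reductions replace a ($\mix_2$) rule whose premises are $\vdash\Gamma$ and a ($\mix_0$) rule by the derivation of $\vdash\Gamma$ itself. By definition, $\deseq$ sends ($\mix_0$) to the empty graph and ($\mix_2$) to the disjoint union of the desequentializations of its premises. The disjoint union of $\deseq$ of the subderivation of $\vdash\Gamma$ with the empty graph is isomorphic to $\deseq$ of that subderivation. Desequentialization is defined compositionally: each rule below the redex only adds vertices and edges, or takes disjoint unions, around $\deseq$ of its premises. So a routine induction on the position of the redex in $\someproof_1$ propagates this local isomorphism to $\deseq(\someproof_1)\giso\deseq(\someproof_2)$.

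For the degree formula I would argue by induction on $\someproof$, fixing at each step one correctness graph of $\deseq(\someproof)$. Since the degree does not depend on the chosen correctness graph, I can pick it conveniently: for a ($\parr$) rule, disconnect one premise of the new $\parr$-vertex and take any correctness graph of $\deseq(\someproof_1)$. The base cases are direct. An ($\ax$) rule gives one vertex with two edges, so one component, and $1+0-0=1$. A ($\hyp$) rule gives a single edge with no endpoint, hence one component. A ($\mix_0$) rule gives the empty graph, with $0=1+0-1$ components.

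For the inductive steps, let $c_i=\cdeg(\deseq(\someproof_i))$.
\begin{itemize}
\item ($\parr$): the new vertex, its conclusion and its kept premise join exactly the component containing that premise. The cut premise edge keeps its own component, so the count is unchanged.
\item ($\mix_2$): the components add, giving $c_1+c_2=(1+m_1-z_1)+(1+m_2-z_2)=1+(m_1+m_2+1)-(z_1+z_2)$, where $m_i$ and $z_i$ count the ($\mix_2$) and ($\mix_0$) rules of $\someproof_i$. This matches the formula with one extra ($\mix_2$).
\item ($\tensor$) and ($\cut$): the new vertex merges the component of the first principal edge with that of the second, which lies in a different, disjoint piece. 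The count is $c_1+c_2-1$, which again matches since the rule is neither a $\mix$.
\end{itemize}
The last claim of the lemma then follows by setting $\#\mix_2=\#\mix_0=0$.

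The main obstacle is the bookkeeping of connected components in partial graphs. Edges without endpoints, such as hypotheses and disconnected $\parr$ premises, count as components on their own. I must check that attaching a vertex to the endpoint of such an edge merges exactly the intended components. In particular, a disconnected $\parr$ premise whose source is an $\ax$-vertex still belongs to that vertex's component and does not form a new one. Once that convention is fixed, each case is a one-line count.
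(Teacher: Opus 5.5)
Your proof is correct and follows essentially the same approach as the paper: the paper only says the lemma follows straightforwardly from the inductive definition of $\deseq$, and your rule-by-rule induction is that argument written out. Your bookkeeping is also right. Extending a correctness graph of $\deseq(\someproof_1)$ makes the $\parr$ case count-preserving, the binary $\tensor$ and $\cut$ cases give $c_1+c_2-1$, and edges without endpoints (for instance a disconnected hypothesis premise of a $\parr$-vertex) count as their own components, in line with the paper's remark that a lone $\parr$-vertex has degree $2$.
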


\section{Sequentialization}
\label{sec:seq}

A key result of the theory of proof nets is the fact that desequentialization is surjective,
in the following sense:
\begin{thm}[Sequentialization]\label{thm:seq}
  Given a proof net $\someps$, there exists a derivation $\someproof$ in \MLLhmix\ such that $\someps\giso\deseq(\someproof)$; $\someproof$ is called a \definitive{sequentialization} of $\someps$.
\end{thm}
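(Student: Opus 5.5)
We argue by induction on the number of vertices of $\someps$ (then on the number of edges). The base cases are direct. If $\someps$ is empty, we take ($\mix_0$). If $\someps$ has no vertex but some edges, every edge has no endpoint, so each is a hypothesis-conclusion edge labeled $A$. A single such edge is the desequentialization of ($\hyp$) on $\vdash A$, and several are combined with ($\mix_2$). More generally, whenever $\someps$ is not connected as a partial graph, we sequentialize each connected component by induction and combine them with ($\mix_2$). So we may assume $\someps$ is connected and has at least one vertex.

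To find the last rule we define a local coloring on the underlying partial graph, chosen so that cusp-free cycles are exactly the switching cycles. At a $\parr$-vertex, both premises get color $0$ and the conclusion gets color $1$. At every other vertex ($\ax$, $\cut$, $\tensor$), each incident edge gets its own color, for instance $\coloring(\someedge,\somevertex)=\someedge$. Then cusps occur only at $\parr$-vertices, between their two premises. By \cref{lem:localglobal}, a cycle is cusp-free if and only if it is switching, so DR-correctness means there is no cusp-free cycle. We apply \cref{th:ParamLocalYeo}, or more simply \cref{th:LocalYeo}, to get a splitting vertex $\somevertex$. We want $\somevertex$ to be terminal. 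For this we take $\somesetedge$ to be the set of pairs $(\somevertex,\somecolor)$ with $\somevertex$ a terminal vertex, and we use \cref{lem:terminal_not_maximal_order} to show that $\somesetedge$ dominates cusp-points. Take a cusp-point $(\othervertex,0)$ with $\othervertex$ a non-terminal $\parr$-vertex, and let $\someedge$ be its conclusion, with other endpoint $\othervertex'$. Since $(\othervertex,1)$ is not a cusp-point, we get $(\othervertex,0)\orderyeo(\othervertex',\coloring(\someedge,\othervertex'))$. Iterating \cref{lem:terminal_not_maximal_order} along the downward directed path ends at a terminal vertex, because the graph is finite and acyclic as a directed graph. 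The lemma applies at each intermediate $\tensor$-vertex, since these carry no cusp-points. Transitivity (\cref{lem:orderyeo_is_order}) then shows that $\somesetedge$ dominates cusp-points. If $\somesetedge$ is empty, there are no cusp-points at all, because every path downward reaches a terminal vertex. In that case, however, a non-empty connected component with a vertex still contains a terminal vertex, so $\somesetedge\neq\emptyset$ anyway. Taking a $\orderyeo$-maximal element of $\somesetedge$ gives a \emph{terminal splitting vertex} $\somevertex$.

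We then decompose according to the label of $\somevertex$. If $\somevertex$ is a terminal $\parr$, we delete it and its conclusion, turning its premises into conclusions. The result is still DR-correct, since no new cycles appear, and we apply induction followed by the ($\parr$) rule. If $\somevertex$ is a terminal $\tensor$ or $\cut$, being splitting means its incident edges, all of distinct colors, lie in different connected components of $\someps\setminus\somevertex$. Removing $\somevertex$ and its conclusion therefore splits $\someps$ into two DR-correct sub-structures, one containing each premise. We sequentialize each by induction and recombine with ($\tensor$) or ($\cut$). For a $\tensor$, the removed conclusion is restored, and hypothesis edges stay where they are. An $\ax$-vertex that is terminal, in a connected $\someps$, is the whole net, which gives ($\ax$). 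Finally, $\deseq$ commutes with these constructions by its very definition, which yields $\someps\giso\deseq(\someproof)$.

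The main obstacle is the domination argument: checking the hypothesis of \cref{lem:terminal_not_maximal_order} at every step of the descending path, and handling the edge cases such as premises of the structure and partial edges. This step is delicate because it is where terminality is actually obtained. A fallback that avoids it is to use \cref{th:LocalYeo} alone and also handle a non-terminal splitting vertex. That fallback needs a substitution-based decomposition via \cref{lem:deseqsub}: cut along the conclusion edge of $\somevertex$ to separate the sub-net above it from the rest. This case analysis looks messier, so we prefer the terminal route.
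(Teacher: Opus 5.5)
Your proof is correct. It differs from the paper's proof of \cref{thm:seq} in one structural choice.

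The paper takes an arbitrary splitting vertex from \cref{lem:splittingnode}. It must therefore handle non-terminal $\parr$-, $\tensor$- and $\ax$-vertices by cutting the net along an edge and recombining the pieces through substitution of open derivations (\cref{lem:deseqsub}). In exchange, the reconstruction does not depend on how the splitting vertex was found. Any of the corollaries of \cref{sec:seq:yeo} can be plugged in, and so can the elementary maximal-$\parr$ argument of \cref{sec:seq:dummies}, which does not use \cref{th:ParamLocalYeo} at all.

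You instead insist on a terminal splitting vertex. Then every case is a last rule, and no substitution is needed; this is the variant the paper only sketches in \cref{sec:seq:restrictions}. The two routes also obtain terminality differently:
\begin{itemize}
\item In \cref{cor:seq:yeo:terminal}, the paper takes $\somesetedge$ to be the pairs whose color differs from the conclusion colors of the vertex. Domination is then free, since every cusp-point lies in $\somesetedge$. Terminality of a maximal pair follows from a single application of \cref{lem:terminal_not_maximal_order} by contradiction.
\item You put terminality directly into $\somesetedge$. You pay for this by proving domination through a chain of applications of \cref{lem:terminal_not_maximal_order} along the descending directed path. This chain argument is sound.
\end{itemize}

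One small correction to your chain argument: the intermediate vertices of the descending path can be $\parr$-vertices, not only $\tensor$-vertices. The lemma still applies at such a vertex. The pair formed with its conclusion color is not a cusp-point, and the color you arrive with (that of a premise) differs from the conclusion color.
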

There are in fact many variants of this result,
depending on whether, \eg,
we consider \(\cut\)-free proof nets (and then obtain \(\cut\)-free proofs),
or we restrict to closed proof nets (and then sequentialize to closed proofs),
or we require connectedness (and then drop the \(\mix\) rules).
The literature on the subject is quite rich, see~\cite{ll,structmult,phddanos,mixpn} for some of the earliest approaches.
This section is dedicated to showing how such sequentialization results
can be deduced from our parametrized and local version of Yeo’s theorem (\cref{th:ParamLocalYeo}),
or even directly from cusp minimization (in the form of \cref{lem:bjumping_Yeo}),
in a uniform and modular way.

\subsection{Splitting vertices}\label{sec:seq:splitting}

Most, if not all, proofs of sequentialization share a common pattern,
where one reasons inductively on the size of proof structures,
and shows that, given a proof net \(\rho\):
either \(\rho\) is obviously the translation of a proof
(\eg, it is empty, or it is reduced to a single edge, 
or it is reduced to an \(\ax\)-vertex with its two conclusions);
or \(\rho\) can be split into smaller proof structures,
all of them still DR-correct, in such a way that 
a sequentialization of \(\rho\) can be obtained
by glueing together sequentializations of those smaller proof nets.
Essentially, this amounts to show that any proof net falls
into one of the seven cases of the definition of desequentialization in
\cref{sec:desequentialization}, with the following difference:
in the inductive definition of \(\deseq(\someproof)\),
new vertices are always introduced in terminal position,
whereas it can sometimes be useful to split a proof net
at some internal vertex.

This leads us to the following definition:
\begin{defi}\label{def:splitting}
  We say a vertex \(\somevertex\) in a proof structure \(\someps\) is
  \definitive{splitting}\footnote{
    The apparent conflict of terminology with \cref{sec:localcolor},
    where we gave another notion of splitting vertex, is only temporary.
    We will soon resolve it by introducing an appropriate local coloring on
    proof structures.
  } when:
  \begin{itemize}
    \item \(\somevertex\) is an \(\ax\)-vertex, a \(\tensor\)-vertex, or a \(\cut\)-vertex which is not in any cycle of \(\someps\);
    \item or \(\somevertex\) is a \(\parr\)-vertex, whose conclusion edge is not in any cycle of \(\someps\).
  \end{itemize}
\end{defi}

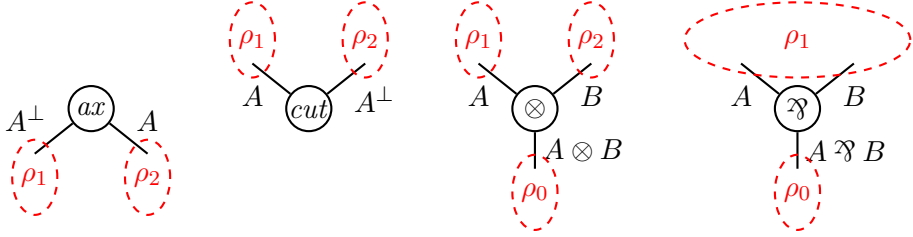
\begin{figure}
  \[
  \begin{tikzpicture}[
    every node/.style={circle,minimum size=6mm,inner sep=0,draw=black,line width=0.8pt},
    every edge/.style={draw=black,-,line width=0.8pt},
    baseline=(v),
  ]
          \node (v) at (3.5,.6) {$\ax$};
          \coordinate (1) at (2.75,0) {};
          \coordinate (2) at (4.25,0) {};
          \path (v) edge node[arete nommee,swap] {$A\orth$} (1);
          \path (v) edge node[arete nommee] {$A$} (2);
          \node[below=-2mm,subps] at (1) {$\someps_1$};
          \node[below=-2mm,subps] at (2) {$\someps_2$};
  \end{tikzpicture}
  \qquad
  \begin{tikzpicture}[
    every node/.style={circle,minimum size=6mm,inner sep=0,draw=black,line width=0.8pt},
    every edge/.style={draw=black,-,line width=0.8pt},
    baseline=(v),
  ]
          \coordinate (a) at (2.65,0.1) {};
          \coordinate (b) at (4.15,0.1) {};
          \node (v) at (3.4,-.5) {$\cut$};
          \path (a) edge node[arete nommee,swap] {$A$} (v);
          \path (b) edge node[arete nommee] {$A\orth$} (v);
          \node[above=-2mm,subps] at (a) {$\someps_1$};
          \node[above=-2mm,subps] at (b) {$\someps_2$};
  \end{tikzpicture}
  \qquad
  \begin{tikzpicture}[
    every node/.style={circle,minimum size=6mm,inner sep=0,draw=black,line width=0.8pt},
    every edge/.style={draw=black,-,line width=0.8pt},
    baseline=(v),
  ]
          \coordinate (a) at (2.65,0.1) {};
          \coordinate (b) at (4.15,0.1) {};
          \node (v) at (3.4,-.5) {$\tensor$};
          \coordinate (ab) at (3.4,-1.3) {};
          \path (a) edge node[arete nommee,swap] {$A$} (v);
          \path (b) edge node[arete nommee] {$B$} (v);
          \path (v) edge node[arete nommee] {$A\tensor B$} (ab);
          \node[above=-2mm,subps] at (a) {$\someps_1$};
          \node[above=-2mm,subps] at (b) {$\someps_2$};
          \node[below=-2mm,subps] at (ab) {$\someps_0$};
  \end{tikzpicture}
  \qquad
  \begin{tikzpicture}[
    every node/.style={circle,minimum size=6mm,inner sep=0,draw=black,line width=0.8pt},
    every edge/.style={draw=black,-,line width=0.8pt},
    baseline=(v),
  ]
          \coordinate (a) at (2.65,0.1) {};
          \coordinate (b) at (4.15,0.1) {};
          \node (v) at (3.4,-.5) {$\parr$};
          \coordinate (ab) at (3.4,-1.3) {};
          \path (a) edge node[arete nommee,swap] {$A$} (v);
          \path (b) edge node[arete nommee] {$B$} (v);
          \path (v) edge node[arete nommee] {$A\parr B$} (ab);
          \node[above=-2mm,subps,minimum width=3cm] at (v|-a) {$\someps_1$};
          \node[below=-2mm,subps] at (ab) {$\someps_0$};
  \end{tikzpicture}
  \]
  \caption{Shape of the connected component of each kind of splitting vertex}
  \label{fig:splitting}
\end{figure}

In other words, (the proof structure induced by) the connected component of
\(\somevertex\) in \(\someps\)
can be decomposed uniquely following one of the four shapes
represented in \cref{fig:splitting}.
Sequentialization then boils down to the following key result:

\begin{lem}[Existence of a splitting vertex]\label{lem:splittingnode}
  In any proof net containing at least one vertex,
  there exists a splitting vertex.
\end{lem}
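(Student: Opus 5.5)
We will define a local coloring on the underlying partial graph of the proof net \(\someps\) and apply \cref{th:LocalYeo}. The coloring is designed so that cusps at a vertex correspond exactly to the ``forbidden'' passages through it.

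At a \(\parr\)-vertex \(\somevertex\), we give both premises the color \(0\) and the conclusion the color \(1\). At every other vertex (\(\ax\), \(\tensor\), \(\cut\)), we give each incident edge its own color, for instance \(\coloring(\someedge,\somevertex)\eqdef\someedge\). With this choice the only cusps are the pairs of premises of a \(\parr\)-vertex. By \cref{lem:localglobal}, a cusp-free cycle is then exactly a switching cycle, so DR-correctness means the coloured graph has no cusp-free cycle. Since \(\someps\) has at least one vertex, \cref{th:LocalYeo} gives a vertex \(\somevertex\) such that every cycle through \(\somevertex\) has a cusp at \(\somevertex\).

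It remains to translate this into \cref{def:splitting}.
\begin{itemize}
\item If \(\somevertex\) is an \(\ax\)-, \(\tensor\)- or \(\cut\)-vertex, no two incident edges share a colour there, so there is no cusp at \(\somevertex\). Hence no cycle passes through \(\somevertex\), which is exactly the splitting condition.
\item If \(\somevertex\) is a \(\parr\)-vertex, any cycle through \(\somevertex\) must use both premises consecutively at \(\somevertex\), because that is the only possible cusp there. So no cycle uses the conclusion edge \(\someedge\) at \(\somevertex\).
\end{itemize}

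The last point needs care: a cycle could contain \(\someedge\) and yet pass through \(\somevertex\) in the wrong way. But any cycle containing \(\someedge\) has \(\somevertex\) as an endpoint of \(\someedge\), so it passes through \(\somevertex\) using \(\someedge\) together with some other edge, and that pair is not a cusp. Thus the conclusion edge is in no cycle, and \(\somevertex\) is splitting in the sense of \cref{def:splitting}.

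The main obstacle is checking that partial edges (premises and conclusions of \(\someps\) without an endpoint) cause no trouble. They lie in no cycle, and the colouring is only defined on actual (edge, endpoint) pairs, so \cref{th:LocalYeo} applies to partial graphs as stated. One should also confirm that the equivalence ``cusp-free cycle \(\Leftrightarrow\) switching cycle'' correctly treats the closing pair of a cycle, which \cref{lem:localglobal} explicitly covers.
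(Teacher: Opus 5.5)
Your proof is correct and is essentially the paper's own route in \cref{sec:seq:yeo}: your colouring (both premises of each \(\parr\)-vertex share a colour different from its conclusion, and all edges at any other vertex get pairwise distinct colours) is a well-coloring in the sense of \cref{def:wellcolored}, and applying \cref{th:LocalYeo} to it amounts to \cref{cor:seq:yeo:all}, which takes the set of all vertex-color pairs as parameter. Your translation into \cref{def:splitting} uses \cref{lem:localglobal} and the fact that a \(\parr\)-conclusion never forms a cusp at its source, which is exactly the equivalence of the two notions of splitting vertex that the paper states right after introducing well-colored proof structures.
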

Indeed, the proof of the sequentialization theorem readily follows:
\begin{proof}[Proof of \cref{thm:seq}]
We use an induction on the number of vertices and edges of $\someps$.
And we actually impose that \(\someproof\) is in \mixretore\ normal form.

If \(\someps\) is empty, then \(\someproof\) is reduced to a (\(\mix_0\)) rule.
If \(\someps\) has more than one connected component,
then it is the disjoint union of non-empty (and necessarily smaller)
proof structures \(\someps_1\) and \(\someps_2\).
Both \(\someps_1\) and \(\someps_2\) are still DR-correct,
and the induction hypothesis yields proofs
\(\someproof_1\) and \(\someproof_2\)
such that $\deseq(\someproof_1)\giso\someps_1$
and $\deseq(\someproof_2)\giso\someps_2$.
Then we can write the conclusion sequent of 
\(\someproof\) as \(\Gamma_1,\Gamma_2\)
so that the conclusion of \(\someproof_i\) is \(\Gamma_i\)
for \(1\le i\le 2\).
We define
\[
  \someproof\quad\eqdef\quad
  \begin{prooftree}[center]
    \hypo{\someproof_1}
    \infer[no rule]1{\vdash \Gamma_1}
    \hypo{\someproof_2}
    \infer[no rule]1{\vdash \Gamma_2}
    \infer2[\mix_2]{\vdash \Gamma_1,\Gamma_2}
  \end{prooftree}
\]
and obtain $\deseq(\someproof)\giso\someps$ directly.

If \(\someps\) is reduced to an edge \(\someedge\) of type \(A\),
then set \(\someproof\) as an \((\hyp)\) rule on \(A\).
Otherwise, \(\someps\) is reduced to a connected component with at least one
vertex, hence a splitting one $\somevertex$ by \cref{lem:splittingnode}.

Assume $\somevertex$ is a $\parr$-vertex.
By removing $\somevertex$
(with premises labeled $A$ and $B$ and conclusion labeled $A\parr B$)
from $\someps$ (so that the target of the premises of \(\somevertex\),
and the source of its conclusion, are no longer defined),
we obtain proof structures $\someps_1$, with $A$ and $B$ as labels of some of its conclusions,
and \(\someps_0\), with \(A\parr B\) as label of one of its premises,
as in \cref{fig:splitting}.
By induction hypothesis,
one gets proofs $\someproof_1$ such that $\deseq(\someproof_1)\giso\someps_1$,
and $\someproof_0$ such that $\deseq(\someproof_0)\giso\someps_0$.
Then we can write the conclusion (resp.\ premise) sequent of 
\(\someproof\) as \(\Gamma_1,\Gamma_0\) (resp.\ \(\Sigma_1,\Sigma_0\))
so that \(\someproof_1\) is a derivation of \(\Sigma_1\vdash\Gamma_1,A,B\)
and \(\someproof_0\) is a derivation of \(A\parr B,\Sigma_0\vdash\Gamma_0\).
We define \(\someproof\) as the substitution of
\[\begin{prooftree}
  \hypo{\someproof_1}
  \infer[no rule]1{\vdash \Gamma_1,A,B}
  \infer1[\parr]{\vdash \Gamma_1,A\parr B}
\end{prooftree}\]
for the \((\hyp)\) rule on \(A\parr B\) in \(\someproof_0\),
and obtain $\deseq(\someproof)\giso\someps$ by \cref{lem:deseqsub}.

Assume $\somevertex$ is a $\tensor$-vertex.
By removing $\somevertex$
(with premises labeled $A$ and $B$, and conclusion labeled $A\tensor B$)
from $\someps$,
we obtain proof structures $\someps_1$ (resp.\ \(\someps_2\)),
with $A$ (resp.\ \(B\)) as label of one of its conclusions,
and \(\someps_0\), with \(A\tensor B\) as label of one of its premises,
as in \cref{fig:splitting}.
By induction hypothesis,
one gets proofs
$\someproof_1$ such that $\deseq(\someproof_1)\giso\someps_1$,
$\someproof_2$ such that $\deseq(\someproof_2)\giso\someps_2$,
and $\someproof_0$ such that $\deseq(\someproof_0)\giso\someps_0$.
Then we can write the conclusion (resp.\ premise) sequent of 
\(\someproof\) as \(\Gamma_1,\Gamma_2,\Gamma_0\) (resp.\ \(\Sigma_1,\Sigma_2,\Sigma_0\))
so that 
\(\someproof_1\) is a derivation of \(\Sigma_1\vdash\Gamma_1,A\),
\(\someproof_2\) is a derivation of \(\Sigma_2\vdash\Gamma_2,B\),
and \(\someproof_0\) is a derivation of \(A\tensor B,\Sigma_0\vdash\Gamma_0\).
We define \(\someproof\) as the substitution of
\[\begin{prooftree}
  \hypo{\someproof_1}
  \infer[no rule]1{\vdash \Gamma_1,A}
  \hypo{\someproof_2}
  \infer[no rule]1{\vdash \Gamma_2,B}
  \infer2[\tensor]{\vdash \Gamma_1,\Gamma_2,A\tensor B}
\end{prooftree}\]
for the \((\hyp)\) rule on \(A\tensor B\) in \(\someproof_0\),
and obtain $\deseq(\someproof)\giso\someps$ by \cref{lem:deseqsub}.

Assume $\somevertex$ is a $\cut$-vertex.
By removing $\somevertex$
(with premises labeled $A$ and $A\orth$)
from $\someps$,
we obtain proof structures $\someps_1$ (resp.\ \(\someps_2\)),
with $A$ (resp.\ \(A\orth\)) as label of some of its conclusions,
as in \cref{fig:splitting}.
By induction hypothesis,
one gets proofs
$\someproof_1$ such that $\deseq(\someproof_1)\giso\someps_1$ and
$\someproof_2$ such that $\deseq(\someproof_2)\giso\someps_2$.
Then we can write the conclusion (resp.\ premise) sequent of 
\(\someproof\) as \(\Gamma_1,\Gamma_2\) (resp.\ \(\Sigma_1,\Sigma_2\))
so that 
\(\someproof_1\) is a derivation of \(\Sigma_1\vdash\Gamma_1,A\), and
\(\someproof_2\) is a derivation of \(\Sigma_2\vdash\Gamma_2,A\orth\).
We define
\[
  \someproof\quad\eqdef\quad
  \begin{prooftree}[center]
  \hypo{\someproof_1}
  \infer[no rule]1{\vdash \Gamma_1,A}
  \hypo{\someproof_2}
  \infer[no rule]1{\vdash \Gamma_2,A\orth}
  \infer2[\cut]{\vdash \Gamma_1,\Gamma_2}
\end{prooftree}\]
and obtain $\deseq(\someproof)\giso\someps$ directly.

Assume $\somevertex$ is an $\ax$-vertex.
By removing $\somevertex$ (of conclusions labeled $A$ and $A\orth$),
we obtain proof structures $\someps_1$ (resp.\ \(\someps_2\)),
with $A\orth$ (resp.\ \(A\)) as label of some of its premises,
as in \cref{fig:splitting}.
By induction hypothesis,
one gets proofs
$\someproof_1$ such that $\deseq(\someproof_1)\giso\someps_1$ and
$\someproof_2$ such that $\deseq(\someproof_2)\giso\someps_2$.
Then we can write the conclusion (resp.\ premise) sequent of 
\(\someproof\) as \(\Gamma_1,\Gamma_2\) (resp.\ \(\Sigma_1,\Sigma_2\))
so that 
\(\someproof_1\) is a derivation of \(\Sigma_1,A\orth\vdash\Gamma_1\), and
\(\someproof_2\) is a derivation of \(\Sigma_2,A\vdash\Gamma_2\).
The substitution of the \((\ax)\) rule 
\begin{prooftree}
  \infer0[\ax]{\vdash A\orth, A}
\end{prooftree}
for the (\(\hyp\)) rule on \(A\orth\) in \(\someproof_1\)
yields a derivation of \(\Sigma_1\vdash\Gamma_1, A\), and
\(\someproof\) is obtained by substituting the latter 
for the  (\(\hyp\)) rule on \(A\) in \(\someproof_2\):
we obtain $\deseq(\someproof)\giso\someps$ by applying
\cref{lem:deseqsub} twice.
\end{proof}

The various approaches to sequentialization then essentially differ only by the way
\cref{lem:splittingnode} is proved:
the sequentialization process induced by \cref{thm:seq} is dictated by
the strategy one follows to find a splitting vertex.
We dedicate the remainder of \cref{sec:seq} to a series of proofs of
\cref{lem:splittingnode}.

The first one follows a variant of our method to establish Yeo’s theorem in
\cref{sec:genyeo}: in \cref{sec:seq:dummies},
we define an order on \(\parr\)-vertices based on cusp-free paths,
and show that a maximal \(\parr\)-vertex is splitting;
and then we apply easy, well-known results on proof nets
to treat the case of \(\parr\)-free proof nets.
This proof can be read without referring to 
\cref{sec:genyeo,sec:graph_comparison},
and it gives us the occasion to provide a fully developed,
yet hopefully accessible proof of sequentialization for 
the debuting linear logician.

The second one derives \cref{lem:splittingnode} directly from
\cref{th:ParamLocalYeo}: in \cref{sec:seq:yeo},
we show in particular how one can recover many classical
strategies to find a splitting vertex just by tuning
the parameter \(\somesetedge\).

In both cases,
the first step is to define a local coloring
of (the partial graph induced by) any proof structure.
\begin{defi}\label{def:wellcolored}
  We say a proof structure is \definitive{well-colored}
  when it is equipped with a local coloring 
  (of its partial graph) such that:
\begin{itemize}
\item for an $\ax$-vertex $\somevertex$ with conclusions $\someedge_1$ and $\someedge_2$,
  $\coloring(\someedge_1, \somevertex) \not= \coloring(\someedge_2, \somevertex)$;
\item for a $\cut$-vertex $\somevertex$ with premises $\someedge_1$ and $\someedge_2$,
  $\coloring(\someedge_1, \somevertex) \not= \coloring(\someedge_2, \somevertex)$;
\item for a $\tensor$-vertex $\somevertex$ with  premises $\someedge_1$ and $\someedge_2$
  and conclusion $\otheredge$,
  $\coloring(\someedge_1, \somevertex)$,
  $\coloring(\someedge_2, \somevertex)$
  and $\coloring(\otheredge, \somevertex)$
  are pairwise distinct;
\item for a $\parr$-vertex $\somevertex$ with premises $\someedge_1$ and $\someedge_2$
  and conclusion $\otheredge$,
  $\coloring(\someedge_1, \somevertex)
  =\coloring(\someedge_2, \somevertex)
  \not=\coloring(\otheredge, \somevertex)$.
\end{itemize}
\end{defi}

It is always possible to turn a proof structure into a well-colored proof
structure, with only three colors, say 
\textcolor{blue}{dashed}, \textcolor{red}{solid} and \textcolor{violet}{dotted}:
\begin{itemize}
  \item use \textcolor{red}{solid} and \textcolor{violet}{dotted}
    for the conclusions of each \(\ax\)-vertex, and
    for the premises of each \(\cut\)-vertex;
  \item use \textcolor{red}{solid} and \textcolor{violet}{dotted}
    for the premises of each \(\tensor\)-vertex,
    and \textcolor{blue}{dashed} for its conclusion;
  \item use \textcolor{red}{solid}
    for the premises of each \(\parr\)-vertex,
    and \textcolor{blue}{dashed} for its conclusion.
\end{itemize}
An example of well-colored proof structure following this convention
is given in \cref{fig:wellcolored}.

Note that, the cusp-points of a well-colored proof structure are exactly the pairs
$(\somevertex,\somecolor)$ where $\somevertex$ is a
$\parr$-vertex and \(\somecolor\) is the color associated with its premises.
This requires at least three colors, and can only be achieved with
a local coloring (see the $\ax$-vertex \(\otherothervertex_1\) in \cref{fig:wellcolored}).
Then, by \cref{lem:localglobal}, a cusp-free path in a proof structure is
nothing but a switching path;
and then a vertex is splitting in the sense of \cref{def:splitting}
if and only if it is splitting in the sense of \cref{sec:localcolor}.
This coincidence is visible in \cref{fig:wellcolored},
where \(\somevertex\), \(\othervertex\) and \(\otherothervertex_2\)
are splitting, and \(\otherothervertex_1\) is not.

\begin{figure}
\centering
\begin{tikzpicture}
\begin{scope}[every node/.style={circle,minimum size=6mm,inner sep=0,draw=black,line width=0.8pt}, every edge/.style={draw=black,-,line width=0.8pt}]
	\node [label=left:\(\otherothervertex_1\)] (ax1) at (-1,2.5) {$\ax$};
	\node [label=below left:\(\somevertex\)] (p) at (-1,1) {$\parr$};
	\node [label=above:\(\othervertex\)] (t) at (1,0.5) {$\tensor$};
	\node [label=above right:\(\otherothervertex_2\)] (ax2) at (2.5,1.5) {$\ax$};
	\coordinate (c1) at (1,-.5);
	\coordinate (c2) at (3.5,.75);
	\coordinate (ax1pl) at (-1.5,1.75);
	\coordinate (ax1pr) at (-.5,1.75);
	\coordinate (pt) at (0,0.75);
	\coordinate (ax2t) at (1.75,1);
\end{scope}
\begin{genscope}[red][-]
	\path[out=-135,in=90] (ax1) edge (ax1pl);
	\path (ax2) edge (c2);
	\path[out=-90,in=135] (ax1pl) edge (p);
	\path[out=-90,in=45] (ax1pr) edge (p);
	\path (pt) edge (t);
\end{genscope}
\begin{genscope}[blue][-,densely dashed]
	\path (p) edge (pt);
	\path (t) edge (c1);
\end{genscope}
\begin{genscope}[violet][-,densely dotted]
	\path[out=-45,in=90] (ax1) edge (ax1pr);
	\path (ax2) edge (ax2t);
	\path (ax2t) edge (t);
\end{genscope}
\end{tikzpicture}
\caption{An example of well-colored proof structure}
\label{fig:wellcolored}
\end{figure}
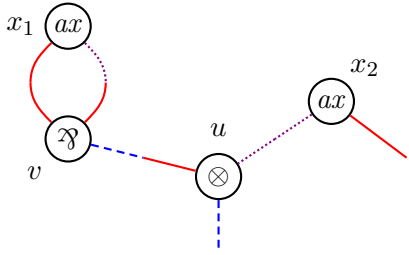

The following easy observations will also be useful:
\begin{lem}\label{lem:terminal:ax:par}
  Every terminal \(\ax\)- or \(\parr\)-vertex is splitting.
\end{lem}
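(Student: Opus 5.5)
We need to show that a terminal ax-vertex is in no cycle, and that a terminal par-vertex has a conclusion edge in no cycle. Both follow from a degree count, by looking at which edges of the vertex can appear in a cycle at all.

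\emph{The $\ax$-vertex.} Let $\somevertex$ be a terminal $\ax$-vertex with conclusions $\someedge_1$ and $\someedge_2$. Since $\somevertex$ is terminal, both $\someedge_1$ and $\someedge_2$ are conclusions of the proof structure, so each has $\somevertex$ as its only endpoint. A cycle through $\somevertex$ must use two distinct edges incident to $\somevertex$, and each of them must have another endpoint. No edge incident to $\somevertex$ has one, so no path leaves $\somevertex$ at all. Hence $\somevertex$ lies in no cycle and is splitting by \cref{def:splitting}.

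\emph{The $\parr$-vertex.} Let $\somevertex$ be a terminal $\parr$-vertex with premises $\someedge_1,\someedge_2$ and conclusion $\otheredge$. Since $\somevertex$ is terminal, $\otheredge$ has no target, so its only endpoint is $\somevertex$. Any edge occurring in a path joins two distinct vertices, so $\otheredge$ occurs in no non-empty path, and in particular in no cycle. This is exactly the splitting condition for $\parr$-vertices in \cref{def:splitting}. Note that no appeal to DR-correctness is needed here.

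There is no real obstacle. The only point to check is that in a partial graph, an edge with a single endpoint can never occur in a path, since the definition of path requires each edge to have exactly two distinct endpoints.
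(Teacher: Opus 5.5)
Your proof is correct and follows the paper's own argument, which is the single observation that a conclusion of a terminal vertex has only one endpoint and so cannot lie in any cycle. You additionally spell out why this suffices for the $\ax$-vertex: its only incident edges are such conclusions, so it lies in no cycle at all.
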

\begin{proof}
  A conclusion of a terminal vertex cannot be part of a cycle.
\end{proof}
\begin{lem}\label{lem:par:free}
  In a \(\parr\)-free (connected component of a) proof net,
  every vertex is splitting.
\end{lem}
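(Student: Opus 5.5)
The idea is that in a $\parr$-free proof net every cycle is switching, so DR-correctness forces the graph to be acyclic, and then every vertex is splitting vacuously.

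\emph{Step 1: every cycle would be switching.} Let $\someps$ be a $\parr$-free proof net, or a $\parr$-free connected component of one. Recall that a path is switching exactly when it does not contain the two premises of any $\parr$-vertex. Since there is no $\parr$-vertex at all, this condition holds trivially, so every path of $\someps$ is switching. In particular, every cycle of $\someps$ would be a switching cycle.

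\emph{Step 2: the partial graph is acyclic.} By DR-correctness, $\someps$ has no switching cycle. By Step~1 it therefore has no cycle at all. If $\someps$ is only a connected component of a proof net, the same holds: a cycle in the component is a cycle in the whole proof net, and it is switching because it contains no $\parr$-premise (there is no $\parr$-vertex in the component).

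\emph{Step 3: conclude with \cref{def:splitting}.} Every vertex of $\someps$ is an $\ax$-, $\tensor$- or $\cut$-vertex, since there is no $\parr$-vertex. For such a vertex, being splitting means not lying on any cycle, which holds because there are no cycles. Equivalently, under a well-coloring, the local-coloring notion of splitting (every cycle through the vertex has a cusp there) is also vacuously satisfied.

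There is no real obstacle here. The only point to check is that cycles are understood as undirected paths in the underlying partial graph. This is exactly the convention used by the DR criterion, so Step~2 goes through as written.
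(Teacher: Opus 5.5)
Your proof is correct and follows the paper's own argument: a cycle containing no $\parr$-vertex is automatically switching, so DR-correctness rules out every cycle, and each $\ax$-, $\tensor$- or $\cut$-vertex is then splitting by \cref{def:splitting}. Your explicit handling of the connected-component case, noting that a cycle through a vertex of the component lies entirely in that component, is the one detail the paper leaves implicit.
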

\begin{proof}
  A cycle without any \(\parr\)-vertex is always a switching cycle.
\end{proof}

\subsection{Sequentialization from cusp cycling}
\label{sec:seq:dummies}

A \definitive{\(\parr\)-path} in a proof structure is a path whose source and target are \(\parr\)-vertices,
and which starts with the conclusion of its source, and ends with a premise of its target.
The following result is then immediate:

\begin{lem}[Concatenation of \(\parr\)-paths]\label{lem:concatparpaths}
The concatenation of two \(\parr\)-paths (\resp\ cusp-free \(\parr\)-paths) is a \(\parr\)-path (\resp\ cusp-free \(\parr\)-path).
\end{lem}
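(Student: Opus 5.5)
The statement has two parts: one about plain $\parr$-paths and one about cusp-free $\parr$-paths. I will treat them in turn.

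For the first part, let $\somepath$ be a $\parr$-path from $\somevertex$ to $\othervertex$ and $\otherpath$ a $\parr$-path from $\othervertex$ to $\otherothervertex$. Since the target of $\somepath$ is the source of $\otherpath$, the concatenation $\concatpath{\somepath}{\otherpath}$ is a well-defined path. Its source is the $\parr$-vertex $\somevertex$, and its first edge is the first edge of $\somepath$, which is the conclusion of $\somevertex$. Its target is the $\parr$-vertex $\otherothervertex$, and its last edge is the last edge of $\otherpath$, which is a premise of $\otherothervertex$. So $\concatpath{\somepath}{\otherpath}$ is a $\parr$-path. Both paths are non-empty, so these first and last edges exist.

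For the second part, assume in addition that $\somepath$ and $\otherpath$ are cusp-free. The paths here are not required to be simple, so I only need to check cusps. The internal cusps of $\concatpath{\somepath}{\otherpath}$ are those of $\somepath$, those of $\otherpath$, and possibly one at the junction occurrence of $\othervertex$. The concatenation is not closed unless $\somevertex=\otherothervertex$. In that case the closing triple at $\somevertex$ pairs a premise of $\somevertex$ with its conclusion, and these always have different colors in a well-colored structure, so no cusp arises there.

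The only real point to check is the junction at $\othervertex$. The triple there is made of the last edge of $\somepath$, which is a premise of $\othervertex$, and the first edge of $\otherpath$, which is the conclusion of $\othervertex$. By the well-colored condition of \cref{def:wellcolored} at a $\parr$-vertex, the color of a premise differs from the color of the conclusion. Hence this triple is not a cusp (and if the two edges happened to coincide, the triple would not be a cusp by definition). The concatenation is therefore cusp-free.

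I expect no real obstacle. The point needing care is remembering that the statement is implicitly about a well-colored proof structure. If "cusp-free" is instead read as "switching", the same argument works: the junction at $\othervertex$ uses a premise and the conclusion of $\othervertex$, not its two premises, so by \cref{lem:localglobal} it is never a forbidden consecutive pair.
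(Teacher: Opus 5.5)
Your argument is correct and is exactly the check the paper leaves implicit when it calls the result immediate. The outer edges give the \(\parr\)-path shape, and both the junction triple and the possible closing triple pair a premise with the conclusion of a \(\parr\)-vertex, which have different colors in a well-colored structure.

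Your final aside about reading ``cusp-free'' as ``switching'' is wrong, however. Being switching is a global condition, and \cref{lem:localglobal} applies only to simple paths, so two switching \(\parr\)-paths can concatenate into a non-switching one, each using a different premise of some other \(\parr\)-vertex. In the paper, simplicity of concatenations is instead obtained separately via \cref{lem:concatsimplepathsprefix}, as in the proof of \cref{lem:order_is_order}.
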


Recall that $\mincycles{\somevertex}$ denotes the set of cycles of source \(\somevertex\),
without cusp at \(\somevertex\), and with a minimum number of cusps among 
such cycles.
The following is a particular case of \cref{lem:bjumping_Yeo}:

\begin{lem}[Cusp cycling in proof structures]\label{lem:bjumping}
  Let $\somevertex$ be a vertex and $\somecycle$ be a cycle in $\mincycles{\somevertex}$,
  with a cusp at $\somepier$.
  If there exists a simple cusp-free \(\parr\)-path $\otherpath$
  from $\somepier$ to a vertex of $\somecycle$,
  then there exists a switching cycle.
\end{lem}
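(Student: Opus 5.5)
The plan is to derive this directly from \cref{lem:bjumping_Yeo}, applied to the well-colored proof structure, and then translate its conclusion with the local--global principle. Throughout, I equip the proof structure with a local coloring making it well-colored (\cref{def:wellcolored}). By the remark following that definition, the cusp-points are exactly the pairs $(\somepier,\somecolor)$ where $\somepier$ is a $\parr$-vertex and $\somecolor$ is the common color of its premises. Moreover, by \cref{lem:localglobal}, a cusp-free cycle is precisely a switching cycle. The conclusion we need is therefore ``there exists a cusp-free cycle'', which is exactly the output of \cref{lem:bjumping_Yeo}.

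First, I identify the cusp. Since $\somecycle$ has a cusp at $\somepier$, the vertex $\somepier$ must be a $\parr$-vertex. The cusp is formed by its two premises $\otheredge_1,\otheredge_2$, so its color $\somecolor$ is the premise color at $\somepier$. Second, I check the hypotheses on $\otherpath$. It is a $\parr$-path starting from $\somepier$, so its first edge is the conclusion of $\somepier$. In a well-colored structure the conclusion's color at $\somepier$ differs from the premise color, so the starting color of $\otherpath$ is not $\somecolor$. The path $\otherpath$ is also simple and cusp-free by assumption, and it ends on a vertex of $\somecycle$.

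The one point requiring care is that \cref{lem:bjumping_Yeo} asks for an \emph{open} path. If the target of $\otherpath$ were $\somepier$ itself, $\otherpath$ would be closed, ending with a premise of $\somepier$. In that degenerate case $\otherpath$ is already a cycle. It is cusp-free internally, and at $\somepier$ it pairs the conclusion with a premise, which have different colors, so it is a cusp-free, hence switching, cycle, and we are done. Otherwise $\otherpath$ is open. Then \cref{lem:bjumping_Yeo} gives a cusp-free cycle containing $\somepier$, which by \cref{lem:localglobal} is a switching cycle.

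I expect no real obstacle. The only subtle steps are the closed-path corner case and confirming that cusp-freeness of a cycle (including the wrap-around cusp at its source) coincides with being switching, which \cref{lem:localglobal} handles.
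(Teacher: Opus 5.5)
Your proof is correct and follows the paper's route: the paper presents this lemma directly as a particular case of \cref{lem:bjumping_Yeo} for a well-colored proof structure. In that setting cusps are exactly pairs of premises of a $\parr$-vertex, a $\parr$-path starts with a color different from the cusp color, and cusp-free cycles are switching by \cref{lem:localglobal}. Your separate treatment of the degenerate closed case, where $\otherpath$ ends at $\somepier$ itself, covers a detail the paper leaves implicit, and you handle it correctly.
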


\begin{defi}\label{def:order}
  We write $\somevertex\orderparr[\somepath]\othervertex$
  when \(\somepath\) is a simple open cusp-free \(\parr\)-path
  from $\somevertex$ to $\othervertex$
  such that there is no simple open cusp-free \(\parr\)-path starting from $\othervertex$,
  and ending on a vertex of $\somepath$.
  We write $\somevertex\orderparr\othervertex$ when there exists $\somepath$
  such that $\somevertex\orderparr[\somepath]\othervertex$.
\end{defi}

\begin{lem}\label{lem:order_is_order}
  In a proof net, $\orderparr$ is a strict partial order relation on $\parr$-vertices.
\end{lem}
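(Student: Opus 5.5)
We prove irreflexivity and transitivity directly, following the proof of \cref{lem:orderyeo_is_order}, with \(\parr\)-paths in the rôle of the paths of \cref{def:orderyeo}. Throughout, we use that the proof structure is well-colored. Then a \(\parr\)-path \(\somepath\) from \(\somevertex\) starts with the conclusion of \(\somevertex\), whose color at \(\somevertex\) differs from the premise color. It also ends with a premise of its target, so its ending color is the premise color of that \(\parr\)-vertex.

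\emph{Irreflexivity.} This is immediate, since \(\somevertex\orderparr[\somepath]\somevertex\) would require a simple \emph{open} path from \(\somevertex\) to \(\somevertex\).

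\emph{Key concatenation step.} This is the analogue of \cref{lem:orderyeo_sobfnbpath}. Assume \(\somevertex\orderparr[\somepath]\othervertex\) and let \(\otherpath\) be a simple open cusp-free \(\parr\)-path starting from \(\othervertex\) and ending at some \(\otherothervertex\). We claim that \(\concatpath{\somepath}{\otherpath}\) is a simple open cusp-free \(\parr\)-path.

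First, \(\otherpath\) meets \(\somepath\) only at \(\othervertex\). Suppose instead that \(\otherpath\) hits a vertex of \(\somepath\) other than \(\othervertex\), and take the prefix \(\otherpath'\) ending at the first such vertex \(\otherothervertex'\). This prefix is simple, open and cusp-free and starts with the conclusion of \(\othervertex\). It need not end with a premise of a \(\parr\)-vertex, so it is not literally a \(\parr\)-path. This is the main obstacle, because \cref{def:order} only forbids \(\parr\)-paths returning to \(\somepath\).

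To handle it, we first try to read the ``ending on a vertex of \(\somepath\)'' clause of \cref{def:order} as concerning arbitrary simple open cusp-free paths starting from the conclusion of \(\othervertex\). If the definition is strictly about \(\parr\)-paths, we argue by contradiction through switching cycles instead. The path \(\otherpath'\) together with the segment of \(\somepath\) from \(\otherothervertex'\) back to \(\othervertex\) forms a cycle. Since \(\concatpath{\somepath}{\otherpath}\) is built from cusp-free pieces, this cycle can have cusps only at \(\othervertex\) or at \(\otherothervertex'\). At \(\othervertex\) there is no cusp, because the cycle uses a premise of \(\othervertex\) (the end of \(\somepath\)) and its conclusion (the start of \(\otherpath'\)). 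A cusp at \(\otherothervertex'\) is then handled by \cref{lem:bjumping} or by extending along \(\somepath\). This yields a switching cycle, contradicting DR-correctness. We expect this to be the delicate point.

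Granting the claim, \cref{lem:concatsimplepathsprefix} gives that \(\concatpath{\somepath}{\otherpath}\) is simple and open. It has no cusp at \(\othervertex\), since a premise and the conclusion of \(\othervertex\) have distinct colors. It is a \(\parr\)-path by \cref{lem:concatparpaths}.

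\emph{Transitivity.} Given \(\somevertex\orderparr[\somepath]\othervertex\orderparr[\otherpath]\otherothervertex\), the concatenation step gives that \(\concatpath{\somepath}{\otherpath}\) is a suitable path from \(\somevertex\) to \(\otherothervertex\). Now let \(\otherotherpath\) be any simple open cusp-free \(\parr\)-path starting from \(\otherothervertex\). It cannot end on \(\otherpath\), because \(\othervertex\orderparr[\otherpath]\otherothervertex\). By the concatenation step, \(\concatpath{\otherpath}{\otherotherpath}\) is a simple open cusp-free \(\parr\)-path from \(\othervertex\). Hence its endpoint, which is also the endpoint of \(\otherotherpath\), cannot lie on \(\somepath\), because \(\somevertex\orderparr[\somepath]\othervertex\). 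Therefore \(\somevertex\orderparr[\concatpath{\somepath}{\otherpath}]\otherothervertex\), which concludes the proof.
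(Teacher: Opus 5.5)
Your overall structure is the paper's: irreflexivity from openness, then a concatenation step showing that any simple open cusp-free \(\parr\)-path \(\otherpath\) from \(\othervertex\) meets \(\somepath\) only at \(\othervertex\), then transitivity by applying that step twice. The concatenation step goes through the first return vertex \(\otherothervertex'\) and the cycle \(\somecycle\eqdef\concatpath{\subpath{\somepath}{\otherothervertex'}{\othervertex}}{\otherpath'}\). The gap is exactly at the step you flag as delicate.

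Reading \cref{def:order} as forbidding arbitrary cusp-free paths changes the relation, so it does not prove the stated lemma. Your fallback also fails as written:
\begin{itemize}
\item If \(\somecycle\) has a cusp at \(\otherothervertex'\), then \(\somecycle\) is not switching, so no switching cycle is produced.
\item \cref{lem:bjumping} would need a further simple cusp-free \(\parr\)-path from the cusp \(\otherothervertex'\) back to \(\somecycle\), and you have none.
\item ``Extending along \(\somepath\)'' is not an argument.
\end{itemize}

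The missing observation takes one line, and it removes your ``main obstacle''. By DR-correctness \(\somecycle\) has a cusp, and you showed correctly that it is not at \(\othervertex\), so it is at \(\otherothervertex'\). In a well-colored proof structure, a cusp can only be formed by the two premises of a \(\parr\)-vertex. Hence \(\otherothervertex'\) is a \(\parr\)-vertex and the last edge of \(\otherpath'\) is one of its premises. So \(\otherpath'\) \emph{is} a \(\parr\)-path: a simple open cusp-free \(\parr\)-path from \(\othervertex\) ending on a vertex of \(\somepath\). This contradicts \(\somevertex\orderparr[\somepath]\othervertex\) directly; the contradiction is with the definition of \(\orderparr\), not with DR-correctness. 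With this fix, the rest of your proof, including the transitivity argument, goes through as in the paper.
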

\begin{proof}
If $\somevertex\orderparr[\somepath]\othervertex$, then $\somevertex\neq\othervertex$ as $\somepath$ is open, thus $\orderparr$ is irreflexive.
If $\somevertex\orderparr[\somepath]\othervertex\orderparr[\otherpath]\otherothervertex$,
then we prove $\somevertex\orderparr[\concatpath{\somepath}{\otherpath}]\otherothervertex$.

We first show that, except for its source, $\otherpath$ contains no vertex in $\somepath$.
Indeed, otherwise, consider the shortest open prefix $\otherpath'$ of \(\otherpath\)
whose target \(\otherothervertex'\) is in \(\somepath\).
Then
\(\somecycle\eqdef\concatpath{\subpath{\somepath}{\otherothervertex'}{\othervertex}}{\otherpath'}\)
is a cycle:
either \(\otherothervertex'=\othervertex\) and \(\somecycle=\otherpath'\);
or we can apply \cref{lem:concatsimplepaths}.
By DR-correctness, \(\somecycle\) must have a cusp:
by assumption, \(\somepath\) and \(\otherpath\) are cusp-free;
and there is no cusp at \(\otherothervertex\) since \(\otherpath\) is a \(\parr\)-path;
hence \(\somecycle\) must have a cusp at \(\otherothervertex'\).
Thus \(\otherpath'\) is a simple open cusp-free \(\parr\)-path,
which contradicts $\somevertex\orderparr[\somepath]\othervertex$.

Therefore $\concatpath{\somepath}{\otherpath}$ is a simple open cusp-free \(\parr\)-path
(\cref{lem:concatsimplepathsprefix,lem:concatparpaths}).
Consider $\otherotherpath$ a simple open cusp-free \(\parr\)-path with source $\otherothervertex$.
Since $\othervertex\orderparr[\otherpath]\otherothervertex$,
$\otherotherpath$ does not end on a vertex of $\otherpath$.
Thus, $\concatpath{\otherpath}{\otherotherpath}$ is a simple open cusp-free \(\parr\)-path
(\cref{lem:concatsimplepathsprefix,lem:concatparpaths}), with source $\othervertex$.
Since $\somevertex\orderparr[\somepath]\othervertex$, 
$\otherotherpath$ does not end on a vertex of $\somepath$ either.
\end{proof}

We then show how to recover one classical result on splitting vertices:
the existence of \emph{sections},
\ie~splitting $\parr$-vertices~\cite{phddanos} (\cref{lem:splittingpar}).

\begin{lem}\label{lem:ForOrderMax+Connect}
Let $\somevertex$ be a \(\parr\)-vertex in a proof net,
and consider a cycle $\somecycle\in\mincycles{\somevertex}$ that is moreover a \(\parr\)-path.
Then $\somevertex\orderparr\somepier$,
where \(\somepier\) is the vertex of the first cusp of \(\somecycle\).
\end{lem}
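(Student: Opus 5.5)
We take $\somepath \eqdef \subpath{\somecycle}{\somevertex}{\somepier}$ as the witness path and check the conditions of \cref{def:order} directly.

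\emph{$\somepath$ is a simple open cusp-free $\parr$-path.} The cycle $\somecycle$ is a $\parr$-path, so it starts with the conclusion of $\somevertex$. Since $\somecycle\in\mincycles{\somevertex}$, it has no cusp at $\somevertex$. In a proof net there is no switching cycle, hence no cusp-free cycle. So $\somecycle$ has a cusp, and we let $\somepier$ be the vertex of its first cusp. Then $\somepier\neq\somevertex$. By \cref{lem:localglobal} and the remark on well-colored structures, cusp-points are exactly premise pairs of $\parr$-vertices. Hence $\somepier$ is a $\parr$-vertex, and $\somecycle$ passes through both of its premises consecutively. In particular the last edge of $\somepath$ is a premise of $\somepier$. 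The prefix $\somepath$ contains no cusp, since the first cusp occurs at its target and does not count as internal. It is simple as a sub-path of a cycle, and open because $\somepier\neq\somevertex$. So $\somepath$ is a simple open cusp-free $\parr$-path from $\somevertex$ to $\somepier$.

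\emph{Maximality condition.} Suppose there is a simple open cusp-free $\parr$-path $\otherpath$ starting from $\somepier$ and ending on a vertex of $\somepath$. That vertex lies in $\somecycle$. Moreover $\otherpath$ starts with the conclusion of $\somepier$, whose color differs from the color of the cusp at $\somepier$, which is the color of its premises. We then apply \cref{lem:bjumping_Yeo}, or equivalently its proof-structure form \cref{lem:bjumping}, to $\somecycle\in\mincycles{\somevertex}$, the cusp at $\somepier$ and the path $\otherpath$. This yields a cusp-free cycle, that is a switching cycle, which contradicts DR-correctness. Hence no such $\otherpath$ exists, and $\somevertex\orderparr[\somepath]\somepier$.

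\emph{Main obstacle.} The only delicate point is the bookkeeping in the first step. We must check that the first cusp of $\somecycle$ is not at $\somevertex$ itself, which holds because $\somecycle\in\mincycles{\somevertex}$. We must also check that $\somepath$ really ends with a premise of $\somepier$: a cusp at a $\parr$-vertex uses exactly its two premises. Finally, \cref{lem:bjumping_Yeo} needs an endpoint in $\somecycle$, and a vertex of $\somepath$ satisfies this.
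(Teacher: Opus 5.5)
Your proof is correct and follows the paper's own argument: you take the prefix $\subpath{\somecycle}{\somevertex}{\somepier}$ as the witness, and you use \cref{lem:bjumping} (\ie\ \cref{lem:bjumping_Yeo}) together with DR-correctness to rule out any simple open cusp-free $\parr$-path from $\somepier$ back to $\somecycle$; you simply spell out bookkeeping that the paper leaves implicit. One small attribution fix: the fact that cusp-points are exactly the premise pairs of $\parr$-vertices comes from \cref{def:wellcolored}, whereas \cref{lem:localglobal} is what makes every cusp-free cycle a switching cycle.
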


\begin{proof}
The prefix $\subpath{\somecycle}{\somevertex}{\somepier}$ is a simple open cusp-free \(\parr\)-path.
Moreover, by \cref{lem:bjumping} and DR-correctness, there is no simple cusp-free \(\parr\)-path from $\somepier$ to $\somecycle$.
\end{proof}

\begin{lem}[Splitting $\parr$]\label{lem:splittingpar}
  A proof net is $\parr$-free or contains a splitting $\parr$-vertex.
\end{lem}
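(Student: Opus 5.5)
Suppose the proof net contains at least one $\parr$-vertex. The set of $\parr$-vertices is finite and non-empty, and $\orderparr$ is a strict partial order on it by \cref{lem:order_is_order}, so we may pick a $\orderparr$-maximal $\parr$-vertex $\somevertex$. We claim that $\somevertex$ is splitting, and argue by contradiction. If $\somevertex$ is not splitting, then its conclusion edge $\otheredge$ lies on some cycle. Since we work with the well-colored structure, cusp-free paths are exactly switching paths, and splitting in the sense of \cref{def:splitting} coincides with splitting in the sense of \cref{sec:localcolor}. In particular $\mincycles{\somevertex}\neq\emptyset$.

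We first need some $\somecycle\in\mincycles{\somevertex}$ that is a $\parr$-path, \ie\ one that starts from $\somevertex$ with its conclusion $\otheredge$. Any cycle without cusp at the $\parr$-vertex $\somevertex$ uses $\otheredge$ together with one premise. Indeed, the only cusps at $\somevertex$ are made of its two premises, which share a color, while the conclusion has a different color. Reversing the cycle if necessary, which preserves membership in $\mincycles{\somevertex}$, we may assume it begins with $\otheredge$ and ends with a premise of $\somevertex$. It is then a $\parr$-path from $\somevertex$ to $\somevertex$.

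By DR-correctness, $\somecycle$ is not cusp-free, so it has a first cusp. Its vertex $\somepier$ is necessarily a $\parr$-vertex, since the cusp-points are exactly the $\parr$-vertices with their premise color. \Cref{lem:ForOrderMax+Connect} then gives $\somevertex\orderparr\somepier$, which contradicts the maximality of $\somevertex$. Hence $\somevertex$ is a splitting $\parr$-vertex.

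The only delicate step is checking that \cref{lem:ForOrderMax+Connect} really applies. Its hypotheses, $\somecycle\in\mincycles{\somevertex}$ and $\somecycle$ being a $\parr$-path, have been arranged above. It remains that the prefix $\subpath{\somecycle}{\somevertex}{\somepier}$ ends with a premise of $\somepier$. This holds because the cusp at $\somepier$ is formed by its two premises, so the cycle enters $\somepier$ through a premise. This is built into that lemma, so nothing further should be needed.
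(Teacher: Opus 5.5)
Your proof is correct and is essentially the paper's argument: take a $\orderparr$-maximal $\parr$-vertex, assume it is not splitting, choose a cycle in $\mincycles{\somevertex}$ oriented to start with the conclusion so that it is a $\parr$-path, and use DR-correctness together with \cref{lem:ForOrderMax+Connect} to contradict maximality. The extra details you give, namely why a cycle with no cusp at $\somevertex$ must use the conclusion and why the first cusp lies at a $\parr$-vertex entered through a premise, are exactly the checks the paper leaves implicit.
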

\begin{proof}
  If a proof net contains a $\parr$-vertex,
  then its set of $\parr$-vertices is finite and non-empty,
  thus it contains a maximal element $\somevertex$ with respect to $\orderparr$ (\cref{lem:order_is_order}).
  If $\somevertex$ is not splitting,
  its conclusion belongs to a cycle and $\mincycles{\somevertex}\neq\emptyset$.
  Take some $\somecycle\in\mincycles{\somevertex}$,
  considered as starting by $\somevertex$ with its conclusion:
  \(\somecycle\) is a \(\parr\)-path.
  By DR-correctness, \(\somecycle\) contains at least one cusp:
  then \cref{lem:ForOrderMax+Connect} contradicts the maximality of \(\somevertex\).
\end{proof}
For instance, the proof structure of \cref{fig:wellcolored}
is easily checked to be a proof net,
and then its only \(\parr\)-vertex \(\somevertex\) must be splitting,
which is graphically obvious.

Added to \cref{lem:par:free}, \cref{lem:splittingpar} completes the proof of \cref{lem:splittingnode}.

\subsection{Sequentialization from Parametrized Local Yeo}
\label{sec:seq:yeo}

\Cref{th:ParamLocalYeo} gives a splitting vertex for any set $\somesetedge$ of vertex-color pairs
dominating the cusp-points of a well-colored proof net
-- \ie\ pairs $(\somevertex, \somecolor)$ with $\somevertex$ a $\parr$-vertex
and \(\somecolor\) the color associated with its two premises.
We review how natural choices for the parameter $\somesetedge$ in \cref{th:ParamLocalYeo}
yield various proofs of \cref{lem:splittingnode},
hence various strategies to select splitting vertices along the sequentialization procedure.
Each of these choices satisfies the hypothesis of \cref{th:ParamLocalYeo} trivially:
\(\somesetedge\) contains all cusp-points.

The most direct route is just to consider all vertex-color pairs:
\begin{cor}[Maximal pairs]
  \label{cor:seq:yeo:all}
  Take $\somesetedge$ the set of all vertex-color pairs of the proof net:
  for each $\orderyeo$-maximal element $(\somevertex, \somecolor)\in\somesetedge$,
  the vertex $\somevertex$ is splitting.
\end{cor}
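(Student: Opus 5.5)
The plan is to derive this directly from \cref{th:ParamLocalYeo}, applied to the partial graph of the proof net equipped with a well-coloring (\cref{def:wellcolored}); such a coloring always exists, for instance the three-color convention given after that definition.

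First, I translate the hypotheses. In a well-colored proof structure, the cusp-points are exactly the pairs $(\somevertex,\somecolor)$ where $\somevertex$ is a $\parr$-vertex and $\somecolor$ is the common color of its premises. By \cref{lem:localglobal}, a cusp-free cycle goes through no pair of premises of a $\parr$-vertex consecutively, so it is a switching cycle. DR-correctness therefore says that the locally colored partial graph has no cusp-free cycle. Next, the set $\somesetedge$ of all vertex-color pairs contains every cusp-point, so it trivially dominates cusp-points. Hence \cref{th:ParamLocalYeo} applies: for any $\orderyeo$-maximal $(\somevertex,\somecolor)\in\somesetedge$, the vertex $\somevertex$ is splitting in the sense of \cref{sec:localcolor}, that is, every cycle through $\somevertex$ has a cusp at $\somevertex$.

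The remaining step is to show that this graph-theoretic notion coincides with \cref{def:splitting}. I would do a case analysis on the label of $\somevertex$.
\begin{itemize}
\item If $\somevertex$ is an $\ax$-, $\cut$- or $\tensor$-vertex, all colors at $\somevertex$ are pairwise distinct, so no cycle can have a cusp at $\somevertex$. Being splitting then means no cycle contains $\somevertex$, which is exactly \cref{def:splitting}.
\item If $\somevertex$ is a $\parr$-vertex, the only cusp possible at $\somevertex$ is formed by its two premises. Any cycle containing the conclusion of $\somevertex$ passes through $\somevertex$ using the conclusion and one other edge, so it has no cusp at $\somevertex$. Conversely, a cycle through $\somevertex$ that avoids the conclusion must use both premises. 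So being splitting is equivalent to the conclusion lying in no cycle, again as in \cref{def:splitting}.
\end{itemize}
The paper already asserts this coincidence informally right after \cref{def:wellcolored}; I would just spell it out.

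The main subtlety is finiteness and existence: the statement only concerns maximal elements, so nothing must be shown about their existence. Still, I would note that maximal elements exist when the proof net has an edge-endpoint pair, by \cref{lem:orderyeo_is_order} and finiteness. Beyond that, the argument is purely a bookkeeping check, and the step requiring the most care is the $\parr$ case of the coincidence of splitting notions.
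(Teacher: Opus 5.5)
Your proposal is correct and follows the paper's route. The corollary is a direct instance of \cref{th:ParamLocalYeo}, since $\somesetedge$ trivially contains all cusp-points and cusp-free cycles are switching cycles by \cref{lem:localglobal}. The remaining ingredient is the coincidence of the two notions of splitting vertex in a well-colored proof structure: the paper merely asserts it after \cref{def:wellcolored}, and you correctly spell it out by your case analysis on the vertex label.
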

This immediately yields a proof of \cref{lem:splittingnode}, because
$\somesetedge$ is empty if and only if the proof net has no vertex.

\begin{exa}
  \label{ex:seq:yeo:all}
  Consider the proof net of \cref{fig:wellcolored}:
  using \cref{lem:terminal_not_maximal_order},
  it is easy to check that
  \((\otherothervertex_1,\somecolor)
  \orderyeo(\somevertex,\textcolor{red}{solid})
  \orderyeo(\othervertex,\textcolor{red}{solid})\),
  for \(\somecolor\in\{\textcolor{red}{solid},\textcolor{violet}{dotted}\}\);
  moreover, 
  \((\otherothervertex_2,\textcolor{red}{solid})
  \orderyeo(\othervertex,\textcolor{violet}{dotted})
  \orderyeo(\somevertex,\textcolor{blue}{dashed})\),
  and
  \((\othervertex,\othercolor)
  \orderyeo(\otherothervertex_2,\textcolor{violet}{dotted})\)
  for \(\othercolor\in\{\textcolor{red}{solid},\textcolor{blue}{dashed}\}\).
  Note that we do not have \((\somevertex,\textcolor{blue}{dashed})
  \orderyeo(\otherothervertex_1,\somecolor)\) with 
  \(\somecolor\in\{\textcolor{red}{solid},\textcolor{violet}{dotted}\}\):
  each path \(\somepath\) such that 
  \((\somevertex,\textcolor{blue}{dashed})
  \sobfnbpath[\somepath](\otherothervertex_1,\somecolor)\)
  is reduced to a premise of \(\somevertex\),
  and the other premise yields a path
  \(\otherpath\) such that 
  \((\otherothervertex_1,\somecolor)
  \sobfnbpath[\otherpath](\somevertex,\textcolor{red}{solid})\).
  Hence \((\somevertex,\textcolor{blue}{dashed})\) is maximal.
  The only maximal pairs are thus
  \((\somevertex,\textcolor{blue}{dashed})\)
  and
  \((\otherothervertex_2,\textcolor{violet}{dotted})\)
  (whose maximality is obvious)
  and \(\somevertex\) and \(\otherothervertex_2\) are indeed splitting.
  Note that the remaining splitting vertex
  \(\othervertex\) is \emph{not} a component of a maximal pair:
  not all splitting vertices are obtained by \cref{cor:seq:yeo:all}.
\end{exa}

We can moreover recover two of the classical existence results 
for splitting vertices:
the existence of \emph{sections},
\ie~splitting $\parr$-vertices~\cite{phddanos};
and the existence of splitting terminal vertices~\cite{ll}.

\begin{cor}[Splitting \(\parr\)]
  \label{cor:seq:yeo:sections}
  Let $\somesetedge$ be the set of all cusp-points:
  a $\orderyeo$-maximal element of $\somesetedge$ is $(\somevertex, \somecolor)$
  with $\somevertex$ a splitting $\parr$-vertex.
\end{cor}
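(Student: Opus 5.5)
Since $\somesetedge$ is the set of all cusp-points, the only thing to check is that it dominates cusp-points, and that holds trivially because it contains every cusp-point. A proof net has no switching cycle, and by \cref{lem:localglobal} together with the well-coloring this means it has no cusp-free cycle. So \cref{th:ParamLocalYeo} applies, and for every $\orderyeo$-maximal element $(\somevertex,\somecolor)$ of $\somesetedge$ the vertex $\somevertex$ is splitting in the sense of \cref{sec:localcolor}.

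Next I identify $\somevertex$. Every element of $\somesetedge$ is a cusp-point. As observed after \cref{def:wellcolored}, the cusp-points of a well-colored proof structure are exactly the pairs $(\somevertex,\somecolor)$ where $\somevertex$ is a $\parr$-vertex and $\somecolor$ is the common color of its two premises. Hence $\somevertex$ is a $\parr$-vertex.

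Finally I transfer splitting from the coloring sense to \cref{def:splitting}. The paper already states this equivalence for well-colored proof structures, so I would simply cite it. To make it explicit for a $\parr$-vertex: suppose its conclusion $\otheredge$ lay on a cycle. That cycle passes through $\somevertex$ using $\otheredge$ and one other edge at $\somevertex$, necessarily a premise. The premise has a different color from $\otheredge$ at $\somevertex$, so the cycle has no cusp at $\somevertex$, contradicting that $\somevertex$ is splitting. Therefore the conclusion of $\somevertex$ lies on no cycle, and $\somevertex$ is a splitting $\parr$-vertex.

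There is no real obstacle. The only points needing a word are that a cycle through the conclusion of $\somevertex$ must use that conclusion together with a premise, which holds since a cycle uses exactly two edges at each of its vertices, and the implicit non-emptiness: if the net is $\parr$-free, $\somesetedge$ is empty and the statement is vacuous.
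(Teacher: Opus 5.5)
Your proof is correct and is essentially the paper's argument. The paper presents this corollary as an immediate application of \cref{th:ParamLocalYeo}: the set of cusp-points trivially dominates cusp-points, and DR-correctness rules out cusp-free cycles via \cref{lem:localglobal}. This is combined with the earlier remarks that the cusp-points of a well-colored proof structure are exactly the pairs formed by a $\parr$-vertex and its premise color, and that the two notions of splitting vertex coincide. Your explicit check of that coincidence for a $\parr$-vertex (a cycle through its conclusion must pair it with a premise of a different color at that vertex) correctly unpacks what the paper only asserts.
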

In the proof net of \cref{fig:wellcolored},
the only \(\parr\)-vertex \(\somevertex\)
is indeed splitting.
To derive a proof of \cref{lem:splittingnode},
it remains only to treat the case $\somesetedge=\emptyset$:
the proof net is \(\parr\)-free, and we apply \cref{lem:par:free}.

\begin{cor}[Splitting terminal vertices]
  \label{cor:seq:yeo:terminal}
  Considering
  \[
    \somesetedge\eqdef
    \{(\somevertex, \somecolor) \in\vertexset\times\colors \mid
    \text{$\somecolor\neq\coloring(\someedge,\somevertex)$ for each conclusion edge \(\someedge\) of \(\somevertex\)}\}
    \,,
  \]
  each $\orderyeo$-maximal element of \(\somesetedge\) is a pair \((\somevertex, \somecolor)\)
  where \(\somevertex\) is a splitting terminal vertex.
\end{cor}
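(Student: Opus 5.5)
The plan is to apply \cref{th:ParamLocalYeo} to the given \(\somesetedge\), which yields that the vertex is splitting, and then to use \cref{lem:terminal_not_maximal_order} to show that a maximal pair cannot sit on a non-terminal vertex.

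\emph{Hypotheses of \cref{th:ParamLocalYeo}.} First, the proof net has no cusp-free cycle. In a well-colored proof structure, cusp-free paths are exactly switching paths by \cref{lem:localglobal}, so this is just DR-correctness. Second, \(\somesetedge\) dominates cusp-points because it contains them all. A cusp-point is \((\somevertex,\somecolor)\) with \(\somevertex\) a \(\parr\)-vertex and \(\somecolor\) the common color of its premises at \(\somevertex\). By \cref{def:wellcolored}, this color differs from the color of the unique conclusion of \(\somevertex\), so \((\somevertex,\somecolor)\in\somesetedge\). Hence \cref{th:ParamLocalYeo} shows that the vertex \(\somevertex\) of any \(\orderyeo\)-maximal \((\somevertex,\somecolor)\in\somesetedge\) is splitting in the sense of \cref{sec:localcolor}. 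By the discussion following \cref{def:wellcolored}, this coincides with \cref{def:splitting}.

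\emph{Terminality, step one.} Assume for contradiction that \((\somevertex,\somecolor)\) is maximal but \(\somevertex\) is not terminal. Then some conclusion edge \(\someedge\) of \(\somevertex\) has a target \(\othervertex\), and \(\somecolor\neq\coloring(\someedge,\somevertex)\) by definition of \(\somesetedge\). I then check that \((\somevertex,\coloring(\someedge,\somevertex))\) is not a cusp-point. The cusp-points are only \(\parr\)-vertices with their premise color. A conclusion color at a \(\parr\)-vertex differs from the premise color, and at \(\ax\)- and \(\tensor\)-vertices all colors are distinct, so none of these gives a cusp-point. Now apply \cref{lem:terminal_not_maximal_order}. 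Since there is no cusp-free cycle, we get \((\somevertex,\somecolor)\orderyeo(\othervertex,\coloring(\someedge,\othervertex))\).

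\emph{Terminality, step two.} It remains to check that \((\othervertex,\coloring(\someedge,\othervertex))\in\somesetedge\); this contradicts maximality. Here \(\someedge\) is a premise of \(\othervertex\), so \(\othervertex\) is a \(\tensor\)-, \(\parr\)- or \(\cut\)-vertex, since \(\ax\)-vertices have no premises. A \(\cut\)-vertex has no conclusion, so membership is vacuous. For \(\tensor\) and \(\parr\), well-coloredness makes the premise color differ from the conclusion color. Hence the pair lies in \(\somesetedge\).

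The only delicate point is this case analysis against \cref{def:wellcolored}, in both directions: the non-cusp-point condition at \(\somevertex\), and membership in \(\somesetedge\) at \(\othervertex\). Everything else is direct invocation of earlier results.
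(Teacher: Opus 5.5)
Your proof is correct and follows the same route as the paper: \cref{th:ParamLocalYeo} shows that the vertex is splitting, and \cref{lem:terminal_not_maximal_order}, applied to a conclusion with a defined target, yields a strictly larger pair, contradicting maximality. You are more careful than the paper's one-line argument, since you explicitly check that $(\somevertex,\coloring(\someedge,\somevertex))$ is not a cusp-point and that $(\othervertex,\coloring(\someedge,\othervertex))$ lies in $\somesetedge$; the latter is needed because maximality is taken relative to $\somesetedge$.
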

\begin{proof}
  That $\somevertex$ is splitting is again a direct application of \cref{th:ParamLocalYeo}.
  We show that it is also terminal.
  Indeed, otherwise, it would have a conclusion $\someedge$ with defined target $\othervertex$.
  Using \cref{lem:terminal_not_maximal_order},
  one gets $(\somevertex, \somecolor) \orderyeo (\othervertex, \coloring(\someedge, \othervertex))$
  since \(\somecolor \not= \coloring(\someedge, \somevertex)\) (by \cref{def:wellcolored}):
  this contradicts the maximality of $(\somevertex, \somecolor)$.
\end{proof}
In the proof net of \cref{fig:wellcolored},
the only terminal vertex \(\othervertex\)
is indeed splitting.
Assuming (\wolog) that the set \(\colors\) of colors has at least three
elements, the set \(\somesetedge\) is empty iff the proof net has no vertex:
again, we derive a proof of \cref{lem:splittingnode}.

Note that \cref{th:ParamLocalYeo} is flexible enough to 
allow for other, original (although maybe not so interesting)
strategies.
For instance, one may strive to obtain a splitting
non-\(\ax\)-vertex:
\begin{cor}[Splitting non-\(\ax\)-vertices]
  \label{cor:seq:yeo:nonax}
  Take $\somesetedge$ the set of all vertex-color pairs
  \((\somevertex, \somecolor)\),
  with \(\somevertex\) not an $\ax$-vertex.
  By \cref{th:ParamLocalYeo}, each $\orderyeo$-maximal element
  $(\somevertex, \somecolor)\in\somesetedge$ yields a splitting vertex $\somevertex$.
\end{cor}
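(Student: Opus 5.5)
The corollary follows directly from \cref{th:ParamLocalYeo} once three points are checked: \(\somesetedge\) dominates cusp-points, the proof net has no cusp-free cycle for the chosen well-coloring, and \(\somesetedge\) is non-empty whenever the proof net has a vertex that is not an \(\ax\)-vertex.

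First, equip the proof net with a well-coloring in the sense of \cref{def:wellcolored}. The cusp-points of a well-colored proof structure are exactly the pairs \((\somevertex,\somecolor)\) with \(\somevertex\) a \(\parr\)-vertex and \(\somecolor\) the color of its premises. Since a \(\parr\)-vertex is not an \(\ax\)-vertex, every cusp-point lies in \(\somesetedge\), so \(\somesetedge\) trivially dominates cusp-points.

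Second, DR-correctness gives the absence of cusp-free cycles. By \cref{lem:localglobal}, a cusp-free cycle in a well-colored proof structure is a switching cycle, and a proof net has none.

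With these two hypotheses in place, \cref{th:ParamLocalYeo} applies: the vertex of any \(\orderyeo\)-maximal element of \(\somesetedge\) is splitting in the sense of \cref{sec:localcolor}. It is then also splitting in the sense of \cref{def:splitting}, by the coincidence of the two notions for well-colored proof structures noted after \cref{def:wellcolored}. Maximal elements exist when \(\somesetedge\neq\emptyset\), because \(\somesetedge\) is finite (the set \(\colors\) is finite) and \(\orderyeo\) is a strict partial order (\cref{lem:orderyeo_is_order}).

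The only delicate point is using the corollary to derive \cref{lem:splittingnode}, namely the degenerate case \(\somesetedge=\emptyset\). With a non-empty color set, this happens exactly when every vertex is an \(\ax\)-vertex. Such a proof net has no \(\parr\)-vertex, so \cref{lem:par:free} makes every vertex splitting. Alternatively, each \(\ax\)-vertex is then terminal, and \cref{lem:terminal:ax:par} applies.
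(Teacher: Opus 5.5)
Your proposal is correct and takes the paper's route. You check that \(\somesetedge\) contains every cusp-point of a well-colored proof net, rule out cusp-free cycles by DR-correctness and \cref{lem:localglobal}, apply \cref{th:ParamLocalYeo}, and settle the degenerate case \(\somesetedge=\emptyset\) with \cref{lem:par:free}, just as the paper does.
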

In the proof net of \cref{fig:wellcolored},
the maximal pairs in \(\somesetedge\) are then
\((\somevertex,\textcolor{blue}{dashed})\) and
\((\othervertex,\textcolor{red}{solid})\),
and indeed, both non-\(\ax\)-vertices are splitting.
Again, to derive a proof of \cref{lem:splittingnode},
this leaves only the case $\somesetedge=\emptyset$,
and we reason as before.

As for \cref{cor:seq:yeo:all},
\cref{cor:seq:yeo:sections,cor:seq:yeo:terminal,cor:seq:yeo:nonax}
need not produce \emph{all} the splitting vertices of the class of interest.
Counter-examples are provided in 
\cref{fig:splitting_parr_no_max}
(for splitting \(\parr\)-vertices)
and \cref{fig:splitting_terminal_no_max}
(both for splitting terminal vertices, and for splitting non-\(\ax\)-vertices).

\begin{figure}
  \centering
  \begin{tikzpicture}[colored ps, node distance=.25cm and .5cm]
    \coordinate (c);
    \node [above=of c, label=right:\(\somevertex\)] (parr1) {$\parr$};
    \coordinate [above left=of parr1] (parr1l);
    \node [above left=of parr1l] (tens) {$\tensor$};
    \coordinate [above left=of tens] (tensl);
    \coordinate [above right=of tens] (tensr);
    \node [above right=of tensr] (ax1) {$\ax$};
    \node [above left=of tensl, label=below left:\(\othervertex\)] (parr2) {$\parr$};
    \coordinate [above left=of parr2] (parr2l);
    \coordinate [above right=of parr2] (parr2r);
    \node [above=.5 of parr2] (ax2) {$\ax$};
    \path
      (parr2)
      to[draw,out=135,in=-135] 
      coordinate[pos=.5] (m)
      (ax2)
      ;
    \draw[red]
      (parr1l) edge (parr1)
      (tensl) edge (tens)
      (parr1) edge[out=45,in=-45] (ax1)
      (parr2) edge[out=45,in=-45] (ax2)
      (parr2) edge[out=135,in=-90] (m)
      ;
    \draw[dotted,violet]
      (tens) edge (ax1)
      (ax2) edge[out=-135,in=90] (m)
      ;
    \draw[dashed,blue]
      (c) edge (parr1)
      (parr1l) edge (tens)
      (tensl) edge (parr2)
      ;
  \end{tikzpicture}
  \caption{
    Proof net with two splitting $\parr$-vertices
    \(\somevertex\) and \(\othervertex\) such that 
    \((\othervertex,\textcolor{red}{solid})\orderyeo (\somevertex,\textcolor{red}{solid})\)
  }
  \label{fig:splitting_parr_no_max}
\end{figure}

\begin{figure}
  \centering
  \begin{tikzpicture}[colored ps,xscale=2]
    \node (axll) at (-3,1) {$\ax$};
    \node (axl) at (-1,1) {$\ax$};
    \node (axr) at (1,1) {$\ax$};
    \node (axrr) at (3,1) {$\ax$};
    \node [label=above:\(\othervertex\)] (tensl) at (-2,0) {$\tensor$};
    \node [label=above:\(\somevertex\)] (tensc) at (0,0) {$\tensor$};
    \node [label=above:\(\otherothervertex\)] (tensr) at (2,0) {$\tensor$};
    \coordinate (cll) at (-3.5,0.25);
    \coordinate (cl) at (-2,-1);
    \coordinate (cc) at (0,-1);
    \coordinate (cr) at (2,-1);
    \coordinate (crr) at (3.5,0.25);
    \draw[red]
      (cll) edge (axll)
      (tensl) edge (axl)
      (tensc) edge (axr)
      (tensr) edge (axrr)
      ;
    \draw[dotted,violet]
      (axll) edge (tensl)
      (axl) edge (tensc)
      (axr) edge (tensr)
      (axrr) edge (crr)
      ;
    \draw[dashed,blue]
      (tensl) edge (cl)
      (tensc) edge (cc)
      (tensr) edge (cr)
      ;
  \end{tikzpicture}
  \caption{
    Proof net with three terminal splitting \(\otimes\)-vertices \(\somevertex\), \(\othervertex\) and \(\otherothervertex\),
    such that \((\somevertex,\somecolor)\orderyeo(\othervertex,\textcolor{red}{solid})\)
    for \(\somecolor\in\{\textcolor{red}{solid},\textcolor{blue}{dashed}\}\),
    and \((\somevertex,\textcolor{violet}{dotted})\orderyeo(\otherothervertex,\textcolor{violet}{dotted})\)
  }
  \label{fig:splitting_terminal_no_max}
\end{figure}

\subsection{Restrictions}\label{sec:seq:restrictions}

Now that we have sequentialization and desequentialization for full \MLLhmix,
we can consider some restrictions and characterize sub-systems of the sequent calculus,
by means of properties of their image in proof structures.

First, observe that a derivation $\someproof$ contains no ($\hyp$) rule if and only if $\deseq(\someproof)$ is closed.
As an immediate consequence of \cref{thm:seq}, we thus obtain:
\begin{thm}[Closed sequentialization]\label{th:seq:closed}
  Given a closed proof net $\someps$,
  there exists an hypothesis free derivation $\someproof$ such that $\someps\giso\deseq(\someproof)$.
\end{thm}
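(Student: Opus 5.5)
The statement follows from \cref{thm:seq} together with the observation made just above it: a derivation $\someproof$ contains no $(\hyp)$ rule if and only if $\deseq(\someproof)$ is closed. Given a closed proof net $\someps$, \cref{thm:seq} supplies some derivation $\someproof$ in \MLLhmix\ with $\someps\giso\deseq(\someproof)$. Since $\deseq(\someproof)$ is isomorphic to $\someps$, it has no premise, hence is closed, and the observation shows that $\someproof$ is hypothesis free.

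The only point to justify is the observation itself. I would prove it by a straightforward induction on $\someproof$, following the clauses of the definition of $\deseq$ in \cref{sec:desequentialization}. The key fact is that the premises of $\deseq(\someproof)$, i.e.\ the edges without source, are exactly the edges created by $(\hyp)$ rules.

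For the induction: the $(\hyp)$ clause creates an edge with no endpoint, hence a premise. The $(\ax)$ clause creates edges whose source is the new $\ax$-vertex. The $\tensor$ and $\parr$ clauses create a conclusion edge whose source is the new vertex. None of the clauses ever removes a source: the $\cut$, $\tensor$ and $\parr$ clauses only set targets, and the $\mix_2$ clause takes a disjoint union. Hence an edge lacks a source precisely when it comes from a $(\hyp)$ rule. This correspondence is also an instance of the bijection between formulas of $\someproof$ and edges of $\deseq(\someproof)$ noted after the definition.

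There is no real obstacle. The only care needed is to note that, in the proof of \cref{thm:seq}, the $\parr$, $\tensor$ and $\ax$ cases use substitution for a $(\hyp)$ rule, and so temporarily introduce open sub-derivations. This is harmless, because we only inspect the final derivation, and the observation applies to it directly.
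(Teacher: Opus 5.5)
Your proposal is correct and follows the paper's route: the paper derives the theorem as an immediate consequence of \cref{thm:seq}, using the observation that $\deseq(\someproof)$ is closed if and only if $\someproof$ contains no $(\hyp)$ rule, and your induction on the clauses of $\deseq$ just spells out that observation, which the paper leaves unproved. The paper also remarks that choosing terminal splitting vertices avoids hypotheses throughout the sequentialization process, but the statement does not need this.
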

Note that, by the results of \cref{sec:seq:yeo},
one can impose the splitting vertex obtained by \cref{lem:splittingnode} in the proof of \cref{thm:seq}
to be terminal.
Thus, we can choose to consider ($\hyp$)-free derivations and closed proof structures only,
not only in the statement, but all along the process of sequentialization.
Indeed, if $\someps$ is closed and $\somevertex$ is a splitting terminal vertex:
the components associated with the premises of $\somevertex$ are also closed;
and those associated with its conclusions (if any) are reduced to a single edge, so there is no need to perform any substitution.
Following this approach, one can adapt the proof of \cref{thm:seq} to obtain a proof of 
\cref{th:seq:closed} without ever considering hypotheses in derivations
nor premises in proof structures
-- this is possibly one reason why some authors (\cite{ll}, among others)
favor sequentialization along terminal splitting vertices.

\medskip

Another important sub-system is obtained by removing the $\mix$ rules.
By \cref{lem:deseq:mix}, for any \(\mix\)-free proof \(\someproof\),
$\deseq(\someproof)$ is a connected proof net.
Conversely, given some connected proof net $\someps$,
our proof of \cref{thm:seq} (following any strategy for finding splitting vertices)
yields a \mixretore-normal proof \(\someproof\) with $\someps\giso\deseq(\someproof)$.
Applying \cref{lem:deseq:mix} again, we obtain $\#\mix_2 = \#\mix_0$,
where $\#\mix_i$ is the number of ($\mix_i$) rules in $\someproof$;
and since \(\someproof\) is in \mixretore-normal form,
it must be \(\mix\)-free.
We obtain:
\begin{thm}[Connected sequentialization]
  Given a connected proof net $\someps$,
  there exists a $\mix$-free derivation $\someproof$ such that $\someps\giso\deseq(\someproof)$.
\end{thm}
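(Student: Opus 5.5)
The argument is essentially already contained in the paragraph preceding the statement; I would write it out as a short chain of the earlier results.

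Let $\someps$ be a connected proof net, so $\cdeg(\someps)=1$. First, \cref{thm:seq} gives a derivation $\someproof$ with $\someps\giso\deseq(\someproof)$. I would insist on the strengthened form actually proved there, namely that $\someproof$ is in \mixretore\ normal form. If one prefers not to rely on that detail of the induction, take any sequentialization and normalize it for the \mixretore\ rewriting, which terminates. By \cref{lem:deseq:mix} this does not change the desequentialization up to isomorphism, so the normal form is still a sequentialization of $\someps$.

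Next, apply the second part of \cref{lem:deseq:mix} to this normal $\someproof$: $1=\cdeg(\deseq(\someproof))=1+\#\mix_2-\#\mix_0$, hence $\#\mix_2=\#\mix_0$. By \cref{lem:mixretorenf}, either $\someproof$ is reduced to a single $(\mix_0)$ rule, or it contains no $(\mix_0)$ rule. The first case is impossible: its desequentialization is the empty graph, which has DR-connectedness degree $0$, whereas $\cdeg(\someps)=1$ (and a connected proof net is non-empty by definition). Hence $\#\mix_0=0$, so $\#\mix_2=0$ as well, and $\someproof$ is $\mix$-free.

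The only point needing care is the bookkeeping claim that $\cdeg$ is preserved under isomorphism and equals the formula of \cref{lem:deseq:mix}. Both were stated earlier, so I expect no real obstacle; the main subtlety is excluding the degenerate $(\mix_0)$ case, which is handled by connectedness as above.
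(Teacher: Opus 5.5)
Your proposal is correct and matches the paper's argument: take the \mixretore-normal sequentialization produced by the proof of \cref{thm:seq}, use the degree formula of \cref{lem:deseq:mix} to obtain $\#\mix_2=\#\mix_0$, and conclude with \cref{lem:mixretorenf}. Your separate exclusion of the lone $(\mix_0)$ derivation is fine, though the equation already rules it out, since that case would give $1+0-1=0\neq 1$.
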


Of course, one can combine both constraints and recover the original 
result of sequentialization of closed, connected proof nets into proofs of \MLL\
(plain multiplicative linear logic, without hypotheses nor \(\mix\)-rules).

Beyond the fact that it was a requirement of the original Danos-Regnier criterion,
DR-connectedness induces a somehow refined theory:
for instance, it plays a crucial rôle in the study of various notions of 
sub-proof structures, such as kingdoms and empires~\cite{kingemp};
and some classical proofs of sequentialization rely crucially on DR-connectedness
(\eg, the proofs based on empires~\cite{quantif2}).
We thus find interesting to study more in detail what kind of additional
results on DR-connectedness can be derived from our approach:
this will be done in \cref{sec:more_connectedness}.

\subsection{Generalizations}\label{sec:seq:generalizations}

The proof of \cref{thm:seq} in \cref{sec:seq:splitting} is presented in direct relation with the logical system \MLLhmix\
and types of edges allow to build formulas in proofs.
However, the core of this sequentialization process is the existence of splitting vertices which does not rely
on typing at all:
\cref{def:splitting} does not mention typing,
and only the shape of components in \cref{fig:splitting} is relevant;
DR-correctness is defined without reference to typing;
and the proofs of \cref{lem:splittingnode} that we developed in
\cref{sec:seq:dummies,sec:seq:yeo} never mention typing constraints.
So, one could define \definitive{untyped proof structures} exactly as in
\cref{sec:mllpn:ps}, just forgetting about typing labels and constraints,
and still obtain \cref{lem:splittingnode}.

\begin{figure}
  \[
  \begin{tikzpicture}[
    every node/.style={circle,minimum size=6mm,inner sep=0,draw=black,line width=0.8pt},
    every edge/.style={draw=black,-,line width=0.8pt},
    baseline=(v),
  ]
          \node (v) at (0,1) {$\ax$};
          \coordinate (1) at (-1.5,0) {};
          \coordinate (2) at (-.5,0) {};
          \coordinate (n-3) at (.3,0) {};
          \coordinate (n-2) at (.5,0) {};
          \coordinate (n-1) at (.7,0) {};
          \coordinate (n) at (1.5,0) {};
          \path (v) edge (1);
          \path (v) edge (2);
          \path (v) edge[dotted] (n-1);
          \path (v) edge[dotted] (n-2);
          \path (v) edge[dotted] (n-3);
          \path (v) edge (n);
          \node[below=-2mm,subps] at (1) {$\someps_1$};
          \node[below=-2mm,subps] at (2) {$\someps_2$};
          \node[below=-2mm,subps,draw=none] at (n-2) {$\ldots$};
          \node[below=-2mm,subps] at (n) {$\someps_n$};
  \end{tikzpicture}
  \quad
  \begin{tikzpicture}[
    every node/.style={circle,minimum size=6mm,inner sep=0,draw=black,line width=0.8pt},
    every edge/.style={draw=black,-,line width=0.8pt},
    baseline=(v),
  ]
          \coordinate (a) at (-1,1) {};
          \coordinate (b) at (1,1) {};
          \node (v) at (0,0) {$\cut$};
          \path (a) edge (v);
          \path (b) edge (v);
          \node[above=-2mm,subps] at (a) {$\someps_1$};
          \node[above=-2mm,subps] at (b) {$\someps_2$};
  \end{tikzpicture}
  \qquad
  \begin{tikzpicture}[
    every node/.style={circle,minimum size=6mm,inner sep=0,draw=black,line width=0.8pt},
    every edge/.style={draw=black,-,line width=0.8pt},
    baseline=(v),
  ]
          \coordinate (1) at (-1.5,1) {};
          \coordinate (2) at (-.5,1) {};
          \coordinate (n-3) at (.3,1) {};
          \coordinate (n-2) at (.5,1) {};
          \coordinate (n-1) at (.7,1) {};
          \coordinate (n) at (1.5,1) {};
          \node (v) at (0,0) {$\tensor$};
          \coordinate (0) at (0,-1) {};
          \path (v) edge (0);
          \path (v) edge (1);
          \path (v) edge (2);
          \path (v) edge[dotted] (n-1);
          \path (v) edge[dotted] (n-2);
          \path (v) edge[dotted] (n-3);
          \path (v) edge (n);
          \node[above=-2mm,subps] at (1) {$\someps_1$};
          \node[above=-2mm,subps] at (2) {$\someps_2$};
          \node[above=-2mm,subps,draw=none] at (n-2) {$\ldots$};
          \node[above=-2mm,subps] at (n) {$\someps_n$};
          \node[below=-2mm,subps] at (0) {$\someps_0$};
  \end{tikzpicture}
  \qquad
  \begin{tikzpicture}[
    every node/.style={circle,minimum size=6mm,inner sep=0,draw=black,line width=0.8pt},
    every edge/.style={draw=black,-,line width=0.8pt},
    baseline=(v),
  ]
          \coordinate (1) at (-1.5,1) {};
          \coordinate (2) at (-.5,1) {};
          \coordinate (h) at (0,1) {};
          \coordinate (n-3) at (.3,1) {};
          \coordinate (n-2) at (.5,1) {};
          \coordinate (n-1) at (.7,1) {};
          \coordinate (n) at (1.5,1) {};
          \coordinate (0) at (0,-1) {};
          \node (v) at (0,0) {$\parr$};
          \path (v) edge (0);
          \path (v) edge (1);
          \path (v) edge (2);
          \path (v) edge[dotted] (n-1);
          \path (v) edge[dotted] (n-2);
          \path (v) edge[dotted] (n-3);
          \path (v) edge (n);
          \node[above=-3mm,subps,minimum width=4cm,minimum height=1.2cm] at (h) {$\someps_1$};
          \node[below=-2mm,subps] at (0) {$\someps_0$};
  \end{tikzpicture}
  \]
  \caption{Shape of the connected component of each kind of splitting vertex, with arbitrary arities}
  \label{fig:splitting:nary}
\end{figure}
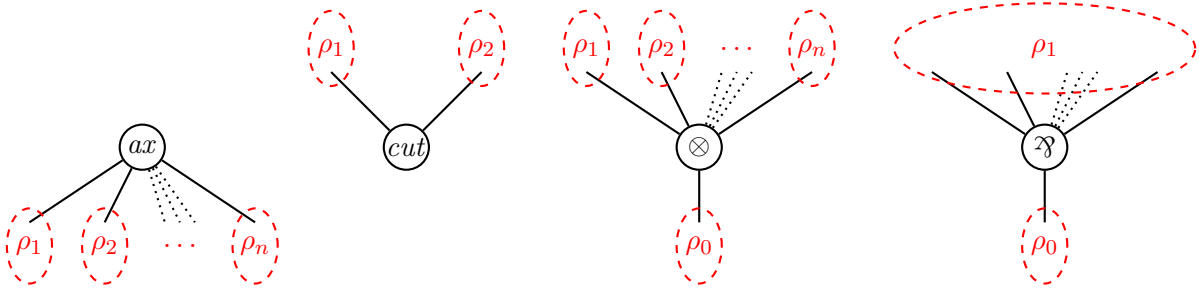

The arities of vertices can moreover be relaxed to allow for 
\(\ax\)-vertices with any number of conclusions,
and \(\parr\)- and \(\tensor\)-vertices with any number of premises,
to obtain \definitive{generalized (untyped) proof structures}.
The notion of splitting vertex is unchanged:
a splitting \(\ax\)-, \(\cut\), or \(\tensor\)-vertex is one that is not in any cycle;
and a splitting \(\parr\)-vertex is one whose conclusion edge is not in any cycle
(see \cref{fig:splitting:nary}).
The notion of well-colored proof structure is then naturally adapted:
a generalized proof structure is well-colored if no two adjacent edges of 
a non-\(\parr\)-vertex are assigned the same color,
and all the premises of each \(\parr\)-vertex are assigned the same color,
different from the one assigned to its conclusion
-- this might require an arbitrarily large number of colors,
depending on the arity of vertices.
Again, \cref{lem:splittingnode} stands for generalized proof structures:
any of the proofs we have provided in \cref{sec:seq:dummies,sec:seq:yeo}
applies \emph{verbatim}.

It follows that our approach to sequentialization can be modularly extended to
richer systems:
one constructs an open derivation from a proof structure,
by reasoning inductively on the number of edges and vertices,
and each splitting vertex provides a decomposition
of the proof structure, compatible with the application of a rule
-- possibly followed by substitutions.
Dealing with multiplicative units is straightforward,
as long as we allow for $\mix$-rules:\footnote{
  The theory of proof nets for multiplicative linear logic with units and without \(\mix\)
  is notably difficult~\cite{mllpnpspace}, and beyond the scope of our discussion.
  Nonetheless, our approach also applies in a framework with a jump edge for each $\bot$-vertex~\cite{pn,hughes*cat,hughesunit} -- with cusps made exactly by the pairs of non-jump premises of $\parr$-vertices.
}
geometrically, units can be treated as unary axioms
(or, equivalently, nullary \(\parr\)- or \(\tensor\)-vertices), which are always splitting vertices.
Similarly, we obtain the sequentialization of multiplicative exponential proof nets,
in presence of structural rules (weakening, contraction, dereliction for the $\wn$-modality)
and promotion (introducing the \(\oc\)-modality):
contraction (resp.\ dereliction; weakening) is treated as a binary (resp.\ unary; nullary) $\parr$-vertex;
and promotion boxes behave like generalized axioms for the purpose of sequentialization
at top-level (the content of each box being sequentialized inductively).

On the other hand, dealing with additive connectives in the spirit of~\cite{mallpnlong}
requires a generalization of our approach relying on \cref{th:ParamLocalYeo}.
This will be treated in \cref{sec:yeo_mall,section:additives,sec:seqmall}.

\section{More on Connectedness for Proof Nets}\label{sec:more_connectedness}

In the present section, we discuss further the notion of connected proof net
along two independent directions.

In \cref{sec:more_connectedness:almost},
we observe that, in a connected proof net, each \(\parr\)-vertex admits a \emph{proper cycle} --
that is (almost) a cusp-free cycle starting and ending with its premise edges --
and we show that a kind of converse property holds:
if a proof net is such that every \(\parr\)-vertex has a proper cycle,
then it is a disjoint union of connected proof nets.
This naturally introduces a variant of connectedness which fits well our approach for sequentialization: \emph{almost connectedness}.

And in \cref{sec:more_connectedness:kingdom}, we compare the order relation \(\orderyeo\) with the order induced by the notion of kingdoms in connected proof nets, which is the usual order in the literature when considering sequentialization.

\subsection{Almost Connected Proof Nets}
\label{sec:more_connectedness:almost}


In the spirit of \cref{lem:localglobal} and \cref{rem:correcgraph},
let us first give an alternative definition of connected DR-correct proof structures.

\begin{defi}
  A DR-correct proof structure is \definitive{cf-connected}
  if it is non-empty and for every two edges \(\someedge\not=\otheredge\),
  there exists a cusp-free simple path from an adjacent vertex of \(\someedge\)
  to an adjacent vertex of \(\otheredge\).
\end{defi}

\begin{lem}\label{lem:cfconnectedness}
  A proof net is connected if and only if it is cf-connected.
\end{lem}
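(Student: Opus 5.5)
We prove both implications by relating cusp-free simple paths to the connected components of correctness graphs (\cref{rem:correcgraph}), with \cref{lem:localglobal} identifying cusp-free simple paths with switching paths in a well-colored proof net.

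\emph{Connected implies cf-connected.} Fix two distinct edges $\someedge$ and $\otheredge$. Choose a correctness graph in which both $\someedge$ and $\otheredge$ keep at least one endpoint; this needs care. If $\someedge$ is a premise of a $\parr$-vertex, we keep $\someedge$ attached, and likewise for $\otheredge$. The only conflict is when $\someedge$ and $\otheredge$ are the two premises of the same $\parr$-vertex $\somevertex$. In that case we use the other endpoints: each premise of $\somevertex$ has a source, or else it is a premise of the whole structure. In the degenerate subcase where a premise has no source, its only endpoint is $\somevertex$. Then both edges are adjacent to $\somevertex$ and the empty path at $\somevertex$ is a cusp-free simple path between adjacent vertices, so this subcase is trivial. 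Hence we may assume each edge keeps an endpoint $x$ (for $\someedge$) and $y$ (for $\otheredge$) in the correctness graph. Since $\cdeg=1$, that graph is connected, so it contains a simple path from $x$ to $y$. This path uses at most one premise of each $\parr$-vertex, so it is switching, hence cusp-free. Non-emptiness is immediate.

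\emph{Cf-connected implies connected.} Suppose some correctness graph has at least two components, and pick edges $\someedge$ and $\otheredge$ lying in different components. We may take them among edges kept with an endpoint, or, if needed, one of them may be an isolated, disconnected premise edge viewed as its own component. Cf-connectedness gives a cusp-free simple path $\somepath$ between an endpoint of $\someedge$ and an endpoint of $\otheredge$. The main obstacle is that $\somepath$ may use the premise of a $\parr$-vertex that the chosen correctness graph disconnected, so $\somepath$ need not live in that graph. We resolve it by switching choice: since $\somepath$ is switching, it uses at most one premise of each $\parr$-vertex. Define a new correctness graph that keeps exactly the premises used by $\somepath$, choosing arbitrarily elsewhere. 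In this graph $\somepath$ survives, so the relevant endpoints are connected. Because $\cdeg$ is independent of the correctness graph, we are reduced to showing that some correctness graph is connected.

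To achieve this, we avoid a fixed bad graph and argue by counting instead. In a correctness graph $G$, acyclicity gives that the number of components equals $|\vertexset| - |\edgeset_G|$, counted appropriately with partial edges. For $\cdeg=1$, it suffices that the correctness graph $G$ built from a path between some chosen pair of edges is connected. If $G$ had two components, we would take edges $\someedge$ and $\otheredge$ in different components of $G$ and apply the switching-choice argument to get a graph $G'$ in which they are connected. Subtlety: $G'$ may separate other pairs. So the delicate point is to make a global choice. We plan to handle it by induction on the number of $\parr$-vertices whose premise choice differs, using the invariance of $\cdeg$, which follows from acyclicity, to transfer connectivity, and possibly to argue instead via the edge count: a cusp-free simple path between premises $\someedge_1$ and $\someedge_2$ of a $\parr$-vertex shows that toggling its switch preserves the number of components. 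This step is the one we expect to require the most care.
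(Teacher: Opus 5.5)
Your forward direction is correct and is the paper's argument. In a connected correctness graph, two distinct edges have endpoints joined by a simple path, and that path is switching, hence cusp-free. Your case analysis is not even needed: a source-less premise detached from its $\parr$-vertex would form a component on its own.

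The converse, however, is not proved. Your last paragraph is only a plan, and the one concrete step you give fails. Once you choose a correctness graph keeping the premises used by $p$, the endpoints of $p$ are connected, but $e$ need not be attached to the endpoint of $p$ in that graph. Suppose that endpoint is a $\parr$-vertex of which $e$ is a premise, while $p$ leaves it through the other premise (or $p$ is empty there and $f$ is that other premise). Then no correctness graph keeps both attached. The counting/toggling idea cannot help either: by acyclicity all correctness graphs have the same number of components anyway, and the whole content of the claim is that this number is $1$.

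In fact this gap cannot be filled, because the converse is false for cf-connectedness as defined. Take $\ax$-vertices $x_1$ (conclusions $X\orth,X$) and $x_2$ (conclusions $Y,Y\orth$), a $\parr$-vertex $v$ on $X\orth,Y$, and a $\parr$-vertex $w$ on $X,Y\orth$. This is the closed net of $\vdash X\orth\parr Y, X\parr Y\orth$ coming from a proof with one $\mix_2$.
\begin{itemize}
\item Its only cycle has cusps at $v$ and $w$, so it is a proof net.
\item Each correctness graph has four vertices and two total edges, so $\cdeg=2$.
\item Every edge has $v$ or $w$ as an endpoint, and each of $v,w$ is joined to every vertex by a cusp-free path of length at most $2$ (e.g.\ $v,X\orth,x_1,X,w$). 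So the net is cf-connected.
\end{itemize}
For $e=X$ and $f=Y\orth$, every witness is either empty at $w$ or reaches $w$ through the premise other than the one that must stay attached. This is exactly where your switching choice breaks, and indeed these two edges lie in different components of every correctness graph.

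The paper's converse has the same hole. It inducts by removing a splitting $\parr$-vertex (\cref{lem:splittingpar}) and gluing correctness graphs of the two sub-nets $\rho_0,\rho_1$. But it applies the induction hypothesis to $\rho_0,\rho_1$ without showing they are cf-connected. Here, removing $v$ leaves dangling premises $X\orth,Y$ whose sources $x_1,x_2$ are joined only through the cusp at $w$.

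The statement needs a stronger hypothesis, for instance that any two distinct edges are connected in \emph{some} correctness graph. That property is inherited by $\rho_0$ and $\rho_1$, since a simple path in a correctness graph cannot cross the conclusion of the splitting $\parr$-vertex (a bridge) and come back. With it, the paper's induction goes through, and this, not a global switching choice, is the argument your converse would need.
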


\begin{proof}
  Fix a proof net \(\someps\).
  The underlying partial graph is non-empty in both cases.

  If \(\someps\) is connected,
  it is sufficient to consider one of its correctness graphs:
  every pair of edges is connected in this graph, and a simple path in a correctness graph is cusp-free.

  Conversely, assuming \(\someps\) is cf-connected, we show it is connected by induction on its number of vertices.
  If \(\someps\) is \(\parr\)-free, then its unique correctness graph is just the underlying graph of \(\someps\):
  by assumption, every pair of edges is connected by a (cusp-free) path;
  and since every vertex of a proof structure is adjacent to at least one edge,
  it follows that the correctness graph is connected, hence \(\someps\) is connected.
  If \(\someps\) has a \(\parr\)-vertex then,
  by \cref{lem:splittingpar} (or \cref{cor:seq:yeo:sections}),
  it has a splitting \(\parr\)-vertex \(\somevertex\):
  by induction hypothesis, the connected components \(\someps_0\) and \(\someps_1\)
  of the proof net obtained by removing \(\somevertex\) (see \cref{fig:splitting})
  are connected proof nets;
  each correctness graph of \(\someps\) is then connected, because it is obtained
  by joining a correctness graph of \(\someps_0\) with a correctness graph of \(\someps_1\)
  via \(\somevertex\)
  (which is adjacent to its conclusion, and to the premise selected in the correctness graph of \(\someps\)).
\end{proof}

A \definitive{proper cycle} of a $\parr$-vertex $\somevertex$ of a proof structure is 
a cycle of source \(\somevertex\), with a cusp at \(\somevertex\), but no internal cusp:
it is in particular of the form
$\concatpath{(\somevertex,\someedge_1,\othervertex_1)}{\concatpath{\somepath}{(\othervertex_2,\someedge_2,\somevertex)}}$
where $\someedge_1$ and $\someedge_2$ are the two premises of $\somevertex$ and $\somepath$ is a cusp-free path.

Recall that we defined a strict partial order $\orderparr$ on the $\parr$-vertices of a proof net
(\cref{def:order,lem:order_is_order}).

\begin{lem}\label{lem:clasping_order}
In a well-colored proof net, of local coloring $\coloring$, let $\othervertex$ be a vertex in a proper cycle of a $\parr$-vertex $\somevertex\neq\othervertex$.
Then, for any color $\somecolor$, $(\othervertex,\somecolor)\orderyeo(\somevertex,\coloring(\someedge,\somevertex))$ with $\someedge$ any premise of $\somevertex$.
In particular, if $\othervertex$ is a $\parr$-vertex then $\othervertex\orderparr\somevertex$.
\end{lem}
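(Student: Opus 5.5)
Write the proper cycle as $\concatpath{(\somevertex,\someedge_1,\othervertex_1)}{\concatpath{\somepath}{(\othervertex_2,\someedge_2,\somevertex)}}$, with $\somepath$ cusp-free and $\someedge_1,\someedge_2$ the premises of $\somevertex$. Let $\othercolor\eqdef\coloring(\someedge_1,\somevertex)=\coloring(\someedge_2,\somevertex)$. Since $\othervertex\neq\somevertex$, $\othervertex$ occurs exactly once in $\somepath$. Cutting the cycle at $\othervertex$ gives two simple open cusp-free paths from $\othervertex$ to $\somevertex$: the suffix $\otherpath_2$ ending with $\someedge_2$, and the reversed prefix $\otherpath_1$ ending with $\someedge_1$. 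Both have ending color $\othercolor$. At $\othervertex$ the cycle has no cusp, so their starting colors differ. By the argument of \cref{lem:reversing_trick}, for any color $\somecolor$ at least one of them, call it $\otherpath$, has starting color different from $\somecolor$. Hence $(\othervertex,\somecolor)\sobfnbpath[\otherpath](\somevertex,\othercolor)$.

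To obtain $\orderyeo$, take any $(\somevertex,\othercolor)\sobfnbpath[\otherotherpath](\otherothervertex,\otherothercolor)$. By definition $\otherotherpath$ does not start with color $\othercolor$, so it starts with the conclusion edge $\otheredge$ of $\somevertex$. We must show $\otherothervertex\notin\otherpath$. Suppose otherwise, and take the shortest non-empty prefix $\otherotherpath'$ of $\otherotherpath$ whose target $\otherothervertex'$ lies on $\otherpath$. This is still cusp-free, simple and open, and it starts with $\otheredge$. Then
\[\concatpath{\otherotherpath'}{\subpath{\otherpath}{\otherothervertex'}{\somevertex}}\]
is a cycle through $\somevertex$, using $\otheredge$ and one premise, by \cref{lem:concatsimplepaths}. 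It is cusp-free everywhere except possibly at $\otherothervertex'$.

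The main obstacle is this junction point $\otherothervertex'$. The fix is to use the other side of the proper cycle. Since $\otherothervertex'\neq\somevertex$, it lies on $\somepath$, so it is a vertex of both branches $\otherpath_1$ and $\otherpath_2$. Cusp-freeness of the proper cycle at $\otherothervertex'$ means its two cycle edges there have different colors, so at most one of them can share the ending color of $\otherotherpath'$. We therefore close the cycle along whichever direction, toward $\someedge_1$ or toward $\someedge_2$, leaves $\otherothervertex'$ with a color different from the ending color of $\otherotherpath'$. The chosen sub-path of the proper cycle from $\otherothervertex'$ to $\somevertex$ must avoid the interior of $\otherotherpath'$. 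This holds because $\otherotherpath'$ meets the proper cycle only at $\somevertex$ and $\otherothervertex'$, so we should define $\otherotherpath'$ as the shortest prefix hitting the \emph{whole} proper cycle rather than just $\otherpath$. The cycle so obtained has no cusp at $\somevertex$ (conclusion versus premise), none at $\otherothervertex'$, and none elsewhere. It is a switching cycle, contradicting DR-correctness.

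The same argument shows that $\otherotherpath$ cannot meet the proper cycle at all, so in particular $\otherothervertex\notin\otherpath$, which gives $(\othervertex,\somecolor)\orderyeo(\somevertex,\othercolor)$. For the last claim, suppose $\othervertex$ is a $\parr$-vertex. Choose $\somecolor$ to be the color of its premises. Then $\otherpath$ starts with the conclusion of $\othervertex$ and ends with a premise of $\somevertex$, so it is a cusp-free $\parr$-path. Any simple open cusp-free $\parr$-path from $\somevertex$ starts with the conclusion of $\somevertex$, so the same contradiction shows it cannot end on $\otherpath$. Hence $\othervertex\orderparr\somevertex$.
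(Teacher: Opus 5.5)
Your proof is correct. The paper reaches the same conclusion more briefly. It re-roots the proper cycle at $\othervertex$, starting with an edge whose color at $\othervertex$ is not $\somecolor$. It then notes that this cycle has exactly one cusp, at $\somevertex$, so it lies in $\mincycles{\othervertex}$ because a proof net has no cusp-free cycle. The relation $\orderyeo$ then follows directly from Cusp Cycling (\cref{lem:bjumping_Yeo}).

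You instead unfold that lemma by hand. Your choice of the arc of the proper cycle that leaves $\otherothervertex'$ with a color different from the ending color of $\otherotherpath'$ is exactly the dichotomy between the two cycles built in \cref{lem:bjumping_Yeo_base}. You specialize it to a cycle with a single cusp, where both alternatives turn out to be switching cycles.

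Your route is self-contained and avoids the minimality bookkeeping behind $\mincycles{\othervertex}$. It also makes explicit two things: the slightly stronger fact that no cusp-free path leaving $\somevertex$ through its conclusion can meet the proper cycle, and the $\orderparr$ consequence, which the paper leaves implicit. The paper's route is shorter and presents the lemma as one more instance of the general cusp-cycling machinery.

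In a final write-up, define $\otherotherpath'$ from the start as the shortest prefix reaching the whole proper cycle. As you noticed mid-argument, a prefix that only reaches $\otherpath$ would not guarantee that the chosen arc avoids $\otherotherpath'$.
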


\begin{proof}
Let $\somecycle$ be a proper cycle of $\somevertex$ containing $\othervertex$.
Since $\somecycle$ has no cusp at $\othervertex$, it contains an edge $\otheredge$ incident to $\othervertex$ and such that $\coloring(\otheredge,\othervertex)\neq\somecolor$: let us call $\somecycle'$ the cycle using the same edges as $\somecycle$, with source $\othervertex$ and with starting edge $\otheredge$.
Notice that $\somecycle'$ has for unique cusp the one at $\somevertex$, and since we are in a proof net there is no cusp-free cycle (\cref{lem:localglobal}): thus $\somecycle'\in\mincycles{\othervertex}$.
Setting $\somepath$ the sub-path of $\somecycle'$ from $\othervertex$ to $\somevertex$, by \cref{lem:bjumping_Yeo} we conclude that $(\othervertex,\somecolor)\orderyeo[\somepath](\somevertex,\coloring(\someedge,\somevertex))$.
\end{proof}

\begin{rem}\label{rem:orders}
  The converse of \cref{lem:clasping_order} does not hold: in the proof net of \cref{fig:splitting_parr_no_max},
  $\othervertex\orderparr\somevertex$ but $\othervertex$ is not in a proper cycle of $\somevertex$.
\end{rem}

We say a DR-correct proof structure is \definitive{almost connected} if it is
non-empty and every $\parr$-vertex has a proper cycle.

\begin{rem}\label{rem:switching-nopiers}
  A path without internal cusp needs not be switching:
  indeed, a proper cycle of a $\parr$-vertex is not switching.
  The proof net of \cref{fig:ps_ex,fig:wellcolored} is connected
  (and almost connected) in particular because the unique proper cycle of the
  $\parr$-vertex has no internal cusp.
\end{rem}

\begin{lem}\label{lem:connectedIsAlmostConnected}
Every connected proof net is almost connected.
\end{lem}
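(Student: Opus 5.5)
To prove that a connected proof net \(\someps\) is almost connected, I would fix a \(\parr\)-vertex \(\somevertex\) with premises \(\someedge_1\) and \(\someedge_2\), and build a proper cycle of \(\somevertex\) from a correctness graph. Non-emptiness is immediate, since connectedness means \(\cdeg(\someps)=1\), which forces \(\someps\) to be non-empty.

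First, choose a correctness graph \(\G\) of \(\someps\) in which \(\someedge_2\) is disconnected from \(\somevertex\), so that only \(\someedge_1\) remains attached. By connectedness \(\G\) is connected. Since \(\G\) is acyclic (\cref{rem:correcgraph}), it is a tree in the appropriate partial-graph sense.

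Next, look at \(\someedge_2\) and its other endpoint. The edge \(\someedge_2\) has a source in \(\someps\): a premise of a \(\parr\)-vertex is the conclusion of some vertex \(\othervertex_2\), unless it is a premise of the proof structure. I must handle that degenerate case. If \(\someedge_2\) has no source, then in \(\G\) it is an edge with no endpoint, so \(\G\) is disconnected unless it consists only of that edge. That is impossible since \(\somevertex\) is also present, contradicting \(\cdeg(\someps)=1\). The same reasoning applies to \(\someedge_1\), using the symmetric correctness graph. So both premises have sources \(\othervertex_1\) and \(\othervertex_2\), and \(\othervertex_1\neq\somevertex\neq\othervertex_2\).

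Now \(\othervertex_2\) and \(\somevertex\) are connected in \(\G\). I would consider the unique simple path in \(\G\) from \(\somevertex\) to \(\othervertex_2\). It must start with \(\someedge_1\), because \(\somevertex\)'s other edges in \(\G\) are only its conclusion. That is not immediate, however: the path could instead leave \(\somevertex\) through its conclusion.

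This is where I expect the main obstacle. I would fix it by taking any simple path \(\somepath\) in \(\G\) from \(\othervertex_1\) to \(\othervertex_2\) that avoids \(\somevertex\). The question is whether such a path exists. I would argue via a second correctness graph \(\G'\) in which both \(\someedge_1\) and \(\someedge_2\) are disconnected from \(\somevertex\). This is no longer a correctness graph, but it is \(\G\) minus one edge-endpoint incidence, so it has at most two components. I expect a counting argument to settle it: in \(\G\) versus the graph where \(\someedge_1\) is switched instead, both are connected. So if \(\othervertex_1\) and \(\othervertex_2\) lay in different components of \(\G'\), one of those components would also contain \(\somevertex\), via its conclusion. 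Reattaching the premise on the other side would then leave a component of \(\G'\) disconnected in one of the two correctness graphs, contradicting connectedness.

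Given such a \(\somepath\) avoiding \(\somevertex\), the cycle \(\concatpath{(\somevertex,\someedge_1,\othervertex_1)}{\concatpath{\somepath}{(\othervertex_2,\someedge_2,\somevertex)}}\) is a cycle by \cref{lem:concatsimplepaths}. The path \(\somepath\) lies in a correctness graph, so it is switching and hence cusp-free (well-colored setting, \cref{lem:localglobal}). The junctions at \(\othervertex_1\) and \(\othervertex_2\) are not cusps: the concatenated path \(\someedge_1\cdot\somepath\) lives in \(\G\) and is therefore switching. For \(\someedge_2\) at \(\othervertex_2\), I would use the symmetric correctness graph and the fact that a cusp at a non-\(\parr\) vertex cannot occur, while a cusp at a \(\parr\)-vertex \(\othervertex_2\) would require \(\someedge_2\) to be a premise of \(\othervertex_2\), which it is not, being its conclusion. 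The only cusp is therefore at \(\somevertex\), and this gives a proper cycle.
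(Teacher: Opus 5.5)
Your proof is correct, but its key step differs from the paper's. The paper takes an arbitrary simple path from $\source(e_1)$ to $\source(e_2)$ in a single correctness graph and then uses DR-correctness to rule out the bad case. If that path used a premise of $v$, simplicity would force that premise to be its first or last edge. Rerouting through the other premise would then yield a cusp-free cycle, which is impossible.

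You instead detach both premises to get $\G'$, which has at most two components. You then use the connectedness of the two correctness graphs $\G$ and $\G^{*}$, which differ only at $v$, to force $u_1$ and $u_2$ into the same component of $\G'$. A simple $u_1$--$u_2$ path there automatically avoids $e_1$ and $e_2$, since each has a single endpoint in $\G'$. It also avoids $v$, because $v$ has degree at most one in $\G'$; you should state this explicitly.

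Your key step therefore rests only on the connectedness of two correctness graphs, not on the absence of switching cycles. It also handles explicitly the case of a premise of $v$ with no source, which the paper glosses over when it writes $\source(e_i)$. The paper's route is shorter and needs only one correctness graph.

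Two small fixes are needed.
\begin{enumerate}
\item In the counting step, the premise to reattach is the one whose source lies in the same component as $v$; reattaching the other premise would simply reconnect $\G'$. So ``the premise on the other side'' should be stated precisely.
\item When $u_1=u_2$ (both premises coming from one $\ax$-vertex), the path is empty and \cref{lem:concatsimplepaths} does not literally apply, though the cycle is then immediate.
\end{enumerate}
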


\begin{proof}
  Assume \(\someps\) is a connected proof net.
  We show that every $\parr$-vertex of \(\someps\) has a proper cycle.
  Consider the two premises $\someedge_1$ and $\someedge_2$ of a $\parr$-vertex $\somevertex$.
  Write \(\othervertex_1\eqdef\source(\someedge_1)\) and \(\othervertex_2\eqdef\source(\someedge_2)\).
  By DR-connectedness, there exists a simple path \(\somepath\) from
  \(\othervertex_1\) to \(\othervertex_2\) in a correctness graph of \(\someps\),
  which is a cusp-free path in \(\someps\).

  We first show that \(\somepath\) contains no premise of \(\somevertex\).
  Indeed if it contains one, being a switching path,
  it contains exactly one, say \(\someedge_1\).
  Since \(\somepath\) is simple, \(\someedge_1\) must be the first or last edge of \(\somepath\).
  So, \wolog, we can write \(\somepath=\concatpath{(\othervertex_1,\someedge_1,\somevertex)}{\otherpath}\),
  with \(\otherpath\) a simple cusp-free path from \(\somevertex\) to \(\othervertex_2\),
  not containing \(\someedge_2\);
  but then \(\concatpath{(\othervertex_2,\someedge_2,\somevertex)}{\otherpath}\)
  is a cusp-free cycle, contradicting DR-correctness.

  Hence, \(\somepath\) is a simple cusp-free path from \(\othervertex_1\) to \(\othervertex_2\),
  not containing  \(\someedge_1\) nor \(\someedge_2\),
  and we obtain a proper cycle of \(\somevertex\):
  \(\concatpath{(\somevertex,\someedge_1,\othervertex_1)}{\concatpath{\somepath}{(\othervertex_2,\someedge_2,\somevertex)}}\).
\end{proof}

We are now ready to show that, in an almost connected proof net $\someps$,
being connected by a path amounts to being connected by a cusp-free simple path.
That is, each connected component of the partial graph of $\someps$ defines a connected proof net.

\begin{lem}[Paths in an almost connected proof net]\label{proofnets:lemma:PathsInAlmost}
  Let $\someps$ be an almost connected proof net.
  For every path $\somepath$ in $\someps$,
  there exists a cusp-free simple path $\somepath'$ of $\someps$ having the same endpoints as $\somepath$.
\end{lem}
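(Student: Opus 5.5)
The plan is to argue by induction on the number of vertices of \(\someps\), splitting \(\someps\) at a splitting \(\parr\)-vertex, much as in the proof of \cref{lem:cfconnectedness}. Two easy reductions come first. Every path contains a simple path with the same endpoints, obtained by cutting out closed sub-paths, so we may assume \(\somepath\) is simple. We may also treat each connected component of \(\someps\) separately.

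\emph{Base case.} If \(\someps\) is \(\parr\)-free, it has no cusp-point. A simple path is then automatically cusp-free, and there is nothing to do.

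\emph{Splitting step.} Otherwise, \cref{lem:splittingpar} gives a splitting \(\parr\)-vertex \(\somevertex\). Call its premises \(\someedge_1,\someedge_2\) (with sources \(\othervertex_1,\othervertex_2\)) and its conclusion \(\otheredge\). Removing \(\somevertex\) yields \(\someps_1\), which contains \(\someedge_1,\someedge_2\), and \(\someps_0\), which contains \(\otheredge\), as in \cref{fig:splitting}. Both are proof nets with fewer vertices.

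\emph{Both pieces are almost connected.} Let \(\somewith\neq\somevertex\) be a \(\parr\)-vertex and take a proper cycle of \(\somewith\). This cycle cannot use \(\otheredge\), since \(\otheredge\) lies in no cycle. If it went through \(\somevertex\), it would therefore use both \(\someedge_1\) and \(\someedge_2\) consecutively at \(\somevertex\). That creates an internal cusp at \(\somevertex\neq\somewith\), which a proper cycle does not have. So every proper cycle avoids \(\somevertex\) and lies entirely in \(\someps_0\) or in \(\someps_1\). Hence both pieces are almost connected, as long as they are non-empty. The degenerate pieces reduced to edges must be handled separately, but they are trivial. The induction hypothesis therefore applies to \(\someps_0\) and \(\someps_1\).

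\emph{Case analysis on the simple path \(\somepath\).}
If \(\somepath\) avoids \(\somevertex\), it lies in one piece and the induction hypothesis gives the result directly.
If \(\somepath\) enters and leaves \(\somevertex\) through \(\someedge_1\) and \(\someedge_2\), then both endpoints lie in \(\someps_1\) and are connected there. Applying the induction hypothesis in \(\someps_1\) removes the passage through \(\somevertex\) altogether.
The main case is when \(\somepath\) crosses \(\somevertex\) through a premise, say \(\someedge_1\), and through \(\otheredge\), so it goes from some \(\otherothervertex\) in \(\someps_1\) to some \(\otherotherothervertex\) in \(\someps_0\). Let \(\otherotherotherothervertex\) be the target of \(\otheredge\). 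The induction hypothesis gives a cusp-free simple path \(\otherpath_1\) from \(\otherothervertex\) to \(\othervertex_1\) in \(\someps_1\), and a cusp-free simple path \(\otherpath_0\) from \(\otherotherotherothervertex\) to \(\otherotherothervertex\) in \(\someps_0\). I then take
\[\concatpath{\otherpath_1}{\concatpath{(\othervertex_1,\someedge_1,\somevertex,\otheredge,\otherotherotherothervertex)}{\otherpath_0}}.\]
It is simple by \cref{lem:concatsimplepathsprefix}, because the pieces are vertex-disjoint apart from the junctions. The remaining cases, where an endpoint of \(\somepath\) is \(\somevertex\) itself, are handled in the same way with one of the pieces truncated.

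\emph{Expected obstacle.} The delicate point is checking that no cusp appears at the junctions \(\othervertex_1\), \(\somevertex\) and \(\otherotherotherothervertex\). In a well-colored proof net, a cusp only arises between the two premises of a \(\parr\)-vertex. At \(\somevertex\), the premise and conclusion colors differ. At \(\othervertex_1\), the edge \(\someedge_1\) is a conclusion of \(\othervertex_1\), so it is never one of two premises there. Likewise, at \(\otherotherotherothervertex\) the other edge of \(\otherpath_0\) lies in \(\someps_0\) and cannot also be a premise paired with \(\otheredge\) in a way that creates a cusp, because the two premises of a \(\parr\)-vertex in \(\someps_0\) other than \(\otheredge\) both lie in \(\someps_0\). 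I expect this junction check, and the bookkeeping that the degenerate and empty pieces still fit the induction, to be the main care points.
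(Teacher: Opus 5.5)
Your strategy is sound: induction on the number of vertices, cutting at a splitting $\parr$-vertex $\somevertex$ given by \cref{lem:splittingpar}. Your argument that both pieces inherit almost connectedness is correct, and so are your junction checks at $\othervertex_1$ and at $\somevertex$.

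\textbf{The gap.} The step that fails as written is the one you flagged, the junction at $\otherotherotherothervertex=\target(\otheredge)$. Your justification does not rule out that $\otherotherotherothervertex$ is a $\parr$-vertex whose premises are $\otheredge$ and some edge $g$. That $g$ lies in $\someps_0$ is true but irrelevant. The induction hypothesis gives you no control over the first edge of $\otherpath_0$, which may well be $g$, and then $(\otheredge,\otherotherotherothervertex,g)$ is a cusp of your concatenated path.

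\textbf{The fix.} The missing observation uses almost connectedness once more. A proper cycle of a $\parr$-vertex traverses both of its premises, so every premise of a $\parr$-vertex lies on a cycle. Since $\otheredge$ lies on no cycle ($\somevertex$ is splitting), $\target(\otheredge)$ is undefined or is a $\tensor$- or $\cut$-vertex, so no cusp can occur there. The same observation settles the truncated case where $\somepath$ leaves $\somevertex$ along $\otheredge$.

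\textbf{A smaller omission.} When $\somepath$ goes through $\someedge_1,\somevertex,\someedge_2$, invoking the induction hypothesis in $\someps_1$ requires a path of $\someps_1$ between the two endpoints. It is supplied by the proper cycle of $\somevertex$: its part avoiding $\somevertex$ joins $\othervertex_1$ to $\othervertex_2$ inside $\someps_1$.

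\textbf{Comparison with the paper.} With this repair your proof is correct, but it follows a different route. The paper never decomposes the net. It takes the simple path and replaces each cusp at a $\parr$-vertex $\somepier_i$ by the cusp-free middle segment of a proper cycle of $\somepier_i$, then prunes closed sub-paths. Any new cusp lies on one of these segments, hence, by \cref{lem:clasping_order}, at a $\parr$-vertex strictly $\orderparr$-below $\somepier_i$. The proof then concludes by induction on the $\orderparr$-rank of the path, which is bounded thanks to \cref{lem:order_is_order}.

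Your argument instead replays the proof of \cref{lem:cfconnectedness}. It trades the rank machinery and \cref{lem:clasping_order} for \cref{lem:splittingpar} and a case analysis, and it essentially establishes \cref{rem:AlmConnectedIsSumConnected} directly. Running the same induction on correctness graphs, exactly as in \cref{lem:cfconnectedness}, would show that each connected component is a connected proof net, from which the statement follows at once. The paper's version is an explicit, local surgery on the given path that stays within the order-theoretic tools of that section; yours is closer in spirit to sequentialization. Both ultimately rest on cusp cycling.
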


\begin{proof}
  By \cref{lem:order_is_order},
  the length of sequences $(\somevertex_i)_{1\le i\le n}$
  such that each $\somevertex_i$ is a $\parr$-vertex
  and $\somevertex_{i+1}\orderparr\somevertex_{i}$ for $1\le i<n$ 
  is bounded by the number of $\parr$-vertices of $\someps$.
  Let us call rank of $\somevertex$
  -- and denote by $cr(\somevertex)$ --
  the greatest number $n$ of vertices of such a sequence
  $\somevertex_n\orderparr \cdots \orderparr \somevertex_1=\somevertex$ with last element $\somevertex$.
  Note that $cr(\somevertex)\ge 1$ by definition.
  If $\somepath$ is a path in $\someps$,
  the rank $cr(\somepath)$ of $\somepath$ is the maximum of the ranks of the vertices of the internal cusps of $\somepath$
  -- we set $cr(\somepath)=0$ if $\somepath$ has no internal cusp.

Now, we move to the proof of the statement.
We can assume \wolog\ that $\somepath$ is an open simple path, the result being obvious if $\somepath$ is closed.
We reason by induction on $cr(\somepath)$.
Let $l$ be the number of cusps of $\somepath$, at vertices $\somepier_{1},\ldots,\somepier_{l}$, and call $\someedge_i,\otheredge_{i}$ the two premises of $\somepier_{i}$, of respective sources $\otherothervertex_i,\otherotherothervertex_i$, such that $(\otherothervertex_i,\someedge_i,\somepier_i,\otheredge_i,\otherotherothervertex_i)$ is a sub-path of $\somepath$.

We are first going to build, for every $1\leq i\leq l$, a cusp-free simple path $\gamma_i$ with source $\otherothervertex_i$ and target $\otherotherothervertex_i$.
Since $\someps$ is almost connected, we can select, for every $1\leq i\leq l$, a proper cycle $\pi_{i}$ of $\somepier_{i}$: \wolog\ (otherwise consider the reverse of $\pi_i$), we can write $\pi_i=\concatpath{\concatpath{(\somepier_i,\someedge_i,\otherothervertex_i)}{\gamma_i}}{(\otherotherothervertex_i,\otheredge_i,\somepier_i)}$.
As $\pi_{i}$ is a proper cycle, the path $\gamma_{i}$ is cusp-free and we have $\source(\gamma_i)=\otherothervertex_i$ and $\target(\gamma_i)=\otherotherothervertex_i$.

  We have $\somepath=\concatpath{\concatpath{\somepath_{1}}{(\otherothervertex_1,\someedge_1,\somepier_1,\otheredge_1,\otherotherothervertex_1)}}{\somepath_{2}}\cdots\concatpath{\concatpath{\somepath_{i}}{(\otherothervertex_i,\someedge_i,\somepier_i,\otheredge_i,\otherotherothervertex_i)}}{\somepath_{i+1}}\cdots\concatpath{\concatpath{\somepath_{l}}{(\otherothervertex_l,\someedge_l,\somepier_l,\otheredge_l,\otherotherothervertex_l)}}{\somepath_{l+1}}$
  for some cusp-free simple paths $\somepath_{1},\cdots,\somepath_{l+1}$, and we also know that the simple paths $\gamma_{1},\ldots,\gamma_{l}$ are all cusp-free.
Then the path $\otherpath\eqdef\concatpath{\concatpath{\somepath_{1}}{\gamma_1}}{\somepath_{2}}\ldots\concatpath{\concatpath{\somepath_{i}}{\gamma_i}}{\somepath_{i+1}}\ldots\concatpath{\concatpath{\somepath_{l}}{\gamma_l}}{\somepath_{l+1}}$ has obviously the same endpoints as $\somepath$, and the only possible cusps of $\otherpath$ are at the vertices $\otherothervertex_i=\target(\somepath_{i})=\source(\gamma_{i})$ and at the vertices $\otherotherothervertex_i=\target(\gamma_i)=\source(\somepath_{i+1})$ for $1\leq i\leq l$.
Nonetheless, $\otherpath$ may not be simple.

Define $\otherpath'$ as the simple path obtained from $\otherpath$ by recursively removing one of its non-empty closed sub-paths, until there are none (hence until obtaining a simple path).
Note that $\otherpath'$ has the same endpoints as $\otherpath$, hence as $\somepath$, for removing a closed sub-path preserves the endpoints.
We claim that $cr(\otherpath')<cr(\somepath)$.
This is because a vertex $\otherotherotherothervertex$ of a cusp of $\otherpath'$ must belong to some $\gamma_{i}$ for $1\leq i\leq l$, since either $\otherotherotherothervertex$ is a cusp of $\otherpath$ (\ie\ some $\otherothervertex_i$ or $\otherotherothervertex_i$), or it is a cusp obtained by removing a closed sub-path during the construction of $\otherpath'$, and the only vertices appearing several times in $\otherpath$ are in some $\gamma_{i}$ (all vertices in the $\somepath_j$ are distinct as $\somepath$ is simple and open).
Hence, $\otherotherotherothervertex$ belongs to a proper cycle of $\somepier_i$, and by \cref{lem:clasping_order} we have $\otherotherotherothervertex\orderparr\somepier_{i}$, then $cr(\otherotherotherothervertex)<cr(\somepier_{i})$ and thus $cr(\otherpath')<cr(\somepath)$.
We then apply the induction hypothesis to the path $\otherpath'$ and conclude the existence of a cusp-free simple path $\somepath'$ of $\someps$ having the same endpoints as $\otherpath'$, and thus the same endpoints as $\somepath$.
\end{proof}

\begin{cor}[Decomposition of almost connected proof nets]\label{rem:AlmConnectedIsSumConnected}
An almost connected proof net is a non-empty disjoint union of connected proof nets.
\end{cor}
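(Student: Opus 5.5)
The plan is to decompose $\someps$ into its connected components (in the sense of the underlying partial graph) and show that each one, viewed as a proof structure, is a connected proof net in the sense of DR-connectedness. Non-emptiness of the union is immediate, since almost connectedness includes non-emptiness. Each connected component $\someps_i$ is a proof structure: every vertex is taken with all its incident edges, so the arity constraints hold. It is DR-correct, because a switching cycle in $\someps_i$ is also a switching cycle in $\someps$. Finally, $\someps$ is the disjoint union of these components. So everything reduces to proving that each $\someps_i$ is connected.

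For that I would use \cref{lem:cfconnectedness} and prove that $\someps_i$ is cf-connected. Take two distinct edges $\someedge\neq\otheredge$ of $\someps_i$.

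\begin{itemize}
\item If one of them, say $\someedge$, has no endpoint, then it is alone in its component, so $\someps_i$ is reduced to $\someedge$ and there is no such pair to consider.
\item Otherwise, choose endpoints $\somevertex$ of $\someedge$ and $\othervertex$ of $\otheredge$. These are connected in the partial graph, because $\someedge$ and $\otheredge$ lie in the same component. So there is a path $\somepath$ of $\someps$ from $\somevertex$ to $\othervertex$.
\end{itemize}

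In the second case, \cref{proofnets:lemma:PathsInAlmost} gives a cusp-free simple path $\somepath'$ of $\someps$ with the same endpoints. This path stays inside $\someps_i$, since paths remain within a connected component. Cusps depend only on local colors, which are unchanged by restriction, so $\somepath'$ is still cusp-free in $\someps_i$. This is exactly the witness required by cf-connectedness.

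The one subtle point is to make sure that restricting to a component does not break the hypotheses of \cref{lem:cfconnectedness}. The well-coloring of $\someps$ restricts to a well-coloring of $\someps_i$, and cusp-freeness coincides with being switching by \cref{lem:localglobal}. I do not expect a real obstacle, since the substantive work is already done in \cref{proofnets:lemma:PathsInAlmost}.
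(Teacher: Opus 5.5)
Your proof is correct and follows the paper's argument. \cref{proofnets:lemma:PathsInAlmost} makes each connected component of the underlying partial graph cf-connected, and \cref{lem:cfconnectedness} then makes it a connected proof net; your extra checks (components are DR-correct proof structures, endpoint-free edges form singleton components, the well-coloring restricts) are the details the paper leaves implicit.
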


\begin{proof}
  Let $\someps$ be an almost connected proof net.
  By \cref{proofnets:lemma:PathsInAlmost}, two vertices of $\someps$ are connected
  by a simple cusp-free path iff they are in the same connected component of
  $\someps$. As a consequence every connected component of $\someps$ is a
  connected proof net (\cref{lem:cfconnectedness}).
\end{proof}

\subsection{Kingdoms and Order}
\label{sec:more_connectedness:kingdom}

Our order $\orderyeo$ on vertex-color pairs does not correspond to a known order in the literature of proof nets.
The usual order associated to the sequentialization of connected proof nets derives from the notions of kingdoms and empires~\cite{kingemp}.
In particular, kingdoms define a strict partial order~\cite[Lemma~3]{kingemp} from which one can deduce the sequentialization theorem by means of splitting terminal vertices -- as opposed to empires, to which no strict order is associated.

We prove here that, while the kingdom order and our order $\orderyeo$ are not the same, there is a relation between them in connected closed proof nets.

\begin{defi}[Kingdom~{{\cite[Proposition~3.(II)]{kingemp}}}]
The \definitive{kingdom} $\king{\somevertex}$ of a vertex $\somevertex$ in a connected closed proof net is the smallest connected DR-correct sub-proof structure having $\somevertex$ as a terminal vertex.
Equivalently, $\king{\somevertex}$ is the sub-proof structure with the following vertices and edges:
\begin{itemize}
\item
if $\somevertex$ is an $\ax$-vertex, $\king{\somevertex}$ is $\somevertex$ together with its two conclusions;
\item
if $\somevertex$ is a $\tensor$-vertex with premises $\someedge_1$ and $\someedge_2$, then $\king{\somevertex}$ is obtained from $\king{\source(\someedge_1)}$ and $\king{\source(\someedge_2)}$ by adding $\somevertex$ and its conclusion;
\item
if $\somevertex$ is a $\parr$-vertex, given any proper cycle $\somecycle$ of $\somevertex$, $\king{\somevertex}$ is the union of the kingdoms of the vertices of $\somecycle$ other than $\somevertex$, together with $\somevertex$ and its conclusion.%
\footnote{The resulting proof structure does not depend on the choice of the proper cycle \(\somecycle\)~\cite{kingemp}.}
\end{itemize}
\end{defi}

The kingdom ordering between vertices is simply that $\othervertex$ is smaller than $\somevertex$ when $\othervertex\in\king{\somevertex}$.

\begin{lem}
Consider two distinct vertices $\somevertex$ and $\othervertex$ in a well-colored connected closed proof net, with local coloring $\coloring$.
If $\othervertex\in\king{\somevertex}$, then there exists a premise $\someedge$ of $\somevertex$ such that, for any color $\othercolor\notin\{\coloring(\otheredge,\othervertex) \mid \text{$\otheredge$ is a conclusion of $\othervertex$}\}$, $(\othervertex,\othercolor)\orderyeo(\somevertex,\coloring(\someedge,\somevertex))$.
In particular, $\othervertex\in\king{\somevertex} \implies \exists\somecolor, \othercolor, (\othervertex,\othercolor)\orderyeo(\somevertex,\somecolor)$.
\end{lem}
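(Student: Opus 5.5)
We argue by induction on the recursive definition of $\king{\somevertex}$, i.e.\ on the size of the kingdom, proving the statement for all $\othervertex\in\king{\somevertex}\setminus\{\somevertex\}$. Since $\coloring$ is well-colored, $\somevertex$ is an $\ax$-vertex only if $\king{\somevertex}$ contains no other vertex, so that case is vacuous. Throughout, write $\Gamma_\othervertex\eqdef\{\coloring(\otheredge,\othervertex)\mid\otheredge$ conclusion of $\othervertex\}$ and fix a color $\othercolor\notin\Gamma_\othervertex$. The key tool is \cref{lem:terminal_not_maximal_order}, which handles a single step along a conclusion edge. Since the proof net is well-colored, a pair $(\otherothervertex,\coloring(\otheredge,\otherothervertex))$ with $\otheredge$ a conclusion of $\otherothervertex$ is never a cusp-point, and there is no cusp-free cycle (\cref{lem:localglobal}). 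Hence for every conclusion $\otheredge$ of a vertex $\otherothervertex$ with target $\otherotherothervertex$, and every $\othercolor\notin\Gamma_\otherothervertex$ (in particular $\othercolor\neq\coloring(\otheredge,\otherothervertex)$), we get $(\otherothervertex,\othercolor)\orderyeo(\otherotherothervertex,\coloring(\otheredge,\otherotherothervertex))$. This relation then chains by transitivity (\cref{lem:orderyeo_is_order}).

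\emph{Case $\somevertex$ is a $\tensor$-vertex} with premises $\someedge_1,\someedge_2$ of sources $\othervertex_1,\othervertex_2$. Then $\othervertex\in\king{\othervertex_i}$ for some $i$. If $\othervertex=\othervertex_i$, the one-step fact with $\otheredge=\someedge_i$ gives the claim with $\someedge\eqdef\someedge_i$. Otherwise the induction hypothesis gives a premise $\someedge'$ of $\othervertex_i$ with $(\othervertex,\othercolor)\orderyeo(\othervertex_i,\coloring(\someedge',\othervertex_i))$. Well-coloredness ensures $\coloring(\someedge',\othervertex_i)$ differs from the color of the conclusion $\someedge_i$ at $\othervertex_i$, which is $\Gamma_{\othervertex_i}$ since $\othervertex_i$ is a $\tensor$/$\parr$-vertex with a single conclusion, or an $\ax$-vertex, which has no premises and so this branch does not arise. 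The one-step fact then yields $(\othervertex_i,\coloring(\someedge',\othervertex_i))\orderyeo(\somevertex,\coloring(\someedge_i,\somevertex))$, and transitivity concludes.

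\emph{Case $\somevertex$ is a $\parr$-vertex.} Connectedness gives a proper cycle $\somecycle$ (\cref{lem:connectedIsAlmostConnected}), and $\othervertex\in\king{\otherothervertex}$ for some vertex $\otherothervertex\neq\somevertex$ of $\somecycle$. Both premises of $\somevertex$ share the color $\coloring(\someedge,\somevertex)$, so the choice of $\someedge$ is irrelevant. If $\othervertex=\otherothervertex$, apply \cref{lem:clasping_order} directly (valid for any color). Otherwise the induction hypothesis gives $(\othervertex,\othercolor)\orderyeo(\otherothervertex,\somecolor')$ for some $\somecolor'$, and \cref{lem:clasping_order} gives $(\otherothervertex,\somecolor')\orderyeo(\somevertex,\coloring(\someedge,\somevertex))$; transitivity concludes.

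\emph{Consolidation and main obstacle.} The ``in particular'' clause follows by picking any $\othercolor\notin\Gamma_\othervertex$, which is possible assuming \wolog\ at least three colors. The induction must be well-founded. I would induct on the number of vertices of $\king{\somevertex}$, which requires $\king{\otherothervertex}\subsetneq\king{\somevertex}$ for the vertices $\otherothervertex$ used above. This holds because $\somevertex\notin\king{\otherothervertex}$: the kingdom of $\otherothervertex$ lies above $\otherothervertex$, and acyclicity of directed paths ensures it does not contain $\somevertex$, as $\somevertex$ is terminal in $\king{\somevertex}$ and $\somevertex\neq\otherothervertex$. The main delicate point I expect is this, together with checking in the $\tensor$-case that $\coloring(\someedge',\othervertex_i)$ avoids the conclusion color at $\othervertex_i$. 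That check is exactly what well-coloredness provides.
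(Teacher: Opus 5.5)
Your proof is correct and follows the paper's route: induction along the recursive characterization of kingdoms, using \cref{lem:terminal_not_maximal_order} together with well-coloredness in the $\tensor$ case, \cref{lem:clasping_order} in the $\parr$ case, and transitivity of $\orderyeo$. One caveat: the paper takes the well-foundedness of this induction for granted, and your extra justification of $\king{\otherothervertex}\subsetneq\king{\somevertex}$ is slightly off, because a kingdom need not lie above its root in the directed sense (the kingdom of a $\parr$-vertex may contain $\tensor$-vertices that are not its ancestors); the strictness you need does nonetheless follow from the minimality of kingdoms, \ie\ from the kingdom order being strict~\cite{kingemp}.
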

\begin{proof}
We have three cases to consider, reasoning by induction on the above characterization of kingdoms.
\begin{description}
\item[If $\somevertex$ is an $\ax$-vertex]
Then $\somevertex$ is the sole vertex of $\king{\somevertex}$.
\item[If $\somevertex$ is a $\tensor$-vertex with premises $\someedge_1$ and $\someedge_2$]
Then, $(\source(\someedge_1),\coloring(\otheredge_1,\source(\someedge_1)))\orderyeo(\somevertex,\coloring(\someedge_1,\somevertex))$ and $(\source(\someedge_2),\coloring(\otheredge_2,\source(\someedge_2)))\orderyeo(\somevertex,\coloring(\someedge_2,\somevertex))$ for any premise $\otheredge_1$ of $\source(\someedge_1)$ and any premise $\otheredge_2$ of $\source(\someedge_2)$ (using \cref{lem:terminal_not_maximal_order}).
We conclude by induction hypothesis and the transitivity of \(\orderyeo\).
\item[If $\somevertex$ is a $\parr$-vertex]
Let $\somecycle$ be a proper cycle of $\somevertex$.
For any $\otherothervertex\in\somecycle$ and any color $\otherothercolor$, $(\otherothervertex,\otherothercolor)\orderyeo(\somevertex,\coloring(\someedge,\somevertex))$ for some premise $\someedge$ of $\somevertex$ (using \cref{lem:clasping_order}).
We conclude by induction hypothesis and the transitivity of \(\orderyeo\).
\qedhere
\end{description}
\end{proof}

The converse does not hold as shown on \cref{fig:splitting_terminal_no_max} where the vertices of the kingdom of each $\tensor$-vertex are the $\tensor$-vertex itself and the two $\ax$-vertices above it.

\section{Another Generalization of Yeo's Theorem for MALL}
\label{sec:yeo_mall}

Similarly to how \cref{th:ParamLocalYeo} provided the basis of our sequentialization procedure for multiplicative proof nets, a further generalization of Yeo's theorem can be used to prove sequentialization for proof nets of multiplicative-additive linear logic \emph{à la} Hughes and van Glabbeek~\cite{mallpnlong}.
Our \cref{th:ParamLocalYeo} is not sufficient here, for there exist some cusp-free cycles in these proof nets.

From the union of sub-graphs (defined in \cref{sec:graph_def}), one derives the notion of union of cusp-free cycles.
Such a union is connected when it is connected as a sub-graph.
A connected union of cusp-free cycles $\somesetcycle$ is maximal (for the inclusion) when for every connected union of cusp-free cycles $\somesetcycle'$, $\somesetcycle\subseteq\somesetcycle' \implies \somesetcycle=\somesetcycle'$ (using the ordering of sub-graphs for the inclusion from \cref{sec:graph_def}).

\begin{defi}\label{def:necubfcycles}
Given a partial graph $\somegraph$ with a local coloring $\coloring$, we note $\necubfcycles$ the set of all maximal connected unions of cusp-free cycles of $\somegraph$.
\end{defi}

As an example, see \cref{fig:example_necubfcycles} with a locally colored graph and its set $\necubfcycles$.

\begin{figure}
\centering
\begin{tikzpicture}
\begin{scope}[every node/.style={circle,thick,draw}]
	\node (c1) at (-2,1) {};
	\node (c2) at (0,1) {};
	\node (c3) at (-1,0) {};
	\node (c4) at (-2,-1) {};
	\node (c5) at (0,-1) {};
	\node (c6) at (-1,2) {};
	\node (en) at (1.5,1) {};
	\node (ec) at (.5,0) {};
	\node (es) at (1.5,-1) {};
	\node (p1) at (-2.5,0) {};
	\node (p2) at (2.5,1.5) {};
\end{scope}
	\node at (-1,1.4) {$\somecycle_1$};
	\node at (-1,.6) {$\somecycle_2$};
	\node at (-1,-.6) {$\somecycle_3$};
	\node at (1.1,0) {$\somecycle_4$};
\begin{genscope}[red][-,thick]
	\path (c1) edge (c6);
	\path (c1) edge (c3);
	\path (c3) edge (c4);
	\path (c3) edge (ec);
	\path (en) edge (ec);
	\path (p1) edge (c3);
	\path (en) edge (p2);
	\path (es) edge (p2);
\end{genscope}
\begin{genscope}[blue][-,thick,densely dashed]
	\path (c2) edge (c6);
	\path (c2) edge (c3);
	\path (c3) edge (c5);
	\path (es) edge (en);
\end{genscope}
\begin{genscope}[olive][-,thick,densely dash dot dot]
	\path (c2) edge (c1);
	\path (c5) edge (c4);
	\path (ec) edge (es);
\end{genscope}
\end{tikzpicture}
\caption{Locally colored graph with $\necubfcycles = \{\somecycle_1 \cup \somecycle_2 \cup \somecycle_3,\somecycle_4\}$}
\label{fig:example_necubfcycles}
\end{figure}

We recall the strict partial order $\orderyeo$ on edges can be defined in any partial graph $\somegraph$ with a local coloring (see \cref{def:orderyeo} on \cpageref{def:orderyeo}).
The goal of this section is proving the following theorem.

\begin{thm}\label{th:MALLParamYeo}
Consider $\somegraph$ a partial graph with a local coloring $\coloring$.
For each $\somesetcycle \in \necubfcycles$, select an edge $\edgeofsetcycles(\somesetcycle)$ with endpoints $\sourceofsetcycles(\somesetcycle) \in \somesetcycle$ and $\targetofsetcycles(\somesetcycle) \notin \somesetcycle$ such that:
\begin{enumeratecref}[start=1,label=$(H_\somegraph^\arabic*)$,ref=$(H_\somegraph^\arabic*)$]
\item\label{HG1}
$(\sourceofsetcycles(\somesetcycle), \coloring(\edgeofsetcycles(\somesetcycle), \sourceofsetcycles(\somesetcycle))$ is not a cusp-point;
\item\label{HG2}
$(\targetofsetcycles(\somesetcycle), \coloring(\edgeofsetcycles(\somesetcycle), \targetofsetcycles(\somesetcycle))$ is a cusp-point.
\end{enumeratecref}

Pose $\somesetedge$ a set of vertex-color pairs which dominates cusp-points and is \emph{disjoint} from\\ $\somesetedge_{out} \eqdef
\{(\sourceofsetcycles(\somesetcycle), \coloring(\edgeofsetcycles(\somesetcycle), \sourceofsetcycles(\somesetcycle)) \suchthat \somesetcycle \in \necubfcycles\}$.
Then the vertex of any $\orderyeo$-maximal element of $\somesetedge$ (\ie\ for $\orderyeo$ restricted to $\somesetedge$) is splitting.
\end{thm}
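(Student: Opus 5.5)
We follow the proof of \cref{th:ParamLocalYeo}. Let $(\somevertex,\somecolor)\in\somesetedge$ be $\orderyeo$-maximal and assume, towards a contradiction, that $\somevertex$ is not splitting. The goal is a generalization of \cref{prop:order_max_Yeo}: some cusp-point $(\othervertex,\othercolor)$ satisfies $(\somevertex,\somecolor)\orderyeo(\othervertex,\othercolor)$. Since $\somesetedge$ dominates cusp-points, transitivity (\cref{lem:orderyeo_is_order}) then contradicts maximality exactly as before.

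\textbf{First case: $\somevertex$ lies in no element of $\necubfcycles$.} Take $\somecycle\in\mincycles{\somevertex}$ oriented so that its starting colour is not $\somecolor$ (\cref{lem:reversing_trick}). If $\somecycle$ is cusp-free, it lies in some $\somesetcycle\in\necubfcycles$ containing $\somevertex$, which is excluded; so $\somecycle$ has a first cusp, at some $\othervertex$ of colour $\othercolor$. As in \cref{prop:order_max_Yeo}, $(\somevertex,\somecolor)\sobfnbpath[\subpath{\somecycle}{\somevertex}{\othervertex}](\othervertex,\othercolor)$. To upgrade this to $\orderyeo$, suppose some cusp-free path $\otherpath$ leaves $(\othervertex,\othercolor)$ and returns to the prefix, hence to $\somecycle$. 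By \cref{lem:bjumping_Yeo_base} and minimality of $\somecycle$, there is a cusp-free cycle $\otherclosed=\concatpath{\otherpath}{\subpath{\reversedpath{\somecycle}}{\otherothervertex}{\othervertex}}$. This is where the argument diverges from the multiplicative case: cusp-free cycles now exist, so this is no contradiction yet. I would show that this cusp-free cycle can be glued to a cusp-free path from $\somevertex$, forcing $\somevertex$ into some $\somesetcycle$ and giving the contradiction. The tool here is \cref{lem:bjumping_gen}, together with the hypotheses $(H^1)$ and $(H^2)$ on the exit edges $\edgeofsetcycles(\somesetcycle)$.

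\textbf{Second case: $\somevertex\in\somesetcycle$ for some $\somesetcycle\in\necubfcycles$.} Here I would use the exit edge. By $(H^1)$ and \cref{lem:terminal_not_maximal_order}, together with cusp-free paths inside $\somesetcycle$ (a connected union of cusp-free cycles lets one move from $\somevertex$ to $\sourceofsetcycles(\somesetcycle)$ avoiding colour $\somecolor$ at the start), one should obtain $(\somevertex,\somecolor)\orderyeo(\targetofsetcycles(\somesetcycle),\coloring(\edgeofsetcycles(\somesetcycle),\targetofsetcycles(\somesetcycle)))$. By $(H^2)$ the right-hand side is a cusp-point, so it is dominated, which contradicts maximality. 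The disjointness $\somesetedge\cap\somesetedge_{out}=\emptyset$ handles the degenerate situation $\somevertex=\sourceofsetcycles(\somesetcycle)$ with $\somecolor$ equal to the exit colour, where the path could not start with $\edgeofsetcycles(\somesetcycle)$.

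\textbf{Main obstacle.} The hardest step is proving the ``no return'' clause of $\orderyeo$ in both cases. A cusp-free path returning to the prefix must be shown either to create a cusp-free cycle through $\edgeofsetcycles(\somesetcycle)$, which would contradict maximality of $\somesetcycle$ since the exit edge would then belong to a larger connected union, or to yield a cycle with fewer cusps via \cref{lem:bjumping_gen}. The three bulleted hypotheses of \cref{lem:bjumping_gen} are exactly the ones needed, with $\someedge=\edgeofsetcycles(\somesetcycle)$ and $\otherotherpath,\otherotherotherpath$ built from cusp-free routes inside $\somesetcycle$. I would proceed in this order: first the in-cycle routing lemma inside $\somesetcycle$, then the case analysis.
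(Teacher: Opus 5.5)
\textbf{Your first case has a genuine gap.} The overall reduction is right: find a cusp-point strictly above $(v,\alpha)$, then use domination. Your second case also essentially matches the paper. The problem is what you do when the first cusp $(u,\beta)$ of $\omega$ lies on a cusp-free cycle. You keep $(u,\beta)$ as the target and claim that a returning path can be glued into a cusp-free cycle through $v$. That claim is false, and $(v,\alpha)\orderyeo(u,\beta)$ can fail outright.

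Here is a counterexample.
\begin{itemize}
\item Take vertices $v,a,u,b,c,w,z$ and edges $va,au,uc,cv,ub,ba,bw,wz$.
\item Local colors: at $v$, $va\mapsto 1$, $cv\mapsto 2$; at $a$, $va,ba\mapsto 1$, $au\mapsto 2$; at $u$, $au,uc\mapsto 1$, $ub\mapsto 2$; at $b$, $ub\mapsto 1$, $ba\mapsto 2$, $bw\mapsto 3$; at $c$, $uc\mapsto 1$, $cv\mapsto 2$; at $w$, $bw,wz\mapsto 1$; at $z$, $wz\mapsto 1$.
\end{itemize}
The only cusp-free cycle is the triangle $aub$, and $bw$ is an exit edge satisfying $(H^1_G)$ and $(H^2_G)$. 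The vertex $v$ is not splitting and lies on no cusp-free cycle. Both cycles through $v$ have exactly one cusp, so $\omega=(v,va,a,au,u,uc,c,cv,v)\in\mincycles{v}$. For $\alpha=2$, its first cusp is $(u,1)$. The prefix $(v,va,a,au,u)$ is the only path witnessing $(v,2)\sobfnbpath(u,1)$, and $(u,ub,b,ba,a)$ returns to it. So $(v,2)\orderyeo(u,1)$ is false, yet $v$ is on no cusp-free cycle, so the contradiction you announce is not available.

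\Cref{lem:bjumping_gen} cannot repair this. It only controls return paths that begin with the distinguished edge $e$. If you instantiate it at $u$ itself (with $x=y=u$), its first alternative is just the cusp-free triangle. The cusp-point that does work here is $(w,1)$, reached via $(v,va,a,au,u,ub,b,bw,w)$.

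\textbf{The missing idea is to change the target.} If $u$ lies on no cusp-free cycle, \cref{lem:bjumping_Yeo} already gives $(v,\alpha)\orderyeo(u,\beta)$. Otherwise:
\begin{itemize}
\item Let $\Omega\in\necubfcycles$ contain $u$, and let $x$ (resp.\ $y$) be the first (resp.\ last) vertex of $\omega$ in $\Omega$.
\item Aim at $(\targetofsetcycles(\Omega),\coloring(\edgeofsetcycles(\Omega),\targetofsetcycles(\Omega)))$, which is a cusp-point by $(H^2_G)$.
\item Go there along $\omega_{(v,x)}\cdot\rho\cdot(\sourceofsetcycles,\edgeofsetcycles,\targetofsetcycles)$, where $\rho$ is the route inside $\Omega$ given by \cref{lem:in_sc_no_max_graph}.
\end{itemize}
The no-return clause then splits in two.
\begin{itemize}
\item Returns into $\Omega$, which covers $\rho$, are excluded by maximality of $\Omega$ (\cref{lem:cycles_no_ret_graph}, using $(H^1_G)$).
\item Returns into $\omega_{(y,v)}\cdot\omega_{(v,x)}$, as well as $\targetofsetcycles(\Omega)\in\omega_{(v,x)}$, are excluded by \cref{lem:bjumping_gen}. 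Take $\edgeofsetcycles(\Omega)$ as the distinguished edge, and $\rho,\chi$ as the in-$\Omega$ routes from $x$ and $y$ to $\sourceofsetcycles(\Omega)$.
\end{itemize}
The second step is where minimality of $\omega$ and maximality of $\Omega$ are played against each other. Your ``main obstacle'' paragraph names these ingredients, but they apply only to this retargeted path, not to paths leaving $(u,\beta)$.

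\textbf{In your second case, one tool is misidentified.} \Cref{lem:terminal_not_maximal_order} gives nothing there: its first alternative always holds, since $\sourceofsetcycles(\Omega)$ lies on a cusp-free cycle. The real inputs are \cref{lem:cycles_no_ret_graph} and the $\sobfnbpath$-connectedness of $\Omega$ (\cref{cor:usc-c=sc_graph}). The latter still has to be proved, via \cref{lem:union_sp_graph}.
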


By \cref{HG1}, $\somesetedge_{out}$ does not contain any cusp-point, thus one can always find a set $\somesetedge$ satisfying the requirements: it suffices to take the set of all cusp-points.
We call partial graph \definitive{with an exit function} a locally colored partial graph equipped with functions $\edgeofsetcycles$, $\sourceofsetcycles$ and $\targetofsetcycles$ associating to each of its maximal connected unions of cusp-free cycles $\somesetcycle \in \necubfcycles$ an edge $\edgeofsetcycles(\somesetcycle)$ of endpoints $\sourceofsetcycles(\somesetcycle) \in \somesetcycle$ and $\targetofsetcycles(\somesetcycle) \notin \somesetcycle$.

The proof of this theorem is not immediate, and will be the object of \cref{subsubsec:alt_connex_graph,sec:mall_graphs_splitting}.
\Cref{th:MALLParamYeo} implies a generalization of \cref{th:LocalYeo}.

\begin{thm}\label{th:MALLYeo}
Consider $\somegraph$ a locally colored partial graph with an exit function, and containing at least one vertex.
If $\somegraph$ respects \cref{HG1,HG2}, then there exists a splitting vertex in $\somegraph$.
\end{thm}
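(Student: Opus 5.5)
We derive \cref{th:MALLYeo} from \cref{th:MALLParamYeo}, mirroring how \cref{th:LocalYeo} follows from \cref{th:ParamLocalYeo}. Two things are needed: a suitable parameter $\somesetedge$, and the fact that it has a $\orderyeo$-maximal element.

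First, a degenerate case. If $\somegraph$ has a vertex that is the endpoint of no edge, that vertex belongs to no cycle. It is therefore vacuously splitting, and we are done.

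Otherwise, every vertex is incident to some edge. We take $\somesetedge$ to be the set of all vertex-color pairs $(\somevertex,\coloring(\someedge,\somevertex))$ realized in $\somegraph$, with the pairs of $\somesetedge_{out}$ removed. We then check the hypotheses of \cref{th:MALLParamYeo}:
\begin{itemize}
\item $\somesetedge$ is disjoint from $\somesetedge_{out}$ by construction.
\item $\somesetedge$ dominates cusp-points. Every cusp-point is realized, and by \cref{HG1} no element of $\somesetedge_{out}$ is a cusp-point. So every cusp-point lies in $\somesetedge$ itself.
\item $\somesetedge$ is finite, since the graph is finite.
\end{itemize}
Provided $\somesetedge$ is non-empty, \cref{lem:orderyeo_is_order} gives a $\orderyeo$-maximal element of $\somesetedge$. \Cref{th:MALLParamYeo} then says its vertex is splitting.

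The only real obstacle is showing $\somesetedge\neq\emptyset$, that is, that not every realized vertex-color pair lies in $\somesetedge_{out}$. We argue by contradiction and assume every realized pair is in $\somesetedge_{out}$.

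Then no cusp-point exists, since cusp-points are realized but excluded from $\somesetedge_{out}$ by \cref{HG1}. But $\necubfcycles$ must be non-empty, because otherwise $\somesetedge_{out}$ would be empty while realized pairs exist. So pick $\somesetcycle\in\necubfcycles$. By \cref{HG2}, the pair $(\targetofsetcycles(\somesetcycle),\coloring(\edgeofsetcycles(\somesetcycle),\targetofsetcycles(\somesetcycle)))$ is a cusp-point. This contradicts the absence of cusp-points.

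Hence $\somesetedge$ is non-empty and the argument concludes. Alternatively, one can simply take $\somesetedge$ to be the set of all cusp-points; the same \cref{HG2} argument shows it is non-empty whenever $\necubfcycles\neq\emptyset$. If $\necubfcycles=\emptyset$ and there are no cusp-points, then every cycle is cusp-free yet none exist, so the graph is acyclic and every vertex is splitting.
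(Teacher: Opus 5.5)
Your proof is correct and is essentially the paper's. You apply \cref{th:MALLParamYeo} to the vertex-color pairs outside $\somesetedge_{out}$, which contain all cusp-points by \cref{HG1}, and get non-emptiness from \cref{HG2} when $\necubfcycles\neq\emptyset$ and from $\somesetedge_{out}=\emptyset$ otherwise. The one difference is that the paper takes all vertex-color pairs, not only realized ones, so it never treats isolated vertices separately; your restriction to realized pairs makes that case necessary, and you handle it correctly.
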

\begin{proof}
The set $\somesetedge$ of all vertex-color pairs not in $\somesetedge_{out}$ is non-empty: either $\necubfcycles$ is empty and so is $\somesetedge_{out}$; or there is at least one cusp-point which is not in $\somesetedge_{out}$ by \cref{HG2}.
Thus, $\somesetedge$ contains a maximal element $(\somevertex, \somecolor)$ with respect to $\orderyeo$ (\cref{lem:orderyeo_is_order}), and $\somevertex$ is splitting by \cref{th:MALLParamYeo} since $\somesetedge$ contains all cusp-points.
\end{proof}

\begin{rem}
\Cref{th:MALLYeo} is indeed a generalization of \cref{th:LocalYeo}, for a locally colored partial graph $\somegraph$ with no cusp-free cycle trivially respects hypotheses \cref{HG1,HG2}, which are about connected unions of cusp-free cycles.
\end{rem}

\begin{rem}[All hypotheses are needed]\label{rem:hyp_MALLYeo}
We give here examples showing all hypotheses of \cref{th:MALLYeo} (hence of \cref{th:MALLParamYeo}) are needed, even with an edge-coloring.
On all figures mentioned here, the function $\edgeofsetcycles$ is given explicitly, contrary to the functions $\sourceofsetcycles$ and $\targetofsetcycles$ since they can be deduced.

\begin{figure}
\centering
\begin{tikzpicture}
\begin{scope}[every node/.style={circle,thick,draw}]
	\node (nw) at (-1.5,1) {};
	\node (ne) at (1.5,1) {};
	\node (s) at (0,-1) {};
	\node (se) at (2.5,-0.5) {};
\end{scope}
\begin{genscope}[red][-]
	\path[bend left] (nw) edge (ne);
	\path[bend right] (nw) edge node[arete nommee]{$\edgeofsetcycles(\somecycle)$} (s);
	\path (s) edge (se);
\end{genscope}
\begin{genscope}[blue][-,densely dashed]
	\path[bend left] (ne) edge (nw);
	\path[bend left] (ne) edge (s);
	\path (ne) edge (se);
\end{genscope}
	\node at (0,1) {$\somecycle$};
\end{tikzpicture}
\caption{Graph without a splitting vertex, respecting \cref{HG2}, with $\necubfcycles=\{\somecycle\}$, and $\edgeofsetcycles(\somecycle)$ making no cusp at $\sourceofsetcycles(\somecycle)$ with edges outside $\somecycle$}
\label{fig:hyp_MALLYeo_H1_1}
\end{figure}

\begin{figure}
\centering
\begin{tikzpicture}
\begin{scope}[every node/.style={circle,thick,draw}]
	\node (p) at (4,0) {};
	\node (mp) at (2,0) {};
	\node (nw) at (1,-1) {};
	\node (ne) at (1,1) {};
	\node (m) at (0,0) {};
	\node (s) at (-2,0) {};
\end{scope}
\begin{genscope}[red][-]
	\path (nw) edge node[arete nommee]{$\edgeofsetcycles(\somecycle)$} (m);
	\path (ne) edge (m);
	\path (mp) edge (nw);
	\path (ne) edge node[arete nommee]{$\edgeofsetcycles(\othercycle)$} (mp);
\end{genscope}
\begin{genscope}[blue][-,densely dashed]
	\path (ne) edge (nw);
	\path[bend left] (m) edge (s);
	\path[bend left] (p) edge (mp);
\end{genscope}
\begin{genscope}[violet][-,densely dotted]
	\path[bend right] (m) edge (s);
	\path[bend right] (p) edge (mp);
\end{genscope}
	\node at (-1,0) {$\somecycle$};
	\node at (3,0) {$\othercycle$};
\end{tikzpicture}
\caption{Graph without a splitting vertex, respecting \cref{HG2}, with $\necubfcycles=\{\somecycle,\othercycle\}$ and $\edgeofsetcycles(\somecycle)$ and $\edgeofsetcycles(\othercycle)$ making no cusp with edges in $\somecycle$ and $\othercycle$}
\label{fig:hyp_MALLYeo_H1_2}
\end{figure}

Let us first consider hypothesis \cref{HG1}, namely that $(\sourceofsetcycles(\somesetcycle), \coloring(\edgeofsetcycles(\somesetcycle), \sourceofsetcycles(\somesetcycle))$ is not a cusp-point.
It is important to have this hypothesis both for the sub-graph $\somesetcycle$ and for its complementary: \cref{fig:hyp_MALLYeo_H1_1} (\resp\ \cref{fig:hyp_MALLYeo_H1_2}) is a counter-example with an $\edgeofsetcycles(\somesetcycle)$ making a cusp at $\sourceofsetcycles(\somesetcycle)$ with an edge inside (\resp\ outside) $\somesetcycle$.

\begin{figure}
\centering
\begin{tikzpicture}
\begin{scope}[every node/.style={circle,thick,draw}]
	\node (ww) at (-3,0) {};
	\node (w) at (-1,0) {};
	\node (e) at (1,0) {};
	\node (ee) at (3,0) {};
\end{scope}
\begin{genscope}[red][-]
	\path[bend left] (ww) edge (w);
	\path[bend left] (e) edge (ee);
\end{genscope}
\begin{genscope}[blue][-,densely dashed]
	\path[bend right] (ww) edge (w);
	\path[bend right] (e) edge (ee);
\end{genscope}
\begin{genscope}[violet][-,densely dotted]
	\path (w) edge node[arete nommee,above]{$\edgeofsetcycles(\somecycle)$} node[arete nommee,below]{$\edgeofsetcycles(\othercycle)$} (e);
\end{genscope}
	\node at (-2,0) {$\somecycle$};
	\node at (2,0) {$\othercycle$};
\end{tikzpicture}
\caption{Graph without a splitting vertex, respecting \cref{HG1}, with $\necubfcycles=\{\somecycle,\othercycle\}$}\label{fig:hyp_MALLYeo_H2}
\end{figure}

Considering now hypothesis \cref{HG2}, the graph on \cref{fig:hyp_MALLYeo_H2} respects \cref{HG1} but has no splitting vertex.

\begin{figure}
\centering
\begin{tikzpicture}
\begin{scope}[every node/.style={circle,thick,draw}]
	\node[inner sep=0,minimum size=10] (ww) at (-3,0) {$\othervertex$};
	\node[inner sep=0,minimum size=10] (w) at (-1,0) {$\somevertex$};
	\node (e) at (1,0) {};
	\node (e') at (3,0) {};
\end{scope}
\begin{genscope}[red][-]
	\path[bend left] (ww) edge node[arete nommee]{$\otheredge$}  (w);
\end{genscope}
\begin{genscope}[blue][-,densely dashed]
	\path[bend right] (ww) edge (w);
\end{genscope}
\begin{genscope}[violet][-,densely dotted]
	\path (w) edge node[arete nommee]{$\edgeofsetcycles(\somecycle)$} (e);
	\path (e) edge (e');
\end{genscope}
	\node at (-2,0) {$\somecycle$};
\end{tikzpicture}
\caption{Graph respecting \cref{HG1,HG2}, with $\necubfcycles=\{\somecycle\}$ and $(\somevertex, \textcolor{violet}{dotted}) \in \somesetedge_{out}$ maximal for $\orderyeo$ while $\somevertex$ is not splitting}\label{fig:hyp_MALLYeo_Eout}
\end{figure}

Lastly, looking at \cref{th:MALLParamYeo}, \cref{fig:hyp_MALLYeo_Eout} presents a counter-example in the case where the chosen set of edges $\somesetedge$ is not disjoint from $\somesetedge_{out}$.
On this graph, $(\somevertex, \textcolor{violet}{dotted}) \orderyeo[(\somevertex,\otheredge,\othervertex)] (\othervertex, \textcolor{red}{solid})$ does not hold because $(\othervertex, \textcolor{red}{solid}) \sobfnbpath (\somevertex, \textcolor{blue}{dashed})$ using as a path the {\color{blue}dashed} edge; similarly, $(\somevertex, \textcolor{violet}{dotted}) \cancel{\orderyeo} (\othervertex, \textcolor{blue}{dashed})$; and obviously $(\somevertex, \textcolor{violet}{dotted})$ is not smaller than any of the two other vertices as there is no path to them from $
\somevertex$ not starting with color \textcolor{violet}{dotted}, so that $(\somevertex, \textcolor{violet}{dotted})$ is maximal.
\end{rem}

\begin{rem}
The following is a parallel with the single switching cycle conjecture in~\cite{mallpnlong}: when replacing ``maximal connected unions of cusp-free cycles'' simply with ``cusp-free cycles'' in the definition of $\necubfcycles$ (\cref{def:necubfcycles}), does \cref{th:MALLYeo} still hold?
The answer is no: the graph depicted on \cref{fig:no_MALLYeo_SCC} is a counter-example, that does not respect the hypotheses of \cref{th:MALLYeo} but would respect them were we to replace ``maximal connected unions of cusp-free cycles'' with ``cusp-free cycles''.
Nonetheless, this graph cannot be adapted as a counter-example of the single switching cycle conjecture in the context of proof nets, because in this framework all pairs $(\edgeofsetcycles(\somesetcycle),\targetofsetcycles(\somesetcycle))$ are of the same color.

\begin{figure}
\centering
\begin{tikzpicture}
\begin{scope}[every node/.style={circle,thick,draw}]
	\node (n) at (0,1.5) {};
	\node (nw) at (-1.75,1.5) {};
	\node (ne) at (1.75,1.5) {};
	\node (cw) at (-1.75,0) {};
	\node (ce) at (1.75,0) {};
	\node (s) at (0,0) {};
\end{scope}
\begin{genscope}[red][-]
	\path (n) edge node[arete nommee,right]{$\edgeofsetcycles(\othercycle)$} (cw);
	\path (cw) edge (s);
\end{genscope}
\begin{genscope}[blue][-,densely dashed]
	\path (n) edge node[arete nommee,left]{$\edgeofsetcycles(\somecycle)$} (ce);
	\path (ce) edge (s);
\end{genscope}
\begin{genscope}[olive][-,densely dash dot dot]
	\path (nw) edge (cw);
	\path (ne) edge (ce);
\end{genscope}
\begin{genscope}[violet][-,dotted]
	\path (n) edge (nw);
	\path (n) edge (ne);
\end{genscope}
	\node at (-1.25,1) {$\somecycle$};
	\node at (1.25,1) {$\othercycle$};
\end{tikzpicture}
\caption{Graph without a splitting vertex but with both cusp-free cycles having an edge out of it making a cusp at its endpoint out of the cycle but not at its endpoint in the cycle}\label{fig:no_MALLYeo_SCC}
\end{figure}
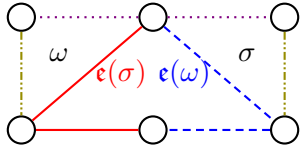
\end{rem}

Although the technical details are a bit more involved, the proof of \cref{th:MALLParamYeo} follows the same idea as the one of \cref{th:ParamLocalYeo} in \cref{sec:paramyeo}: a non-splitting vertex cannot be part of a vertex-color pair that is maximal for $\orderyeo$ (\cref{prop:no_splitting_no_max_graph}).
In particular, we need to study some properties of connected unions of cusp-free cycles, that were absent in our first generalization of Yeo but are central here: this is the object of \cref{subsubsec:alt_connex_graph}.
We then use these properties and cusp minimization in \cref{sec:mall_graphs_splitting} to prove a maximal vertex-color pair contains a splitting vertex.

\paragraph*{Notation}
In the rest of this section, we fix an arbitrary partial graph $\somegraph$ equipped with a local coloring $\coloring$ and an exit function $(\edgeofsetcycles,\sourceofsetcycles,\targetofsetcycles)$.

\subsection{\texorpdfstring{\alternating}{Arrow}-Connectedness}
\label{subsubsec:alt_connex_graph}

\begin{defi}[\alternating-connectedness]\label{def:s_connected_graph}
A sub-graph $\somesubgraph$ of $\somegraph$ is said to be \definitive{\alternating-connected} if for all $\somevertex$ and $\othervertex$ distinct vertices in $\somesubgraph$, for any color $\somecolor$ of $\somegraph$, there exists a path $\somepath$ inside $\somesubgraph$ and a color $\othercolor$ such that $(\somevertex, \somecolor) \sobfnbpath[\somepath] (\othervertex, \othercolor)$.
\end{defi}

\begin{rem}
\label{rem:s_connected_graph}
This notion of connectedness is stronger than the relation ``being linked by a cusp-free simple path''.
\begin{itemize}
\item
It is \emph{transitive}: if $(\somevertex, \somecolor) \sobfnbpath[\somepath] (\othervertex, \othercolor)$ and $(\othervertex, \othercolor) \sobfnbpath[\otherpath] (\otherothervertex, \otherothercolor)$ with $\somepath$ and $\otherpath$ having for sole common vertex $\othervertex$, then $(\somevertex, \somecolor) \sobfnbpath[\concatpath{\somepath}{\otherpath}] (\otherothervertex, \otherothercolor)$.
This is to be opposed with the relation ``being linked by a cusp-free simple path''.
\item
For $\somevertex$ and $\othervertex$ in a \alternating-connected $\somesubgraph$, there are at least two cusp-free paths from $\somevertex$ to $\othervertex$: one starting with some color $\somecolor$ by applying the definition on any color of $\somegraph$, and another starting with color $\othercolor\neq\somecolor$ by applying the definition on color $\somecolor$.
\end{itemize}
\end{rem}

The goal of this section is proving \cref{cor:usc-c=sc_graph}: connected unions of cusp-free cycles are \alternating-connected.

\begin{lem}\label{lem:sc_is_sp_graph}
A cusp-free cycle is \alternating-connected.
\end{lem}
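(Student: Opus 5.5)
Let $\somecycle$ be a cusp-free cycle, viewed as a sub-graph $\somesubgraph$ of $\somegraph$ (its vertices and edges). The plan is to take two distinct vertices $\somevertex,\othervertex$ of $\somecycle$ and a color $\somecolor$, and to produce the required path from the two arcs of $\somecycle$ between them. Both arcs lie inside $\somesubgraph$, so the only work is to pick the right one.

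First, rotate $\somecycle$ so that its source is $\somevertex$. Rotating changes neither the cyclic sequence of edges nor the set of cusps, so it remains a cusp-free cycle. In particular it has no cusp at $\somevertex$, so \cref{lem:reversing_trick} applies: either $\somecycle$ or $\reversedpath{\somecycle}$ has a starting color different from $\somecolor$. Call that one $\somecycle_0$; it is again a cusp-free cycle of source $\somevertex$, with the same edges and vertices.

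Next, since $\othervertex\neq\somevertex$, the vertex $\othervertex$ occurs exactly once in $\somecycle_0$, at an internal position. Set $\somepath\eqdef\subpath{\somecycle_0}{\somevertex}{\othervertex}$. This prefix is non-empty. It is simple and open because $\somecycle_0$ is a cycle and its endpoints $\somevertex\neq\othervertex$ are distinct. It is cusp-free because its internal cusps would be cusps of $\somecycle_0$. Its starting color is the starting color of $\somecycle_0$, hence not $\somecolor$. Let $\othercolor$ be its ending color. We then get $(\somevertex,\somecolor)\sobfnbpath[\somepath](\othervertex,\othercolor)$ with $\somepath$ inside $\somesubgraph$, which is what \cref{def:s_connected_graph} requires.

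There is no real obstacle here. The only point needing care is the orientation choice, which \cref{lem:reversing_trick} handles, together with checking that sub-paths of a cycle between distinct occurrences are simple and open.
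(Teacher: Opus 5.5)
Your proposal is correct and follows essentially the same argument as the paper: view the cusp-free cycle as having source $\somevertex$, use \cref{lem:reversing_trick} to orient it so that its starting color is not $\somecolor$, and take the prefix up to $\othervertex$. Your extra checks (rotation preserves cusp-freeness; the prefix is simple, open, cusp-free and stays inside the cycle) are exactly the details the paper leaves implicit.
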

\begin{proof}
Consider a cusp-free cycle $\somecycle$, $\somevertex \neq \othervertex$ vertices in $\somecycle$ and $\somecolor$ a color.
As $\somecycle$ has no cusp at $\somevertex$, seeing it as a cycle of source $\somevertex$ its starting or ending color is not $\somecolor$ (\cref{lem:reversing_trick}): up to reversing $\somecycle$, assume its starting color is not $\somecolor$.
Call $\somepath$ the prefix of $\somecycle$ from $\somevertex$ to $\othervertex$:
then $(\somevertex, \somecolor) \sobfnbpath[\somepath] (\othervertex, \othercolor)$ with $\othercolor$ the ending color of $\somepath$.
\end{proof}

\begin{lem}\label{lem:union_sp_graph}
Let $\somesubgraph$ and $\othersubgraph$ be two \alternating-connected sub-graphs of $\somegraph$ which have at least one vertex in common.
Then $\somesubgraph\cup\othersubgraph$ is \alternating-connected.
\end{lem}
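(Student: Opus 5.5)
The plan is to prove the statement directly from \cref{def:s_connected_graph}. The one thing to handle is that concatenating two cusp-free paths can produce a non-simple path. We avoid this by cutting the first path at its first entry into $\othersubgraph$.

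Fix distinct vertices $\somevertex$ and $\othervertex$ of $\somesubgraph\cup\othersubgraph$ and a color $\somecolor$. If both lie in $\somesubgraph$, or both lie in $\othersubgraph$, the required path exists inside that sub-graph, hence inside the union. So up to symmetry we may assume $\somevertex\in\somesubgraph\setminus\othersubgraph$ and $\othervertex\in\othersubgraph$. Let $\otherotherotherothervertex$ be a vertex common to $\somesubgraph$ and $\othersubgraph$. Then $\otherotherotherothervertex\neq\somevertex$, since $\somevertex\notin\othersubgraph$.

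First, use the \alternating-connectedness of $\somesubgraph$ to obtain a path $\somepath$ inside $\somesubgraph$ with $(\somevertex,\somecolor)\sobfnbpath[\somepath](\otherotherotherothervertex,\othercolor)$ for some color $\othercolor$. Let $\otherothervertex$ be the first vertex of $\somepath$ that belongs to $\othersubgraph$. It exists because $\otherotherotherothervertex$ is a candidate, and it differs from $\somevertex$. Let $\somepath'$ be the prefix of $\somepath$ ending at $\otherothervertex$. This prefix is non-empty, so by \cref{rem:sobfnbpath_prefix} we get $(\somevertex,\somecolor)\sobfnbpath[\somepath'](\otherothervertex,\othercolor')$, where $\othercolor'$ is the ending color of $\somepath'$. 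By construction, $\otherothervertex$ is the only vertex of $\somepath'$ that lies in $\othersubgraph$.

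If $\otherothervertex=\othervertex$, then $\somepath'$ is already the required path. Otherwise, use the \alternating-connectedness of $\othersubgraph$ to obtain a path $\otherpath$ inside $\othersubgraph$ with $(\otherothervertex,\othercolor')\sobfnbpath[\otherpath](\othervertex,\otherothercolor)$.

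The key point is that every vertex of $\otherpath$ lies in $\othersubgraph$, while $\otherothervertex$ is the only vertex of $\somepath'$ in $\othersubgraph$. Hence $\somepath'$ and $\otherpath$ share exactly the vertex $\otherothervertex$. By \cref{lem:concatsimplepathsprefix}, the concatenation $\concatpath{\somepath'}{\otherpath}$ is simple and open.

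It remains to check cusps. There is no cusp at $\otherothervertex$, because the starting color of $\otherpath$ is not $\othercolor'$, the ending color of $\somepath'$. All other potential cusps are internal to one of the two cusp-free paths, so the concatenation is cusp-free. Its starting color is that of $\somepath'$, which is not $\somecolor$. Its edges lie in $\somesubgraph\cup\othersubgraph$. Therefore $(\somevertex,\somecolor)\sobfnbpath[\concatpath{\somepath'}{\otherpath}](\othervertex,\otherothercolor)$ inside the union, as required.

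I expect the only delicate step to be ensuring simplicity of the concatenation. The naive choice of concatenating at the common vertex $\otherotherotherothervertex$ fails, since the two paths could intersect elsewhere. The first-entry truncation of $\somepath$ is what resolves this.
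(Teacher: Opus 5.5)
Your proof is correct and follows the same route as the paper's: you truncate the first cusp-free path at its first vertex lying in the other sub-graph, then concatenate it with a path taken inside that sub-graph, using \cref{lem:concatsimplepathsprefix} for simplicity and the color condition at the junction for cusp-freeness. Your separate case where the truncation vertex equals the target is actually vacuous, since that vertex lies in the first sub-graph while the target, in the remaining case, does not; handling it anyway does no harm.
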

\begin{proof}
Consider vertices $\somevertex \neq \othervertex \in \somesubgraph\cup\othersubgraph$ and $\somecolor$ a color.
If $\somevertex, \othervertex \in \somesubgraph$ or $\somevertex, \othervertex \in \othersubgraph$, then we conclude using \alternating-connectedness of $\somesubgraph$ or of $\othersubgraph$.
Without any loss of generality, assume $\somevertex \in \othersubgraph\backslash\somesubgraph$ and $\othervertex \in \somesubgraph\backslash\othersubgraph$.

By hypothesis, there exists a vertex $\otherothervertex \in \somesubgraph\cap\othersubgraph$; necessarily $\somevertex \neq \otherothervertex$.
By \alternating-connectedness of $\othersubgraph$, there exists a path $\somepath$ in $\othersubgraph$ such that $(\somevertex, \somecolor) \sobfnbpath[\somepath] (\otherothervertex, \othercolor)$.
Consider $\somepath'$ a minimal prefix of $\somepath$ ending in $\somesubgraph\cap\othersubgraph$: $\somepath'$ has no vertex in $\somesubgraph$ except its target $\otherothervertex' \in \somesubgraph\cap\othersubgraph$.
Call $\othercolor'$ the ending color of $\somepath'$: we have $(\somevertex, \somecolor) \sobfnbpath[\somepath'] (\otherothervertex', \othercolor')$ (\cref{rem:sobfnbpath_prefix}).
By \alternating-connectedness of $\somesubgraph$ with $\othersubgraph \ni \otherothervertex' \neq \othervertex \notin \othersubgraph$, there is a path $\otherpath$ in $\somesubgraph$ such that $(\otherothervertex', \othercolor') \sobfnbpath[\otherpath] (\othervertex, \otherothercolor)$.
Thus, $\concatpath{\somepath'}{\otherpath}$ is a simple open cusp-free path from $\somevertex$ to $\othervertex$ (\cref{lem:concatsimplepathsprefix}, since the only vertex of $\somepath'$ in $\somesubgraph$ is its target $\otherothervertex'$).
Its starting color is the starting color of $\somepath'$, so of $\somepath$, thence $(\somevertex, \somecolor) \sobfnbpath[\concatpath{\somepath'}{\otherpath}] (\othervertex, \otherothercolor)$.
\end{proof}

Said in another manner, the previous lemma tells us that a connected union of \alternating-connected sub-graphs is \alternating-connected.
An immediate corollary is that for unions of cusp-free cycles, being \alternating-connected is the same as being connected.

\begin{cor}\label{cor:usc-c=sc_graph}
A finite union of cusp-free cycles of $\somegraph$ is \alternating-connected if and only if it is connected.
\end{cor}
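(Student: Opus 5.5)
The plan is to prove the two implications separately. The statement concerns a finite union $\somesetcycle = \somecycle_1 \cup \dots \cup \somecycle_k$ of cusp-free cycles of $\somegraph$, viewed as a sub-graph.

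\emph{Connected $\Rightarrow$ \alternating-connected.} Each $\somecycle_i$ is \alternating-connected by \cref{lem:sc_is_sp_graph}. I will then merge cycles one at a time using \cref{lem:union_sp_graph}. Consider the intersection graph on $\{1,\dots,k\}$, with $i$ adjacent to $j$ when $\somecycle_i$ and $\somecycle_j$ share a vertex.

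First I check that this auxiliary graph is connected whenever $\somesetcycle$ is. Every edge of $\somesetcycle$ is total, since it lies in a cycle, so connectedness of $\somesetcycle$ is witnessed by paths inside $\somesetcycle$. Take such a path from a vertex of $\somecycle_i$ to a vertex of $\somecycle_j$. Consecutive edges of this path share a vertex, so the cycles containing them are adjacent in the auxiliary graph (or equal). This gives a chain of cycles from $\somecycle_i$ to $\somecycle_j$.

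Next I order the cycles so that each $\somecycle_{m+1}$ shares a vertex with $\somecycle_1\cup\dots\cup\somecycle_m$, for instance by following a spanning tree of the auxiliary graph. Induction on $m$ then shows the partial unions are \alternating-connected: the base case is \cref{lem:sc_is_sp_graph}, and each inductive step is \cref{lem:union_sp_graph}. The degenerate case $k=0$ is excluded, since the empty graph is not connected.

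\emph{\alternating-connected $\Rightarrow$ connected.} Let $\somevertex \neq \othervertex$ be vertices of $\somesetcycle$. Pick any color $\somecolor$ of $\somegraph$; if there is none, then $\somegraph$ has no edges and the union is empty. \Alternating-connectedness then gives a path inside $\somesetcycle$ from $\somevertex$ to $\othervertex$, so any two vertices are connected.

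It remains to handle edges, because connectedness of partial graphs also quantifies over edges. Every edge of $\somesetcycle$ lies on a cycle, so it has two endpoints in $\somesetcycle$, and connectedness of edges therefore reduces to connectedness of vertices. Non-emptiness holds because $\somesetcycle$ has at least one cycle. The only delicate point is that \alternating-connectedness is vacuous when $\somesetcycle$ has a single vertex, but a cycle has at least two vertices, so this case cannot arise.

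I expect no real obstacle. The only point needing care is the chaining order in the first direction, namely guaranteeing a shared vertex at each step so that \cref{lem:union_sp_graph} applies.
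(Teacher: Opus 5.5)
Your proof is correct and follows essentially the paper's route. Both arguments rest on \cref{lem:sc_is_sp_graph} for a single cycle and on repeated applications of \cref{lem:union_sp_graph}; they differ only in how the merges are ordered. The paper inducts on the number of cycles and merges $\somecycle_n$ with each connected component of the union of the others, each of which meets $\somecycle_n$ by connectedness. You instead add cycles one at a time along a spanning tree of the cycle-intersection graph, which is the same bookkeeping made explicit.

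One caveat, shared with the paper, which calls that direction trivial: your assertion that the union is non-empty tacitly assumes at least one cycle. The empty union is vacuously \alternating-connected but not connected, so the equivalence should be read for non-empty unions.
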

\begin{proof}
The direct implication is trivial.
The converse one follows from \cref{lem:sc_is_sp_graph,lem:union_sp_graph}, with an induction on the number of cusp-free cycles.
Let $\somesetcycle = \bigcup_{i=1}^n\somecycle_i$ be a connected union of $n$ cusp-free cycles.
The empty case $n = 0$ is trivial,
and if $n=1$ then $\somesetcycle = \somecycle_1$ which is \alternating-connected by \cref{lem:sc_is_sp_graph}.
Otherwise, $n > 1$ and $\somesetcycle = \somesubgraph\cup\somecycle_n$ with $\somesubgraph = \bigcup_{i=1}^{n-1}\somecycle_i$ a union of $n-1$ cusp-free cycles.
By connectedness of $\somesetcycle$, each connected component $\somesubgraph_j$ of $\somesubgraph$ respects $\somesubgraph_j\cap\somecycle_n \neq \emptyset$, with $\somesubgraph_j$ a connected union of cusp-free cycles.
By induction hypothesis, each $\somesubgraph_j$ is \alternating-connected, and $\somecycle_n$ also is by \cref{lem:sc_is_sp_graph}.
Then, $\somesetcycle$ is \alternating-connected by repeated applications of \cref{lem:union_sp_graph} on $\somecycle_n\cup\somesubgraph_1$, $\somecycle_n\cup\somesubgraph_1\cup\somesubgraph_2$, \dots, $\somecycle_n\cup\somesubgraph=\somesetcycle$.
\end{proof}

\subsection{Finding a Splitting Vertex}
\label{sec:mall_graphs_splitting}

We now use \alternating-connectedness to prove our theorem, using that each element of $\necubfcycles$ (\cref{def:necubfcycles}) is \alternating-connected thanks to \cref{cor:usc-c=sc_graph}.

\begin{lem}\label{lem:cycles_no_ret_graph}
Consider $\somesetcycle \in \necubfcycles$ and $\someedge$ an edge of endpoints $\somevertex$ and $\othervertex$ such that $\somevertex \in \somesetcycle \not\ni \othervertex$ and $(\somevertex, \coloring(\someedge, \somevertex))$ is not a cusp-point of $\somegraph$.
Then, for all vertex $\otherothervertex\in\somesetcycle$ and color $\somecolor$, $(\othervertex, \coloring(\someedge, \othervertex)) \sobfnbpath (\otherothervertex, \somecolor)$ cannot hold.
\end{lem}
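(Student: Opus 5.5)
Assume towards a contradiction that $(\othervertex, \coloring(\someedge, \othervertex)) \sobfnbpath[\somepath] (\otherothervertex, \somecolor)$ for some $\otherothervertex\in\somesetcycle$. I first normalize $\somepath$: take the shortest non-empty prefix $\somepath'$ of $\somepath$ whose target $\otherothervertex'$ lies in $\somesetcycle$. By \cref{rem:sobfnbpath_prefix}, $(\othervertex, \coloring(\someedge, \othervertex)) \sobfnbpath[\somepath'] (\otherothervertex', \somecolor')$, where $\somecolor'$ is the ending color of $\somepath'$, and the only vertex of $\somepath'$ in $\somesetcycle$ is $\otherothervertex'$. In particular $\somepath'$ avoids $\somevertex$ unless $\otherothervertex'=\somevertex$.

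Next I close the loop inside $\somesetcycle$. Two cases arise.

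If $\otherothervertex'=\somevertex$, then $\concatpath{(\somevertex,\someedge,\othervertex)}{\somepath'}$ is a cycle by \cref{lem:concatsimplepaths}; the edges differ because $\somepath'$ does not start with color $\coloring(\someedge,\othervertex)$. Its only possible cusp is at $\somevertex$, between $\someedge$ and the last edge of $\somepath'$. That cusp is excluded because $(\somevertex, \coloring(\someedge,\somevertex))$ is not a cusp-point. So the cycle is cusp-free.

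If $\otherothervertex'\neq\somevertex$, I use \alternating-connectedness of $\somesetcycle$ (\cref{cor:usc-c=sc_graph}). This gives a path $\otherpath$ inside $\somesetcycle$ with $(\otherothervertex', \somecolor') \sobfnbpath[\otherpath] (\somevertex, \othercolor)$. The path $\concatpath{\somepath'}{\otherpath}$ is simple and open by \cref{lem:concatsimplepathsprefix}, and it has no cusp at $\otherothervertex'$ since $\otherpath$ does not start with $\somecolor'$. Closing it with $\someedge$ gives a cycle $\concatpath{(\somevertex,\someedge,\othervertex)}{\concatpath{\somepath'}{\otherpath}}$ via \cref{lem:concatsimplepaths}. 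It has no cusp at $\othervertex$ by the starting color of $\somepath'$, and no cusp at $\somevertex$ by the non-cusp-point hypothesis. Hence it is cusp-free.

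In both cases we obtain a cusp-free cycle containing $\someedge$ that meets $\somesetcycle$ at $\somevertex$. Its union with $\somesetcycle$ is then a connected union of cusp-free cycles that strictly contains $\somesetcycle$, because it contains $\othervertex\notin\somesetcycle$. This contradicts maximality of $\somesetcycle$ in $\necubfcycles$.

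The main obstacle is the second case: checking simplicity and that the concatenation is legitimate. The key point is that the normalization in the first step ensures $\somepath'$ shares only $\otherothervertex'$ with $\somesetcycle$, and hence with $\otherpath$. Also, $\othervertex\notin\somesetcycle$ ensures $\othervertex$ is not on $\otherpath$.
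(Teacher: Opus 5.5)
Your proof is correct and follows the paper's route: take the shortest prefix of the path that reaches $\somesetcycle$, close it inside $\somesetcycle$ using \alternating-connectedness (\cref{cor:usc-c=sc_graph}), and use the resulting cusp-free cycle through $\someedge$ to contradict the maximality of $\somesetcycle$. Your explicit case where the prefix already ends at $\somevertex$ is the paper's ``$\otherpath$ empty'' case, written out separately.
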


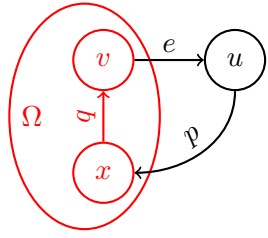
\begin{figure}
\centering
\begin{tikzpicture}
\begin{genscope}[red]
	\draw (.5,0) ellipse (10mm and 15mm);
	\node[draw=none] at (-.2,0) {$\somesetcycle$};
	\node (l) at (.75,.75) {$\somevertex$};
	\node (v) at (.75,-.75) {$\otherothervertex$};

	\path (v) edge node[chemin nomme,above]{$\otherpath$} (l);
\end{genscope}
\begin{genscope}
	\node (w) at (2.5,.75) {$\othervertex$};
	\coordinate (vf) at (2,-.75);

	\path (l) edge node[arete nommee]{$\someedge$} (w);
	\path[out=-90,in=0] (w) edge node[chemin nomme]{$\somepath$} (v);
\end{genscope}
\end{tikzpicture}
\caption{Illustration of the proof of \cref{lem:cycles_no_ret_graph}}
\label{fig:cycles_no_ret_graph}
\end{figure}

\begin{proof}
An illustration of this proof is given on \cref{fig:cycles_no_ret_graph}.
Towards a contradiction, assume $(\othervertex, \coloring(\someedge, \othervertex)) \sobfnbpath[\somepath] (\otherothervertex, \somecolor)$ for some $\otherothervertex \in \somesetcycle$ and color $\somecolor$.
Up to taking a non-empty prefix, $\somepath$ has for only vertex in $\somesetcycle$ its target $\otherothervertex$, with $\somesetcycle \ni \otherothervertex \neq \othervertex\notin \somesetcycle$.
As $\somesetcycle$ is \alternating-connected by \cref{cor:usc-c=sc_graph}, there exists a path $\otherpath$ in $\somesetcycle$ from $\otherothervertex$ to $\somevertex$ with either $\otherpath$ empty or $(\otherothervertex, \somecolor) \sobfnbpath[\otherpath] (\somevertex, \othercolor)$ for some color $\othercolor$.

By \cref{lem:concatsimplepaths,lem:concatsimplepathsprefix}, $\somecycle = \concatpath{(\somevertex,\someedge,\othervertex)}{\concatpath{\somepath}{\otherpath}}$ is a cycle since $\somepath$ cannot start with $\someedge$.
Moreover, it is cusp-free since $(\somevertex, \coloring(\someedge, \somevertex))$ is not a cusp-point.
Therefore, $\somesetcycle\cup\somecycle$ is a connected union of cusp-free cycles, contradicting the maximality of $\somesetcycle \in \necubfcycles$.
\end{proof}

This implies the vertex of a maximal element for $\orderyeo$ cannot belong to a cusp-free cycle.

\begin{lem}\label{lem:in_sc_no_max_graph}
Suppose $\somegraph$ respects \cref{HG1}.
Consider $\somesetcycle \in \necubfcycles$, $\somevertex \in \somesetcycle$ a vertex and $\somecolor$ a color.
Then either $\somevertex = \sourceofsetcycles(\somesetcycle)$ and $\somecolor = \coloring(\edgeofsetcycles(\somesetcycle), \sourceofsetcycles(\somesetcycle))$, or there exists a path $\somepath$ inside $\somesetcycle$ such that $(\somevertex, \somecolor) \orderyeo[\concatpath{\somepath}{(\sourceofsetcycles(\somesetcycle),\edgeofsetcycles(\somesetcycle),\targetofsetcycles(\somesetcycle))}] (\targetofsetcycles(\somesetcycle), \coloring(\edgeofsetcycles(\somesetcycle), \targetofsetcycles(\somesetcycle)))$.
\end{lem}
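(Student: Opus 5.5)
Write $\someleaf\eqdef\sourceofsetcycles(\somesetcycle)$, $\somewith\eqdef\targetofsetcycles(\somesetcycle)$, $\someedge\eqdef\edgeofsetcycles(\somesetcycle)$, and let $\othercolor\eqdef\coloring(\someedge,\somewith)$. Assume we are not in the excluded case, so $(\somevertex,\somecolor)\neq(\someleaf,\coloring(\someedge,\someleaf))$. The plan has two steps: first build a path $\somepath$ inside $\somesetcycle$ with $(\somevertex,\somecolor)\sobfnbpath[\concatpath{\somepath}{(\someleaf,\someedge,\somewith)}](\somewith,\othercolor)$, then upgrade $\sobfnbpath$ to $\orderyeo$ using \cref{lem:cycles_no_ret_graph}.

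\emph{Construction of the path.} There are two cases.

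If $\somevertex=\someleaf$, take $\somepath$ empty. Then $\somecolor\neq\coloring(\someedge,\someleaf)$, so the one-edge path $(\someleaf,\someedge,\somewith)$ is simple, open and cusp-free. Its starting color is not $\somecolor$ and its ending color is $\othercolor$.

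If $\somevertex\neq\someleaf$, use \cref{cor:usc-c=sc_graph}: $\somesetcycle$ is \alternating-connected, so there is a path $\somepath$ inside $\somesetcycle$ with $(\somevertex,\somecolor)\sobfnbpath[\somepath](\someleaf,\otherothercolor)$ for some $\otherothercolor$. Append $\someedge$. The result is simple by \cref{lem:concatsimplepathsprefix}, because $\somewith\notin\somesetcycle$ while $\somepath$ stays inside $\somesetcycle$. There is no cusp at $\someleaf$ by \cref{HG1}: $(\someleaf,\coloring(\someedge,\someleaf))$ is not a cusp-point, so no edge other than $\someedge$ has color $\coloring(\someedge,\someleaf)$ at $\someleaf$, and in particular $\otherothercolor\neq\coloring(\someedge,\someleaf)$. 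The starting color is still not $\somecolor$, since the first edge of the path is unchanged.

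\emph{The $\orderyeo$ condition.} Let $\otherpath$ be any path with $(\somewith,\othercolor)\sobfnbpath[\otherpath](\otherothervertex,\otherotherothercolor)$. We must show that $\otherothervertex$ is not a vertex of $\concatpath{\somepath}{(\someleaf,\someedge,\somewith)}$. The vertices of that path are $\somewith$ together with vertices of $\somesetcycle$. First, $\otherothervertex\neq\somewith$, because $\otherpath$ is open. Second, $\otherothervertex\notin\somesetcycle$ by \cref{lem:cycles_no_ret_graph}, applied to $\somesetcycle$ and the edge $\someedge$ with endpoints $\someleaf\in\somesetcycle\not\ni\somewith$; its hypothesis that $(\someleaf,\coloring(\someedge,\someleaf))$ is not a cusp-point is exactly \cref{HG1}.

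The only subtle point is the cusp at $\someleaf$ when appending $\someedge$. This is precisely where \cref{HG1} is used, and it requires reading ``not a cusp-point'' as ``$\someedge$ is the unique edge at $\someleaf$ with that color''. Everything else follows directly from earlier results.
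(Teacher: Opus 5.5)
Your proof is correct and follows the paper's own argument essentially step for step. Both take the path inside $\somesetcycle$ from \alternating-connectedness (\cref{cor:usc-c=sc_graph}), or take it empty when $\somevertex=\sourceofsetcycles(\somesetcycle)$; both append the exit edge using \cref{lem:concatsimplepathsprefix} and \cref{HG1}, and both get the $\orderyeo$ condition from \cref{lem:cycles_no_ret_graph} plus openness of the outgoing path. Your explicit reading of \cref{HG1} is exactly what the paper's brief appeal to that hypothesis relies on: $\edgeofsetcycles(\somesetcycle)$ is the only edge at $\sourceofsetcycles(\somesetcycle)$ with color $\coloring(\edgeofsetcycles(\somesetcycle),\sourceofsetcycles(\somesetcycle))$.
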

\begin{proof}
As $\sourceofsetcycles(\somesetcycle) \in \somesetcycle$, using \cref{cor:usc-c=sc_graph} there exists a path $\somepath$ in $\somesetcycle$ from $\somevertex$ to $\sourceofsetcycles(\somesetcycle)$ with either $\somepath$ empty or $(\somevertex, \somecolor) \sobfnbpath[\somepath] (\sourceofsetcycles(\somesetcycle), \othercolor)$ for some color $\othercolor$.
Consider the path $\concatpath{\somepath}{(\sourceofsetcycles(\somesetcycle),\edgeofsetcycles(\somesetcycle),\targetofsetcycles(\somesetcycle))}$.
It is simple and open as all vertices of $\somepath$ belong to $\somesetcycle$ while $\targetofsetcycles(\somesetcycle)$ is outside (\cref{lem:concatsimplepathsprefix}).
Moreover, this path is cusp-free as $\somepath$ is cusp-free and using \cref{HG1}.

Hence, if $\somepath$ is non-empty then $(\somevertex, \somecolor) \sobfnbpath[\concatpath{\somepath}{(\sourceofsetcycles(\somesetcycle),\edgeofsetcycles(\somesetcycle),\targetofsetcycles(\somesetcycle))}] (\targetofsetcycles(\somesetcycle), \coloring(\edgeofsetcycles(\somesetcycle), \targetofsetcycles(\somesetcycle)))$ follows as the starting color of $\concatpath{\somepath}{(\sourceofsetcycles(\somesetcycle),\edgeofsetcycles(\somesetcycle),\targetofsetcycles(\somesetcycle))}$ is the starting color of $\somepath$.
If $\somepath$ is empty, then $\somevertex = \sourceofsetcycles(\somesetcycle)$, and $(\somevertex, \somecolor) \sobfnbpath[(\sourceofsetcycles(\somesetcycle),\edgeofsetcycles(\somesetcycle),\targetofsetcycles(\somesetcycle))] (\targetofsetcycles(\somesetcycle), \coloring(\edgeofsetcycles(\somesetcycle), \targetofsetcycles(\somesetcycle)))$ follows unless $\somecolor = \coloring(\edgeofsetcycles(\somesetcycle), \sourceofsetcycles(\somesetcycle))$, in which case we are done.
In both cases, we get $(\somevertex, \somecolor) \sobfnbpath[\concatpath{\somepath}{(\sourceofsetcycles(\somesetcycle),\edgeofsetcycles(\somesetcycle),\targetofsetcycles(\somesetcycle))}] (\targetofsetcycles(\somesetcycle), \coloring(\edgeofsetcycles(\somesetcycle), \targetofsetcycles(\somesetcycle)))$.

Furthermore, consider any vertex-color pair $(\othervertex, \othercolor)$ and assume $(\targetofsetcycles(\somesetcycle), \coloring(\edgeofsetcycles(\somesetcycle), \targetofsetcycles(\somesetcycle))) \sobfnbpath[\otherpath] (\othervertex, \othercolor)$.
Then $\othervertex \notin \somesetcycle$ by \cref{lem:cycles_no_ret_graph}, so in particular $\othervertex$ is not a vertex of $\somepath$, and $\othervertex \neq \targetofsetcycles(\somesetcycle)$ since $\otherpath$ is open.
Hence, $(\somevertex, \somecolor) \orderyeo[\concatpath{\somepath}{(\sourceofsetcycles(\somesetcycle),\edgeofsetcycles(\somesetcycle),\targetofsetcycles(\somesetcycle))}] (\targetofsetcycles(\somesetcycle), \coloring(\edgeofsetcycles(\somesetcycle), \targetofsetcycles(\somesetcycle)))$.
\end{proof}

This handles vertex-color pairs whose vertices are in cusp-free cycles.
For the others, we can apply cusp cycling (\cref{lem:bjumping_Yeo}) if the cusp we find is at a vertex not in a cusp-free cycle, and a cusp minimization (\cref{lem:bjumping_gen}) if it is in such a cycle.
In this last case, we need some study of paths, making the proof of the next result a bit long.

\begin{prop}\label{prop:no_splitting_no_max_graph}
Suppose $\somegraph$ respects \cref{HG1,HG2}.
Let $(\somevertex, \somecolor)$ be a vertex-color pair \emph{not} in $\{(\sourceofsetcycles(\somesetcycle), \coloring(\edgeofsetcycles(\somesetcycle), \sourceofsetcycles(\somesetcycle)) \suchthat \somesetcycle \in \necubfcycles\}$.
If $\somevertex$ is not splitting, then there exists a cusp-point $(\othervertex, \othercolor)$ such that $(\somevertex, \somecolor) \orderyeo (\othervertex, \othercolor)$.
\end{prop}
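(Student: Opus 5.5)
We follow the pattern of \cref{prop:order_max_Yeo}, with two extra cases to handle the cusp-free cycles that are now allowed.

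\textbf{Case A: $\somevertex$ lies in some cusp-free cycle.} Take $\somesetcycle\in\necubfcycles$ containing $\somevertex$. Since $(\somevertex,\somecolor)\notin\somesetedge_{out}$, \cref{lem:in_sc_no_max_graph} gives
\[(\somevertex,\somecolor)\orderyeo(\targetofsetcycles(\somesetcycle),\coloring(\edgeofsetcycles(\somesetcycle),\targetofsetcycles(\somesetcycle))).\]
By \cref{HG2} the right-hand side is a cusp-point, which concludes this case.

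\textbf{Case B: $\somevertex$ lies in no cusp-free cycle.} As $\somevertex$ is not splitting, pick $\somecycle\in\mincycles{\somevertex}$. By \cref{lem:reversing_trick}, orient it so that its starting color is not $\somecolor$. The cycle $\somecycle$ passes through $\somevertex$, so it is not cusp-free. Let $\othervertex$ be its first cusp and $\othercolor$ the color of that cusp. Then $(\somevertex,\somecolor)\sobfnbpath[\subpath{\somecycle}{\somevertex}{\othervertex}](\othervertex,\othercolor)$. It remains to show the $\orderyeo$ condition: no $(\othervertex,\othercolor)\sobfnbpath[\otherpath](\otherothervertex,\otherothercolor)$ with $\otherothervertex$ on the prefix. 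Suppose such a $\otherpath$ exists; taking its first hitting point, it ends on $\somecycle$.

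\textbf{Sub-case B1: $\othervertex$ is in no cusp-free cycle.} \cref{lem:bjumping_Yeo} yields a cusp-free cycle through $\othervertex$, a contradiction.

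\textbf{Sub-case B2: $\othervertex$ lies in some $\somesetcycle\in\necubfcycles$.} This is the hard part, handled with \cref{lem:bjumping_gen}. Since $\somevertex\notin\somesetcycle$, let $\otherothervertex$ and $\otherotherothervertex$ be the last vertex of $\somesetcycle$ before the cusp on $\somecycle$ and the first one after it. The segment $\subpath{\somecycle}{\otherothervertex}{\otherotherothervertex}$ contains the cusp at $\othervertex$. Two inputs are needed.
\begin{itemize}
\item For the three-way junction $\somepier$ and the exit edge $\someedge=\edgeofsetcycles(\somesetcycle)$ with $\someleaf=\targetofsetcycles(\somesetcycle)$: by \cref{cor:usc-c=sc_graph}, there are cusp-free paths $\otherotherpath,\otherotherotherpath$ inside $\somesetcycle$ from $\otherothervertex$ and $\otherotherothervertex$ to $\sourceofsetcycles(\somesetcycle)$, with prescribed starting colors avoiding the colors of the adjacent $\somecycle$ segments. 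Appending $\someedge$ keeps them cusp-free by \cref{HG1}.
\item For the return path $\somepath$ from $\someleaf$ back to $\somecycle$: obtain it from $\otherpath$ combined with \cref{lem:cycles_no_ret_graph}. That lemma forces any $\sobfnbpath$-path leaving along the exit to avoid $\somesetcycle$, so $\somepath$ meets neither $\otherotherpath$ nor $\otherotherotherpath$.
\end{itemize}
With these, \cref{lem:bjumping_gen} gives either a cusp-free cycle containing $\someedge$, contradicting the maximality of $\somesetcycle$ since it would connect to $\somesetcycle$ and include $\targetofsetcycles(\somesetcycle)\notin\somesetcycle$, or a cycle contradicting $\somecycle\in\mincycles{\somevertex}$.

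\textbf{Main obstacle.} The main obstacle is B2: routing $\otherpath$ so that it is the continuation of the exit edge. I expect this step to need a further case split according to whether $\otherpath$ starts inside $\somesetcycle$, using $(\othervertex,\othercolor)$ and \alternating-connectedness to reroute through $\sourceofsetcycles(\somesetcycle)$. The disjointness hypotheses of \cref{lem:bjumping_gen} must also be checked via prefix-taking arguments.
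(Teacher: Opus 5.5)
Your Cases A and B1 are correct and coincide with the paper's argument. The gap is in B2. There you keep the first cusp $(u,\beta)$ of $\omega$ as the target and try to refute every $q$ with $(u,\beta)\sobfnbpath[q](x',\tau)$ and $x'$ on $\subpath{\omega}{v}{u}$. Such a $q$ can genuinely exist, so no contradiction is available. Whenever $\omega$ meets $\Omega$ at some $x\neq u$ before reaching $u$, the \alternating-connectedness of $\Omega$ (\cref{cor:usc-c=sc_graph}) gives a path inside $\Omega$ from $(u,\beta)$ to $x$. That is the very tool you propose for rerouting.

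Here is a concrete instance.
\begin{itemize}
\item Take a cusp-free triangle $\Omega$ on $x,u,z$.
\item Add vertices $v,y$ and edges $vx$, $uy$, $yv$ (edges named by their endpoints), with $\coloring(vx,x)=\coloring(zx,x)$, $\coloring(xu,u)=\coloring(uy,u)=\beta\neq\coloring(uz,u)$, and no cusp at $v$ or $y$.
\item Add an exit edge $\edgeofsetcycles(\Omega)=zw$ with a fresh color at $z$, and a pendant edge at $w$ with the same color at $w$ as $zw$. Then \cref{HG1,HG2} hold.
\end{itemize}
Now $\necubfcycles=\{\Omega\}$ and $v$ lies in no cusp-free cycle. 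The cycle $\omega=(v,vx,x,xu,u,uy,y,yv,v)$ is in $\mincycles{v}$, and its only cusp is at $u\in\Omega$. Take $\alpha=\coloring(yv,v)$. The path $(u,uz,z,zx,x)$ witnesses $(u,\beta)\sobfnbpath(x,\coloring(zx,x))$, and $x$ lies on the prefix. In fact $(v,\alpha)\orderyeo(u,\beta)$ fails for every path: the only other candidate, $(v,yv,y,uy,u)$, starts with color $\alpha$. This $q$ never leaves $\Omega$, so no rerouting can make it a continuation of an exit edge, which is what \cref{lem:bjumping_gen} needs.

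The missing idea is to change the target in B2. Let $x$ and $y$ be the first and last vertices of $\omega$ in $\Omega$, not those nearest the cusp; this is what lets $\rho,\chi$ avoid the rest of $\omega$. Aim at the cusp-point $(\targetofsetcycles(\Omega),\coloring(\edgeofsetcycles(\Omega),\targetofsetcycles(\Omega)))$ given by \cref{HG2}. Use the witness path $\subpath{\omega}{v}{x}\cdot\rho\cdot(\sourceofsetcycles,\edgeofsetcycles,\targetofsetcycles)$, with $\rho$ and $\chi$ obtained from \cref{lem:in_sc_no_max_graph} at $x$ and $y$ (with the colors of $\omega$ there).

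Every return path now starts at $\targetofsetcycles$ with a color different from $\coloring(\edgeofsetcycles,\targetofsetcycles)$, so it automatically continues the exit edge. Then:
\begin{itemize}
\item \cref{lem:cycles_no_ret_graph} keeps it off $\Omega$, hence disjoint from $\rho$ and $\chi$;
\item \cref{lem:bjumping_gen} prevents it from ending on $\subpath{\omega}{y}{v}\cdot\subpath{\omega}{v}{x}$; the empty-path case also gives simplicity of the witness path;
\item the $\orderyeo$ from \cref{lem:in_sc_no_max_graph} prevents it from ending on $\rho$.
\end{itemize}
You had the right ingredients, but they only fit together once pointed at this other cusp-point. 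In the example above this yields $(v,\alpha)\orderyeo(w,\coloring(zw,w))$ via $(v,vx,x,xu,u,uz,z,zw,w)$.
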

\begin{figure}
\centering
\begin{tikzpicture}
\begin{genscope}
	\node (v) at (-2,.3) {$\somevertex$};
	\coordinate (vc) at (-2,-.35);
	\node (fst) at (1,-1) {$\otherothervertex$};
	\node (pier) at (3,-1) {$\somepier$};
	\coordinate (wend) at (4,.75);
	\node (lst) at (1,1.4) {$\otherotherothervertex$};
	\coordinate (tup) at (-.5,1.4);
	\coordinate (tdown) at (-.5,-1);

	\path (v) edge (vc);
	\path[out=-90,in=180] (vc) edge[-] (tdown);
	\path[out=0,in=180] (tdown) edge[-] (fst);
	\path[out=0,in=180] (fst) edge[-] (pier);
	\path[out=0,in=-45] (pier) edge[-] (wend);
	\path[out=135,in=0] (wend) edge[-] (lst);
	\path[out=180,in=0] (lst) edge[-] (tup);
	\path[out=180,in=90] (tup) edge[-] node[chemin nomme]{$\somecycle$} (v);
\end{genscope}
\begin{genscope}[red]
	\node[draw=none] () at (3.5,1.75) {$\somesetcycle$};
	\coordinate (o1) at (1,2);
	\coordinate (o2) at (3.75,2);
	\coordinate (o3) at (3.75,-1.5);
	\coordinate (o4) at (1,-1.5);
	\path (o1) edge[-] (o2);
	\path (o2) edge[-] (o3);
	\path (o3) edge[-] (o4);
	\path (o4) edge[-] (fst);
	\path (fst) edge[-] (lst);
	\path (lst) edge[-] (o1);

	\node (l) at (2.5,.5) {$\sourceofsetcycles$};
	\path (fst) edge node[chemin nomme,below]{$\otherotherpath$} (l);
	\path (lst) edge node[chemin nomme]{$\otherotherotherpath$} (l);
\end{genscope}
\begin{genscope}[teal]
	\node (w) at (.4,.2) {$\targetofsetcycles$};
	\path (l) edge node[arete nommee,above]{$\edgeofsetcycles$} (w);
\end{genscope}
\begin{myscopehigh}{blue}
	\coordinate (vcbis) at (-1.9,-.35);
	\coordinate (tdownbis) at (-.5,-.9);
	\coordinate (tupbis) at (-.5,1.3);
	\path (v.-75) edge (vcbis);
	\path[out=-90,in=180] (vcbis) edge[-] node[chemin nomme]{$\somepath$} (tdownbis);
	\path[out=0,in=180] (tdownbis) edge[-] (fst.165);
	\path (fst.60) edge (l.-150);
	\path (l.202) edge (w.-8);
\end{myscopehigh}
\end{tikzpicture}
\caption{Illustration of named elements in the proof of \cref{prop:no_splitting_no_max_graph}: case where $\somevertex$ is not in a cusp-free cycle but $\somepier$ is}
\label{fig:proof_no_splitting_no_max_graph_p}
\end{figure}
\begin{proof}
If $\somevertex$ belongs to a cusp-free cycle, then the result follows by \cref{lem:in_sc_no_max_graph}, which gives a cusp-point using \cref{HG2}.
Therefore, we assume it is not the case.

As $\somevertex$ is not splitting, $\mincycles{\somevertex} \neq \emptyset$: take some $\somecycle \in \mincycles{\somevertex}$.
Up to reversing $\somecycle$, assume its starting color is not $\somecolor$ (\cref{lem:reversing_trick}).
This cycle contains at least one cusp: denote by $\somepier$ the vertex of the first cusp of $\somecycle$, and by $\othercolor$ its color.
We have two cases, according to whether $\somepier$ belongs to a cusp-free cycle or not.

If $\somepier$ does not belong to a cusp-free cycle, then by \cref{lem:bjumping_Yeo} we have $(\somevertex, \somecolor)\orderyeo[\subpath{\somecycle}{\somevertex}{\somepier}] (\somepier, \othercolor)$.

Thus, suppose from now on that $\somepier$ belongs to a cusp-free cycle, and call $\somesetcycle$ the maximal connected union of cusp-free cycles containing $\somepier$; observe $\somesetcycle \in \necubfcycles$.
We will now name some vertices, edges, paths and colors; see \cref{fig:proof_no_splitting_no_max_graph_p} for an illustration.
We use the notation $\edgeofsetcycles$ for $\edgeofsetcycles(\somesetcycle)$, $\sourceofsetcycles$ for $\sourceofsetcycles(\somesetcycle)$ and $\targetofsetcycles$ for $\targetofsetcycles(\somesetcycle)$.
Pose $\otherothervertex$ (\resp\ $\otherotherothervertex$) the first (\resp\ last) vertex of $\somecycle$ belonging to $\somesetcycle$.
Remark $\somevertex\notin\{\otherothervertex,\otherotherothervertex\}$ as $\somevertex$ is not in any cusp-free cycle, thus not in $\somesetcycle$.
Let $\somecolor_\otherothervertex$ be the ending color of $\subpath{\somecycle}{\somevertex}{\otherothervertex}$, and $\somecolor_\otherotherothervertex$ be the starting color of $\subpath{\somecycle}{\otherotherothervertex}{\somevertex}$.
By \cref{lem:in_sc_no_max_graph}, there exists a path $\otherotherpath$ in $\somesetcycle$ between $\otherothervertex$ and $\sourceofsetcycles$ such that either ($\otherothervertex=\sourceofsetcycles$ and thus) $\otherotherpath$ is empty or $(\otherothervertex, \somecolor_\otherothervertex) \orderyeo[\concatpath{\otherotherpath}{(\sourceofsetcycles,\edgeofsetcycles,\targetofsetcycles)}] (\targetofsetcycles, \coloring(\edgeofsetcycles, \targetofsetcycles))$.
Similarly, by \cref{lem:in_sc_no_max_graph}, there exists a path $\otherotherotherpath$ in $\somesetcycle$ between $\otherotherothervertex$ and $\sourceofsetcycles$ such that either ($\otherotherothervertex=\sourceofsetcycles$ and thus) $\otherotherotherpath$ is empty or $(\otherotherothervertex, \somecolor_\otherotherothervertex) \orderyeo[\concatpath{\otherotherotherpath}{(\sourceofsetcycles,\edgeofsetcycles,\targetofsetcycles)}] (\targetofsetcycles, \coloring(\edgeofsetcycles, \targetofsetcycles))$.

Let us prove that $\targetofsetcycles \notin \subpath{\somecycle}{\somevertex}{\otherothervertex}$, and that if $(\targetofsetcycles, \coloring(\edgeofsetcycles,\targetofsetcycles)) \sobfnbpath (\othervertex, \otherothercolor)$ for some vertex-color pair $(\othervertex, \otherothercolor)$, then $\othervertex\notin\subpath{\somecycle}{\somevertex}{\otherothervertex}$.
To this end, it suffices to show there is no simple, open or empty, cusp-free path $\otherpath$ with source $\targetofsetcycles$, target $\othervertex\in\concatpath{\subpath{\somecycle}{\otherotherothervertex}{\somevertex}}{\subpath{\somecycle}{\somevertex}{\otherothervertex}}$ and whose starting color (if any) is not $\coloring(\edgeofsetcycles, \targetofsetcycles)$.
We proceed by contradiction: take such a $\otherpath$.
Note that no vertex of $\otherpath$ belongs to $\somesetcycle$ for otherwise we contradict \cref{lem:cycles_no_ret_graph} by \cref{HG1}.
In particular $\othervertex\notin\{\otherothervertex,\otherotherothervertex,\sourceofsetcycles\}$, and $\otherpath$ share no vertex with  $\otherotherpath$ nor with $\otherotherotherpath$.
We then have a contradiction by \cref{lem:bjumping_gen}: either $\edgeofsetcycles$ belongs to a cusp-free cycle, contradicting the maximality of $\somesetcycle$, or there is a cycle starting with $\somevertex$, with no cusp at $\somevertex$ and with strictly less cusps than $\somecycle$, contradicting $\somecycle\in\mincycles{\somevertex}$.
We thus conclude that there is no such path as $\otherpath$.

Since $\targetofsetcycles\notin\subpath{\somecycle}{\somevertex}{\otherothervertex}$, by \cref{lem:concatsimplepathsprefix}, $\somepath=\concatpath{\subpath{\somecycle}{\somevertex}{\otherothervertex}}{\concatpath{\otherotherpath}{(\sourceofsetcycles,\edgeofsetcycles,\targetofsetcycles)}}$ is a simple open path (see \cref{fig:proof_no_splitting_no_max_graph_p}), that is cusp-free by construction and \cref{HG1}.
Thus, $(\somevertex, \somecolor) \sobfnbpath[\somepath] (\targetofsetcycles,\coloring(\edgeofsetcycles,\targetofsetcycles))$.
We cannot have $(\targetofsetcycles,\coloring(\edgeofsetcycles,\targetofsetcycles)) \sobfnbpath (\othervertex, \otherothercolor)$ with $\othervertex \in \somepath$: we already have $\othervertex \notin \subpath{\somecycle}{\somevertex}{\otherothervertex}$, and $\othervertex \notin \otherotherpath$ follows from $(\otherothervertex, \somecolor_\otherothervertex) \orderyeo[\concatpath{\otherotherpath}{(\sourceofsetcycles,\edgeofsetcycles,\targetofsetcycles)}] (\targetofsetcycles, \coloring(\edgeofsetcycles, \targetofsetcycles))$ or $\otherotherpath$ is empty.
Therefore, $(\somevertex, \somecolor) \orderyeo[\somepath] (\targetofsetcycles,\coloring(\edgeofsetcycles,\targetofsetcycles))$, where $(\targetofsetcycles,\coloring(\edgeofsetcycles,\targetofsetcycles))$ is a cusp-point by \cref{HG2}.
\end{proof}

\begin{proof}[Proof of \cref{th:MALLParamYeo}]
Take $(\somevertex, \somecolor)\in\somesetedge$ maximal for $\orderyeo$ (restricted to $\somesetedge$): $\somevertex$ is splitting.
Indeed, otherwise there would be some cusp-point $(\othervertex, \othercolor)$ such that $(\somevertex, \somecolor) \orderyeo (\othervertex, \othercolor)$ by \cref{prop:no_splitting_no_max_graph}.
As $\somesetedge$ dominates cusp-points, we would get $(\othervertex, \othercolor) \in \somesetedge$ or $(\somevertex, \somecolor) \orderyeo (\othervertex, \othercolor) \orderyeo (\otherothervertex, \otherothercolor)$ for some $(\otherothervertex, \otherothercolor) \in \somesetedge$, contradicting the maximality of $(\somevertex, \somecolor)$.
\end{proof}


\section{Multiplicative-Additive Proof Nets}
\label{section:additives}

We now adapt our proof of sequentialization of \cref{sec:seq} in presence of the additive connectives, using as proof nets the ones defined by Dominic Hughes and Rob van Glabbeek~\cite{mallpnlong}.
The core of the demonstration is the same as in the multiplicative case: a splitting vertex allows us to conclude by induction, and the main difficulty is finding such a splitting vertex.
The two methods used for multiplicative proof nets can be adapted to the multiplicative-additive case, whether finding a splitting $\parr$- or $\with$-vertex as in \cref{sec:seq:dummies}, or finding some kind of splitting vertex thanks to Yeo's theorem as in \cref{sec:seq:yeo}.
We adapt here only the second method, that yields a more general result.
Furthermore, we modify a little the definition of proof nets from~\cite{mallpnlong} to allow open hypotheses; we also add $\ax$-vertices to have a directed partial graph.

\subsection{Unit-Free Multiplicative-Additive Linear Logic with Mix}

The unit-free multiplicative-additive fragment of linear logic~\cite{ll} has formulas given by the following grammar, where $X$ belongs to a given enumerable set of \definitive{atoms}:
\begin{equation*}
  A~::=~X \mid X\orth \mid A\tensor A \mid A\parr A \mid A \with A \mid A \oplus A
\end{equation*}
The \definitive{dual} operator $(\_)\orth$ is extended to an involution on all formulas by De Morgan duality: $(X\orth)\orth=X$, $(A\tensor B)\orth=A\orth\parr B\orth$, $(A\parr B)\orth=A\orth\tensor B\orth$, $(A\oplus B)\orth=A\orth\with B\orth$ and $(A\with B)\orth=A\orth\oplus B\orth$.

Formally, as for multiplicative linear logic, we consider localized formulas so as to get a notion of occurrence of a formula $A$.

We consider the deduction system \MALLhmix\ given by \emph{cut-free\footnote{Our proof technique also applies in presence of the $(\cut)$ rule, but the definition of proof nets with open hypotheses and $(\cut)$ rules is quite technical.} open} derivations in unit-free multiplicative-additive linear logic with \emph{mix} rules and \emph{atomic axioms} (\ie\ introducing an atom and its dual):
\begin{equation*}
\begin{prooftree}
	\infer0[\ax]{\vdash X\orth, X}
\end{prooftree}
\qquad\quad
\begin{prooftree}
	\infer0[\hyp]{\vdash A}
\end{prooftree}
\end{equation*}
\begin{equation*}
\begin{prooftree}
	\hypo{\vdash A, \Gamma}
	\hypo{\vdash B, \Delta}
	\infer2[\tensor]{\vdash A\tensor B, \Gamma, \Delta}
\end{prooftree}
\qquad\quad
\begin{prooftree}
	\hypo{\vdash A, B, \Gamma}
	\infer1[\parr]{\vdash A\parr B, \Gamma}
\end{prooftree}
\qquad\quad
\begin{prooftree}
	\hypo{\vdash \Gamma}
	\hypo{\vdash \Delta}
	\infer2[\mix_2]{\vdash \Gamma, \Delta}
\end{prooftree}
\qquad\quad
\begin{prooftree}
	\infer0[\mix_0]{\vdash}
\end{prooftree}
\end{equation*}
\begin{equation*}
\begin{prooftree}
	\hypo{\vdash A, \Gamma}
	\hypo{\vdash B, \Gamma}
	\infer2[\with]{\vdash A\with B, \Gamma}
\end{prooftree}
\qquad\quad
\begin{prooftree}
	\hypo{\vdash A, \Gamma}
	\infer1[\oplus_1]{\vdash A\oplus B, \Gamma}
\end{prooftree}
\qquad\quad
\begin{prooftree}
	\hypo{\vdash B, \Gamma}
	\infer1[\oplus_2]{\vdash A\oplus B, \Gamma}
\end{prooftree}
\end{equation*}

The axiom expansion procedure of linear logic ensures that any provable sequent has a derivation using atomic axioms only~\cite{ll}.
When $\someproof$ is a derivation of $\vdash \Gamma$ whose $(\hyp)$ rules are on (localized) formulas $\Delta$, we write $\someproof$ is a derivation of $\Delta\vdash\Gamma$.

The main difference with the multiplicative fragment of linear logic is the $(\with)$ rule, which introduces some sharing of the context $\Gamma$.
From this comes the notion of a \emph{slice}~\cite{ll,pn} which is a partial derivation missing some additive components.
Slices are obtained by using the same rules as for derivations except for the $(\with)$ rule which is replaced by its two sliced versions:
\begin{equation*}
\begin{prooftree}
	\hypo{\vdash A, \Gamma}
	\infer1[\with_1]{\vdash A\with B, \Gamma}
\end{prooftree}
\qquad\quad
\begin{prooftree}
	\hypo{\vdash B, \Gamma}
	\infer1[\with_2]{\vdash A\with B, \Gamma}
\end{prooftree}
\end{equation*}

Furthermore, we introduce a restriction on the $(\hyp)$ rules.
\emph{Given a derivation $\someproof$ with an $(\hyp)$ rule on $\vdash A$, in every slice of $\someproof$ there must be an $(\hyp)$ rule on this occurrence $A$.}
For instance, the first following derivation respects this constraint, while the other two do not:
\begin{gather*}
\begin{prooftree}
	\infer0[\hyp]{\vdash A}
	\infer0[\ax]{\vdash X, X\orth}
	\infer2[\tensor]{\vdash A\tensor X, X\orth}
	\infer0[\hyp]{\vdash A}
	\infer0[\ax]{\vdash X, X\orth}
	\infer2[\tensor]{\vdash A\tensor X, X\orth}
	\infer2[\with]{\vdash A\tensor X, X\orth\with X\orth}
\end{prooftree}
\\
\begin{prooftree}
	\infer0[\hyp]{\vdash X}
	\infer0[\ax]{\vdash X,X\orth}
	\infer2[\mix_2]{\vdash X,X,X\orth}
	\infer0[\ax]{\vdash X,X\orth}
	\infer0[\ax]{\vdash Y,Y\orth}
	\infer1[\parr]{Y\parr Y\orth}
	\infer2[\mix_2]{\vdash X,Y\parr Y\orth,X\orth}
	\infer2[\with]{\vdash X,X\with(Y\parr Y\orth),X\orth}
\end{prooftree}
\qquad
\begin{prooftree}
	\infer0[\hyp]{\vdash X}
	\infer0[\hyp]{\vdash X}
	\infer2[\with]{\vdash X\with X}
\end{prooftree}
\end{gather*}

We restrict $(\hyp)$ rules this way since our sequentialization procedure always produces $(\hyp)$ rules of the aforementioned shape, and the notion of a proof net is simpler with this constraint.
Furthermore, we want a notion of subtitution using the $(\hyp)$ rule, as in multiplicative linear logic, which imposes some constraint on the usage of $(\hyp)$ rules.
If $\someproof_1$ is a derivation of $\Sigma\vdash\Gamma,A$ and $\someproof_2$ is a derivation of $A, \Theta\vdash\Delta$, the \definitive{substitution} of $\someproof_1$ for a hypothesis $A$ in $\someproof_2$ is a derivation of $\Sigma,\Theta\vdash\Gamma,\Delta$: it is obtained from $\someproof_2$ by replacing \emph{every} $(\hyp)$ rule on $\vdash A$ with $\someproof_1$, adding this way $\Gamma$ to all sequents of $\someproof_2$ below a $\vdash A$.
That the result is a derivation does not hold when $(\hyp)$ rules are unconstrained: see the derivation above on $X\vdash X,X\with(Y\parr Y\orth),X\orth$ in which we want to substitute a proof made of an $(\ax)$ rule on $X$.

Lastly, the \definitive{\mixretore\ reduction} is also defined here:
\[
  \begin{prooftree}[center=true]
    \hypo{\vdash \Gamma}
    \infer0[\mix_0]{\vdash}
    \infer2[\mix_2]{\vdash \Gamma}
  \end{prooftree}
  \quad\rightsquigarrow\quad
  \begin{prooftree}
    \hypo{\vdash \Gamma}
  \end{prooftree}
  \qquad\qquad
  \begin{prooftree}[center=true]
    \infer0[\mix_0]{\vdash}
    \hypo{\vdash \Gamma}
    \infer2[\mix_2]{\vdash \Gamma}
  \end{prooftree}
  \quad\rightsquigarrow\quad
  \begin{prooftree}
    \hypo{\vdash \Gamma}
  \end{prooftree}
\]
Again, it defines a confluent and strongly normalizing rewriting system on derivations.

\begin{lem}[\Mixretore\ Normal Forms]
\label{lem:mixretorenf_mall}
If $\someproof$ is a derivation from \MALLhmix\ in \mixretore\ normal form, either it is
\begin{prooftree}
	\infer0[\mix_0]{\vdash}
\end{prooftree},
or it does not contain the $(\mix_0)$ rule.
\end{lem}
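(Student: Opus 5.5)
The statement to prove is \cref{lem:mixretorenf_mall}: a \mixretore-normal derivation of \MALLhmix\ is either reduced to a single $(\mix_0)$ rule or contains no $(\mix_0)$ rule at all. I would follow exactly the argument used for \cref{lem:mixretorenf} in the multiplicative case. The only change is to check that the additional rules of \MALLhmix\ do not spoil the key observation.

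That observation concerns which rules can have an empty conclusion sequent, and which rules can have an empty premise. An occurrence of $(\mix_0)$ concludes the empty sequent $\vdash$. Let me go through the rules of \MALLhmix\ one by one:
\begin{itemize}
\item the rules $(\ax)$ and $(\hyp)$ have a non-empty conclusion and no premise;
\item the rules $(\tensor)$, $(\parr)$, $(\with)$, $(\oplus_1)$ and $(\oplus_2)$ each require a premise containing a distinguished formula ($A$ or $B$), so none of their premises can be empty;
\item only $(\mix_2)$ can have an empty sequent as a premise, and only $(\mix_2)$ and $(\mix_0)$ can have an empty conclusion.
\end{itemize}

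Now assume $\someproof$ is in \mixretore-normal form, contains a $(\mix_0)$ rule, and is not reduced to it. That $(\mix_0)$ occurrence is then not the root, so it is a premise of some rule. By the case analysis above, that rule must be a $(\mix_2)$. This $(\mix_2)$ has the $(\mix_0)$ as its left or right premise, and its other premise has some conclusion $\vdash\Gamma$, possibly empty. This is exactly one of the two redexes of \mixretore\ reduction, which contradicts normality.

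I do not expect any real obstacle. The only point to double-check is that the premises of the additive rules $(\with)$ and $(\oplus_i)$ always contain their principal subformula, and they do. The constraint on $(\hyp)$ rules is irrelevant here, since $(\hyp)$ is a leaf with a non-empty conclusion.
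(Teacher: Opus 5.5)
Your proof is correct and takes the same approach as the paper, whose entire argument is that, as in \MLLhmix, $(\mix_2)$ is the only rule of \MALLhmix\ accepting an empty sequent as a premise. A $(\mix_0)$ that is not the root must therefore sit under a $(\mix_2)$ and form a \mixretore\ redex; your rule-by-rule check of $(\ax)$, $(\hyp)$, $(\tensor)$, $(\parr)$, $(\with)$ and $(\oplus_i)$ just makes that observation explicit.
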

\begin{proof}
As in \MLLhmix, $(\mix_2)$ is the only rule accepting an empty sequent as a premise.
\end{proof}

\subsection{Proof Nets for Unit-Free Multiplicative-Additive Linear Logic}\label{subsec:mall-pn}

We use the notion of unit-free MALL proof nets from~\cite{mallpnlong}, that we recall below -- we refer to that paper for more details and the intuitions guiding the definitions.
Alternative definitions of MALL proof nets exist: the original one from Girard~\cite{pn}, or others such as~\cite{jumpadd,conflictnet}.
Still, the definition we follow is one of the most satisfactory, from the point of view of canonicity and cut-elimination for instance:
see~\cite{mallpnlong,mallpncom}, or the introduction of~\cite{conflictnet} for a comparison of alternative definitions.
Nevertheless, we do not take exactly this definition: we allow open hypotheses \emph{shared among all slices} as for derivations; we also add $\ax$-vertices to have a directed partial graph -- otherwise, axiom links are undirected -- and put jump edges from these vertices instead of from leaves.

We always identify (an occurrence of) a formula $A$ with its \definitive{syntactic tree}, with as internal vertices its connectives and as leaves its atoms, edges being oriented towards the root and with a partial edge of source the root and undefined target.
This last edge is the edge of the syntactic tree \definitive{associated with} $A$.
A \definitive{(shared open) hypothesis} on a sequent $\vdash \Gamma$ is a vertex in (a syntactic tree of) $\Gamma$.
Given a set of hypotheses $\Delta$, we denote by $\Delta \vdash \Gamma$ the sequent $\vdash \Gamma$ with these hypotheses $\Delta$.
We silently consider $\Delta \vdash \Gamma$ as the following directed partial graph, called its \definitive{syntactic forest}: take the union of the syntactic trees of the formulas of $\Gamma$ and remove all sub-trees rooted on vertices in $\Delta$, \emph{keeping their partial edges}.
Notice the difference between hypotheses and leaves: an hypothesis corresponds to a partial edge with no source.
For instance, consider $Z\with Z\orth \vdash X\oplus(Y\with Y), (Y\orth\with X\orth) \tensor (Z\with Z\orth)$, whose syntactic forest is depicted on \cref{fig:ex_syntactic_forest}.
We will instantiate the concepts defined in this part on this sequent with hypotheses.

\begin{figure}
\centering
\begin{tikzpicture}
\begin{genscope}
	\node (X) at (-1.75,1) {$Y$};
	\node (Y^) at (-.25,1) {$Y$};
	\node (Y) at (.75,1) {$Y\orth$};
	\node (X^) at (2.25,1) {$X\orth$};
	\coordinate (Z) at (2.75,-.25);
	\node (W) at (-1,0) {$\with$};
	\node (P) at (-1.75,-1) {$\oplus$};
	\node (Pa) at (-2.5,0) {$X$};
	\node (O) at (1.5,0) {$\with$};
	\node (T) at (2.25,-1) {$\tensor$};

	\path (X) edge (W);
	\path (Y^) edge (W);
	\path (Y) edge (O);
	\path (X^) edge (O);
	\path (O) edge (T);
	\path (Z) edge (T);
	\path (W) edge (P);
	\path (Pa) edge (P);
	\path (P) edge ++(0,-.75);
	\path (T) edge ++(0,-.75);
\end{genscope}
\end{tikzpicture}
\caption{Syntactic forest of $Z\with Z\orth \vdash X\oplus(Y\with Y), (Y\orth\with X\orth) \tensor (Z\with Z\orth)$}
\label{fig:ex_syntactic_forest}
\end{figure}

Given a vertex $\somevertex$ in $\Delta \vdash \Gamma$, we call \definitive{premise} of $\somevertex$ an edge of target $\somevertex$ in the syntactic forest.
Similarly, a \definitive{conclusion} of $\somevertex$ is an edge of source $\somevertex$ in the syntactic forest.
An \definitive{argument sub-tree} of (the syntactic tree of) $\somevertex$ is a sub-tree whose partial edge is a premise of $\somevertex$; possibly, an argument sub-tree may consist of a partial edge only.
For instance, the two argument sub-trees of the $\tensor$-vertex in \cref{fig:ex_syntactic_forest} are the tree of $Y\orth\with X\orth$ and the edge with no source.

An \definitive{additive resolution} of a sequent with hypotheses $\Delta \vdash \Gamma$ is any result of deleting one argument sub-tree of each additive ($\with$ or $\oplus$) vertex such that \emph{no partial edge of vertices in $\Delta$ is deleted this way} -- \ie\ all partial edges having undefined source should be kept.
A \definitive{$\with$-resolution} of a sequent with hypotheses $\Delta \vdash \Gamma$ is any result of deleting one argument sub-tree of each $\with$-vertex.
For example, $Z\with Z\orth \vdash X\oplus~, (~\with X\orth) \tensor (Z\with Z\orth)$ is one of the six additive resolutions of $Z\with Z\orth \vdash X\oplus(Y\with Y), (Y\orth\with X\orth) \tensor (Z\with Z\orth)$, while $Z\with Z\orth \vdash X\oplus(~\with Y), (~\with X\orth) \tensor (Z\with Z\orth)$ is one of its four $\with$-resolutions.
Notice the difference on hypotheses between additive and $\with$-resolutions: each partial edge with no source must appear in all additive resolutions, whereas it may not belong to some $\with$-resolutions.

An \definitive{axiom link} on $\Delta \vdash \Gamma$ is an unordered pair of complementary leaves $\{X,X\orth\}$ in $\Delta \vdash \Gamma$ for some atom $X$.
A \definitive{linking} $\somelinking$ on $\Delta\vdash\Gamma$ is a set of axiom links on $\Delta \vdash \Gamma$ that forms a partition of the set of leaves of an additive resolution of $\Delta \vdash \Gamma$; this (unique) additive resolution is denoted $\Delta \vdash \Gamma\upharpoonright\somelinking$.
In this case, we say $\somelinking$ is a linking \definitive{on} the additive resolution $\Delta \vdash \Gamma\upharpoonright\somelinking$.
Similarly, we say $\somelinking$ is \definitive{on} a given $\with$-resolution if each leaf of $\somelinking$ is also a leaf of that $\with$-resolution (note that a given linking may be on several $\with$-resolutions, since a deleted argument sub-tree of a $\oplus$-vertex might contain a $\with$-vertex).
For instance, on the leftmost partial graph of \cref{fig:ex_pn}, the red $\ax$-vertex forms a linking $\somelinking_1 = \{\{X,X\orth\}\}$, whose
additive resolution is $Z\with Z\orth \vdash X\oplus~, (~\with X\orth) \tensor (Z\with Z\orth)$;
it is also a linking on the $\with$-resolutions $Z\with Z\orth \vdash X\oplus(Y\with~), (~\with X\orth) \tensor (Z\with Z\orth)$ and $Z\with Z\orth \vdash X\oplus(Y\with~), (~\with X\orth) \tensor (Z\with Z\orth)$.

A set of linkings $\somesetlinking$ on $\Delta \vdash \Gamma$ \definitive{toggles} a $\with$-vertex $\somewith$ if both premises of $\somewith$ are in ${\Delta \vdash \Gamma\upharpoonright\somesetlinking\coloneqq\bigcup_{\somelinking\in\somesetlinking} \Delta \vdash \Gamma\upharpoonright\somelinking}$.
We say an axiom link $\somelink$ \definitive{depends} on a $\with$-vertex $\somewith$ in $\somesetlinking$ if there exist $\somelinking,\somelinking'\in\somesetlinking$ such that $\somelink\in\somelinking\backslash\somelinking'$ and $\somewith$ is the only $\with$-vertex toggled by $\{\somelinking,\somelinking'\}$.
Looking at our running example on \cref{fig:ex_pn}, with the linkings $\somelinking_1=\{\{X,X\orth\}\}$ and $\somelinking_2=\{\{Y,Y\orth\}\}$, a unique $\with$-vertex is toggled by $\{\somelinking_1,\somelinking_2\}$: the rightmost one.
Furthermore, both axiom links depend on this $\with$-vertex for $\somelinking_1$ and $\somelinking_2$ contain only different pairs.

The directed partial graph $\G_\somesetlinking$ is defined as $\Delta \vdash \Gamma\upharpoonright\somesetlinking$ with for each pair $\{X,X\orth\}$ in $\bigcup\somesetlinking$ an $\ax$-vertex $\somevertex$ with two out-edges to $X$ and $X\orth$, and enriched with jump edges $\somevertex\rightarrow\somewith$ for each $\with$-vertex $\somewith$ such that $\{X,X\orth\}$ depends on $\somewith$ in $\somesetlinking$.
When $\somesetlinking=\{\somelinking\}$ is composed of a single linking, we shall simply denote $\G_\somelinking=\G_{\{\somelinking\}}$ -- which is the partial graph $\Delta \vdash \Gamma\upharpoonright\somelinking$ with the $\ax$-vertices from $\somelinking$ and no jump edge.
Going back to our example, the partial graphs $\G_{\somelinking_1}$, $\G_{\somelinking_2}$, $\G_{\somelinking_3}$ and $\G_{\{\somelinking_1,\somelinking_2,\somelinking_3\}}$ are illustrated on \cref{fig:ex_pn}.

When we write a $\pw$-vertex, we mean a $\parr$- or $\with$-vertex, and similarly for other combinations of labels.
A \definitive{switch edge} of a $\pw$-vertex $\somevertex$ is a premise of $\somevertex$ or a jump edge to $\somevertex$.
A \definitive{switching cycle} is a cycle with at most one switch edge of each $\pw$-vertex.

\begin{defi}[Proof net]\label{def:pn}
A \definitive{proof net} $\somepn$ on a sequent with hypotheses $\Delta \vdash \Gamma$ is a set of linkings satisfying:
\begin{enumeratecref}[start=1,label=\textnormal{(P\arabic*)},ref=\textnormal{(P\arabic*)}]
\item\label{item:P1}
\emph{Resolution:}
There is exactly one linking of $\somepn$ on each $\with$-resolution of $\Delta \vdash \Gamma$.
\item\label{item:P2}
\emph{MLL:}
For every linking $\somelinking\in\somepn$, $\G_\somelinking$ has no switching cycle.
\item\label{item:P3}
\emph{Toggling:}
Every set $\somesetlinking\subseteq\somepn$ of two or more linkings toggles a $\with$-vertex that is in no switching cycle of $\G_\somesetlinking$.
\end{enumeratecref}
\end{defi}

\begin{rem}
A notion of connected proof net for MALL without the mix rules is obtained by replacing the requirement \cref{item:P2} by the following one.
A \definitive{$\parr$-switching} of a linking $\somelinking$ is any sub-graph of $\G_\somelinking$ obtained by disconnecting one of the two premises of each $\parr$-vertex; denoting by $\phi$ this choice of edges, the partial graph it yields is $\G_\phi$.
For example, a cycle in a $\parr$-switching is a switching cycle, as in $\G_\phi$ all $\pw$-vertices have a unique switch edge.
\begin{enumeratecref}[start=2,label=\textnormal{({P\arabic*}\textsuperscript{c})},ref=\textnormal{({P\arabic*}\textsuperscript{c})}]
\item\label{item:P2c}
\emph{MLL:}
For every $\parr$-switching $\phi$ of every linking $\somelinking\in\somepn$, $\G_\phi$ is acyclic and connected.
\end{enumeratecref}
Observe that \cref{item:P2c} implies \cref{item:P2}, so that a connected proof net is a particular kind of proof net.
As for MLL, one can deduce connected sequentialization from sequentialization.
\end{rem}

These conditions are called the \definitive{correctness criterion}.
Condition \cref{item:P1} is a correctness criterion for ALL proof nets~\cite{mallpnlong} and \cref{item:P2} (\resp\ \cref{item:P2c}) is the Danos-Regnier criterion for MLL proof nets (\resp\ connected MLL proof nets)~\cite{structmult}.
However, \cref{item:P1,item:P2} together are insufficient for cut-free MALL proof nets, hence the last condition \cref{item:P3} taking into account interactions between the slices -- see also~\cite{jumpadd} for a similar condition.
Sets composed of a single linking $\somelinking$ are not considered in \cref{item:P3}, for by \cref{item:P2} the partial graph $\G_\somelinking$ has no switching cycle.
We say that $\somepn$ is a \definitive{proof structure} if it satisfies \cref{item:P1}.
One can check that our example $\{\somelinking_1,\somelinking_2,\somelinking_3\}$ on \cref{fig:ex_pn} is a (connected) proof net.

\begin{figure}
\begin{adjustbox}{}
\begin{tikzpicture}
\begin{genscope}
	\node (X^) at (2.25,1) {$X\orth$};
	\coordinate (Z) at (2.75,-.25);
	\node (P) at (-1.75,-1) {$\oplus$};
	\node (Pa) at (-2.5,0) {$X$};
	\node (O) at (1.5,0) {$\with$};
	\node (T) at (2.25,-1) {$\tensor$};

	\path (X^) edge (O);
	\path (O) edge (T);
	\path (Z) edge (T);
	\path (Pa) edge (P);
	\path (P) edge ++(0,-.75);
	\path (T) edge ++(0,-.75);
\end{genscope}
\begin{genscope}[red]
	\node (ax') at (-0.125,4) {$\ax$};
\end{genscope}
\begin{genscope}
	\path[out=180,in=90] (ax') edge (Pa);
	\path[out=0,in=90] (ax') edge (X^);
\end{genscope}
\end{tikzpicture}
\quad
\begin{tikzpicture}
\begin{genscope}
	\node (Y^) at (-.25,1) {$Y$};
	\node (Y) at (.75,1) {$Y\orth$};
	\coordinate (Z) at (2.75,-.25);
	\node (W) at (-1,0) {$\with$};
	\node (P) at (-1.75,-1) {$\oplus$};
	\node (O) at (1.5,0) {$\with$};
	\node (T) at (2.25,-1) {$\tensor$};

	\path (Y^) edge (W);
	\path (Y) edge (O);
	\path (O) edge (T);
	\path (Z) edge (T);
	\path (W) edge (P);
	\path (P) edge ++(0,-.75);
	\path (T) edge ++(0,-.75);
\end{genscope}
\begin{genscope}[blue]
	\node (axY) at (.25,2) {$\ax$};
\end{genscope}
\begin{genscope}
	\path (axY) edge (Y);
	\path (axY) edge (Y^);
\end{genscope}
\end{tikzpicture}
\quad
\begin{tikzpicture}
\begin{genscope}
	\node (X) at (-1.75,1) {$Y$};
	\node (Y) at (.75,1) {$Y\orth$};
	\coordinate (Z) at (2.75,-.25);
	\node (W) at (-1,0) {$\with$};
	\node (P) at (-1.75,-1) {$\oplus$};
	\node (O) at (1.5,0) {$\with$};
	\node (T) at (2.25,-1) {$\tensor$};

	\path (X) edge (W);
	\path (Y) edge (O);
	\path (O) edge (T);
	\path (Z) edge (T);
	\path (W) edge (P);
	\path (P) edge ++(0,-.75);
	\path (T) edge ++(0,-.75);
\end{genscope}
\begin{genscope}[violet]
	\node (axX) at (-0.5,3) {$\ax$};
\end{genscope}
\begin{genscope}
	\path[out=0,in=90] (axX) edge (Y);
	\path[out=180,in=90] (axX) edge (X);
\end{genscope}
\end{tikzpicture}
\quad
\begin{tikzpicture}
\begin{genscope}
	\node (X) at (-1.75,1) {$Y$};
	\node (Y^) at (-.25,1) {$Y$};
	\node (Y) at (.75,1) {$Y\orth$};
	\node (X^) at (2.25,1) {$X\orth$};
	\coordinate (Z) at (2.75,-.25);
	\node (W) at (-1,0) {$\with$};
	\node (P) at (-1.75,-1) {$\oplus$};
	\node (Pa) at (-2.5,0) {$X$};
	\node (O) at (1.5,0) {$\with$};
	\node (T) at (2.25,-1) {$\tensor$};

	\path (X) edge (W);
	\path (Y^) edge (W);
	\path (Y) edge (O);
	\path (X^) edge (O);
	\path (O) edge (T);
	\path (Z) edge (T);
	\path (W) edge (P);
	\path (Pa) edge (P);
	\path (P) edge ++(0,-.75);
	\path (T) edge ++(0,-.75);
\end{genscope}
\begin{genscope}[violet]
	\node (axX) at (-0.5,3) {$\ax$};
\end{genscope}
\begin{genscope}[blue]
	\node (axY) at (.25,2) {$\ax$};
\end{genscope}
\begin{genscope}[red]
	\node (ax') at (-0.125,4) {$\ax$};
\end{genscope}
\begin{genscope}
	\path[out=0,in=90] (axX) edge (Y);
	\path[out=180,in=90] (axX) edge (X);
	\path (axY) edge (Y);
	\path (axY) edge (Y^);
	\path[out=180,in=90] (ax') edge (Pa);
	\path[out=0,in=90] (ax') edge (X^);
	\path[out=-90,in=90] (axX) edge (W);
	\path[out=25,in=105] (axX) edge (O);
	\path[out=-95,in=60] (axY) edge[-] (.05,.4);
	\path[out=-120,in=0] (.05,.4) edge (W);
	\path[out=-85,in=120] (axY) edge[-] (.45,.4);
	\path[out=-60,in=180] (.45,.4) edge (O);
	\path[out=-10,in=80] (ax') edge (O);
\end{genscope}
\end{tikzpicture}
\end{adjustbox}
\caption{Partial graphs from an example of a proof net $\{{\color{red}\somelinking_1},{\color{blue}\somelinking_2},{\color{violet}\somelinking_3}\}$ on $Z\with Z\orth \vdash X\oplus(Y\with Y), (Y\orth\with X\orth) \tensor (Z\with Z\orth)$: from left to right $\G_{{\color{red}\somelinking_1}}$, $\G_{{\color{blue}\somelinking_2}}$, $\G_{{\color{violet}\somelinking_3}}$ and $\G_{\{{\color{red}\somelinking_1},{\color{blue}\somelinking_2},{\color{violet}\somelinking_3}\}}$}
\label{fig:ex_pn}
\end{figure}

\begin{rem}
Condition \cref{item:P1} implies that hypotheses belong to all $\with$-resolutions.
Indeed, there is a linking on each $\with$-resolution, whose additive-resolution (that contains all partial edges with undefined source) is included in this $\with$-resolution.
This means a sequent where an hypothesis belongs to only some $\with$-resolutions, such as $X \vdash X\with (Y\parr Y\orth)$, cannot be made into a proof net.
\end{rem}

We will need a few more definitions.
Fix a set of linkings $\somesetlinking$ on the sequent with hypothesis $\Delta \vdash \Gamma$, and a vertex $\somevertex$ in $\G_\somesetlinking$.
\begin{itemize}
\item
The vertex $\somevertex$ is \definitive{terminal} if it is not a leaf and it is the root of a syntactic tree of $\Delta \vdash \Gamma$, or if it is an $\ax$-vertex with out-edges to leaves $X\orth$ and $X$ that are the roots of two syntactic trees of $\Delta \vdash \Gamma$.
\item
An additive vertex is \definitive{unary} (\resp\ \definitive{binary}) if it has exactly one (\resp\ two) premises in $\G_\somesetlinking$.
\item
A non-leaf vertex $\somevertex$ is \definitive{splitting} when:
\begin{itemize}
\item
$\somevertex$ is an $\ax\backslash\tensor\backslash\oplus$-vertex which is not in any cycle of $\G_\somepn$;
\item
$\somevertex$ is a $\pw$-vertex whose conclusion is not in any cycle of $\G_\somepn$;
\end{itemize}
(see \cref{fig:splitting} for an illustration, with a $\oplus$-vertex similar to a $\tensor$-vertex and a $\with$-vertex similar to a $\parr$-vertex; a leaf is never splitting).
\end{itemize}

\subsection{Desequentialization}

We desequentialize a \MALLhmix\ derivation $\someproof$ of $\vdash \Gamma$ with hypotheses $\Delta$ into a proof net on $\Delta \vdash \Gamma$.
We do so by induction on $\someproof$ using the steps detailed on \cref{fig:translation_inductive}, following~\cite{mallpnlong} with the notation $\somepn\triangleright \Delta\vdash\Gamma$ for ``$\somepn$ is a set of linkings on $\Delta\vdash\Gamma$''.
We note $\deseq(\someproof)$ the set of linkings $\someproof$ desequentializes to.
One can easily prove that $\deseq(\someproof)$ is a proof net by proceeding by induction as in~\cite{mallpnlong}.
For instance, the set $\{\{\{X, X\orth\}\}\}$, composed of a unique linking $\{\{X, X\orth\}\}$, which itself has a unique axiom link $\{X, X\orth\}$, is a proof net on the sequent $\vdash X\orth,X$: this sequent has a unique $\with$-resolution and there is no (switching) cycle in $\G_{\{\{X, X\orth\}\}}$.

\begin{figure}
\centering
\begin{equation*}
\begin{prooftree}
	\infer0[\ax]{\{\{\{X, X\orth\}\}\}~\triangleright~\vdash X\orth,X}
\end{prooftree}
\qquad\quad
\begin{prooftree}
	\infer0[\hyp]{\{\emptyset\}~\triangleright~A \vdash A}
\end{prooftree}
\end{equation*}
\begin{equation*}
\begin{prooftree}
	\hypo{\theta~\triangleright~\Sigma\vdash A,\Gamma}
	\hypo{\vartheta~\triangleright~\Phi\vdash B,\Delta}
	\infer2[\tensor]{\{\lambda\cup\mu \mid \lambda\in\theta,\mu\in\vartheta\}~\triangleright~\Sigma,\Phi\vdash A\tensor B,\Gamma,\Delta}
\end{prooftree}
\qquad\quad
\begin{prooftree}
	\hypo{\theta~\triangleright~\Delta\vdash A,B,\Gamma}
	\infer1[\parr]{\theta~\triangleright~\Delta\vdash A\parr B,\Gamma}
\end{prooftree}
\end{equation*}
\begin{equation*}
\begin{prooftree}
	\hypo{\theta~\triangleright~\Sigma\vdash\Gamma}
	\hypo{\vartheta~\triangleright~\Phi\vdash\Delta}
	\infer2[\mix_2]{\{\lambda\cup\mu \mid \lambda\in\theta,\mu\in\vartheta\}~\triangleright~\Sigma,\Phi\vdash\Gamma,\Delta}
\end{prooftree}
\qquad\quad
\begin{prooftree}
	\infer0[\mix_0]{\{\emptyset\}~\triangleright~\vdash}
\end{prooftree}
\end{equation*}
\begin{equation*}
\begin{prooftree}
	\hypo{\theta~\triangleright~\Delta\vdash A,\Gamma}
	\hypo{\vartheta~\triangleright~\Delta\vdash B,\Gamma}
	\infer2[\with]{\theta\cup\vartheta~\triangleright~\Delta\vdash A\with B,\Gamma}
\end{prooftree}
\end{equation*}
\begin{equation*}
\begin{prooftree}
	\hypo{\theta~\triangleright~\Delta\vdash A,\Gamma}
	\infer1[\oplus_1]{\theta~\triangleright~\Delta\vdash A\oplus B,\Gamma}
\end{prooftree}
\qquad\quad
\begin{prooftree}
	\hypo{\theta~\triangleright~\Delta\vdash B,\Gamma}
	\infer1[\oplus_2]{\theta~\triangleright~\Delta\vdash A\oplus B,\Gamma}
\end{prooftree}
\end{equation*}
\caption{Inductive definition of the translation of \MALLhmix\ derivations to sets of linkings}
\label{fig:translation_inductive}
\end{figure}

\begin{lem}[Desequentialization of a substitution]
\label{lem:deseqsub_mall}
If $\someproof$ is the substitution of a derivation $\someproof_1$ for a hypothesis $A$ in a derivation $\someproof_2$, then $\deseq(\someproof)=\{\somelinking_1\cup\somelinking_2 \mid \somelinking_1\in\deseq(\someproof_1),\somelinking_2\in\deseq(\someproof_2)\}$.
\end{lem}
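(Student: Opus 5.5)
I would prove this by structural induction on $\someproof_2$, following the inductive definition of $\deseq$ in \cref{fig:translation_inductive}. Write $\someproof_1$ as a derivation of $\Sigma\vdash\Gamma,A$ and $\someproof_2$ as a derivation of $A,\Theta\vdash\Delta$. Set $\somepn_1\eqdef\deseq(\someproof_1)$, and write $\somepn_1\ast\somepn\eqdef\{\somelinking_1\cup\somelinking \mid \somelinking_1\in\somepn_1,\somelinking\in\somepn\}$. The claim is then $\deseq(\someproof)=\somepn_1\ast\deseq(\someproof_2)$.

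Before the induction, I would check that this is well typed. Every linking of $\somepn_1$ lives on the syntactic forest of $\Sigma\vdash\Gamma,A$, and every linking of $\deseq(\someproof_2)$ lives on the forest of $A,\Theta\vdash\Delta$, where the sub-tree of $A$ has been cut off and only its partial edge remains. Gluing the root edge of $A$ in the first forest onto the hypothesis edge $A$ of the second gives the forest of $\Sigma,\Theta\vdash\Gamma,\Delta$. The leaf sets of the two pieces are disjoint, so the union of two linkings is a set of axiom links on the glued sequent. I would not need to re-prove that the result is a proof net, since $\deseq(\someproof)$ is one by construction. What matters is only the equality of sets of linkings.

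For the induction I would split on the last rule of $\someproof_2$. There are three groups of cases.
\begin{enumerate}
\item \emph{Base cases.} If $\someproof_2$ is the $(\hyp)$ rule on this $A$, then $\someproof=\someproof_1$, and $\somepn_1\ast\{\emptyset\}=\somepn_1$. If $\someproof_2$ is an $(\ax)$, $(\mix_0)$, or an $(\hyp)$ rule on another formula, then $A$ does not occur. This situation is excluded, or else substitution does nothing and the equation is trivial.
\item \emph{Unary and non-sharing rules.} For $(\parr)$ and $(\oplus_i)$, the set of linkings is unchanged by the rule, so the induction hypothesis transfers directly. For $(\tensor)$ and $(\mix_2)$, the hypothesis $A$ lies in exactly one premise, since hypotheses carry distinct locations. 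Substitution then acts only on that premise, and the case reduces to associativity and commutativity of $\ast$: $(\somepn_1\ast\somepn)\ast\somepn'=\somepn_1\ast(\somepn\ast\somepn')$.
\item \emph{The $(\with)$ case.} Here the hypothesis context is shared, and the $(\hyp)$ restriction (an $(\hyp)$ rule on $A$ in every slice) guarantees that $A$ is a hypothesis of both premises $\someproof_2^1$ and $\someproof_2^2$. Substitution replaces every $(\hyp)$ on $A$, so $\someproof$ ends with a $(\with)$ whose premises are the substitutions of $\someproof_1$ into $\someproof_2^1$ and into $\someproof_2^2$. By the induction hypothesis and distributivity,
\[\deseq(\someproof)=(\somepn_1\ast\somepn^1)\cup(\somepn_1\ast\somepn^2)=\somepn_1\ast(\somepn^1\cup\somepn^2),\]
where $\somepn^i\eqdef\deseq(\someproof_2^i)$.
\end{enumerate}

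The main obstacle is the $(\with)$ case. I have to make sure that the $(\hyp)$ restriction really puts $A$ in both premises. If it appeared in only one, the union $\somepn^1\cup\somepn^2$ would contain linkings that never receive a copy of $\somepn_1$, and the equation would fail. The restriction is exactly what excludes this. I also have to handle the identification of sequent contexts carefully: $\Gamma$ is added below every occurrence of $\vdash A$, and the premises of the $(\with)$ must still share a single context, $\Gamma$ included. This holds because substitution adds the same $\Gamma$ to both premises.
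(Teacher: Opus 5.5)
Your proof is correct and is the structural induction the paper intends. The paper's proof is the single line ``by induction on the derivation $\someproof_1$'', but the induction that actually goes through is on the receiving derivation $\someproof_2$, as you do, and you rightly isolate the $(\with)$ case and the $(\hyp)$ restriction as the only non-routine points.

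One harmless slip: in the base cases, a $\someproof_2$ without hypothesis $A$ is genuinely excluded, not ``trivial''. There the equation would read $\deseq(\someproof_2)=\somepn_1\ast\deseq(\someproof_2)$, which is false in general, but your inductive cases never invoke it since premises without $A$ are left untouched.
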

\begin{proof}
By induction on the derivation $\someproof_1$.
\end{proof}

\section{Sequentialization}\label{sec:seqmall}

Similarly to proof nets of multiplicative linear logic, a key result about proof nets of multiplicative-additive linear logic is that the desequentialization function is surjective:

\begin{thm}[Sequentialization]\label{th:seq_mall}
Given a proof net $\somepn$, there exists a derivation $\someproof$ in \MALLhmix\ such that $\somepn=\deseq(\someproof)$; $\someproof$ is called a \definitive{sequentialization} of $\somepn$.
\end{thm}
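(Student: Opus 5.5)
The proof is by induction on the size of the syntactic forest of $\Delta\vdash\Gamma$ together with the number of linkings, mirroring the proof of \cref{thm:seq}. We build a \mixretore-normal $\someproof$, and the key lemma is the existence of a splitting vertex in any proof net whose sequent contains a non-leaf vertex. The base cases are direct. An empty sequent gives $\{\emptyset\}$, hence $(\mix_0)$. A sequent reduced to hypotheses, $A\vdash A$, gives $(\hyp)$. If the forest (together with the $\ax$-vertices of all linkings) splits into several connected components of $\G_\somepn$, we decompose along $(\mix_2)$. For this we check that $\somepn$ is then $\{\somelinking\cup\otherlinking\}$ over proof nets on the two sub-sequents, using \cref{item:P1} and the fact that $\with$-resolutions factor.

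\textbf{Splitting vertex.} To find a splitting vertex we apply \cref{th:MALLYeo}, or better \cref{th:MALLParamYeo} with a terminal-vertex parameter as in \cref{cor:seq:yeo:terminal}, to $\G_\somepn$ with a well-coloring. $\ax$-, $\tensor$- and $\oplus$-vertices get pairwise distinct colors on all incident edges. The $\parr$-vertices get a common color on their premises, distinct from their conclusion. Each $\with$-vertex gets one common color on its premises \emph{and} on all jump edges into it, distinct from its conclusion. The cusp-points are then exactly the $\pw$-vertices with their switch color, and cusp-free cycles are exactly the switching cycles of \cref{item:P2,item:P3}.

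We must then supply an exit function satisfying \cref{HG1,HG2}. Take $\somesetcycle\in\necubfcycles$ and let $\somesetlinking$ be the set of linkings whose slices meet $\somesetcycle$. Since \cref{item:P2} forbids switching cycles inside a single slice, $\somesetlinking$ has at least two elements. \cref{item:P3} then yields a $\with$-vertex $\somewith$ toggled by $\somesetlinking$ and lying on no switching cycle. Using \cref{cor:usc-c=sc_graph}, we argue that some switch edge of $\somewith$ (a premise or a jump) leaves $\somesetcycle$ from a vertex of $\somesetcycle$, and we take it as $\edgeofsetcycles(\somesetcycle)$. \cref{HG2} holds because that edge makes a cusp at $\somewith$ with the other switch edges. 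For \cref{HG1}, the source endpoint is an $\ax$-vertex or a non-$\pw$ vertex reached through a conclusion, where no cusp is possible. This step is the main obstacle: relating $\somesetcycle$ (built in the union graph $\G_\somepn$) to $\G_\somesetlinking$, whose jump edges depend on $\somesetlinking$, and showing that the toggled $\with$-vertex is genuinely adjacent to $\somesetcycle$ through a switch edge. I would first try to show that every cusp-free cycle of $\G_\somepn$ is a switching cycle of $\G_{\somesetlinking'}$ for the linkings it uses, and then take the $\subseteq$-minimal such set.

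\textbf{Decomposing at the splitting vertex.} With a splitting, and w.l.o.g. terminal, vertex $\somevertex$ in hand, we decompose by its label. For $\tensor$, removing $\somevertex$ disconnects the two premises, so every linking splits as $\somelinking_1\cup\somelinking_2$ and $\somepn$ is the product of two smaller proof nets, giving $(\tensor)$. For $\parr$ we simply drop $\somevertex$. For $\oplus$, all linkings use the same argument, since $\somevertex$ lies on no cycle and both premises present would force a cycle via \cref{item:P1}; we drop the other sub-tree and apply $(\oplus_i)$. For $\with$, we split $\somepn$ into the linkings using each premise, check \cref{item:P1,item:P2,item:P3} on each half, and apply $(\with)$. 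For a non-terminal splitting vertex, or for $\ax$, we substitute as in \cref{thm:seq}, concluding with \cref{lem:deseqsub_mall}. Checking that correctness transfers to the sub-nets, in particular \cref{item:P3} after removing jump edges, is routine but tedious.
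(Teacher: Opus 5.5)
Your architecture is the paper's: a well-coloring in which jump edges get the switch color of their $\with$-target, an exit function extracted from \cref{item:P3}, \cref{th:MALLParamYeo} with a terminal parameter, then a case analysis on the splitting vertex. But the step you flag as the main obstacle is a genuine gap, and the route you propose for it fails. Take $\somesetlinking$ to be the linkings whose slices meet $\somesetcycle$, or a $\subseteq$-minimal $\somesetlinking$ with $\somesetcycle\subseteq\G_\somesetlinking$. Then \cref{item:P3} only guarantees \emph{some} toggled $\with$-vertex lying on no switching cycle of $\G_\somesetlinking$, and it can be unrelated to $\somesetcycle$. \cref{cor:usc-c=sc_graph} says nothing about where that vertex lies.

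For instance, take the $(\mix_2)$ of two proof nets:
\begin{itemize}
\item the proof net of $\vdash X\with X, X\orth\with X\orth$, which has four linkings; its four $\ax$-vertices and four leaves form a jump-free switching cycle $\somesetcycle\in\necubfcycles$;
\item the proof net of $\vdash Y\orth\with Y\orth, Y$, which has two linkings and a $\with$-vertex $w_B$.
\end{itemize}
Every linking meets $\somesetcycle$, so your $\somesetlinking$ is all of $\somepn$. This $\somesetlinking$ toggles $w_B$, and $w_B$ lies on no switching cycle, so it is a valid output of \cref{item:P3}. Yet no edge joins $w_B$ to $\somesetcycle$. A minimal choice fails the same way. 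Pair three of the four $X$-linkings with one $Y$-linking and the fourth with the other. This set is minimal, and no pair in it toggles $w_B$ alone. So $w_B$ receives no jump in $\G_\somesetlinking$ and is again a legitimate output of \cref{item:P3}.

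What is missing is the choice of $\somesetlinking$ that forces \cref{item:P3} to hit a relevant vertex, as in \cref{lem:4.32}.
\begin{itemize}
\item Among all $\somesetlinking$ with $\somesetcycle\subseteq\G_\somesetlinking$, take one maximizing, for inclusion, the set $W(\somesetlinking)$ of $\with$-vertices whose right premise appears in no slice of $\somesetlinking$.
\item Get $\somewith$ from \cref{item:P3}. Replace each $\somelinking\in\somesetlinking$ by the linking $\somelinking^\somewith$, given by \cref{item:P1}, on the $\with$-resolution that agrees with $\somelinking$ but takes the left premise of $\somewith$.
\item Then $W$ strictly grows. By maximality, some edge of $\somesetcycle$ out of an $\ax$-vertex $\somevertex$ is missing from $\G_{\somesetlinking^\somewith}$.
\item Since $\{\somelinking,\somelinking^\somewith\}$ toggles only $\somewith$, this forces a jump $\somevertex\to\somewith$ in $\G_\somepn$. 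That jump is the exit edge, and \cref{HG1,HG2} are immediate from the coloring.
\end{itemize}

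Secondary points:
\begin{itemize}
\item Your coloring must also give pairwise distinct colors at leaves, which can receive several axiom edges in $\G_\somepn$. Otherwise cusp-free cycles are not exactly switching cycles.
\item The transfers you call routine need the jump-edge machinery of \cref{lem:use_tp,lem:exists_jump}.
\item Uniqueness in \cref{item:P1} for the two halves of a $(\mix_2)$ or $(\tensor)$ split uses \cref{lem:exists_jump}.
\item Substituting at a splitting $\ax$-vertex first requires that its axiom link lies in every linking (\cref{lem:splitting_in_all_additive}), because hypotheses must occur in every slice.
\item The unarity of a splitting $\oplus$ comes from jumps to the single $\with$-vertex toggled by two linkings that differ at that $\oplus$, not from \cref{item:P1} alone.
\end{itemize}
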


Exactly as for MLL proof nets, there are variants of this result: when restricted to proof nets with no hypothesis, one sequentializes to derivations with no $(\hyp)$ rule; when requiring the correctness criterion \cref{item:P2c}, one sequentializes to derivations with neither $(mix_2)$ nor $(\mix_0)$ rule; or one can also add the $(\cut)$ rule.
Such sequentialization results can be deduced from our parametrized and local version of Yeo’s theorem for MALL (\cref{th:MALLParamYeo}) in a uniform and modular way, in the same spirit as what we did for MLL proof nets in \cref{sec:seq}.

We proceed as follows.
\Cref{th:MALLParamYeo} allows us to find a splitting vertex which is the hard intermediate result leading to sequentialization; for different values of the parameter, we find different kinds of splitting vertices (\cref{sec:mall_splitting}).
Then, one can deduce sequentialization from the existence of any splitting vertex, albeit with more difficulties than in the multiplicative case (\cref{sec:mall_seq_from_splitting}).

\subsection{Finding a Splitting Vertex}
\label{sec:mall_splitting}

We begin with the core of the proof: finding a splitting vertex.
As far as we know, all the (few) proofs of sequentialization for the proof nets of~\cite{mallpnlong} follow this approach~\cite{mallpnlong,DiGuardiaLaurent22}.
We could transpose directly the content of \cref{sec:yeo_mall} in the framework of MALL proof nets, simplifying some concepts and proofs in the process, as we did for MLL in \cref{sec:seq:dummies}.
We choose not to do so here, as the proof is quite longer.
Instead, we deduce the existence of various kinds of splitting vertices from a generalization of Yeo's theorem, similarly to what we did in \cref{sec:seq:yeo}.
The main difference between MLL and MALL is the presence of switching cycles, that prevents us from applying \cref{th:ParamLocalYeo}, and which is the reason we need the more general \cref{th:MALLParamYeo}.
For the intuition, a switching cycle can be seen as a zone of contradictory dependencies (\ie\ different choices of premises for a $\oplus$, or an order of sequentialization that depends on slices) that is resolved when projecting on a slice.
We have to find a jump edge from such a cycle to a $\with$-vertex which causes these dependencies, across all slices.
The parallel in \cref{th:MALLParamYeo} is the edge $\edgeofsetcycles(\somesetcycle)$, allowing to go out of the cycles $\somesetcycle$.
In order to apply \cref{th:MALLParamYeo}, we need to give a local coloring and an exit function with the wanted hypotheses.

Concerning the local coloring of (the partial graph induced by) a proof net, we proceed as in MLL.

\begin{defi}
\label{def:wellcolored_mall}
A proof net is \definitive{well-colored} when it is equipped with a local coloring (of its partial graph) with at least three colors and such that:
\begin{itemize}
\item
for a $\tensor$- or a $\oplus$-vertex $\somevertex$ with premises $\someedge_1$ and $\someedge_2$ and conclusion $\otheredge$, $\coloring(\someedge_1, \somevertex)$, $\coloring(\someedge_2, \somevertex)$ and $\coloring(\otheredge, \somevertex)$ are pairwise distinct;
\item
for a $\parr$-vertex $\somevertex$ with premises $\someedge_1$ and $\someedge_2$ and conclusion $\otheredge$, $\coloring(\someedge_1, \somevertex) = \coloring(\someedge_2, \somevertex) \neq \coloring(\otheredge, \somevertex)$;
\item
for a $\with$-vertex with premises $\someedge_1$ and $\someedge_2$, conclusion $\otheredge$ and incoming jump edges $(\otherotheredge_i)_i$, for all $i$ $\coloring(\someedge_1, \somevertex) = \coloring(\someedge_2, \somevertex) = \coloring(\otherotheredge_i, \somevertex) \neq \coloring(\otheredge, \somevertex)$;
\item
for an $\ax$-vertex $\somevertex$ with out-edges $\someedge_1$ and $\someedge_2$ to leaves, and outgoing jump edges $(\otherotheredge_i)_i$, $\coloring(\someedge_1, \somevertex)$, $\coloring(\someedge_2, \somevertex)$ and all $\coloring(\otherotheredge_i, \somevertex)$ are pairwise distinct;
\item
for a leaf $\somevertex$ with conclusion $\otheredge$ and incoming edges $(\otherotheredge_i)_i$, $\coloring(\otheredge, \somevertex)$ and all $\coloring(\otherotheredge_i, \somevertex)$ are pairwise distinct.
\end{itemize}
\end{defi}

It is always possible to turn a proof net into a well-colored proof net:
\begin{itemize}
\item
use different colors for all edges incident to an $\ax$-, $\tensor$- or $\oplus$-vertex;
\item
use \textcolor{red}{solid} for the in-edges of each $\pw$-vertex, and \textcolor{blue}{dashed} for its conclusion;
\end{itemize}
As an example, see \cref{fig:ex_pn_colored} for a local coloring of the proof net $\G_{\{\somelinking_1,\somelinking_2,\somelinking_3\}}$ from \cref{fig:ex_pn}.

\begin{figure}
\begin{adjustbox}{}
\begin{tikzpicture}
\begin{genscope}
	\node (X) at (-1.75,1) {$Y$};
	\node (Y^) at (-.25,1) {$Y$};
	\node (Y) at (.75,1) {$Y\orth$};
	\node (X^) at (2.25,1) {$X\orth$};
	\coordinate (Z) at (2.75,-.25);
	\node (W) at (-1,0) {$\with$};
	\node (P) at (-1.75,-1) {$\oplus$};
	\node (Pa) at (-2.5,0) {$X$};
	\node (O) at (1.5,0) {$\with$};
	\node (T) at (2.25,-1) {$\tensor$};
	\node (axX) at (-0.5,3) {$\ax$};
	\node (axY) at (.25,2) {$\ax$};
	\node (ax') at (-0.125,4) {$\ax$};
\end{genscope}
\begin{scope}[every edge/.style={draw=none},every node/.style={inner sep=0,draw=none,minimum size=0}]
	\path (X) edge node[midway] (t1) {} (W);
	\path (Y^) edge node[midway] (t2) {} (W);
	\path (Y) edge node[midway] (t3) {} (O);
	\path (X^) edge node[midway] (t4) {} (O);
	\path (O) edge node[midway] (t5) {} (T);
	\path (W) edge node[midway] (t6) {} (P);
	\path (Pa) edge node[midway] (t7) {} (P);
	\path[out=-90,in=90] (axX) edge node[midway] (p6) {} (W);
	\coordinate (p8) at (.05,.4);
	\coordinate (p9) at (.45,.4);
\end{scope}
\begin{genscope}[red][-,solid]
	\path (t1) edge (W);
	\path (t2) edge (W);
	\path (t3) edge (O);
	\path (t4) edge (O);
	\path (t5) edge (T);
	\path (t7) edge (P);
	\path[out=-95,in=60] (axY) edge[-] (p8);
	\path[out=-120,in=0] (p8) edge (W);
	\path[out=-60,in=180] (p9) edge (O);
	\path[out=-10,in=80] (ax') edge (O);
	\path[out=25,in=105] (axX) edge (O);
	\path[out=-90,in=70] (p6) edge (W);
\end{genscope}
\begin{genscope}[blue][-,densely dashed]
	\path (X) edge (t1);
	\path (Y^) edge (t2);
	\path (Y) edge (t3);
	\path (X^) edge (t4);
	\path (O) edge (t5);
	\path (W) edge (t6);
	\path (Pa) edge (t7);
	\path (P) edge ++(0,-.75);
	\path (T) edge ++(0,-.75);
	\path[out=-85,in=120] (axY) edge[-] (p9);
	\path[out=-110,in=90] (axX) edge (p6);
\end{genscope}
\begin{genscope}[violet][-,densely dotted]
	\path (t6) edge (P);
	\path (Z) edge (T);
	\path[out=0,in=90] (axX) edge (Y);
	\path (axY) edge (Y^);
	\path[out=180,in=90] (ax') edge (Pa);
\end{genscope}
\begin{genscope}[olive][-,densely dash dot]
	\path[out=180,in=90] (axX) edge (X);
	\path (axY) edge (Y);
	\path[out=0,in=90] (ax') edge (X^);
\end{genscope}
\end{tikzpicture}
\end{adjustbox}
\caption{Well-colored proof net $\G_{\{\somelinking_1,\somelinking_2,\somelinking_3\}}$ from \cref{fig:ex_pn}}
\label{fig:ex_pn_colored}
\end{figure}

Note the cusp-points of a well-colored proof net are exactly the pairs $(\somevertex,\somecolor)$ where $\somevertex$ is a $\pw$-vertex and $\somecolor$ is the color associated with its in-edges.
Then, by \cref{lem:localglobal}, a cusp-free path in a proof net is nothing but a switching path; and then a non-leaf vertex is splitting in the sense of \cref{subsec:mall-pn} if and only if it is splitting in the sense of \cref{sec:localcolor}.

Concerning the exit function, it can be obtained through a basic result from~\cite{mallpnlong}, whose proof is fairly short -- less than one page, with a more detailled proof also available in~\cite{phddiguardia}.
Remarkably, it is the only use of the correctness criterion we need to prove sequentialization, except to get that sub-graphs stay correct.
For completeness' sake, and as we slightly modified the definition of proof nets from~\cite{mallpnlong}, we give a reworked proof of this key result.

\begin{lemC}[{\cite[Lemma~4.32]{mallpnlong}}]\label{lem:4.32}
Given a proof net $\somepn$, every non-empty union $\somesetcycle$ of switching cycles of $\G_\somepn$ has a jump edge out of it: for some $\somevertex\in\somesetcycle$ and some $\with$-vertex $\somewith\notin\somesetcycle$, there is a jump edge $\somevertex\to\somewith$ in $\G_\somepn$.
\end{lemC}
\begin{proof}
For a set of linking $\somesetlinking\subseteq\somepn$, we set $W(\somesetlinking)$ the set of $\with$-vertices whose right premises are in the additive resolution of \emph{no} linking of $\somesetlinking$.
Consider $\somesetlinking\subseteq\somepn$ with $\G_\somesetlinking$ containing $\somesetcycle$, chosen such that $W(\somesetlinking)$ is maximal for the inclusion.
By \cref{item:P2}, if $\somesetlinking$ were a singleton then $\G_{\somesetlinking}$ could not contain the non-empty $\somesetcycle$, a contradiction.
Hence, $\somesetlinking$ is of size at least two and we can apply \cref{item:P3} on it: there is a $\with$-vertex $\somewith$ toggled by $\somesetlinking$ that is not in any switching cycle of $\G_{\somesetlinking}$.
In particular, $\somewith\notin\somesetcycle$ and $\somewith\notin W(\somesetlinking)$.

We will exhibit a jump edge to $\somewith$.
To this end, we define a new set of linkings to use our maximality hypothesis on $W(\somesetlinking)$.
For a linking $\somelinking\in\somesetlinking$, we define $\somelinking^\somewith$ as follows.
Consider $\someres(\somelinking)$ the $\with$-resolution keeping the left premise of $\somewith$, making the same choices of premises than the additive-resolution of $\somelinking$ on the $\with$-vertices the latter contains (except possibly $\somewith$), and keeping the left premise of all other $\with$-vertices.
By \cref{item:P1}, there is a unique linking of $\somepn$ on $\someres(\somelinking)$: this is $\somelinking^\somewith$.
Pose $\somesetlinking^\somewith \eqdef \{\somelinking^\somewith \mid \somelinking \in \somesetlinking\}$.
Observe that $W(\somesetlinking^\somewith)\supseteq W(\somesetlinking) \cup\{\somewith\}$.

By maximality of $\somesetlinking$, $\somesetcycle\not\subseteq\G_{\somesetlinking^\somewith}$: there exists an edge $\somevertex \to \othervertex$ of $\somesetcycle$ not in $\G_{\somesetlinking^\somewith}$.
Without loss of generality, $\somevertex$ is an $\ax$-vertex.
Indeed, if all edges in the cycle $\somesetcycle$ of target some $\ax$-vertices belong to $\G_{\somesetlinking^\somewith}$, then $\somesetcycle$ would be included in $\G_{\somesetlinking^\somewith}$.
It suffices to prove $\somevertex \to \somewith$ is a jump edge in $\G_\somepn$ to conclude.
We proceed by contradiction, which implies the following property: for all $\somelinking\in\somesetlinking$, $\somevertex\in\G_\somelinking \iff \somevertex\in\G_{\somelinking^\somewith}$.
Indeed, if $\somevertex\in\G_\somelinking$ whereas $\somevertex\notin\G_{\somelinking^\somewith}$ (or vice-versa), then $\somevertex \to \somewith$ is a jump edge since the only $\with$-vertex toggled by $\{\somelinking,\somelinking^\somewith\}$ is $\somewith$.

If $\somevertex \to \othervertex$ is not a jump edge, then it belongs to $\G_\somelinking$ for some $\somelinking\in\somesetlinking$ and thus it is also in $\G_{\somelinking^\somewith}$; a contradiction, for $\somevertex \to \othervertex$ is not an edge of $\G_{\somesetlinking^\somewith}$.
Thence, $\somevertex \to \othervertex$ is a jump edge: there exist $\somelinking, \otherlinking\in\somesetlinking$ such that the axiom link $\somelink$ associated with $\somevertex$ belongs to $\somelinking$ but not to $\otherlinking$, and $\somewith$ is the only $\with$-vertex toggled by $\{\somelinking,\otherlinking\}$.
But then, $\somelink\in\somelinking^\somewith\backslash\otherlinking^\somewith$.
Also, $\somewith$ is the only $\with$-vertex toggled by $\{\somelinking^\somewith,\otherlinking^\somewith\}$, since these linkings are on the $\with$-resolutions $\someres(\somelinking)$ and $\someres(\otherlinking)$ that differ only on $\with$-vertices toggled by $\{\somelinking,\otherlinking\}$, namely only on $\othervertex$.
Therefore, $\somevertex \to \othervertex$ belongs to $\G_{\somesetlinking^\somewith}$, a contradiction.
We finally conclude that $\somevertex \to \somewith$ is a jump edge in $\G_\somepn$.
\end{proof}

Thus, given $\somesetcycle$ a maximal connected union of cusp-free (\ie\ switching) cycles, \cref{lem:4.32} gives a jump edge $\edgeofsetcycles(\somesetcycle)$ between an $\ax$-vertex $\sourceofsetcycles(\somesetcycle)$ in $\somesetcycle$ and a $\with$-vertex $\targetofsetcycles(\somesetcycle)$ outside of it.
The needed hypotheses \cref{HG1,HG2} follow directly, using that cusp-points arise exactly for $\pw$-vertices and the colors associated with their in-edges.

\Cref{th:MALLParamYeo} gives a splitting vertex for any set $\somesetedge$ of vertex-color pairs dominating the cusp-points of a well-colored proof net and disjoint from pairs $(\somevertex,\somecolor)$ with $\somevertex$ an $\ax$-vertex and $\somecolor$ the color of a jump edge incident to it.
As for MLL (\cref{sec:seq:yeo}), natural choices for the parameter $\somesetedge$ yield various proofs of \cref{prop:splitting_mall}, hence various strategies to select splitting vertices along the sequentialization procedure.
In particular, we recover statements analogous to those of MLL, with among them the main intermediate results in the proofs of sequentialization from~\cite{mallpnlong,DiGuardiaLaurent22}: the existence of splitting $\pw$-vertices (\aka\ separating $\pw$) from~\cite[Lemma 4.19]{mallpnlong}, and (almost\footnote{In~\cite{DiGuardiaLaurent22}, a splitting $\oplus$-vertex is asked to be unary, which we prove is always the case in the next section.}) the existence of splitting terminal vertices (\aka\ sequentializing vertices) from~\cite{DiGuardiaLaurent22}.

\begin{cor}[Existence of a splitting vertex]
\label{prop:splitting_mall}
If a proof net contains a vertex, then it contains a splitting vertex.
\end{cor}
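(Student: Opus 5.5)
The plan is to apply \cref{th:MALLYeo} (equivalently \cref{th:MALLParamYeo} with $\somesetedge$ the set of all vertex-color pairs outside $\somesetedge_{out}$) to the partial graph $\G_\somepn$. Then we must check that the splitting vertex it produces is splitting in the sense of \cref{subsec:mall-pn}.

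\emph{Setup.} First equip $\somepn$ with a well-coloring as in \cref{def:wellcolored_mall}, using the explicit construction given just after it. By \cref{lem:localglobal}, adapted to switch edges of $\pw$-vertices, cusp-free cycles of $\G_\somepn$ are exactly its switching cycles.

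\emph{Exit function.} Next build the exit function required by \cref{th:MALLYeo}. Let $\somesetcycle\in\necubfcycles$; it is a non-empty union of switching cycles. \Cref{lem:4.32} then gives a jump edge from some vertex $\sourceofsetcycles(\somesetcycle)\in\somesetcycle$ to a $\with$-vertex $\targetofsetcycles(\somesetcycle)\notin\somesetcycle$. Jump edges have $\ax$-vertices as sources, so $\sourceofsetcycles(\somesetcycle)$ is an $\ax$-vertex. Call this edge $\edgeofsetcycles(\somesetcycle)$.

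\emph{Hypotheses of \cref{th:MALLYeo}.} Check \cref{HG1}: at an $\ax$-vertex all incident edges carry pairwise distinct colors, so $(\sourceofsetcycles(\somesetcycle),\coloring(\edgeofsetcycles(\somesetcycle),\sourceofsetcycles(\somesetcycle)))$ is not a cusp-point. Check \cref{HG2}: at the $\with$-vertex $\targetofsetcycles(\somesetcycle)$ the jump edge has the same color as the two premises. Since $\G_\somepn$ is the union of all slices, this $\with$-vertex is toggled and is binary, so this color forms a cusp-point. This uses that both premises of every $\with$-vertex occur, which follows from \cref{item:P1}; this is the point to verify carefully.

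\emph{Conclusion.} \Cref{th:MALLYeo} then yields a vertex $\somevertex$ of $\G_\somepn$ that is splitting in the graph sense. It remains to see that $\somevertex$ can be chosen to be a non-leaf, since the definition of splitting in \cref{subsec:mall-pn} excludes leaves; I expect this to be the main obstacle. A leaf has a conclusion edge and an incoming axiom edge of different colors, so it is never a cusp-point; a maximal pair at a leaf must therefore be ruled out. The way to do this is to take the parameter $\somesetedge$ to be all pairs $(\somevertex,\somecolor)\notin\somesetedge_{out}$ with $\somevertex$ not a leaf, which still dominates cusp-points. This set is non-empty whenever there is a non-leaf vertex, and a proof net containing a vertex has at least an $\ax$- or connective vertex, since leaves come with axiom links or their parent connective. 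Take a $\orderyeo$-maximal element of $\somesetedge$, which exists by \cref{lem:orderyeo_is_order} and finiteness; by \cref{th:MALLParamYeo} its vertex is splitting. Finally translate: for a non-leaf vertex, graph-splitting means every cycle through it has a cusp there. Cusps arise only at $\pw$-vertices between in-edges, so for $\ax\backslash\tensor\backslash\oplus$-vertices this means no cycle passes through the vertex. For $\pw$-vertices it means no cycle uses the conclusion edge. Both match the definition in \cref{subsec:mall-pn}.
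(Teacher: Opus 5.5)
Your proposal is correct and follows the paper's own argument. You well-color $\G_\somepn$, build the exit function from \cref{lem:4.32} so that \cref{HG1,HG2} hold, and apply \cref{th:MALLParamYeo} to non-leaf vertex-color pairs that include every cusp-point and avoid $\somesetedge_{out}$; the only difference is that the paper removes every pair made of an $\ax$-vertex and the color of any of its jump edges, whereas you remove just $\somesetedge_{out}$, and both choices meet the hypotheses and are non-empty for the same reason.
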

\begin{proof}
Take $\somesetedge$ the set of all (non-leaf vertex)-color pairs of the proof net, except for pairs $(\somevertex,\somecolor)$ with $\somevertex$ an $\ax$-vertex and $\somecolor$ the color of a jump edge incident to it.
Note that $\somesetedge$ is not empty because the presence of a leaf implies the presence of an $\ax$-vertex, and this vertex has two of its incident edges which are colored differently than all its jump edges.
By \cref{th:MALLParamYeo}, for each $\orderyeo$-maximal element $(\somevertex, \somecolor)\in\somesetedge$, the vertex $\somevertex$ is splitting.
\end{proof}

\begin{cor}[Splitting $\pw$-vertices]
\label{lem:4.19}
If a proof net contains a $\pw$-vertex, then it contains a splitting one.
\end{cor}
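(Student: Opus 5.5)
We apply \cref{th:MALLParamYeo} to a well-colored version of the proof net. The exit function is the one provided by \cref{lem:4.32}, which satisfies \cref{HG1,HG2} as explained after that lemma. Mirroring \cref{cor:seq:yeo:sections}, we take as parameter $\somesetedge$ the set of all cusp-points. In a well-colored proof net these are exactly the pairs $(\somevertex,\somecolor)$ where $\somevertex$ is a $\pw$-vertex and $\somecolor$ is the common color of its in-edges (premises and incoming jump edges).

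First we check the hypotheses of \cref{th:MALLParamYeo}. The set $\somesetedge$ trivially dominates cusp-points, since it contains all of them. It is also disjoint from $\somesetedge_{out}$, because every element of $\somesetedge_{out}$ has as vertex an $\ax$-vertex $\sourceofsetcycles(\somesetcycle)$, whereas every element of $\somesetedge$ has a $\pw$-vertex. (Alternatively, \cref{HG1} says elements of $\somesetedge_{out}$ are not cusp-points.)

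Next we show $\somesetedge$ is non-empty. Each $\pw$-vertex has two premises in the syntactic forest, and these appear in $\G_\somepn$ with the same color. For a $\parr$-vertex this is immediate: both premises appear in every slice containing it. For a $\with$-vertex we need it to be binary in $\G_\somepn$. Condition \cref{item:P1} provides linkings on $\with$-resolutions choosing either premise, so both premises are present in $\G_\somepn$ provided the $\with$-vertex itself occurs. It suffices in fact to have two in-edges, and a unary $\with$-vertex with a jump edge would also work. The possibly delicate point is therefore making sure that some $\pw$-vertex of the proof net actually carries a cusp-point, i.e. has two distinct in-edges. I expect this to follow from \cref{item:P1} as just described, and it is the only step requiring any care.

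Finally, $\somesetedge$ is finite and non-empty, so it has a $\orderyeo$-maximal element $(\somevertex,\somecolor)$ by \cref{lem:orderyeo_is_order}. By \cref{th:MALLParamYeo}, $\somevertex$ is splitting in the sense of \cref{sec:localcolor}. Since $\somevertex$ is a non-leaf vertex, this coincides with being splitting in the proof-net sense, as observed after \cref{def:wellcolored_mall}. Hence $\somevertex$ is a splitting $\pw$-vertex.
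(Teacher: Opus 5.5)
Your argument is the paper's own proof: take $\somesetedge$ to be the set of all cusp-points and apply \cref{th:MALLParamYeo}. The paper leaves implicit both the disjointness from $\somesetedge_{out}$ and the non-emptiness of $\somesetedge$; the latter rests on its remark that cusp-points are exactly the $\pw$-vertices paired with the color of their in-edges. Your justification of non-emptiness for a $\with$-vertex $\somewith$ is slightly too quick: if $\somelinking\in\somepn$ uses one premise of $\somewith$, the linking given by \cref{item:P1} on the $\with$-resolution flipping $\somewith$ might avoid $\somewith$ altogether through a $\oplus$-vertex below it. But then it also lies on a $\with$-resolution of $\somelinking$, so uniqueness in \cref{item:P1} makes it equal to $\somelinking$, a contradiction; hence $\somewith$ is binary in $\G_\somepn$.
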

\begin{proof}
Let $\somesetedge$ be the set of all cusp-points: a $\orderyeo$-maximal element of $\somesetedge$ is $(\somevertex, \somecolor)$ with $\somevertex$ a splitting $\pw$-vertex by \cref{th:MALLParamYeo}.
\end{proof}

\begin{cor}[Splitting terminal vertices]
\label{prop:terminal_splitting_mall}
If a proof net contains a vertex, then it contains a splitting terminal one.
\end{cor}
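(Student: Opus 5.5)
We follow the proof of the multiplicative counterpart, \cref{cor:seq:yeo:terminal}, and apply \cref{th:MALLParamYeo} with a parameter that excludes non-terminal vertices. Fix a well-colored proof net $\somepn$ with the exit function given by \cref{lem:4.32}. Let $\somesetedge$ be the set of pairs $(\somevertex,\somecolor)$ such that $\somevertex$ is a non-leaf vertex, $\somecolor\neq\coloring(\someedge,\somevertex)$ for every conclusion (out-edge) $\someedge$ of $\somevertex$ in the syntactic forest, and, when $\somevertex$ is an $\ax$-vertex, $\somecolor$ is not the color of a jump edge incident to $\somevertex$. For an $\ax$-vertex, the two out-edges to leaves and the jump edges all count as excluded colors.

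We first check the hypotheses of \cref{th:MALLParamYeo}. Since $\somesetedge$ contains no pair $(\somevertex,\somecolor)$ with $\somevertex$ an $\ax$-vertex and $\somecolor$ a jump-edge color, it is disjoint from $\somesetedge_{out}$. It also dominates cusp-points. The cusp-points are the pairs $(\somevertex,\somecolor)$ with $\somevertex$ a $\pw$-vertex and $\somecolor$ the color of its in-edges. By \cref{def:wellcolored_mall}, this color differs from the color of the conclusion of $\somevertex$, so these pairs lie in $\somesetedge$. The set $\somesetedge$ is non-empty as soon as $\somepn$ has a vertex. Take a root of a syntactic tree: it is either a non-leaf vertex, or a leaf whose $\ax$-vertex exists. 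In both cases we have a non-leaf vertex, and some color outside its finitely many excluded colors is available (we may assume $\colors$ is large enough, as in the MLL case). Hence $\somesetedge$ has an $\orderyeo$-maximal element $(\somevertex,\somecolor)$ (\cref{lem:orderyeo_is_order}), and $\somevertex$ is splitting by \cref{th:MALLParamYeo}.

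It remains to show that $\somevertex$ is terminal; this is the main step. Suppose it is not. If $\somevertex$ is a $\tensor\backslash\parr\backslash\oplus\backslash\with$-vertex, its conclusion $\someedge$ has a target $\othervertex$. If $\somevertex$ is an $\ax$-vertex, one of its out-edges $\someedge$ goes to a leaf $\othervertex$ that is not a root, or that leaf is not itself a root. We want to apply \cref{lem:terminal_not_maximal_order}. Its hypothesis is that $(\somevertex,\coloring(\someedge,\somevertex))$ is not a cusp-point. This holds because conclusion colors never coincide with the in-edge colors of a $\pw$-vertex, and all edges at an $\ax$-vertex have distinct colors.

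\cref{lem:terminal_not_maximal_order} then gives two cases. Either $\somevertex$ lies in a cusp-free cycle, which contradicts $\somevertex$ being splitting. Or $(\somevertex,\somecolor)\orderyeo(\othervertex,\coloring(\someedge,\othervertex))$, since $\somecolor\neq\coloring(\someedge,\somevertex)$. If $\othervertex$ is a non-leaf vertex, its pair belongs to $\somesetedge$: the premise color at $\othervertex$ differs from its conclusion color, and $\othervertex$ is not an $\ax$-vertex. This contradicts maximality.

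The delicate subcase is when $\othervertex$ is a leaf, which happens for an $\ax$-vertex. Leaves are excluded from $\somesetedge$, so we continue one more step. The leaf $\othervertex$ is not a root, so its conclusion $\otheredge$ reaches a non-leaf vertex $\otherothervertex$. By \cref{def:wellcolored_mall}, the colors at a leaf are pairwise distinct, so $(\othervertex,\coloring(\someedge,\othervertex))$ is not a cusp-point. Applying \cref{lem:terminal_not_maximal_order} again, now at the leaf, gives $(\othervertex,\coloring(\someedge,\othervertex))\orderyeo(\otherothervertex,\coloring(\otheredge,\otherothervertex))$, unless $\othervertex$ lies in a cusp-free cycle. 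In that exceptional case $\somevertex$, which lies on a cycle through the leaf, must be examined separately using that $\somevertex$ is splitting. Otherwise transitivity gives $(\somevertex,\somecolor)\orderyeo$ a pair of $\somesetedge$, contradicting maximality. If instead the leaf is a root but the other leaf is not, we run the same argument through the other out-edge. Handling these leaf cases correctly, including the possibility that the leaf lies in a switching cycle, is where we expect the real work to be.
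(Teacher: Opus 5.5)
You take the paper's route. The paper's proof uses the same parameter $\somesetedge$: its text lists only $\tensor\backslash\parr\backslash\with\backslash\oplus$-vertices but then treats an $\ax$ case, so your version, which includes $\ax$-vertices and excludes jump-edge colours, is the consistent reading. It likewise appeals to \cref{th:MALLParamYeo}, and applies \cref{lem:terminal_not_maximal_order} once for non-$\ax$-vertices and twice, through the leaf, for $\ax$-vertices.

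\textbf{The gap.} Your proof is not complete, because you leave the last subcase open. There the leaf $\othervertex$ lies in a cusp-free cycle, so the second application of \cref{lem:terminal_not_maximal_order} yields nothing. Your proposed way out, that $\somevertex$ then ``lies on a cycle through the leaf'', is not justified. Nothing you have established prevents a switching cycle from entering $\othervertex$ through the in-edge of a \emph{different} $\ax$-vertex (another linking pairing $\othervertex$ with another leaf) and leaving through $\otheredge$, never meeting $\somevertex$. The paper's proof simply writes the chain $(\somevertex,\somecolor)\orderyeo(\othervertex,\cdot)\orderyeo(\otherothervertex,\cdot)$ without addressing this alternative. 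So you were right to flag it, but flagging it is not closing it.

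\textbf{How to close it.} Either of two short arguments works.
\begin{enumerate}
\item Use \cref{lem:in_sc_no_max_graph}. Let $\somesetcycle\in\necubfcycles$ contain a cusp-free cycle through $\othervertex$. The exit edge from \cref{lem:4.32} starts at an $\ax$-vertex, so $\sourceofsetcycles(\somesetcycle)\neq\othervertex$ and the first alternative of that lemma is excluded. It therefore gives $(\othervertex,\coloring(\someedge,\othervertex))\orderyeo(\targetofsetcycles(\somesetcycle),\coloring(\edgeofsetcycles(\somesetcycle),\targetofsetcycles(\somesetcycle)))$. The right-hand pair is a cusp-point by \cref{HG2}, hence lies in $\somesetedge$. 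You already have $(\somevertex,\somecolor)\orderyeo(\othervertex,\coloring(\someedge,\othervertex))$ from the first application, which is legitimate since $\somevertex$ is splitting. Transitivity (\cref{lem:orderyeo_is_order}) then contradicts maximality.
\item Justify your claim via \cref{lem:splitting_in_all_additive}. Since $\somevertex$ is splitting, its axiom link belongs to every linking. Hence $\othervertex$ has no in-edge other than $\someedge$ and has degree two. Every cycle through $\othervertex$ then passes through $\somevertex$. But an $\ax$-vertex has pairwise distinct incident colours, so being splitting means it lies in no cycle at all. The exceptional case therefore never arises.
\end{enumerate}
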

\begin{proof}
Set $\somesetedge$ the set of all pairs $(\somevertex, \somecolor)$ where $\somevertex$ is a $\tensor\backslash\parr\backslash\with\backslash\oplus$-vertex and $\somecolor\neq\coloring(\someedge,\somevertex)$ for all outgoing edges $\someedge$ of $\somevertex$.
Considering $(\somevertex, \somecolor)\in\somesetedge$ a maximal element for $\orderyeo$, not only $\somevertex$ is splitting by \cref{th:MALLParamYeo}, but it is also terminal.
Indeed, otherwise and if $\somevertex$ is a $\tensor\backslash\parr\backslash\with\backslash\oplus$-vertex, it would have a conclusion $\someedge$ with defined target $\othervertex$, and \cref{lem:terminal_not_maximal_order} yields $(\somevertex, \somecolor) \orderyeo (\othervertex, \coloring(\someedge, \othervertex))$; and if $\somevertex$ is an $\ax$-vertex then it has an out-edge $\someedge_1$ to a leaf $\someleaf$ which itself has an out-edge $\someedge_2$ with defined target $\othervertex$, and applying twice \cref{lem:terminal_not_maximal_order} yields $(\somevertex, \somecolor) \orderyeo (\someleaf, \coloring(\someedge_1, \someleaf)) \orderyeo (\othervertex, \coloring(\someedge_2, \othervertex))$.
\end{proof}

\begin{cor}[Splitting non $\ax$-vertices.]
If a proof net contains a $\tensor\backslash\parr\backslash\with\backslash\oplus$-vertex, then it contains a splitting one.
\end{cor}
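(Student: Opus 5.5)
Take $\somesetedge$ to be the set of all vertex-color pairs $(\somevertex,\somecolor)$ with $\somevertex$ a $\tensor\backslash\parr\backslash\with\backslash\oplus$-vertex. The argument then mirrors \cref{cor:seq:yeo:nonax} in the MLL case, but relies on \cref{th:MALLParamYeo} instead of \cref{th:ParamLocalYeo}. We fix a well-colored proof net, together with the exit function $(\edgeofsetcycles,\sourceofsetcycles,\targetofsetcycles)$ obtained from \cref{lem:4.32}, so that \cref{HG1,HG2} hold as explained after that lemma.

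First, $\somesetedge$ satisfies the hypotheses of \cref{th:MALLParamYeo}.
\begin{itemize}
\item $\somesetedge$ dominates cusp-points, since it contains all of them: in a well-colored proof net every cusp-point is of the form $(\somevertex,\somecolor)$ with $\somevertex$ a $\pw$-vertex and $\somecolor$ the color of its in-edges.
\item $\somesetedge$ is disjoint from $\somesetedge_{out}$: every element of $\somesetedge_{out}$ has as vertex $\sourceofsetcycles(\somesetcycle)$, which is an $\ax$-vertex by construction of the exit function via \cref{lem:4.32}, and no $\ax$-vertex appears in $\somesetedge$.
\end{itemize}

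Second, $\somesetedge$ is non-empty, since by assumption the proof net contains a $\tensor\backslash\parr\backslash\with\backslash\oplus$-vertex $\somevertex$ and $(\somevertex,\somecolor)\in\somesetedge$ for any color $\somecolor$. It is also finite, provided we restrict $\somecolor$ to the finite set $\colors$. By \cref{lem:orderyeo_is_order}, $\orderyeo$ is a strict partial order, so $\somesetedge$ has a $\orderyeo$-maximal element $(\somevertex,\somecolor)$.

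Finally, \cref{th:MALLParamYeo} shows that $\somevertex$ is splitting. Since $\somevertex$ is a non-leaf vertex, splitting in the graph sense coincides with splitting in the sense of \cref{subsec:mall-pn}, as noted before \cref{lem:4.32}. As $\somevertex$ is a $\tensor\backslash\parr\backslash\with\backslash\oplus$-vertex, this gives the required splitting vertex. The only delicate point is the disjointness from $\somesetedge_{out}$, and it is immediate because the exit edges always start at $\ax$-vertices.
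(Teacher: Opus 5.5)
Your proof is correct and is the same argument as the paper's: the paper takes the same parameter (all vertex-color pairs whose vertex is neither a leaf nor an $\ax$-vertex) and applies \cref{th:MALLParamYeo} to an $\orderyeo$-maximal element. You additionally spell out the checks the paper leaves implicit: domination of cusp-points, disjointness from $\somesetedge_{out}$ (because exit edges start at $\ax$-vertices), and non-emptiness of $\somesetedge$.
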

\begin{proof}
Take $\somesetedge$ the set of all vertex-color pairs $(\somevertex, \somecolor)$ with $\somevertex$ neither a leaf nor an $\ax$-vertex.
By \cref{th:MALLParamYeo}, each $\orderyeo$-maximal element $(\somevertex, \somecolor)\in\somesetedge$ yields a splitting vertex.
\end{proof}

\subsection{Sequentialization using Splitting Vertices}
\label{sec:mall_seq_from_splitting}

We detail here how to deduce the sequentialization theorem (\cref{th:seq_mall}) from the existence of a splitting vertex (\cref{prop:splitting_mall}).
Contrary to MLL, in MALL sequentialization is not an immediate consequence of \cref{prop:terminal_splitting_mall}.
We need the following results, whose proofs involve finding some jump edges:
\begin{itemize}
\item
a splitting vertex belongs to the additive resolutions of all linkings -- so that we respect our constraints on hypotheses;
\item
a splitting $\oplus$-vertex is unary -- so that it corresponds to a $\oplus_1$- or $\oplus_2$-rule.
\end{itemize}

This section is divided as follows: first we show the two wanted results on splitting vertices (\cref{sec:results_splittings}), and then we prove the sequentialization theorem (\cref{sec:mall_seq_proof}).

\subsubsection{Results on Splitting vertices}
\label{sec:results_splittings}

The following lemma allows us to easily find jump edges, and will be of use for our wanted results on splitting vertices.

\begin{lem}
\label{lem:use_tp}
Let $\somelinking$ and $\otherlinking$ be two linkings in a proof structure $\somepn$.
Consider a predicate $P$ on linkings of $\somepn$ such that $P(\somelinking)$ is true whereas $P(\otherlinking)$ is not.
Then, there exist linkings $\somelinking',\otherlinking'\in\somepn$ such that $P(\somelinking')$ is true, $P(\otherlinking')$ is not and $\{\somelinking' , \otherlinking'\}$ toggles exactly one $\with$-vertex, which is among those toggled by $\{\somelinking , \otherlinking\}$.
\end{lem}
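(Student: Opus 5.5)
We argue by a descent on the number of $\with$-vertices toggled by the pair, using only \cref{item:P1} (so $\somepn$ need only be a proof structure). Among all pairs $(\somelinking',\otherlinking')$ of linkings of $\somepn$ with $P(\somelinking')$ true, $P(\otherlinking')$ false, and such that every $\with$-vertex toggled by $\{\somelinking',\otherlinking'\}$ is toggled by $\{\somelinking,\otherlinking\}$, we pick one minimizing the number of toggled $\with$-vertices. The pair $(\somelinking,\otherlinking)$ itself is a candidate, so such a minimal pair exists. It then remains to show that this minimal pair toggles exactly one $\with$-vertex.

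First, it toggles at least one. Suppose $\{\somelinking',\otherlinking'\}$ toggles no $\with$-vertex. Then on every $\with$-vertex present in both additive resolutions the two linkings choose the same premise. Let $R$ be a $\with$-resolution extending these common choices. If a $\with$-vertex lies in only one of the two resolutions, its choice there can be extended freely, since it sits under an $\oplus$-branch absent from the other resolution. Both linkings are then on $R$, so \cref{item:P1} forces $\somelinking'=\otherlinking'$, contradicting $P(\somelinking')\neq P(\otherlinking')$. Making this ``common $\with$-resolution'' argument precise is the delicate bookkeeping point.

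Next, suppose the minimal pair toggles at least two $\with$-vertices. Pick one of them, $\somewith$, chosen topmost: no other toggled $\with$-vertex lies inside an argument sub-tree of $\somewith$. Build the $\with$-resolution $\someres$ that agrees with the additive resolution of $\somelinking'$ everywhere except at $\somewith$, where it follows $\otherlinking'$'s choice; on $\with$-vertices inside that premise it follows $\otherlinking'$, and elsewhere it takes arbitrary choices. By \cref{item:P1} there is a unique linking $\nu\in\somepn$ on $\someres$.

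The key step, and the main obstacle, is to check which $\with$-vertices the pairs $\{\somelinking',\nu\}$ and $\{\nu,\otherlinking'\}$ toggle. We want $\{\somelinking',\nu\}$ to toggle only $\somewith$, though at worst a subset of the previously toggled vertices, and $\{\nu,\otherlinking'\}$ to toggle a subset of the original toggled set not containing $\somewith$. The difficulty is that $\nu$'s additive resolution depends on $\oplus$-choices that we do not control. Here one uses that a $\with$-vertex is toggled by a pair only if both its premises actually appear in the union of the two additive resolutions, and that $\nu$'s $\with$-choices are dictated by $\someres$, which copies either $\somelinking'$ or $\otherlinking'$. Hence any $\with$-vertex toggled by the new pairs on which both linkings are present must already differ between $\somelinking'$ and $\otherlinking'$. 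Both new pairs then toggle strictly fewer $\with$-vertices than the minimal pair, all within the original set.

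Finally, either $P(\nu)$ is false, and $(\somelinking',\nu)$ contradicts minimality, or $P(\nu)$ is true, and $(\nu,\otherlinking')$ contradicts minimality. So the minimal pair toggles exactly one $\with$-vertex, which by construction lies among those toggled by $\{\somelinking,\otherlinking\}$.
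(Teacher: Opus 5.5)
Your minimal-pair descent is the paper's induction on the number of toggled $\with$-vertices in another form, and your first step (no toggled vertex forces $\lambda'=\mu'$ by (P1)) is sound. There is, however, a genuine gap in how you build the $\with$-resolution $R$.

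\textbf{The gap.} You let $R$ take \emph{arbitrary} choices ``elsewhere''. That region includes $\with$-vertices lying in the additive resolution of $\mu'$ but not in that of $\lambda'$ (and not above $w$). Such a vertex sits above a $\oplus$-vertex where $\lambda'$ and $\mu'$ pick different branches. Since, as you note, the $\oplus$-choices of $\nu$ are uncontrolled, $\nu$ may take $\mu'$'s branch and then contain that vertex with the arbitrary, opposite choice. In that case $\{\nu,\mu'\}$ toggles a vertex that $\{\lambda',\mu'\}$ does not. Your justification ``$R$ copies either $\lambda'$ or $\mu'$'' contradicts your own construction exactly at these vertices.

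\textbf{A concrete counterexample to the step.} Take $\vdash X^\perp\oplus(Y^\perp\with Y^\perp),\ X\with Y,\ Z^\perp\with Z^\perp,\ Z$. Call its $\with$-vertices $x$, $w_0$, $w_1$ in that order, and let subscripts $1,2$ denote left and right occurrences. The linkings $\{\{X^\perp,X\},\{Z^\perp_i,Z\}\}$ and $\{\{Y^\perp_j,Y\},\{Z^\perp_i,Z\}\}$, for $i,j\in\{1,2\}$, form a proof structure satisfying (P1).
\begin{itemize}
\item Set $\lambda'=\{\{X^\perp,X\},\{Z^\perp_1,Z\}\}$ and $\mu'=\{\{Y^\perp_1,Y\},\{Z^\perp_2,Z\}\}$. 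This pair toggles exactly $w_0$ and $w_1$, and both are topmost.
\item Choose $w=w_0$. Your recipe then allows $R$ to take the right premise at $x$.
\item The linking on $R$ is $\nu=\{\{Y^\perp_2,Y\},\{Z^\perp_1,Z\}\}$.
\item The pair $\{\nu,\mu'\}$ toggles $x$ and $w_1$. That is not fewer vertices, and $x$ lies outside the original toggled set.
\end{itemize}
So when $P(\nu)$ holds, $(\nu,\mu')$ is not an admissible smaller candidate, and no contradiction follows.

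\textbf{The fix.} At every $\with$-vertex outside $\lambda'$'s additive resolution but inside $\mu'$'s, copy $\mu'$'s choice; choose arbitrarily only where neither linking is present. The paper achieves this as follows:
\begin{itemize}
\item It builds $\with$-resolutions $R$ of $\lambda$ and $S$ of $\mu$, each falling back on the other linking's choices (and then on the left premise). These differ exactly at the toggled vertices.
\item It then flips $w$ in $R$ and copies $S$ on the newly introduced branch.
\end{itemize}
With this construction, $\{\lambda',\nu\}$ toggles only $w$, and $\{\nu,\mu'\}$ toggles a subset of the original toggled set minus $w$. Your argument then goes through for \emph{any} toggled $w$, so the ``topmost'' requirement is unnecessary.
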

\begin{proof}
We reason by induction on $n$ the number of $\with$-vertices toggled by $\{\somelinking,\otherlinking\}$.

Set $\someres$ and $\otherres$ two $\with$-resolutions respectively of $\somelinking$ and $\otherlinking$, chosen such that the $n$ $\with$-vertices toggled by $\{\somelinking,\otherlinking\}$ are exactly the $\with$-vertices in $\someres$ and $\otherres$ with different choices of premises.
This can be done as follows, where we explain how to build $\someres$ and a symmetric construction would yield $\otherres$.
For a given $\with$-vertex, if it is in the additive resolution of $\somelinking$ then keep for $\someres$ the same premise as in this additive resolution; otherwise, if it is in the additive resolution of $\otherlinking$, then keep the corresponding premise; otherwise, keep the left premise.

Remark that $n\neq 0$, for otherwise $\someres=\otherres$ and by \cref{item:P1} $\somelinking=\otherlinking$, a contradiction.
So $n\geq 1$: let $\somewith$ be one of those $\with$-vertices.
Consider $\otherotherres$ the $\with$-resolution obtained from $\someres$ by taking the other premise for $\somewith$ and giving to $\with$-vertices introduced this way the same premise they have in $\otherres$.
Call $\otherotherlinking$ the linking of $\somepn$ on $\otherotherres$, using \cref{item:P1}.
If $P(\otherotherlinking)$ is false, then we are done: the wanted linkings are $\somelinking$ and $\otherotherlinking$.
Otherwise, by construction of $\otherotherres$, there are $n-1$ $\with$-vertices with different choices of premises in $\otherotherres$ and $\otherres$: they are those with different choices of premises in $\someres$ and $\otherres$ except for $\somewith$.
As the $\with$-vertices toggled by $\{\otherotherlinking,\otherlinking\}$ are included in those, we can apply the induction hypothesis on $\otherotherlinking$ and $\otherlinking$ to conclude.
\end{proof}

Given vertices $\somevertex$ and $\othervertex$ in the syntactic forest $\Delta\vdash\Gamma$, when $\somevertex$ is in the sub-tree of root $\othervertex$ we say $\somevertex$ is an \definitive{ancestor} of $\othervertex$, and $\othervertex$ is a \definitive{descendant} of $\somevertex$.

\begin{fact}
\label{rem:no_hyp_oplus}
Consider an edge $\someedge$ in (the syntactic forest of) $\Delta \vdash \Gamma$, and a set of linkings $\somepn$ on $\Delta \vdash \Gamma$.
Assume $\someedge$ belongs to the additive resolution of $\otherlinking$ but not to the one of $\somelinking$, for two linkings $\somelinking,\otherlinking\in\somepn$.
Then, $\someedge$ has a defined source $\source(\someedge)$, which has for ancestor a leaf $\someleaf$ that is used in an axiom link of $\otherlinking$.
\end{fact}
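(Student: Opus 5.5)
The plan is to analyze how an edge can lie in one additive resolution but not in another. The additive resolution of $\somelinking$ is obtained from the syntactic forest $\Delta\vdash\Gamma$ by deleting exactly one argument sub-tree at each additive vertex, and no partial edge of a hypothesis is ever deleted. Since $\someedge$ is absent from the additive resolution of $\somelinking$, it lies in a deleted argument sub-tree. This gives an additive vertex $\otherothervertex$ that is a descendant of $\source(\someedge)$, or of the edge $\someedge$ itself, one of whose premises is the root edge of a sub-tree containing $\someedge$.

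First I will show that $\source(\someedge)$ is defined. Suppose instead that $\someedge$ has no source. Then $\someedge$ is either the partial edge of a hypothesis in $\Delta$, or it is the root edge of a formula of $\Gamma$ that carries no vertex. A formula with no vertex can only be a hypothesis, since the hypotheses are exactly where sub-trees are cut off while their partial edges are kept. By definition of additive resolutions, such partial edges are never deleted. Hence $\someedge$ would belong to the additive resolution of $\somelinking$, which contradicts the assumption. So $\source(\someedge)$ is defined.

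Second I will produce the leaf $\someleaf$. Consider the additive resolution of $\otherlinking$ restricted to the sub-tree rooted at $\source(\someedge)$; it contains $\someedge$ and $\source(\someedge)$. Starting from $\source(\someedge)$, I will climb upward through premises that belong to this additive resolution. At a multiplicative vertex both premises are kept, and at an additive vertex exactly one premise is kept, so an upward step is always available until one of two things happens: we reach a leaf, or we reach a partial edge with no source, i.e.\ a hypothesis. Every leaf of the additive resolution of $\otherlinking$ is covered by an axiom link of $\otherlinking$, because $\otherlinking$ partitions the leaves of its additive resolution. So if the climb ends at a leaf, that leaf is the required $\someleaf$.

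The main obstacle is to rule out the climb ending only at hypotheses. I will use the premise hypothesis again. Every hypothesis edge belongs to every additive resolution. Suppose all upward branches below $\source(\someedge)$ inside the resolution of $\otherlinking$ ended in hypotheses. The deleted sub-tree of the additive vertex $\otherothervertex$ in the resolution of $\somelinking$ contains $\someedge$, hence it contains these hypothesis edges, which are the ancestors of $\source(\someedge)$. That is impossible, because hypothesis partial edges cannot be deleted. I must take care that at some additive vertex above $\source(\someedge)$ the climb could choose a branch leading to a leaf. To handle this, I will strengthen the climb: among all premises available in $\otherlinking$'s resolution at each vertex, pick any one. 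Every maximal upward path then ends at a leaf or at a hypothesis, and at least one path exists. It therefore suffices that the resolution sub-tree at $\source(\someedge)$ contains no hypothesis at all, and this holds since any such hypothesis would lie in $\somelinking$'s deleted sub-tree. Hence every climb ends at a leaf used by $\otherlinking$.
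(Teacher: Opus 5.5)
Your argument is correct and is essentially the paper's: since $\someedge$ lies in an argument sub-tree deleted in the additive resolution of $\somelinking$, the whole sub-tree above $\someedge$ contains no hypothesis edge (in particular $\source(\someedge)$ is defined), so climbing from $\source(\someedge)$ inside the additive resolution of $\otherlinking$ must end at a leaf, which $\otherlinking$ covers by an axiom link. The hedging about branch choices in your last paragraph is unnecessary, because the absence of any hypothesis in that entire sub-tree already settles every possible climb.
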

\begin{proof}
There cannot be any edge with undefined source in the sub-tree of $\Delta \vdash \Gamma$ with $\someedge$ as a partial edge with no target, for otherwise the additive resolution of $\somelinking'$ does not contain this edge of undefined source, impossible by definition.
Hence, following the ancestor relation from the source of $\someedge$, that is defined, in the additive resolution of $\otherlinking$, one necessarily reaches a leaf $\someleaf$, associated with some axiom link $\somelink$ of $\otherlinking$.
\end{proof}

\begin{lem}\label{lem:splitting_in_all_additive_helper}
Let $\somepn$ be a proof structure on $\Delta \vdash \Gamma$.
Consider $\someedge$ a non-jump edge in $\G_\somepn$ such that $\someedge\notin\G_\somelinking$ for some linking $\somelinking\in\somepn$.
Then, there exists a cycle in $\G_\somepn$ that contains $\someedge$.
\end{lem}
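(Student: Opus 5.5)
The plan is to reduce to a single toggled $\with$-vertex using \cref{lem:use_tp}, extract jump edges to that vertex from the dependency definition, and close a cycle through $\someedge$ by combining those jump edges with descending paths in the syntactic forest.

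\emph{Step 1: reduce to one toggled $\with$-vertex.} Since $\someedge$ is a non-jump edge of $\G_\somepn$, it is an edge of $\G_\otherlinking$ for some $\otherlinking\in\somepn$. Take the predicate $P(\otherotherlinking)\eqdef(\someedge\in\G_\otherotherlinking)$. Then $P(\otherlinking)$ holds and $P(\somelinking)$ fails. \Cref{lem:use_tp} yields $\otherlinking',\somelinking'\in\somepn$ with $\someedge\in\G_{\otherlinking'}$ and $\someedge\notin\G_{\somelinking'}$, such that $\{\otherlinking',\somelinking'\}$ toggles exactly one $\with$-vertex, call it $\somewith$.

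\emph{Step 2: a jump edge to $\somewith$.} There are two cases.
\begin{itemize}
\item If $\someedge$ is an out-edge of an $\ax$-vertex $\somevertex$, let $\somelink$ be the corresponding axiom link, and $\someleaf$ the leaf that $\someedge$ reaches.
\item Otherwise $\someedge$ is an edge of the syntactic forest. By \cref{rem:no_hyp_oplus}, its source has an ancestor leaf $\someleaf$ used by an axiom link $\somelink\in\otherlinking'$; let $\somevertex$ be its $\ax$-vertex.
\end{itemize}
In both cases $\somelink\in\otherlinking'\setminus\somelinking'$. Indeed, if $\somelink\in\somelinking'$, then $\someleaf$ and its whole descending path, which contains $\someedge$, would lie in the additive resolution of $\somelinking'$. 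Since $\somewith$ is the only vertex toggled by $\{\otherlinking',\somelinking'\}$, the link $\somelink$ depends on $\somewith$. Hence there is a jump edge $\somevertex\to\somewith$ in $\G_\somepn$.

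\emph{Step 3: close a cycle through $\someedge$.} Let $\otherpath$ be the path that starts at $\somevertex$, goes to $\someleaf$, and then descends the syntactic tree through $\someedge$. I distinguish according to why $\someedge$ is absent from the additive resolution of $\somelinking'$, following $\otherpath$ downward until it first leaves that resolution.
\begin{itemize}
\item \emph{Case (a).} $\otherpath$ reaches a premise of $\somewith$, that is, $\someedge$ lies in the argument sub-tree of $\somewith$ that is deleted on the side of $\somelinking'$. Then the subpath of $\otherpath$ from $\somevertex$ to $\somewith$, followed by the jump edge back to $\somevertex$, is a cycle containing $\someedge$. Simplicity is immediate, since a descending path in a tree is simple and the jump edge is a new edge.
\item \emph{Case (b).} $\otherpath$ first passes through a $\oplus$-vertex $o$ whose choice of premise differs between $\otherlinking'$ and $\somelinking'$. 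The other branch of $o$ is in the additive resolution of $\somelinking'$, so by \cref{rem:no_hyp_oplus} it has a leaf $\someleaf'$ linked by some $\somelink'\in\somelinking'\setminus\otherlinking'$, with $\ax$-vertex $\somevertex'$. Again $\somelink'$ depends on $\somewith$, giving a jump edge $\somevertex'\to\somewith$. The cycle is
\[
\somevertex\to\someleaf\to\cdots\to\someedge\to\cdots\to o\leftarrow\cdots\leftarrow\someleaf'\leftarrow\somevertex'\to\somewith\leftarrow\somevertex .
\]
Its two tree branches meet only at $o$. The vertex $\somewith$ is not on them, because $\somewith$ is either absent from both branches or lies below $o$, and we stop at $o$. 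Simplicity then follows from \cref{lem:concatsimplepaths,lem:concatsimplepathsprefix}.
\end{itemize}

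The main obstacle is making this dichotomy rigorous. I need to show that the first point where $\otherpath$ leaves the additive resolution of $\somelinking'$ is either at $\somewith$ or at a $\oplus$-vertex. Since the two $\with$-resolutions differ only at $\somewith$, no other $\with$-vertex can cause the exit. I would also verify that $\somewith$ cannot appear strictly inside the branches used in case (b), which would break simplicity of the cycle.
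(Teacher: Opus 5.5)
Your argument follows the paper's proof closely: you reduce via \cref{lem:use_tp} to linkings $\mu',\lambda'$ toggling a single $\with$-vertex $w$, get jump edges to $w$ from dependency, and split on the first vertex of the descending path where the two additive resolutions choose different premises, which must be $w$ or a $\oplus$-vertex, with the same two cycles. However, you miss one case, which the paper handles separately as its ``easy case''. It arises when $\someedge$ is the out-edge $v\to l$ of an $\ax$-vertex and the leaf $l$ \emph{does} belong to the additive resolution of $\lambda'$, covered there by another axiom link. Then $\someedge\notin\G_{\lambda'}$ only because $a\notin\lambda'$. The entire descending path from $l$ lies in both additive resolutions, so there is no vertex where $q$ crosses into the resolution of $\lambda'$, and neither (a) nor (b) applies; your final dichotomy presupposes that such a vertex exists. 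The case really occurs. In the desequentialization of a $(\with)$ rule with two $(\ax)$ premises on $\vdash X\orth, X\with X$, take $\someedge$ to be the edge from one $\ax$-vertex to the shared leaf $X\orth$: its descending path meets neither $w$ nor a $\oplus$-vertex.

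The repair is short. In $\lambda'$, the leaf $l$ is covered by some axiom link $a'$ with $\ax$-vertex $u\neq v$. Since $\mu'$ partitions the leaves of its resolution and already contains $a\ni l$, we get $a'\in\lambda'\setminus\mu'$. So $a'$ also depends on $w$, and there is a jump edge $u\to w$. Then $v\to l\leftarrow u\to w\leftarrow v$ is a cycle containing $\someedge$.

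Relatedly, in the first case of your Step~2, the justification of $a\notin\lambda'$ via ``the descending path containing $\someedge$'' does not apply, since $\someedge$ is not on that path. The conclusion still holds directly, because an $\ax$ out-edge is in $\G_{\lambda'}$ iff its axiom link is in $\lambda'$.

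With this case added, the rest of your proof is correct and matches the paper. This includes the simplicity check in (b): $w$ lies in both additive resolutions, so it cannot sit strictly above $o$ in either branch.
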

\begin{proof}
Set $P(\otherotherlinking)$ the following predicate on linkings: ``$\someedge$ belongs to $\G_\otherotherlinking$''.
Remark that $P(\otherlinking)$ holds for some linking $\otherlinking\in\somepn$ for $\someedge\in\G_\somepn$ is not a jump edge, whereas $P(\somelinking)$ does not hold.
By \cref{lem:use_tp}, one gets linkings $\somelinking',\otherlinking'\in\somepn$ such that $\someedge$ belongs to $\G_{\otherlinking'}$ but not to $\G_{\somelinking'}$, and $\{\somelinking',\otherlinking'\}$ toggles a unique $\with$-vertex $\somewith$.

One easy case is when $\someedge$ is an outgoing edge from an $\ax$-vertex $\somevertex$ towards a leaf $\someleaf$ that is also in $\G_{\somelinking'}$.
Then, there is another $\ax$-vertex $\othervertex$ with an outgoing edge to $\someleaf$ in $\G_{\somelinking'}$, and we get a cycle containing $\someedge$ in $\G_\somepn$: $\somevertex\to\somewith\gets\othervertex\to\someleaf\gets\somevertex$.

Otherwise, either $\someedge$ is an outgoing edge from an $\ax$-vertex $\othervertex_\someleaf$ to a leaf $\someleaf\in\G_{\otherlinking'}\backslash\G_{\somelinking'}$; or it is an edge in the syntactic forest of $\Delta\vdash\Gamma$ and, by \cref{rem:no_hyp_oplus}, there is a leaf $\someleaf$ that is an ancestor of the source of $\someedge$ and that is used in an axiom link of $\otherlinking'$ corresponding to an $\ax$-vertex $\othervertex_\someleaf$ in $\G_{\otherlinking'}$ -- note that $\someleaf,\othervertex_\someleaf\notin\G_{\somelinking'}$ as $\someedge\notin\G_{\somelinking'}$.
In both cases, call $\otherothervertex$ the descendant of $\someleaf$ where the additive resolutions of $\somelinking'$ and $\otherlinking'$ make different choices of premises.
We will exhibit a cycle containing as a sub-path $\concatpath{\othervertex_\someleaf\to\someleaf}{\somepath_\someleaf}$ with $\somepath_\someleaf$ the path in the syntactic forest of $\Delta\vdash\Gamma$ from $\someleaf$ to $\otherothervertex$, hence a cycle containing $\someedge$ in both cases.
Remark there is a jump edge $\othervertex_\someleaf\to\somewith$ in $\G_\somepn$.

The vertex $\otherothervertex$ is either a $\with$-vertex, in which case it can only be $\somewith$ the unique $\with$-vertex toggled by $\{\somelinking',\otherlinking'\}$, or it is a $\oplus$-vertex.
\begin{itemize}
\item
If $\otherothervertex$ is the $\with$-vertex $\somewith$, then $\concatpath{\concatpath{\othervertex_\someleaf\to\someleaf}{\somepath_\someleaf}}{\otherothervertex\gets\othervertex_\someleaf}$ is a cycle of $\G_\somepn$ containing $\someedge$ (\cref{lem:concatsimplepathsprefix}) -- see \cref{fig:proof:splitting_in_all_additive_with}.
\item
If $\otherothervertex$ is a $\oplus$-vertex, then by applying \cref{rem:no_hyp_oplus} on the premise of $\otherothervertex$ kept in $\G_{\somelinking'}$, there is a leaf $\otherleaf$ and an $\ax$-vertex $\othervertex_\otherleaf$ that corresponds to an axiom link in $\somelinking'$ containing $\otherleaf$, and $\otherleaf$ is an ancestor of $\otherothervertex$.
See \cref{fig:proof:splitting_in_all_additive_oplus} for an illustration of this case.
Thence, there are in $\G_\somepn$ jump edges $\othervertex_\someleaf\to\somewith$ and $\othervertex_\otherleaf\to\somewith$.
Call $\somepath_\otherleaf$ the path in the syntactic forest of $\Delta\vdash\Gamma$ from $\otherleaf$ to $\otherothervertex$.
We can again exhibit a cycle containing $\someedge$: $\concatpath{\concatpath{\somewith\gets\othervertex_\someleaf\to\someleaf}{\somepath_\someleaf}}{\concatpath{\somepath_\otherleaf}{\otherleaf\gets\othervertex_\otherleaf\to\somewith}}$.
This is indeed a cycle thanks to \cref{lem:concatsimplepathsprefix}, using that $\somewith$ cannot be an ancestor of $\otherothervertex$ for it belongs to both the additive resolution of a linking using the left premise of $\otherothervertex$ and one using its right premise.
\end{itemize}
\end{proof}

\begin{figure}
\begin{minipage}[b]{.48\textwidth}
\begin{adjustbox}{}
\begin{tikzpicture}
\begin{genscope}
	\node (l) at (-2,2) {$\someleaf$};
	\node (axl) at (-3,3) {$\othervertex_\someleaf$};
 	\node (x) at (0,0) {$\otherothervertex$};

    \path (axl) edge (l);
    \path (axl) edge ++(-.75,-.75);
    \path (l) edge node[chemin nomme,below]{$\somepath_\someleaf$} (x);
    \path (.6,.6) edge (x);
    \path[out=-90,in=170] (axl) edge (x);
\end{genscope}
\end{tikzpicture}
\end{adjustbox}
\caption{Cycle in the proof of \cref{lem:splitting_in_all_additive_helper} when $\otherothervertex$ is a $\with$-vertex}
\label{fig:proof:splitting_in_all_additive_with}
\end{minipage}
\hfill
\begin{minipage}[b]{.48\textwidth}
\begin{adjustbox}{}
\begin{tikzpicture}
\begin{genscope}
	\node (l) at (-2,2) {$\someleaf$};
	\node (axl) at (-3,3) {$\othervertex_\someleaf$};
 	\node (r) at (2,2) {$\otherleaf$};
	\node (axr) at (3,3) {$\othervertex_\otherleaf$};
 	\node (x) at (0,0) {$\otherothervertex$};
 	\node (w) at (2,.5) {$\somewith$};

    \path (axl) edge (l);
    \path (axl) edge ++(-.75,-.75);
    \path (axr) edge (r);
    \path (axr) edge ++(.75,-.75);
    \path (l) edge node[chemin nomme,below]{$\somepath_\someleaf$} (x);
    \path (r) edge node[chemin nomme,below]{$\somepath_\otherleaf$} (x);
    \path[out=-90,in=135] (axl) edge (w);
    \path[out=-90,in=45] (axr) edge (w);
\end{genscope}
\end{tikzpicture}
\end{adjustbox}
\caption{Cycle in the proof of \cref{lem:splitting_in_all_additive_helper} when $\otherothervertex$ is a $\oplus$-vertex}
\label{fig:proof:splitting_in_all_additive_oplus}
\end{minipage}
\end{figure}

\begin{lem}
\label{lem:splitting_in_all_additive}
Let $\somevertex$ be a splitting vertex in $\G_\somepn$, for $\somepn$ a proof structure.
For every $\somelinking\in\somepn$, $\somevertex$ belongs to $\G_\somelinking$; \ie\ $\somevertex$ belongs to its additive resolution $\Delta \vdash \Gamma\upharpoonright\somelinking$ or is an $\ax$-vertex associated with an axiom link of $\somelinking$.
\end{lem}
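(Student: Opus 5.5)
We argue by contradiction: assume $\somevertex$ is splitting but $\somevertex\notin\G_\somelinking$ for some $\somelinking\in\somepn$. Since $\somevertex$ is a vertex of $\G_\somepn$, it belongs to $\G_\otherlinking$ for some $\otherlinking\in\somepn$. The plan is to exhibit a non-jump edge $\someedge$ incident to $\somevertex$ which is present in $\G_\otherlinking$ but absent from $\G_\somelinking$. Then \cref{lem:splitting_in_all_additive_helper} yields a cycle of $\G_\somepn$ containing $\someedge$, and we check that this cycle contradicts splittingness.

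We choose $\someedge$ according to the kind of $\somevertex$.
\begin{itemize}
\item If $\somevertex$ is a non-leaf vertex of the syntactic forest, take $\someedge$ to be its conclusion. This edge lies in the additive resolution of a linking if and only if $\somevertex$ does, so $\someedge\in\G_\otherlinking$ and $\someedge\notin\G_\somelinking$.
\item If $\somevertex$ is an $\ax$-vertex, take $\someedge$ to be either of its out-edges to leaves. Since the axiom link of $\somevertex$ is not in $\somelinking$, the vertex $\somevertex$ and its out-edges do not appear in $\G_\somelinking$, whereas they do appear in $\G_\otherlinking$.
\end{itemize}
In both cases $\someedge$ is a non-jump edge of $\G_\somepn$ with $\someedge\notin\G_\somelinking$, so \cref{lem:splitting_in_all_additive_helper} gives a cycle $\somecycle$ of $\G_\somepn$ through $\someedge$, hence through $\somevertex$.

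It remains to derive the contradiction, case by case on the kind of $\somevertex$.
\begin{itemize}
\item If $\somevertex$ is a $\pw$-vertex, $\someedge$ is its conclusion, which lies in $\somecycle$. This directly contradicts the definition of splitting for $\pw$-vertices.
\item If $\somevertex$ is an $\ax\backslash\tensor\backslash\oplus$-vertex, $\somevertex$ lies in a cycle of $\G_\somepn$. This again contradicts splittingness, since such a vertex must be in no cycle at all.
\end{itemize}

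The main obstacle is the bookkeeping of membership in $\G_\somelinking$: we must make sure that, for a non-$\ax$ vertex, absence of $\somevertex$ from the additive resolution of $\somelinking$ really forces absence of its conclusion edge. I would justify this by noting that resolutions delete whole argument sub-trees, and these deleted sub-trees carry their partial root edges. We must also ensure that the chosen $\someedge$ is not a jump edge, which holds by construction.
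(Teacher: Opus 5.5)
Your proof is correct and follows the paper's own argument: the paper also takes an outgoing non-jump edge of $\somevertex$ (its conclusion, or an out-edge of the $\ax$-vertex to a leaf). That edge lies in $\G_\somepn$ but not in $\G_\somelinking$, so \cref{lem:splitting_in_all_additive_helper} gives a cycle through it, which rules out $\somevertex$ being splitting. Your case analysis, and your remark that deleted argument sub-trees take their root edge with them, spell out details the paper leaves implicit.
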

\begin{proof}
We proceed by contraposition: given a vertex $\somevertex$ in $\G_\somepn$ and a linking $\somelinking\in\somepn$, if $\somevertex$ does not belong to $\G_\somelinking$ then it is not splitting.
It suffices to this end to exhibit a cycle containg an outgoing edge of $\somevertex$.
We obtain this cycle by applying \cref{lem:splitting_in_all_additive_helper} to any outgoing non-jump edge of $\somevertex$, that belongs to $\G_\somepn$ but not to $\G_\somelinking$ as $\somevertex$ belongs to the first but not to the second.
\end{proof}

\begin{lem}
\label{lem:plus_jump}
A splitting $\oplus$-vertex in $\G_\somepn$, for $\somepn$ a proof structure, is unary.
\end{lem}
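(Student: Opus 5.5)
We argue by contraposition, in the same spirit as \cref{lem:splitting_in_all_additive}: we assume that a $\oplus$-vertex $\otherothervertex$ is binary in $\G_\somepn$, and we exhibit a cycle of $\G_\somepn$ through $\otherothervertex$. This contradicts splitting, since a splitting $\oplus$-vertex must lie in no cycle.

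The first step uses \cref{lem:splitting_in_all_additive}. If $\otherothervertex$ is splitting, it belongs to $\G_\somelinking$ for every $\somelinking\in\somepn$. Since $\otherothervertex$ is binary, its left premise $\someedge_1$ lies in $\G_{\somelinking_1}$ for some linking $\somelinking_1$, and its right premise $\someedge_2$ lies in $\G_{\somelinking_2}$ for some $\somelinking_2$. An additive resolution keeps exactly one premise of each $\oplus$-vertex it contains. So the predicate $P(\otherotherlinking)\eqdef$ ``$\someedge_1\in\G_\otherotherlinking$'' holds for $\somelinking_1$ and fails for $\somelinking_2$. Applying \cref{lem:use_tp}, we obtain $\somelinking',\otherlinking'\in\somepn$ such that $\someedge_1\in\G_{\somelinking'}$, $\someedge_1\notin\G_{\otherlinking'}$, and $\{\somelinking',\otherlinking'\}$ toggles a unique $\with$-vertex $\somewith$. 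Since $\otherothervertex$ belongs to both $\G_{\somelinking'}$ and $\G_{\otherlinking'}$, the linking $\otherlinking'$ keeps $\someedge_2$.

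The second step builds the cycle, mirroring the $\oplus$ case of \cref{lem:splitting_in_all_additive_helper}. We apply \cref{rem:no_hyp_oplus} to $\someedge_1$, with $\somelinking'$ and $\otherlinking'$ playing the roles there of $\otherlinking$ and $\somelinking$ respectively. This gives a leaf $\someleaf$, ancestor of $\otherothervertex$ through $\someedge_1$, and an axiom link of $\somelinking'$ with $\ax$-vertex $\othervertex_\someleaf$. Symmetrically, applying it to $\someedge_2$ gives a leaf $\otherleaf$ through $\someedge_2$ and an $\ax$-vertex $\othervertex_\otherleaf$ of $\otherlinking'$. Neither axiom link can be shared by the two linkings, since $\someleaf$ is absent from the additive resolution of $\otherlinking'$ and $\otherleaf$ from that of $\somelinking'$. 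Hence both links depend on $\somewith$, and $\G_\somepn$ contains jump edges $\othervertex_\someleaf\to\somewith$ and $\othervertex_\otherleaf\to\somewith$. The candidate cycle is then the path from $\somewith$ back along the jump edge to $\othervertex_\someleaf$, to $\someleaf$, down the syntactic tree to $\otherothervertex$, up the syntactic tree to $\otherleaf$, to $\othervertex_\otherleaf$, and along the jump edge back to $\somewith$. This cycle contains $\otherothervertex$.

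The main obstacle is checking that this closed path is simple, via \cref{lem:concatsimplepathsprefix}. The two syntactic paths lie in disjoint argument sub-trees of $\otherothervertex$, so they meet only at $\otherothervertex$, and the two $\ax$-vertices are distinct. It remains to show that $\somewith$ lies on neither syntactic path, that is, $\somewith$ is not an ancestor of $\otherothervertex$. We argue as in \cref{lem:splitting_in_all_additive_helper}: $\somewith$ is toggled, so both its premises occur in additive resolutions of $\somelinking'$ or $\otherlinking'$. If $\somewith$ were above $\otherothervertex$ on the side of $\someedge_1$, it could not occur in the additive resolution of $\otherlinking'$, since that resolution drops $\someedge_1$. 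Then it could be toggled only through $\somelinking'$, which keeps just one of its premises, a contradiction. The side of $\someedge_2$ is symmetric. We therefore obtain a cycle containing the conclusion of $\otherothervertex$, and $\otherothervertex$ is not splitting.
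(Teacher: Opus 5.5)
Your proof is correct, but it takes a detour compared with the paper. The paper applies \cref{lem:splitting_in_all_additive_helper} directly to the left premise of the binary $\oplus$-vertex. That premise is a non-jump edge of $\G_\somepn$ that is missing from $\G_\otherlinking$, for any linking $\otherlinking$ using the right premise. So it lies on a cycle, and that cycle passes through the vertex.

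You instead re-derive by hand the $\oplus$ case of that lemma's proof. In the general lemma, the two resolutions may diverge at some descendant of the vertex. To force the divergence point to be $\otherothervertex$ itself, you first invoke \cref{lem:splitting_in_all_additive}. This has two costs:
\begin{itemize}
\item your argument becomes a proof by contradiction, since you must assume $\otherothervertex$ is splitting;
\item \cref{lem:splitting_in_all_additive} itself already rests on \cref{lem:splitting_in_all_additive_helper}, the lemma you are rebuilding.
\end{itemize}
Your version yields an explicit cycle through the two premises of $\otherothervertex$. The paper's version is shorter and proves directly that every binary $\oplus$-vertex lies on a cycle, with no splitting hypothesis.

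One small slip: your final sentence says the cycle contains the conclusion of $\otherothervertex$, but it actually contains its two premises. This is harmless, since a $\oplus$-vertex is splitting only if it lies on no cycle at all.
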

\begin{proof}
We prove a binary $\oplus$-vertex $\somevertex$ is not splitting.
As $\somevertex$ is binary, there exist linkings $\somelinking,\otherlinking\in\somepn$ using respectively the left and right premises of $\somevertex$.
Applying \cref{lem:splitting_in_all_additive_helper} on the left premise of $\somevertex$, we get a cycle containing this edge, hence containing $\somevertex$: $\somevertex$ is not splitting.
\end{proof}

\begin{lem}
\label{lem:exists_jump}
Let $\somepn$ be a proof structure, $\somelinking_0,\somelinking_1\in\somepn$ and $\somelink\in\somelinking_0\backslash\somelinking_1$ an axiom link.
There exists a $\with$-vertex $\somewith$ toggled by $\{\somelinking_0,\somelinking_1\}$ such that $\somelink$ depends on $\somewith$ in $\somepn$.
\end{lem}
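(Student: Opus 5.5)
We argue by induction on the number $n$ of $\with$-vertices toggled by $\{\somelinking_0,\somelinking_1\}$, reusing the construction of \cref{lem:use_tp}. We have $n\ge 1$: otherwise $\somelinking_0$ and $\somelinking_1$ lie on a common $\with$-resolution and \cref{item:P1} forces $\somelinking_0=\somelinking_1$, contradicting $\somelink\in\somelinking_0\backslash\somelinking_1$.

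As in the proof of \cref{lem:use_tp}, we choose $\with$-resolutions $\someres_0\ni\somelinking_0$ and $\someres_1\ni\somelinking_1$ that differ exactly on the $n$ toggled $\with$-vertices. Pick one of them, $\somewith$, and let $\otherotherres$ be $\someres_0$ with the premise of $\somewith$ switched; the $\with$-vertices newly introduced by this switch receive their choice from $\someres_1$. Let $\otherotherlinking$ be the unique linking of $\somepn$ on $\otherotherres$ given by \cref{item:P1}.

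We then split into two cases.
\begin{itemize}
\item If $\somelink\notin\otherotherlinking$, we look at the pair $\{\somelinking_0,\otherotherlinking\}$. Since $\otherotherres$ differs from $\someres_0$ only at $\somewith$ (the newly introduced vertices do not belong to the additive resolution of $\somelinking_0$), the only $\with$-vertex toggled by this pair is $\somewith$. Hence $\somelink$ depends on $\somewith$ in $\somepn$, and $\somewith$ is toggled by $\{\somelinking_0,\somelinking_1\}$, as required.
\item If $\somelink\in\otherotherlinking$, we look at the pair $\{\otherotherlinking,\somelinking_1\}$. The $\with$-vertices it toggles are among those on which $\otherotherres$ and $\someres_1$ differ, namely the original $n$ vertices except $\somewith$. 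This gives at most $n-1$ toggled vertices, all of which are toggled by $\{\somelinking_0,\somelinking_1\}$. The induction hypothesis applied to $\otherotherlinking$, $\somelinking_1$ and $\somelink\in\otherotherlinking\backslash\somelinking_1$ then yields the required $\somewith'$.
\end{itemize}

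The delicate point is the bookkeeping on toggled vertices. In the first case we must check that $\somewith$ is really the unique vertex toggled by $\{\somelinking_0,\otherotherlinking\}$. Vertices where $\someres_0$ and $\otherotherres$ agree cannot be toggled. The only vertices that could cause trouble are those below $\somewith$'s switched branch; they are absent from the additive resolution of $\somelinking_0$, so they are not toggled either.

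In the second case we must check that the set toggled by $\{\otherotherlinking,\somelinking_1\}$ stays inside the set toggled by $\{\somelinking_0,\somelinking_1\}$. Any vertex toggled by $\{\otherotherlinking,\somelinking_1\}$ appears with different premises in $\otherotherres$ and $\someres_1$. Since the newly introduced vertices copy $\someres_1$'s choices, such a vertex must be one where $\someres_0$ and $\someres_1$ differ, other than $\somewith$. This is the same argument as in \cref{lem:use_tp}, with the predicate "$\somelink$ belongs to the linking", and indeed the whole lemma can be read off directly from \cref{lem:use_tp} applied to that predicate.
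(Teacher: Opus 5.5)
Your proof is correct and is essentially the paper's. The paper applies \cref{lem:use_tp} to the predicate $P(\mu)\eqdef(a\in\mu)$, and your induction is that lemma's proof written out for this predicate, as you note yourself. One step is left implicit, as it also is in the paper: in your first case, $w$ must actually be toggled by $\{\lambda_0,\nu\}$, not merely be the only candidate. This follows from $\lambda_0\neq\nu$ together with your $n\ge 1$ argument based on \cref{item:P1}.
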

\begin{proof}
Define a predicate $P$ on linkings by $P(\otherlinking) \eqdef (\somelink\in\otherlinking)$.
Applying \cref{lem:use_tp}, one gets linkings $\otherlinking_0,\otherlinking_1\in\somepn$ such that $\somelink\in\otherlinking_0\backslash\otherlinking_1$ and $\{\otherlinking_0 , \otherlinking_1 \}$ toggles a unique $\with$-vertex $\somewith$, that is also toggled by $\{\somelinking_0,\somelinking_1\}$.
Thus, $\somelink$ depends on $\somewith$ in $\somepn$.
\end{proof}

\subsubsection{Proof of Sequentialization}
\label{sec:mall_seq_proof}

Thanks to the previous section, we now prove that splitting vertices are easy to sequentialize.
After proving so for each kind of vertex, we deduce the sequentialization theorem.
To not repeat the same argumentation many times, we consider first the substitution along an edge, then along a vertex, and then only consider terminal splitting vertices.
We also use the following intermediate result for proving the correctness criterion.

\begin{lem}
\label{lem:P2&3_seq}
Let $\somepn$ be a proof net on $\Delta \vdash \Gamma$ and $\somepn'$ be a set of linkings on $\Delta' \vdash \Gamma'$, where the syntactic forest of $\Delta' \vdash \Gamma'$ is (isomorphic to) a sub-graph of the syntactic forest of $\Delta \vdash \Gamma$.
In particular, leaves of $\Delta' \vdash \Gamma'$ can be seen as leaves of $\Delta \vdash \Gamma$.
Assume that for all $\somesetlinking' \subseteq \somepn'$, there exists $\otherlinking$ such that $\somesetlinking \eqdef \{\somelinking' \cup \otherlinking \mid \somelinking' \in \somesetlinking'\} \subseteq \somepn$ and $\G_{\somesetlinking'}$ is (isomorphic to) a sub-graph of $\G_\somesetlinking$ with, in $\G_\somesetlinking$, no jump edge from the sub-graph $\G_{\somesetlinking'}$ to its complementary sub-graph $\G_\somesetlinking\backslash\G_{\somesetlinking'}$.

Then $\somepn'$ respects the MLL condition \cref{item:P2} and the toggling condition \cref{item:P3}.
\end{lem}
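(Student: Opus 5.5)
The plan is to derive each condition for $\somepn'$ by transporting a hypothetical counterexample into $\somepn$ through the hypothesis on subsets $\somesetlinking'\subseteq\somepn'$. The key observation is that cycles of $\G_{\somesetlinking'}$ remain cycles of $\G_\somesetlinking$. This holds because $\G_{\somesetlinking'}$ is a sub-graph of $\G_\somesetlinking$. It also holds that they remain switching, since being a switch edge of a $\pw$-vertex is determined locally by the vertex labels and edge directions, which the embedding preserves. Jump edges of $\G_{\somesetlinking'}$ are edges of $\G_\somesetlinking$ into the image of $\with$-vertices, so they remain switch edges.

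For \cref{item:P2}, I take $\somelinking'\in\somepn'$ and apply the hypothesis to $\somesetlinking'=\{\somelinking'\}$. This yields $\otherlinking$ with $\somelinking'\cup\otherlinking\in\somepn$ and $\G_{\somelinking'}\subseteq\G_{\somelinking'\cup\otherlinking}$. A switching cycle in $\G_{\somelinking'}$ would then be a switching cycle in $\G_{\somelinking'\cup\otherlinking}$, contradicting \cref{item:P2} for $\somepn$.

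For \cref{item:P3}, I take $\somesetlinking'\subseteq\somepn'$ with at least two linkings and obtain $\otherlinking$ and $\somesetlinking$ as in the hypothesis. The linkings $\somelinking'\cup\otherlinking$ are pairwise distinct when the $\somelinking'$ are distinct. I expect this to follow because the leaves of $\Delta'\vdash\Gamma'$ are disjoint from those used by $\otherlinking$; if not, I would argue via the sub-graph inclusion instead. So $\somesetlinking$ has at least two elements, and \cref{item:P3} for $\somepn$ gives a $\with$-vertex $\somewith$ toggled by $\somesetlinking$ that lies in no switching cycle of $\G_\somesetlinking$.

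The main obstacle is showing that $\somewith$ lies in the image of $\G_{\somesetlinking'}$ and is toggled by $\somesetlinking'$. First, the two premises of $\somewith$ lie in $\Delta\vdash\Gamma\upharpoonright\somesetlinking$. Second, all linkings in $\somesetlinking$ share the same $\otherlinking$ part, so a $\with$-vertex outside the $\Delta'\vdash\Gamma'$ part cannot be toggled. More precisely, the leaves reached from different premises outside the sub-forest come only from $\otherlinking$, which is common to all of them; this would force both premises into a single additive resolution, which is impossible. Hence $\somewith$ lies in $\Delta'\vdash\Gamma'$, and the premises that witness toggling lie in $\G_{\somesetlinking'}$. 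If this argument does not go through cleanly, I would instead use the assumption that no jump edge leaves $\G_{\somesetlinking'}$ toward its complement. Once $\somewith$ is toggled by $\somesetlinking'$, any switching cycle of $\G_{\somesetlinking'}$ through $\somewith$ would be a switching cycle of $\G_\somesetlinking$ through $\somewith$, a contradiction. Therefore \cref{item:P3} holds for $\somepn'$.
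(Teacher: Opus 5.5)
Your treatment of \cref{item:P2} matches the paper, and so does your reduction of \cref{item:P3} to showing that the $\with$-vertex $w$ given by \cref{item:P3} for $\theta$ lies in $\mathcal{G}_{\Lambda'}$ and is toggled by $\Lambda'$. For that last step the paper uses the jump-edge hypothesis. It takes a jump edge into $w$ from the $\ax$-vertex of a link $a\in(\lambda'_1\cup\mu)\setminus(\lambda'_2\cup\mu)$, notes $a\in\lambda'_1$ (so this $\ax$-vertex is in $\mathcal{G}_{\Lambda'}$), and concludes because no jump edge leaves $\mathcal{G}_{\Lambda'}$.

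Your route through the shared $\mu$ is different, and as written it has a gap. The sentence ``the leaves reached from different premises outside the sub-forest come only from $\mu$'' presupposes that a $\with$-vertex outside $\Delta'\vdash\Gamma'$ has its argument sub-trees outside $\Delta'\vdash\Gamma'$ as well. The sub-graph hypothesis does not give this, and it fails in the paper's own use of the lemma. In \cref{lem:split_edge}, when checking $\theta_1$ on $\Delta_1\vdash A,\Gamma_1$, the vertex $\target(e)$ may be a $\with$-vertex outside that sub-forest while one of its argument sub-trees is the whole tree of $A$, whose leaves are used by the $\lambda'$-parts, not by $\mu$. Likewise, you assert rather than prove that the premise of $w$ kept by $\lambda'_i\cup\mu$ is also kept by $\lambda'_i$ in $\Delta'\vdash\Gamma'$.

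The idea can be repaired, and then it needs no jump edges at all. Let $e_L$, $e_R$ be the premises of $w$, kept by $\lambda'_1\cup\mu$ and $\lambda'_2\cup\mu$ respectively.
\begin{enumerate}
\item What sharing $\mu$ really gives is this: above $e_L$, the additive resolution of $\lambda'_1\cup\mu$ contains at least one leaf, and only leaves used by $\lambda'_1$. A leaf used by $\mu$, or a hypothesis of $\Delta$, there would force $\lambda'_2\cup\mu$ to keep $e_L$ too.
\item Pick such a leaf $\ell_L$, and symmetrically a leaf $\ell_R$ above $e_R$ used by $\lambda'_2$.
\item Suppose the path from $\ell_R$ down to $w$ left $\Delta'\vdash\Gamma'$. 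It would do so at a root $r$ of $\Delta'\vdash\Gamma'$ strictly above $w$. Roots lie in every additive resolution, so $r\in\mathcal{G}_{\lambda'_1}$. Applying the hypothesis to the singleton $\{\lambda'_1\}$ gives $\lambda'_1\cup\mu_1\in\theta$ whose additive resolution contains both $r$ and $\ell_L$, hence both $e_R$ and $e_L$, which is impossible.
\item The symmetric argument with $\{\lambda'_2\}$ handles the path from $\ell_L$.
\end{enumerate}
So both paths stay inside $\Delta'\vdash\Gamma'$ down to $w$. This gives $w\in\Delta'\vdash\Gamma'$, with $e_L$ and $e_R$ kept by $\lambda'_1$ and $\lambda'_2$ there.

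Compared with the paper, this costs a small analysis of how $\Delta'\vdash\Gamma'$ sits inside $\Delta\vdash\Gamma$. In exchange, it never needs a jump edge into $w$ inside $\mathcal{G}_\Lambda$. Your fallback (the no-jump-edge assumption) would have to produce such an edge, and it is not automatic. Jump edges of $\mathcal{G}_\Lambda$ come from dependencies computed within $\Lambda$. For example, if $\Lambda$ consists of two linkings toggling two $\with$-vertices, $\mathcal{G}_\Lambda$ has no jump edge at all, while \cref{item:P3} may return one of these two vertices.
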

\begin{proof}
The MLL condition \cref{item:P2} for $\somepn'$ holds thanks to \cref{item:P2} for $\somepn$: given $\somelinking'\in\somepn'$, by hypothesis there is some $\somelinking=\somelinking'\cup\otherlinking\in\somepn$ with $\G_{\somelinking'}\subseteq\G_\somelinking$, the latter having no switching cycle.

Considering the toggling condition \cref{item:P3}, let $\somesetlinking'\subseteq\somepn'$ be of size at least two.
Then the hypothesis yields some $\somesetlinking = \{\somelinking' \cup \otherlinking \mid \somelinking' \in \somesetlinking'\} \subseteq \somepn$ of the same size: by \cref{item:P3} for $\somepn$, $\somesetlinking$ toggles a $\with$-vertex $\somewith$ not in any switching cycle of $\G_\somesetlinking$.
To conclude, it suffices to show $\somewith$ belongs to $\G_{\somesetlinking'}$.
Indeed, $\somewith$ is not in any switching cycle of $\G_{\somesetlinking'}$ and would then belong to $\Delta' \vdash \Gamma'$; plus, $\somewith$ is toggled by $\somesetlinking'$ as it is toggled by $\somesetlinking = \{\somelinking' \cup \otherlinking \mid \somelinking' \in \somesetlinking'\}$.

As $\somewith$ is toggled by $\somesetlinking$, there is a jump edge to it from an $\ax$-vertex $\somevertex$ corresponding to an axiom link $\somelink$ in a linking $\somelinking_1'\cup\otherlinking\in\somesetlinking$ and not in another linking $\somelinking_2'\cup\otherlinking\in\somesetlinking$.
Observe that $\somelink$ must belong to $\somelinking_1'$ and not to $\somelinking_2'$.
Hence, the corresponding $\somevertex$ is in $\G_{\somesetlinking'}$.
Thus, $\somewith$ belongs to $\G_{\somesetlinking'}$ too, for there is no jump edge from $\G_{\somesetlinking'}$ to the rest of the graph by assumption.
\end{proof}

\begin{lem}[Splitting Edge]
\label{lem:split_edge}
Consider a proof net $\somepn$ on the sequent with hypotheses $\Delta \vdash \Gamma$.
Let $\someedge$ be an edge in the syntactic forest of $\Delta \vdash \Gamma$, with associated formula $A$.
Assume $\someedge$ belongs to all $\G_\somelinking$ for $\somelinking\in\somepn$, and that $\G_\somepn$ is the disjoint union of two partial graphs $\somegraph_0$ and $\somegraph_1$ where an edge of $\somegraph_0$ of target $\target(\someedge)$ and undefined source, and an edge of $\somegraph_1$ of source $\source(\someedge)$ and undefined target, are identified together as $\someedge$.
Then, there exist proof nets $\somepn_0$ and $\somepn_1$ respectively on $\Delta_0,A\vdash\Gamma_0$ and $\Delta_1\vdash A,\Gamma_1$ such that $\somepn = \{\somelinking_0\cup\somelinking_1 \mid \somelinking_0\in\somepn_0,\somelinking_1\in\somepn_1\}$, $\Delta=\Delta_0,\Delta_1$, $\Gamma=\Gamma_0,\Gamma_1$, $\G_{\somepn_0}\giso\somegraph_0$ and $\G_{\somepn_1}\giso\somegraph_1$.
In particular, $\G_\somepn\giso\G_{\somepn_0}\cup\G_{\somepn_1}$ where the two edges in $\G_{\somepn_0}\cup\G_{\somepn_1}$ corresponding to $A$ are identified together.
\end{lem}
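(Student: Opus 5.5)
We build $\somepn_0$ and $\somepn_1$ by splitting each linking along the cut at $\someedge$. The syntactic forest of $\Delta\vdash\Gamma$ separates at $\someedge$. The part above $\someedge$ (the syntactic tree of $A$ rooted at $\source(\someedge)$, together with whatever lies on the $\somegraph_1$ side) gives $\Delta_1\vdash A,\Gamma_1$. The remainder, where the sub-tree of $A$ is replaced by a hypothesis $A$ (partial edge with undefined source), gives $\Delta_0,A\vdash\Gamma_0$. Here $\Gamma_1$ and $\Delta_1$ collect the roots and hypotheses lying in $\somegraph_1$. Since $\someedge$ is the only connection between $\somegraph_0$ and $\somegraph_1$, every axiom link of any linking, hence every $\ax$-vertex, lies entirely on one side. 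So each $\somelinking\in\somepn$ decomposes uniquely as $\somelinking=\somelinking^0\cup\somelinking^1$, with $\somelinking^i$ the links in $\somegraph_i$. We set $\somepn_i\eqdef\{\somelinking^i\mid\somelinking\in\somepn\}$. The hypothesis that $\someedge\in\G_\somelinking$ for all $\somelinking$ ensures that $\somelinking^0$ is a linking on an additive resolution of $\Delta_0,A\vdash\Gamma_0$ keeping the hypothesis $A$, and that $\somelinking^1$ covers the root edge of $A$.

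The first key step is the product decomposition $\somepn=\{\somelinking_0\cup\somelinking_1\}$, \ie\ showing that any $\somelinking^0$ and $\otherlinking^1$ recombine into a linking of $\somepn$. Since no jump edge crosses between the $\somegraph_i$ (the graph is a disjoint union glued only at $\someedge$), no axiom link of one side depends on a $\with$-vertex of the other. Take the $\with$-resolution $\someres$ agreeing with the resolution of $\somelinking$ on the $\somegraph_0$ side and with that of $\otherlinking$ on the $\somegraph_1$ side. By \cref{item:P1}, $\someres$ carries a unique linking $\otherotherlinking\in\somepn$, and we must show $\otherotherlinking=\somelinking^0\cup\otherlinking^1$. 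Suppose some link $\somelink\in\otherotherlinking$ on the $0$ side were not in $\somelinking$. Then \cref{lem:exists_jump} gives a $\with$-vertex toggled by $\{\otherotherlinking,\somelinking\}$ on which $\somelink$ depends, hence a jump edge from $\somelink$'s $\ax$-vertex. This $\with$-vertex must lie on the $1$ side, because on the $0$ side the two resolutions agree on $\with$-vertices in both additive resolutions. That gives a jump edge crossing the sides, a contradiction. The same argument works symmetrically. This is the main obstacle: one has to be careful that the toggled $\with$-vertices differ only on the $1$ side, and to use \cref{lem:use_tp} and \cref{lem:exists_jump} correctly.

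With the product form in hand, \cref{item:P1} for $\somepn_i$ follows. A $\with$-resolution of one side, extended by any resolution of the other, has a unique linking in $\somepn$, and uniqueness on the side is inherited via the product decomposition. For \cref{item:P2} and \cref{item:P3} we apply \cref{lem:P2&3_seq}. Given $\somesetlinking'\subseteq\somepn_0$, fix any $\otherlinking\in\somepn_1$; then $\{\somelinking'\cup\otherlinking\}\subseteq\somepn$ by the product form, and symmetrically for $\somepn_1$. The sub-graph condition holds because $\G_{\somesetlinking'}$ is the $0$-side part of $\G_\somesetlinking$. There is no jump edge leaving it, since $\otherlinking$ is fixed, so no $1$-side link depends on anything, and $0$-side links jump only to $0$-side $\with$-vertices. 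Finally, $\G_{\somepn_i}\giso\somegraph_i$ follows from the construction, given that jump edges are computed identically on each side.
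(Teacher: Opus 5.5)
Your proposal is correct and follows essentially the same route as the paper. You split each linking by side (no axiom link straddles $\someedge$), use \cref{lem:exists_jump} together with the absence of jump edges between $\somegraph_0$ and $\somegraph_1$ to pin down the \cref{item:P1}-linking on a combined $\with$-resolution, and get \cref{item:P2,item:P3} from \cref{lem:P2&3_seq} by padding with a fixed linking of the other side. The only difference is the order of two steps: you prove the product decomposition directly and deduce uniqueness in \cref{item:P1} for $\somepn_i$ from it, whereas the paper proves that uniqueness first, by the same \cref{lem:exists_jump} argument, and then derives the product decomposition.
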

\begin{proof}
For a linking $\somelinking\in\somepn$, define $\somelinking_0$ (\resp\ $\somelinking_1$) as the axiom links of $\somelinking$ whose leaves are in $\somegraph_0$ (\resp\ $\somegraph_1$).
No axiom link uses leaves in both $\somegraph_0$ and $\somegraph_1$ since $\someedge$ is not an out-edge of an $\ax$-vertex: $\somelinking=\somelinking_0\cup\somelinking_1$.
Define $\somepn_0\eqdef\{\somelinking_0\mid\somelinking\in\somepn\}$, a set of linkings on $\Delta_0,A\vdash\Gamma_0$ with $\Gamma_0$ and $\Delta_0$ given by $\somegraph_0$: it contains the (pruned) syntactic trees given by restriction of $\Delta \vdash \Gamma$ to $\somegraph_0$.
That it is indeed a set of linkings, with our contraint on hypotheses, follows from the fact that $\someedge$ belongs to all $\G_\somelinking$ for $\somelinking\in\somepn$.
Moreover, remark that $\G_{\somepn_0}\giso\somegraph_0$.
Similarly, set $\somepn_1\eqdef\{\somelinking_1\mid\somelinking\in\somepn\}$ a set on linkings on the sequent $\Delta_1\vdash A,\Gamma_1$ given by restriction to $\G_1$, with $\G_{\somepn_1}\giso\somegraph_1$.
Thus, $\Delta=\Delta_0,\Delta_1$ and $\Gamma=\Gamma_0,\Gamma_1$ as wanted.
Observe $\somepn \subseteq \{\somelinking_0\cup\somelinking_1 \mid \somelinking_0\in\somepn_0,\somelinking_1\in\somepn_1\}$ by definition of $\somepn_0$ and $\somepn_1$.
Also notice $\somepn_0\neq\emptyset$: otherwise $\somepn = \emptyset$, a contradiction with \cref{item:P1} as $\Delta \vdash \Gamma$ has at least one $\with$-resolution.
Similarly, $\somepn_1\neq\emptyset$.
We prove that $\somepn_0$ and $\somepn_1$ are proof nets, with \cref{item:P1,,item:P2,,item:P3} to show in both cases, as well as the inclusion $\{\somelinking_0\cup\somelinking_1 \mid \somelinking_0\in\somepn_0,\somelinking_1\in\somepn_1\} \subseteq \somepn$ using \cref{item:P1}.

Let us prove the resolution condition \cref{item:P1} for $\somepn_0$.
A similar argument will prove it also for $\somepn_1$.
Let $\someres_0$ be a $\with$-resolution of $\Delta_0,A_0\vdash\Gamma_0$.
We can extend it into an arbitrary $\with$-resolution $\someres$ of $\Delta\vdash \Gamma$.
There exists a unique $\somelinking\in\somepn$ on $\someres$ by \cref{item:P1}, yielding a $\somelinking_0$ on $\someres_0$.
For the uniqueness condition, assume we have $\somelinking_0,\otherlinking_0\in\somepn_0$ both on $\someres_0$, with $\somelinking_0\neq\otherlinking_0$.
Without any loss of generality, there exists $\somelink\in\somelinking_0\backslash\otherlinking_0\subseteq\somelinking\backslash\otherlinking$.
By \cref{lem:exists_jump}, there exists in $\G_\somepn$ a $\with$-vertex $\somewith$ on which $\somelink$ depends and toggled by $\{\somelinking,\otherlinking\}$.
It follows that $\somewith$ is in $\somegraph_0$ as there is no jump edge between $\somegraph_0$ and $\somegraph_1$ in $\G_\somepn$.
But then $\{\somelinking_0,\otherlinking_0\}$ toggles $\somewith$, a contradiction as they are on the same $\with$-resolution $\someres_0$.

Let us now prove $\{\somelinking_0\cup\somelinking_1 \mid \somelinking_0\in\somepn_0,\somelinking_1\in\somepn_1\} \subseteq \somepn$.
Consider $\somelinking_0\in\somepn_0$ and $\otherlinking_1\in\somepn_1$, as well as $\someres_0$ and $\someres_1$ $\with$-resolutions they are respectively on.
Set $\someres$ the $\with$-resolution of $\somepn$ taking the same choices of premises as $\someres_0$ and $\someres_1$.
By \cref{item:P1} of $\somepn$, there is a unique $\otherotherlinking\in\somepn$ on it.
But then for $i\in\{0,1\}$, $\otherotherlinking_i$ is on $\someres_i$.
By \cref{item:P1} of $\somepn_0$, $\otherotherlinking_0=\somelinking_0$, and by \cref{item:P1} of $\somepn_1$, $\otherotherlinking_1=\otherlinking_1$.
Hence $\otherotherlinking = \somelinking_0\cup\otherlinking_1$, and finally $\somepn = \{\somelinking_0\cup\somelinking_1 \mid \somelinking_0\in\somepn_0,\somelinking_1\in\somepn_1\}$.

We will obtain \cref{item:P2,item:P3} by applying \cref{lem:P2&3_seq}.
Remark that for any $\somesetlinking_0\subseteq\somepn_0$, $\somesetlinking_1\subseteq\somepn_1$, $\G_{\{\somelinking_0\cup\somelinking_1 \mid \somelinking_0\in\somesetlinking_0,\somelinking_1\in\somesetlinking_1\}}\giso\G_{\somesetlinking_0}\cup\G_{\somesetlinking_1}$ where the edge of the hypothesis $A$ in $\G_{\somesetlinking_0}$ is identified with the edge of the conclusion $A$ in $\G_{\somesetlinking_1}$.
This uses that $\someedge$ belongs to $\G_{\{\somelinking_0\cup\somelinking_1 \mid \somelinking_0\in\somesetlinking_0,\somelinking_1\in\somesetlinking_1\}}$ by hypothesis.
Let us now apply \cref{lem:P2&3_seq} on each $\somepn_i$. 
Consider $\somesetlinking_i\subseteq\somepn_i$, and take an arbitrary $\somelinking_{1-i}\in\somepn_{1-i}$.
Then, $\somesetlinking\eqdef\{\somelinking_i\cup\somelinking_{1-i}\mid\somelinking_i\in\somesetlinking_i\}$ is a set of linkings in $\somepn$.
Moreover, as $\someedge$ is in no cycle of $\G_\somepn\supseteq\G_\somesetlinking$, there is no jump edge from $\G_{\somesetlinking_i}$ to its complementary in $\G_{\somesetlinking}$.
Hence, we can apply \cref{lem:P2&3_seq} to deduce \cref{item:P2,item:P3}.
\end{proof}

\begin{lem}[Splitting Substitution]
\label{lem:split_seq}
Let $\somevertex$ be a splitting non-$\ax$-vertex in a proof net $\somepn$ on the sequent with hypotheses $\Delta \vdash \Gamma$, with $A$ the formula associated with the conclusion of $\somevertex$.
Then, there exist proof nets $\somepn_0$ and $\somepn_1$ respectively on $\Delta_0,A\vdash\Gamma_0$ and $\Delta_1\vdash A,\Gamma_1$ such that $\somepn = \{\somelinking_0\cup\somelinking_1 \mid \somelinking_0\in\somepn_0,\somelinking_1\in\somepn_1\}$, $\Delta=\Delta_0,\Delta_1$, $\Gamma=\Gamma_0,\Gamma_1$, and $\somevertex$ is splitting and terminal in $\somepn_1$.
\end{lem}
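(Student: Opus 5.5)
We reduce this to \cref{lem:split_edge}, applied to the conclusion edge $\someedge$ of $\somevertex$, whose associated formula is $A$. Two hypotheses of that lemma need checking. First, $\someedge$ must lie in every $\G_\somelinking$ for $\somelinking\in\somepn$. By \cref{lem:splitting_in_all_additive}, the splitting vertex $\somevertex$ belongs to every $\G_\somelinking$. Since $\somevertex$ is not an $\ax$-vertex, it is a vertex of the syntactic forest, so its unique conclusion edge $\someedge$ lies in every additive resolution. Second, $\G_\somepn$ must decompose along $\someedge$ as a disjoint union $\somegraph_0\cup\somegraph_1$.

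For the decomposition, note that $\somevertex$ is splitting, so $\someedge$ lies on no cycle of $\G_\somepn$. This holds directly for a $\pw$-vertex. For a $\tensor\backslash\oplus$-vertex, $\somevertex$ itself lies on no cycle, hence neither does $\someedge$. We cut $\someedge$ into two half-edges: one with source $\somevertex$ and undefined target, and one with undefined source and target $\target(\someedge)$ (if the target is defined). We then let $\somegraph_1$ be the connected component containing $\somevertex$ together with its half of $\someedge$, and $\somegraph_0$ be everything else, with the other half of $\someedge$ as a new hypothesis edge.

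We also need that no jump edge or axiom edge crosses between the two parts, and that the cut actually separates them. If $\somegraph_0$ and $\somegraph_1$ stayed connected after removing $\someedge$, a path between its endpoints would close a cycle through $\someedge$, which is impossible. Hence every edge other than $\someedge$, including jump edges, lies entirely within one side. If $\target(\someedge)$ is undefined, $\somegraph_0$ is just the partial edge, which corresponds to the trivial $(\hyp)$ proof net $\{\emptyset\}$ on $A\vdash A$. We check that \cref{lem:split_edge} still applies in that case, or we handle it separately by taking $\somepn_0=\{\emptyset\}$.

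\Cref{lem:split_edge} then yields proof nets $\somepn_0$ on $\Delta_0,A\vdash\Gamma_0$ and $\somepn_1$ on $\Delta_1\vdash A,\Gamma_1$, with $\somepn$ given by their unions of linkings and $\G_{\somepn_1}\giso\somegraph_1$.

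It remains to show that $\somevertex$ is terminal and splitting in $\somepn_1$. It is terminal because its conclusion is now a root edge with undefined target, and $\somevertex$ is not a leaf. It is splitting because $\G_{\somepn_1}$ is a subgraph of $\G_\somepn$ (up to cutting $\someedge$), so any cycle of $\G_{\somepn_1}$ through $\somevertex$ (for $\tensor\backslash\oplus$) or through its conclusion (for $\pw$) would already be a cycle in $\G_\somepn$.

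The main obstacle is making the decomposition of $\G_\somepn$ precise enough to meet the exact hypotheses of \cref{lem:split_edge}, in particular that every vertex, jump edge and axiom link falls cleanly on one side. This rests entirely on $\someedge$ being in no cycle, so I expect it to be routine once stated carefully.
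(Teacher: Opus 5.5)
Your proposal is correct and follows essentially the paper's route. Because $\somevertex$ is splitting, $\someedge$ lies on no cycle; $\someedge$ belongs to every $\G_\somelinking$ by \cref{lem:splitting_in_all_additive}; and \cref{lem:split_edge} then gives $\somepn_0,\somepn_1$, with terminality and splitting of $\somevertex$ following because $\G_{\somepn_1}$ is a piece of $\G_\somepn$.

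The only difference is that you put the component of $\somevertex$ in $\somegraph_1$ and all other components in $\somegraph_0$, while the paper does the reverse. Either split meets the hypotheses of \cref{lem:split_edge}, but with your choice the remark that $\somegraph_0$ is ``just the partial edge'' when $\target(\someedge)$ is undefined is only accurate if there are no other components (harmless).
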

\begin{proof}
As $\somevertex$ is splitting, its conclusion edge $\someedge$ is in no cycle of $\G_\somepn$.
Hence, $\G_\somepn$ is the disjoint union of two partial graphs $\somegraph_0$ and $\somegraph_1$ where we identify an edge of $\somegraph_0$ and one of $\somegraph_1$ as $\someedge$: $\somegraph_0$ is the connected component of $\target(\someedge)$ (if defined) in the removal of $\someedge$ in $\G_\somepn$, plus an edge of target $\target(\someedge)$ and undefined source; and $\somegraph_1$ is the rest of the graph, plus an edge of source $\source(\someedge)$ and undefined target -- in particular, $\somegraph_1$ contains $\somevertex=\source(\someedge)$.
Furthermore, $\someedge$ belongs to all $\G_\somelinking$ for $\somelinking\in\somepn$, because $\somevertex$ belongs to them thanks to \cref{lem:splitting_in_all_additive}.
We get the wanted $\somepn_0$ and $\somepn_1$ by applying \cref{lem:split_edge}.
That $\somevertex$ is splitting and terminal in $\somepn_1$ follows directly from $\G_\somepn\giso\G_{\somepn_0}\cup\G_{\somepn_1}$, and that $\somevertex$ is splitting in $\somepn$.
\end{proof}

\begin{lem}[Splitting Terminal $\tensor$]
\label{lem:tensor_seq}
Let $\somevertex$ be a splitting terminal $\tensor$-vertex in a proof net $\somepn$ on $\Delta\vdash A_0\tensor A_1,\Gamma$, with $A_0\tensor A_1$ the formula associated with the conclusion of $\somevertex$.
There exists $\somepn_0$, $\somepn_1$, $\Gamma_0$, $\Gamma_1$, $\Delta_0$, $\Delta_1$ such that $\somepn = \{\somelinking\cup\otherlinking \mid \somelinking\in\somepn_0,\otherlinking\in\somepn_1\}$, $\Delta = \Delta_0,\Delta_1$, $\Gamma = \Gamma_0,\Gamma_1$, and $\somepn_i$ is a proof net on $\Delta_i\vdash A_i,\Gamma_i$ for $i\in\{0,1\}$.
\end{lem}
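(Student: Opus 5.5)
We plan to remove the terminal $\tensor$-vertex $\somevertex$ and its conclusion, splitting the graph into the two sides hanging from its premises, and then argue as in \cref{lem:split_edge}.

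\emph{Step 1: splitting the graph.} Let $\someedge_0,\someedge_1$ be the premises of $\somevertex$, with associated formulas $A_0,A_1$. By \cref{lem:splitting_in_all_additive}, $\somevertex$ lies in every $\G_\somelinking$, so $\someedge_0$ and $\someedge_1$ lie in every additive resolution. Since $\somevertex$ is splitting, it is in no cycle of $\G_\somepn$. Removing $\somevertex$ together with its (terminal, hence partial) conclusion therefore disconnects $\G_\somepn$. We let $\somegraph_i$ be the connected component containing $\someedge_i$, now an edge with undefined target. We then distribute every other connected component arbitrarily, say into $\somegraph_0$, and these components carry their own conclusions and hypotheses. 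This gives $\Gamma=\Gamma_0,\Gamma_1$ and $\Delta=\Delta_0,\Delta_1$. By connectedness, no axiom link and no jump edge goes between $\somegraph_0$ and $\somegraph_1$. For $\somelinking\in\somepn$, we define $\somelinking_i$ as the set of axiom links of $\somelinking$ whose leaves lie in $\somegraph_i$, and set $\somepn_i\eqdef\{\somelinking_i\mid\somelinking\in\somepn\}$. These are sets of linkings on $\Delta_i\vdash A_i,\Gamma_i$, with $\G_{\somepn_i}\giso\somegraph_i$, and the constraint on hypotheses is inherited from $\somepn$. Both sets are non-empty, and $\somepn\subseteq\{\somelinking\cup\otherlinking\}$ holds by construction.

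\emph{Step 2: resolution condition and product decomposition.} We copy the argument of \cref{lem:split_edge}. For existence, we extend a $\with$-resolution of $\Delta_i\vdash A_i,\Gamma_i$ to one of $\Delta\vdash\Gamma$ and restrict the linking that \cref{item:P1} provides there. For uniqueness, suppose two distinct linkings $\somelinking_i\neq\otherlinking_i$ lie on the same resolution, and pick an axiom link in $\somelinking_i\setminus\otherlinking_i$. \Cref{lem:exists_jump} gives a $\with$-vertex on which this link depends, hence a jump edge to it, so that vertex lies in $\somegraph_i$. It is toggled by the pair of restrictions, which contradicts the assumption that both lie on the same resolution. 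For the reverse inclusion, given $\somelinking_0\in\somepn_0$ and $\otherlinking_1\in\somepn_1$, we combine their $\with$-resolutions into one resolution of $\Delta\vdash\Gamma$ and take the unique linking of $\somepn$ on it. Its restrictions coincide with $\somelinking_0$ and $\otherlinking_1$ by \cref{item:P1} for $\somepn_0$ and $\somepn_1$. This yields $\somepn=\{\somelinking\cup\otherlinking\mid\somelinking\in\somepn_0,\otherlinking\in\somepn_1\}$.

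\emph{Step 3: the remaining conditions.} To get \cref{item:P2} and \cref{item:P3} for each $\somepn_i$, we apply \cref{lem:P2&3_seq}. Given $\somesetlinking_i\subseteq\somepn_i$, we fix any $\somelinking_{1-i}\in\somepn_{1-i}$ and form $\{\somelinking\cup\somelinking_{1-i}\mid\somelinking\in\somesetlinking_i\}\subseteq\somepn$. The graph $\G_{\somesetlinking_i}$ embeds in the resulting graph, and there is no jump edge leaving it, because $\somegraph_i$ is a connected component after removing $\somevertex$ and jump edges never touch $\somevertex$.

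The main obstacle is checking that the vertex-disjoint decomposition really respects all the structure. The delicate cases are that jump edges can only target $\with$-vertices, so none reaches the $\tensor$-vertex $\somevertex$, and that every linking's resolution contains both premises of $\somevertex$, which is what \cref{lem:splitting_in_all_additive} provides.
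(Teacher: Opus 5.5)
Your proof is correct and follows the paper's argument. The only difference is packaging: the paper invokes \cref{lem:split_edge} twice, once for each premise of $\somevertex$, and notes that the middle piece is the trivial proof net $\{\emptyset\}$ on $A_0,A_1\vdash A_0\tensor A_1$, whereas you inline the proof of \cref{lem:split_edge} (resolution via \cref{lem:exists_jump}, product decomposition, then \cref{lem:P2&3_seq}) on the two components left after deleting $\somevertex$ — and your use of \cref{lem:splitting_in_all_additive} instead of terminality to place the premises in every $\G_\somelinking$ works equally well.
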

\begin{proof}
Call $\someedge_0$ the premise of $\somevertex$ corresponding to $A_0$, and $\someedge_1$ its other premise, corresponding to $A_1$.
Note that $\someedge_0$ and $\someedge_1$ belong to all $\G_\somelinking$ for $\somelinking\in\somepn$, because $\somevertex$ belongs to them since it is terminal.
As $\somevertex$ is splitting, both $\someedge_0$ and $\someedge_1$ are in no cycle of $\G_\somepn$.
Hence, $\G_\somepn$ is the disjoint union of three partial graphs $\somegraph_0$, $\somegraph_1$ and $\somegraph_2$ where we identify an edge of $\somegraph_0$ and one of $\somegraph_2$ as $\someedge_0$, and also identify an edge of $\somegraph_1$ and one of $\somegraph_2$ as $\someedge_1$: $\somegraph_2$ is $\somevertex$ and its three incident edges; $\somegraph_0$ is the connected component of $\source(\someedge_0)$ in the removal of $\someedge_0$ in $\G_\somepn$, plus an edge of source $\source(\someedge_0)$ and undefined target; and $\somegraph_1$ is the rest of the graph, plus an edge of source $\source(\someedge_1)$ and undefined target.
We apply \cref{lem:split_edge} twice, first on $\someedge_0$ and then on $\someedge_1$, obtaining proof nets $\somepn_0$, $\somepn_1$ and $\somepn_2$ with $\G_{\somepn_0}\giso\somegraph_0$, $\G_{\somepn_1}\giso\somegraph_1$ and $\G_{\somepn_2}\giso\somegraph_2$.
Observe $\somepn_2$ is a proof net on $A_0,A_1\vdash A_0\tensor A_1$, which has no leaf: hence, $\somepn_2=\{\emptyset\}$.
Therefore, $\somepn = \{\somelinking\cup(\otherlinking\cup\otherotherlinking) \mid \somelinking\in\somepn_0,\otherlinking\in\somepn_1,\otherotherlinking\in\somepn_2\} = \{\somelinking\cup\otherlinking \mid \somelinking\in\somepn_0,\otherlinking\in\somepn_1\}$.
\end{proof}

\begin{lem}[Splitting Terminal $\parr$]
\label{lem:parr_seq}
Let $\somevertex$ be a terminal (hence splitting) $\parr$-vertex in a proof net $\somepn$ on $\Delta\vdash A_0\parr A_1,\Gamma$, with $A_0\parr A_1$ the formula associated with the conclusion of $\somevertex$.
Then $\somepn$ is a proof net on $\Delta\vdash A_0,A_1,\Gamma$.
\end{lem}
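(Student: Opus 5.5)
We have to show that the same set of linkings $\somepn$, now read on the sequent $\Delta\vdash A_0,A_1,\Gamma$, satisfies \cref{item:P1,item:P2,item:P3}. The syntactic forest of $\Delta\vdash A_0,A_1,\Gamma$ is obtained from that of $\Delta\vdash A_0\parr A_1,\Gamma$ by deleting the terminal $\parr$-vertex $\somevertex$ and its conclusion. The two premises of $\somevertex$ become partial edges with undefined target. Leaves are unchanged, so axiom links and linkings on one sequent are exactly linkings on the other. The first step is to compare the relevant notions. $\somevertex$ is neither additive nor a hypothesis, since $\Delta\vdash A_0\parr A_1,\Gamma$ contains $\somevertex$ as a root vertex. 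Hence the $\with$-resolutions and additive resolutions of the two sequents correspond bijectively, by the same choices on the same additive vertices. A linking is on a $\with$-resolution of one sequent iff it is on the corresponding one of the other. This gives \cref{item:P1} directly, and it also shows that the ``toggles'' and ``depends on'' relations coincide for every $\somesetlinking\subseteq\somepn$.

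Next, compare the graphs. For any $\somesetlinking\subseteq\somepn$, write $\G'_\somesetlinking$ for the graph on the new sequent. Since $\somevertex$ is terminal, it is in every additive resolution. So $\G'_\somesetlinking$ is exactly $\G_\somesetlinking$ with the vertex $\somevertex$ removed, its premises made partial, and its conclusion edge deleted; jump edges are unchanged because dependencies are unchanged. Thus every path in $\G'_\somesetlinking$ is a path in $\G_\somesetlinking$ avoiding $\somevertex$. Moreover, switch edges of each $\pw$-vertex other than $\somevertex$ are the same in both graphs. A switching cycle of $\G'_\somesetlinking$ is therefore a switching cycle of $\G_\somesetlinking$. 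For a single linking this gives \cref{item:P2}.

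For \cref{item:P3}, take $\somesetlinking\subseteq\somepn$ with at least two linkings. By \cref{item:P3} for $\somepn$, it toggles a $\with$-vertex $\somewith$ lying in no switching cycle of $\G_\somesetlinking$. Then $\somewith\neq\somevertex$, so $\somewith$ lies in the new forest and is still toggled by $\somesetlinking$. Any switching cycle of $\G'_\somesetlinking$ through $\somewith$ would be one of $\G_\somesetlinking$, so none exists.

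The only delicate point is the bookkeeping in the second step: checking that removing $\somevertex$ changes neither the resolutions, the toggling, nor the jump edges. Here terminality is essential: it ensures $\somevertex$ is present in all slices and is not a hypothesis. No use of Yeo-style results is needed.
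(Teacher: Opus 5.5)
Your proof is correct and follows the paper's route. The paper makes the same two observations: the two sequents have the same leaves and the same $\with$- and additive resolutions, which gives \cref{item:P1}; and $\G_\somesetlinking$ on the new sequent is $\G_\somesetlinking$ with $\somevertex$ and its conclusion removed. The only difference is that the paper then gets \cref{item:P2,item:P3} by citing \cref{lem:P2&3_seq} (with an empty complementary linking), whereas you inline the special case of that argument, using directly that the toggled $\with$-vertex cannot be the $\parr$-vertex $\somevertex$.
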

\begin{proof}
The syntactic forests of $\Delta\vdash A_0\parr A_1,\Gamma$ and $\Delta\vdash A_0, A_1,\Gamma$ differ only by the presence/absence of $\somevertex$, so have the same set of leaves, the same $\with$- and additive resolutions.
Hence, $\somepn$ is also a set of linkings on $\Delta\vdash A_0,A_1,\Gamma$ and respects \cref{item:P1} on it, for it respects it on $\Delta\vdash A_0\parr A_1,\Gamma$.
For any $\somesetlinking\subseteq\somepn$, $\G_\somesetlinking$ on $\Delta\vdash A_0,A_1,\Gamma$ is $\G_\somesetlinking$ on $\Delta\vdash A_0\parr A_1,\Gamma$ where the vertex $\somevertex$ and its conclusion are removed.
Therefore, \cref{item:P2,item:P3} for $\somepn$ on $\Delta\vdash A_0,A_1,\Gamma$ follow from \cref{lem:P2&3_seq}.
\end{proof}

\begin{lem}[Splitting Terminal $\with$]
\label{lem:with_seq}
Let $\somevertex$ be a terminal (hence splitting) $\with$-vertex in a proof net $\somepn$ on $\Delta\vdash A_0\with A_1,\Gamma$, with $A_0\with A_1$ the formula associated with the conclusion of $\somevertex$.
There exist proof nets $\somepn_0$ and $\somepn_1$ respectively on $\Delta\vdash A_0,\Gamma$ and $\Delta\vdash A_1,\Gamma$ such that $\somepn = \somepn_0\cup\somepn_1$.
\end{lem}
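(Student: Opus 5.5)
We split $\somepn$ according to which premise of $\somevertex$ each linking uses. Since $\somevertex$ is terminal, it lies in every additive resolution, and every $\with$-resolution keeps exactly one of its two premises. Let $\somepn_0$ be the linkings of $\somepn$ whose additive resolution keeps the left premise (formula $A_0$), and $\somepn_1$ those keeping the right premise ($A_1$). Then $\somepn=\somepn_0\cup\somepn_1$, and each $\somelinking\in\somepn_i$ is a linking on $\Delta\vdash A_i,\Gamma$.

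For this to hold, the leaves of $\somelinking$ must avoid the deleted sub-tree. They do, since they all lie in the additive resolution of $\somelinking$. Hypotheses are preserved because every partial edge with undefined source belongs to every additive resolution. This forces $\Delta$ to be shared by both sides; in particular no hypothesis sits strictly above $\somevertex$ on a single side, because of the constraint that hypotheses belong to all $\with$-resolutions.

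Next we check the correctness criterion for each $\somepn_i$.
\begin{enumerate}
\item \cref{item:P1}. A $\with$-resolution of $\Delta\vdash A_i,\Gamma$ extends uniquely to a $\with$-resolution of $\Delta\vdash A_0\with A_1,\Gamma$ by choosing premise $i$ at $\somevertex$. \Cref{item:P1} for $\somepn$ gives a unique linking on it, and that linking lies in $\somepn_i$. Uniqueness transfers because any linking of $\somepn_i$ on the smaller resolution is on the extended one.
\item \cref{item:P2}. For $\somelinking\in\somepn_i$, the graph $\G_\somelinking$ computed over $\Delta\vdash A_i,\Gamma$ is a sub-graph of the one computed over the original sequent: only $\somevertex$ and its conclusion are removed. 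It therefore has no switching cycle.
\item \cref{item:P3}. We intend to invoke \cref{lem:P2&3_seq} with $\otherlinking=\emptyset$, taking $\somesetlinking=\somesetlinking'$. We need $\G_{\somesetlinking'}$ over $\Delta\vdash A_i,\Gamma$ to be a sub-graph of $\G_{\somesetlinking'}$ over the original sequent, with no jump edge leaving it. The vertices missing from the smaller graph are $\somevertex$, its conclusion, and the sub-tree of $A_{1-i}$. The latter is absent from all linkings of $\somesetlinking'$, so $\G_{\somesetlinking'}$ does not see it.
\end{enumerate}

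The main obstacle is the jump-edge bookkeeping behind the third point. The dependency relation is computed inside $\somesetlinking'$, and the toggled $\with$-vertices are the same in both sequents, with one exception: $\somevertex$ itself is never toggled by a subset of $\somepn_i$. Hence the jump edges of $\G_{\somesetlinking'}$ are identical in both settings, and none of them targets $\somevertex$. This identification of jump edges is the point I would verify most carefully.

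Finally, each $\somepn_i$ is non-empty because $\Delta\vdash A_i,\Gamma$ has at least one $\with$-resolution, which \cref{item:P1} for $\somepn$ covers.
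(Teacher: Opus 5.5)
Your proposal is correct and follows the paper's proof. Like the paper, you split $\somepn$ according to which premise of $\somevertex$ each linking keeps, and you derive \cref{item:P1} from \cref{item:P1} for $\somepn$. You then obtain \cref{item:P2,item:P3} from \cref{lem:P2&3_seq} by noting that $\G_\somesetlinking$ only loses the unary vertex $\somevertex$ and its conclusion, and that no jump edge enters $\somevertex$ because it is never toggled. Your extra bookkeeping on hypotheses and on why the jump edges coincide spells out details the paper leaves implicit.
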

\begin{proof}
Define $\somepn_0$ (\resp\ $\somepn_1$) as the linkings of $\somepn$ using the left (\resp\ right) premise of $\somevertex$.
We have $\somepn=\somepn_0\cup\somepn_1$, for $\somevertex$ is terminal so any linking uses exactly one premise of $\somevertex$.

That $\somepn_i$ is a set of linkings on $\Delta\vdash A_i,\Gamma$ follows from the fact that it is composed of linkings on $\Delta\vdash A_0\with A_1,\Gamma$ keeping the premise associated with $A_i$ in its additive resolution.

To obtain that $\somepn_i$ is a proof net, we have \cref{item:P1,,item:P2,,item:P3} to prove.
The resolution condition \cref{item:P1} follows directly from the one on $\somepn$, using that $\somepn_i\subseteq\somepn$.
We obtain \cref{item:P2,item:P3} thanks to \cref{lem:P2&3_seq}.
Indeed, for all $\somesetlinking\subseteq\somepn_i$, $\G_\somesetlinking$ in $\somepn_i$ is $\G_\somesetlinking$ in $\somepn$ where the unary vertex $\somevertex$ and its conclusion are removed, with no jump edge to $\somevertex$ as it is not toggled.
\end{proof}

\begin{lem}[Splitting Terminal $\oplus$]
\label{lem:oplus_seq}
Let $\somevertex$ be a splitting terminal $\oplus$-vertex in a proof net $\somepn$ on $\Delta\vdash A_0\oplus A_1,\Gamma$, with $A_0\oplus A_1$ the formula associated with the conclusion of $\somevertex$.
There exists $i\in\{0,1\}$ such that $\somepn$ is a proof net on $\Delta\vdash A_i,\Gamma$.
\end{lem}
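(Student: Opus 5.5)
By \cref{lem:plus_jump}, the splitting $\oplus$-vertex $\somevertex$ is unary in $\G_\somepn$. Hence exactly one of its two premises appears in $\G_\somepn$, namely in the additive resolution of at least one linking. Call $i$ the index of that premise and $A_i$ the corresponding formula.

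Since $\somevertex$ is terminal, it lies in every additive resolution, which is also given by \cref{lem:splitting_in_all_additive}. So every linking $\somelinking\in\somepn$ keeps the premise $A_i$ of $\somevertex$ and deletes the argument sub-tree $A_{1-i}$. In particular, no leaf of $A_{1-i}$ is used by any linking. I then read each $\somelinking\in\somepn$ as a linking on $\Delta\vdash A_i,\Gamma$: the leaves of $\somelinking$ partition the leaves of an additive resolution of $\Delta\vdash A_i,\Gamma$, obtained by dropping the root $\somevertex$ and its conclusion. The hypotheses in $\Delta$ are all kept. None of them lies inside $A_{1-i}$, since otherwise their partial edges would be deleted in every additive resolution of a linking, which the definition forbids.

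For \cref{item:P1}, the key observation is that the $\with$-resolutions of $\Delta\vdash A_i,\Gamma$ correspond to the $\with$-resolutions of $\Delta\vdash A_0\oplus A_1,\Gamma$ modulo the choices on $\with$-vertices inside $A_{1-i}$. A linking of $\somepn$ lies on a $\with$-resolution of the big sequent exactly when it lies on its restriction, because it never uses leaves of $A_{1-i}$. Existence of a linking on each $\with$-resolution of $\Delta\vdash A_i,\Gamma$ therefore follows by extending the resolution arbitrarily on $A_{1-i}$. Uniqueness is the delicate part. Suppose two distinct linkings $\somelinking\neq\otherlinking$ lie on the same small resolution, so they are on big resolutions differing only on $\with$-vertices inside $A_{1-i}$. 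Take an axiom link $\somelink\in\somelinking\setminus\otherlinking$. By \cref{lem:exists_jump}, $\somelink$ depends on some $\with$-vertex $\somewith$ toggled by $\{\somelinking,\otherlinking\}$. Both premises of $\somewith$ would then be in additive resolutions of linkings, but toggled vertices must lie outside $A_{1-i}$, which never appears in any additive resolution. So $\somewith$ lies outside $A_{1-i}$, and both linkings then make the same choice on it, which is a contradiction.

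For \cref{item:P2,item:P3}, I apply \cref{lem:P2&3_seq} with $\somepn'=\somepn$ on $\Delta\vdash A_i,\Gamma$ and $\otherlinking=\emptyset$. For every $\somesetlinking\subseteq\somepn$, the graph $\G_\somesetlinking$ on the new sequent is $\G_\somesetlinking$ on the old one with the vertex $\somevertex$ and its conclusion removed. Jump edges only target $\with$-vertices, so none goes from this sub-graph to $\somevertex$ or its conclusion, and the hypothesis of the lemma holds. I expect the uniqueness part of \cref{item:P1} to be the main obstacle. The point to get right is that a dependency, and hence a toggled $\with$-vertex, can never sit in the sub-tree $A_{1-i}$ that is absent from all additive resolutions.
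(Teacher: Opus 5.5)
Your proof is correct and follows the paper's argument. \cref{lem:plus_jump} fixes $i$. \cref{item:P1} is transferred along an arbitrary extension of a $\with$-resolution of $\Delta\vdash A_i,\Gamma$ to the $\with$-vertices of $A_{1-i}$. \cref{item:P2,item:P3} then follow from \cref{lem:P2&3_seq}, since each $\G_\somesetlinking$ only loses $\somevertex$ and its conclusion.

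The one divergence is uniqueness in \cref{item:P1}, which is not actually delicate. By your own observation, two linkings on the same resolution $\someres'$ both lie on any fixed extension $\someres$ of it, so \cref{item:P1} for the original sequent forces them to be equal; this is all the paper does. You can drop the detour through \cref{lem:exists_jump}, which also tacitly needs that a linking's branch choice at a $\with$-vertex is dictated by the $\with$-resolution it lies on.
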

\begin{proof}
By \cref{lem:plus_jump}, $\somevertex$ is unary: set $i\in\{0,1\}$ such that all linkings of $\somepn$ keep the premise of $\somevertex$ corresponding to $A_i$.
As $\somevertex$ is unary, a linking $\somelinking\in\somepn$ is also a linking on $\Delta\vdash A_i,\Gamma$.
Let us prove that $\somepn$ is a proof net on $\Delta\vdash A_i,\Gamma$.

Considering \cref{item:P1}, let $\someres'$ be a $\with$-resolution of $\Delta\vdash A_i,\Gamma$.
We extend it into a $\with$-resolution $\someres$ of $\somevertex\vdash A_0\oplus A_1,\Gamma$ by giving arbitrary choice of premises to the $\with$-vertices of $A_{1-i}$.
As all linkings of $\somepn$ use the premise $A_i$ of $\somevertex$, linkings on $\someres'$ of $\somevertex\vdash A_i,\Gamma$ are exactly those on $\someres$ of $\somevertex\vdash A_0\oplus A_1,\Gamma$.
Hence \cref{item:P1} for $\somepn$ on $\somevertex\vdash A_i,\Gamma$ follows from \cref{item:P1} for $\somepn$ on $\somevertex\vdash A_0\oplus A_1,\Gamma$.

Considering \cref{item:P2,item:P3}, observe that for all $\somesetlinking\subseteq\somepn$, $\G_\somesetlinking$ on $\Delta\vdash A_i,\Gamma$ is $\G_\somesetlinking$ on $\Delta\vdash A_0\oplus A_1,\Gamma$ where we remove the unary $\somevertex$ and its conclusion.
We conclude by \cref{lem:P2&3_seq}.
\end{proof}

\begin{lem}[Splitting $\ax$]
\label{lem:ax_seq}
Let $\somevertex$ be a splitting $\ax$-vertex in a proof net $\somepn$ on the sequent with hypotheses $\Delta \vdash \Gamma$, with $X\orth$ and $X$ the two formulas associated with the axiom link of $\somevertex$ in $\somepn$.
Then there exist proof nets $\somepn_0$ and $\somepn_1$ respectively on $\Delta_0,X\orth\vdash\Gamma_0$ and $\Delta_1,X\vdash \Gamma_1$ such that $\somepn = \{\{\{X\orth,X\}\}\cup\somelinking_0\cup\somelinking_1 \mid \somelinking_0\in\somepn_0,\somelinking_1\in\somepn_1\}$, $\Delta=\Delta_0,\Delta_1$ and $\Gamma=\Gamma_0,\Gamma_1$.
\end{lem}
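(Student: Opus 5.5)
The plan is to reduce to \cref{lem:split_edge}, applied twice, once for each of the two out-edges of $\somevertex$. This mirrors the proof of \cref{lem:tensor_seq}, but here the two pieces are attached below $\somevertex$ rather than above it.

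Let $\someedge_0$ and $\someedge_1$ be the out-edges of $\somevertex$ to the leaves $X\orth$ and $X$. By \cref{lem:splitting_in_all_additive}, $\somevertex$ belongs to every $\G_\somelinking$. Hence every linking of $\somepn$ contains the axiom link $\{X\orth,X\}$, and both $\someedge_0$ and $\someedge_1$ lie in every $\G_\somelinking$. Since $\somevertex$ is splitting, it lies in no cycle of $\G_\somepn$.

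\textbf{Step 1: jump edges.} There is no jump edge out of $\somevertex$. Any jump edge from $\somevertex$ requires two linkings differing on the link $\{X\orth,X\}$, and we have just seen that none exist. So $\somevertex$ has degree exactly two in $\G_\somepn$.

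\textbf{Step 2: decomposition of the graph.} Because $\somevertex$ is in no cycle, removing it disconnects $\G_\somepn$ into pieces. Let $\somegraph_0$ be the connected component containing $X\orth$ (after deleting $\somevertex$), and let $\somegraph_1$ be the rest of the graph, which contains $X$. Then $\G_\somepn$ is $\somegraph_2$ (the vertex $\somevertex$ with its two edges) glued along $\someedge_0$ to $\somegraph_0$ and along $\someedge_1$ to $\somegraph_1$. In $\somegraph_0$, the edge $\someedge_0$ becomes a hypothesis edge on $X\orth$ with undefined source, and similarly for $\someedge_1$ in $\somegraph_1$.

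\textbf{Step 3: applying \cref{lem:split_edge}.} \cref{lem:split_edge} is stated for an edge $\someedge$ of the syntactic forest, whereas $\someedge_0$ is an $\ax$-edge. I would handle this in one of two ways:
\begin{itemize}
\item Rerun the proof of \cref{lem:split_edge} directly. Define $\somelinking_j$ as the axiom links of $\somelinking$ with leaves in $\somegraph_j$, excluding $\{X\orth,X\}$. This gives $\somelinking=\{\{X\orth,X\}\}\cup\somelinking_0\cup\somelinking_1$, and we set $\somepn_j\eqdef\{\somelinking_j\mid\somelinking\in\somepn\}$.
\item Alternatively, apply the edge lemma to the conclusion edge of the leaf $X\orth$ (if it has a defined target), and absorb the trivial piece.
\end{itemize}
I prefer the direct rerun. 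The verification then follows \cref{lem:split_edge} point by point:
\begin{itemize}
\item \cref{item:P1} and uniqueness use \cref{lem:exists_jump}. A differing link would depend on a $\with$-vertex in the same side, because no jump edge crosses the decomposition: the only connection between sides passes through the jump-free vertex $\somevertex$.
\item The product equality $\somepn=\{\{\{X\orth,X\}\}\cup\somelinking_0\cup\somelinking_1\mid\dots\}$ follows by combining $\with$-resolutions and using \cref{item:P1} for $\somepn$.
\item \cref{item:P2} and \cref{item:P3} come from \cref{lem:P2&3_seq}, again using the absence of jump edges between the pieces.
\end{itemize}
The hypotheses condition (every hypothesis shared by all slices) holds because the new hypothesis edges $X\orth$ and $X$ occur in every $\G_\somelinking$.

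\textbf{Main obstacle.} The delicate point is the bookkeeping showing that the restricted linkings form linkings on $\Delta_0,X\orth\vdash\Gamma_0$. In particular, the leaf $X\orth$ must become a hypothesis of $\Delta_0$, a vertex of the syntactic forest whose subtree is pruned. The additive-resolution condition must then be checked with that leaf removed, while its conclusion edge is kept as a sourceless partial edge.
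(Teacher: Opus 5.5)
Your overall plan is the paper's: a three-piece decomposition obtained from two applications of \cref{lem:split_edge}, mirroring \cref{lem:tensor_seq}. Moreover, the paper takes exactly the option you list second and set aside. It cuts along the conclusion edges of the two \emph{leaves} $X\orth$ and $X$, which are edges of the syntactic forest, so \cref{lem:split_edge} applies verbatim. The middle piece ($\somevertex$, the two leaves and their four incident edges) is then a proof net on $\vdash X\orth,X$, hence equal to $\{\{\{X\orth,X\}\}\}$.

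This choice is what makes your ``main obstacle'' disappear. \cref{lem:split_edge} already produces each leaf as a pruned hypothesis whose conclusion edge is kept sourceless. It also already carries out the checks of \cref{item:P1}, \cref{item:P2} and \cref{item:P3}, as well as the product equality, so nothing has to be rerun. Your Step~2, where the $\ax$-edge out of $\somevertex$ becomes the hypothesis edge, does not fit the formalism: a hypothesis is the sourceless conclusion edge of a pruned vertex. This is precisely why the leaves' edges are the right ones to cut.

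The paper's route needs one fact that you only use implicitly. The only in-edges of the leaves $X\orth$ and $X$ are the out-edges of $\somevertex$, because every linking contains $\{X\orth,X\}$ and linkings partition leaves. Combined with $\somevertex$ lying on no cycle, this puts the leaves' conclusion edges on no cycle and in every $\G_\somelinking$.

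Your Step~1 (no jump edge leaves $\somevertex$) is correct. It makes explicit what the paper uses tacitly when it describes the middle piece. The case of a leaf conclusion edge with undefined target needs no separate treatment: the corresponding piece is just a bare edge, i.e.\ $\{\emptyset\}$ on $X\orth\vdash X\orth$. Your direct rerun would also go through, but only by redoing all of \cref{lem:split_edge}'s verifications together with the pruning bookkeeping you flag.
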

\begin{proof}
As $\somevertex$ is splitting, it belongs to all $\G_\somelinking$ for $\somelinking\in\somepn$ by \cref{lem:splitting_in_all_additive}.
Thence, the two leafs $X\orth$ and $X$ it is on are in all $\G_\somelinking$, with as sole in-edges the out-edges of $\somevertex$.
In particular, their respective conclusion edges $\someedge_0$ and $\someedge_1$ are in no cycle of $\G_\somepn$ and belong to all $\G_\somelinking$ for $\somelinking\in\somepn$.
Thence, $\G_\somepn$ is the disjoint union of three partial graphs $\somegraph_0$, $\somegraph_1$ and $\somegraph_2$ where we identify an edge of $\somegraph_0$ and one of $\somegraph_2$ as $\someedge_0$, and also identify an edge of $\somegraph_1$ and one of $\somegraph_2$ as $\someedge_1$: $\somegraph_2$ is composed of $\somevertex$, $X\orth$ and $X$ as well as their four incident edges; $\somegraph_0$ is the connected component of $\target(\someedge_0)$ in the removal of $\someedge_0$ in $\G_\somepn$, plus an edge of target $\target(\someedge_0)$ and undefined source; and $\somegraph_1$ is the rest of the graph, plus an edge of target $\target(\someedge_1)$ and undefined source.
We apply \cref{lem:split_edge} twice, first on $\someedge_0$ and then on $\someedge_1$, obtaining proof nets $\somepn_0$, $\somepn_1$ and $\somepn_2$ with $\G_{\somepn_0}\giso\somegraph_0$, $\G_{\somepn_1}\giso\somegraph_1$ and $\G_{\somepn_2}\giso\somegraph_2$.
Observe $\somepn_2$ is a proof net on $\vdash X\orth, X$, hence $\somepn_2=\{\{\{X\orth,X\}\}\}$.
Therefore, $\somepn = \{\somelinking\cup(\otherlinking\cup\{\{X\orth,X\}\}) \mid \somelinking\in\somepn_0,\otherlinking\in\somepn_1\}$.
\end{proof}

We now prove the sequentialization theorem thanks to \cref{lem:split_seq,lem:tensor_seq,lem:parr_seq,lem:with_seq,lem:oplus_seq,lem:ax_seq}.
As for MLL in \cref{sec:seq:splitting}, we only use that a proof net with at least a vertex contains a non-leaf splitting vertex (\cref{prop:splitting_mall}).
Different splitting vertices obtained by Yeo's theorem (\cref{sec:mall_splitting}) yield various specializations of this sequentialization procedure, as in MLL (\cref{sec:seq:yeo}).
Moreover, the derivation $\someproof$ we built is in \mixretore\ normal form.

\begin{proof}[Proof of \cref{th:seq_mall}]
We reason by induction on the the number of vertices of $\G_\somepn$.
Assume first that $\G_\somepn$ contains some vertex: by \cref{prop:splitting_mall}, there is a non-leaf splitting vertex $\somevertex$ in $\G_\somepn$.
If $\somevertex$ is not terminal nor an $\ax$-vertex, then applying \cref{lem:split_seq} we obtain proof nets $\somepn_0$ and $\somepn_1$ such that $\somepn = \{\somelinking_0\cup\somelinking_1\mid\somelinking_0\in\somepn_0, \somelinking_1\in\somepn_1\}$ and $\somevertex$ is terminal and splitting in $\somepn_1$.
Observe that $\G_{\somepn_0}$ contains at least one vertex (the one below $\somevertex$ in the syntactic forest $\Delta\vdash\Gamma$) and $\G_{\somepn_1}$ contains $\somevertex$, so that both have less vertices than $\G_\somepn$.
By induction hypothesis, we obtain derivations $\someproof_0$ of $\Delta_0,A\vdash\Gamma_0$ and $\someproof_1$ and $\Delta_1\vdash A,\Gamma_1$ such that $\deseq(\someproof_0)=\somepn_0$ and $\deseq(\someproof_1)=\somepn_1$.
Thanks to \cref{lem:deseqsub_mall}, we have $\somepn=\deseq(\someproof)$ with $\someproof$ the substitution of $\someproof_1$ for hypothesis $A$ in $\someproof_0$.

If $\somevertex$ is a $\ax$-vertex, then using \cref{lem:tensor_seq}, one gets proof nets $\somepn_0$ and $\somepn_1$ such that $\somepn = \{\{\{X\orth,X\}\}\cup\somelinking_0\cup\somelinking_1\mid\somelinking_0\in\somepn_0, \somelinking_1\in\somepn_1\}$.
By induction hypothesis, one gets derivations $\someproof_0$ of $\Delta_0,X\orth \vdash \Gamma_0$ and $\someproof_1$ of $\Delta_1,X \vdash \Gamma_1$ such that $\deseq(\someproof_0)=\somepn_0$ and $\deseq(\someproof_1)=\somepn_1$.
By applying \cref{lem:deseqsub_mall} twice, we have $\somepn=\deseq(\someproof)$ with $\someproof$ the substitution of $\someproof_0$ for hypothesis $X\orth$ in the substitution of $\someproof_1$ for hypothesis $X$ in an $(ax)$ rule on $\vdash X\orth, X$.

We now suppose that $\somevertex$ is terminal and not an $\ax$-vertex.
We distinguish cases according to its kind.

If $\somevertex$ is a $\tensor$-vertex, then $\Gamma = A_0\tensor A_1,\Gamma'$ with $A_0\tensor A_1$ the formula associated with the conclusion of $\somevertex$.
Using \cref{lem:tensor_seq}, one gets $\Delta = \Delta_0,\Delta_1$, $\Gamma' = \Gamma'_0,\Gamma'_1$ and $\somepn = \{\somelinking_0\cup\somelinking_1 \mid \somelinking_0\in\somepn_0, \somelinking_1\in\somepn_1\}$ with $\somepn_i$ a proof net on $\Delta_i\vdash A_i,\Gamma'_i$ for $i\in\{0,1\}$.
By induction hypothesis, one gets derivations $\someproof_i$ of $\Delta_i \vdash A_i,\Gamma'_i$ such that $\somepn_i=\deseq(\someproof_i)$.
We add a $(\tensor)$ rule to $\someproof_0$ and $\someproof_1$, obtaining a derivation $\someproof$ of $\Delta\vdash A_0\tensor A_1,\Gamma'$ satisfying $\somepn=\deseq(\someproof)$.

If $\somevertex$ is a $\parr$-vertex, then $\Gamma = A_0\parr A_1,\Gamma'$ with $A_0\parr A_1$ the formula associated with the conclusion of $\somevertex$.
Using \cref{lem:parr_seq}, $\somepn$ is a proof net on $\Delta\vdash A_0,A_1,\Gamma'$.
By induction hypothesis, one gets a derivation $\someproof_0$ of $\Delta\vdash A_0,A_1,\Gamma'$ such that $\somepn=\deseq(\someproof_0)$.
We add a $(\parr)$ rule to $\someproof_0$, obtaining a derivation $\someproof$ of $\Delta \vdash A_0\parr A_1,\Gamma'$ satisfying $\somepn=\deseq(\someproof)$.

If $\somevertex$ is a $\with$-vertex, then $\Gamma = A_0\with A_1,\Gamma'$ with $A_0\with A_1$ the formula associated with the conclusion of $\somevertex$.
Using \cref{lem:with_seq}, $\somepn = \somepn_0\cup\somepn_1$ with $\somepn_i$ a proof net on $\Delta \vdash A_i,\Gamma'$ for $i\in\{0,1\}$.
By induction hypothesis, one gets derivations $\someproof_i$ of $\Delta \vdash A_i,\Gamma'$ such that $\somepn_i=\deseq(\someproof_i)$.
We add a $(\with)$ rule to $\someproof_0$ and $\someproof_1$, obtaining a derivation $\someproof$ of $\Delta \vdash A_0\with A_1,\Gamma'$ satisfying $\somepn=\deseq(\someproof)$.

If $\somevertex$ is a $\oplus$-vertex, then $\Gamma = A_0\oplus A_1,\Gamma'$ with $A_0\oplus A_1$ the formula associated with the conclusion of $\somevertex$.
Using \cref{lem:oplus_seq}, there exits $i\in\{0,1\}$ such that $\somepn$ is a proof net on $\Delta \vdash A_i,\Gamma'$.
By induction hypothesis, one gets a derivation $\someproof_0$ of $\Delta\vdash A_i,\Gamma'$ such that $\somepn=\deseq(\someproof_0)$.
We add a $(\oplus_i)$ rule to $\someproof_0$, obtaining a derivation $\someproof$ of $\Delta \vdash A_0\oplus A_1,\Gamma'$ satisfying $\somepn=\deseq(\someproof)$.

Now assume that $\G_\somepn$ contains no vertex.
This implies $\Delta=\Gamma$, and $\Delta\vdash\Gamma$ has no $\with$-vertex nor leaf, hence $\somepn = \{\emptyset\}$ using \cref{item:P1}.
We have $\{\emptyset\}=\deseq(\someproof)$ with $\someproof$ made of $(\hyp)$ rules on all formulas of $\Gamma$ joined by $(\mix_2)$ rules if $\Gamma$ is not empty, and otherwise $\someproof$ being a $(\mix_0)$ rule.
\end{proof}

\section{Conclusion}

We gave a new simple proof of sequentialization for multiplicative proof nets, as a corollary of a generalization of Yeo's theorem (\cref{th:ParamLocalYeo}), by defining an appropriate coloring.
This new theorem is very modular: it can give a splitting terminal vertex, a splitting $\parr$-vertex, or a general splitting vertex.
This generalization has a simple proof, that can be reformulated as a proof of sequentialization just by defining what is a cusp in proof structures.
It also allows us to deduce theorems known to be equivalent (Theorems~\ref{th:yeo}, \ref{th:kotzig}, \ref{th:seymourandgiles}, \ref{th:grossmanandhaggkvist}, \ref{th:shoesmithandsmiley} and~\ref{th:HYeo}), again just by defining a coloring.
Thus, our simple proof can also be adapted as one of any of these results by defining what is a cusp.

Focusing on proof nets, this approach can be extended to richer systems than MLL.
Dealing with multiplicative units is straightforward, as long as we allow for $\mix$-rules and forget about DR-connectedness~\cite{mixpn}: those just amount to the introduction of premise-free vertices, without any particular treatment.
Our approach is also successful without the $\mix$-rules, in a framework with a jump edge for each $\bot$-vertex (linking it with another vertex)~\cite{pn,hughes*cat,hughesunit}.
One should only take care that cusps are exactly made by the non-jump premises of $\parr$-vertices; this can be done \eg\ by giving a new color to each jump edge.

Similarly, sequentialization in presence of exponentials -- with structural rules (weakening, contraction, dereliction for the $\wn$ modality) and promotion -- is also easy to deduce from the multiplicative case: contraction is treated as a $\parr$-vertex, and promotion boxes allow to sequentialize inductively.
Again, this works both with the $\mix$-rules or with jump edges~\cite{pn}.

Dealing with additive connectives in the spirit of the unit-free multiplicative-additive proof nets from Hughes and van Glabbeek~\cite{mallpnlong} requires more work.
We adapted our approach, yielding a proof of sequentialization in a much more involved context.
The main price to pay is establishing a further generalization of \cref{th:ParamLocalYeo} allowing some cusp-free cycles (\cref{th:MALLParamYeo}) -- whose proof also reposes on the cusp minimization lemma.
The argumentation for the additives then follows the same idea as the one for MLL: a non-splitting vertex cannot be maximal for the ordering $\orderyeo$.
As in MLL, the approach is robust enough to also enable sequentialization through splitting terminal vertices, splitting $\pw$-vertices, or general splitting vertices.


\bibliographystyle{alphaurl}
\bibliography{main}

\end{document}